\documentclass[sigconf]{acmart}

\usepackage{blindtext}
\usepackage{setspace}
\usepackage{colortbl}
\usepackage[skins]{tcolorbox}
\newtcolorbox{mybox}[2][]{%
  attach boxed title to top center
               = {yshift=-11pt},
  colframe     =black,
  colbacktitle = black,
  title        = #2,#1,
  enhanced,
}

\usepackage{color, colortbl}
\usepackage{capt-of}
\usepackage{enumitem}
\usepackage[makeroom]{cancel}
\usepackage{blindtext}
\usepackage{hyperref}
\usepackage{enumitem}
\usepackage{framed}
\usepackage{esvect}
\usepackage{tikz}
\usepackage{latexsym}
\usepackage{multirow}

\usepackage[mathscr]{eucal}
\usepackage{algpseudocode}

\theoremstyle{remark}

\usepackage{algorithm}
\usepackage{scalerel}
\usepackage{listings}
\usepackage{array}
\captionsetup{compatibility=false}

\usepackage{longtable}
\usepackage{tikz}
\usepackage{pgfplots}
\usepackage{pgfplots, pgfplotstable}
\usepackage{blkarray}
\usepackage{mathtools}

\usepackage{bm}
\usepackage{adjustbox}
\usepackage{blindtext}
\usepackage{multicol}
\definecolor{bubblegum}{rgb}{0.99, 0.76, 0.8}

\usepackage{makecell, cellspace, caption}
\definecolor{Gray}{gray}{0.9}
\usepackage[first=0,last=9]{lcg}
\usepackage[title]{appendix}
\usepackage{boxedminipage}
\usepackage{multirow}
\usepackage{adjustbox}
\usepackage{blindtext}
\usepackage{multicol}
\usepackage{tablefootnote}

%



\newcommand{\scf}{\ensuremath{\mathcal{SC}_{\st\fpsi}}\xspace}
\newcommand{\SCe}{\ensuremath{\mathcal{SC}_{\st\epsi}}\xspace}

\newcommand{\resizeS}{\scalebox{.52}}

\newcommand{\st}{\scriptscriptstyle}
\newcommand{\et}{\textit{et al.}\xspace}
\newcommand{\prf}{\ensuremath{\mathtt{PRF}}}
\newcommand{\PRP}{\mathtt{PRP}}
\newcommand{\minn}{\resizeS{{min}}}
\newcommand{\fpsi}{\ensuremath{\mathtt{JUS}}\xspace}
\newcommand{\epsi}{\ensuremath{\mathtt{ANE}}\xspace}
\newcommand{\withRew}{{Anesidora}\xspace}
\newcommand{\withFai}{{Justitia}\xspace}
\newcommand{\aud}{${Aud}$\xspace}
\newcommand{\ct}{\ensuremath{\mathtt{CT}}\xspace}
\newcommand{\fct}{\ensuremath{f_{\st \ct}}\xspace}
\newcommand{\cl}{\ensuremath{{CL}}\xspace}
\newcommand{\ole}{\ensuremath{\mathtt{OLE}}\xspace}
\newcommand{\vopr}{\ensuremath{\mathtt{VOPR}}\xspace}
\newcommand{\zspa}{\ensuremath{\mathtt{ZSPA}}\xspace}
\newcommand{\zspaa}{\ensuremath{\mathtt{ZSPA\text{-}A}}\xspace}
\newcommand{\negl}{\ensuremath{\epsilon}\xspace}
\newcommand{\Smin}{\ensuremath{S_{\st min}}\xspace}
\newcommand{\Smax}{\ensuremath{S_{\st max}}\xspace}
\newcommand{\mkver}{\ensuremath{\mathtt{MT.verify}}\xspace}
\newcommand{\mkgen}{\ensuremath{\mathtt{MT.genTree}}\xspace}
\newcommand{\mkprove}{\ensuremath{\mathtt{MT.prove}}\xspace}
\newcommand{\comcom}{\ensuremath{\mathtt{Com}}\xspace}
\newcommand{\comver}{\ensuremath{\mathtt{Ver}}\xspace}

\newcommand{\yc}{\ensuremath\ddot{y}\xspace}
\newcommand{\vc}{\ensuremath\ddot{v}\xspace}
\newcommand{\lc}{\ensuremath\ddot{l}\xspace}
\newcommand{\rc}{\ensuremath\ddot{r}\xspace}
\newcommand{\fc}{\ensuremath\ddot{f}\xspace}
\newcommand{\bc}{\ensuremath\ddot{b}\xspace}
\newcommand{\cc}{\ensuremath\ddot{c}\xspace}
\newcommand{\chc}{\ensuremath\ddot{ch}\xspace}
\newcommand{\dc}{\ensuremath\ddot{d}\xspace}
\newcommand{\tc}{\ensuremath\ddot{t}\xspace}
\newcommand{\wc}{\ensuremath\ddot{w}\xspace}
\newcommand{\xc}{\ensuremath\ddot{x}\xspace}
\newcommand{\xci}{\ensuremath\ddot{x}_{\st i}\xspace}
\newcommand{\depsc}{\ensuremath\ddot{deps}\xspace}
\newcommand{\rewci}{\ensuremath\ddot{rew}_{\st i}\xspace}
\newcommand{\Xc}{\ensuremath\ddot{X}\xspace}
\newcommand{\SCpc}{\ensuremath{\mathcal{SC}_{\pc}}\xspace}
\newcommand{\SCtc}{\ensuremath{\mathcal{SC}_{\tcx}}\xspace}
\newcommand{\SCcc}{\ensuremath{\mathcal{SC}_{\ccx}}\xspace}
\newcommand{\pc}{\resizeS{{PC}}}
\newcommand{\tcx}{\resizeS{{TC}}}
\newcommand{\ccx}{\resizeS{{CC}}}
\newcommand{\prp}{\ensuremath{\mathtt{PRP}}\xspace}
\newcommand{\p}{$\mathcal{PSI}^{\st \mathcal{FC}}$\xspace}
\newcommand{\ep}{$\mathcal{PSI}^{\st \mathcal{FCR}}$\xspace}
\newcommand{\qdel}{\ensuremath{Q^{\st \text{Del}}}\xspace}
\newcommand{\qdelwr}{\ensuremath{Q^{\st \text{Del}}_{\st\text{R}}}\xspace}
\newcommand{\qinit}{\ensuremath{Q^{\st \text{Init}}}\xspace}
\newcommand{\qUnFAbtwr}{\ensuremath{Q^{\st \text{UF-A}}_{\st \text{R}}}\xspace}
\newcommand{\qUnFAbt}{\ensuremath{Q^{\st \text{UF-A}}}\xspace}
\newcommand{\qFAbt}{\ensuremath{Q^{\st \text{F-A}}}\xspace}



%


%
%
%

\fancyfoot{}

\usepackage{amsmath}

\usepackage{amssymb}


\vspace{-3mm}

\begin{CCSXML}
<ccs2012>
   <concept>
       <concept_id>10002978.10002979</concept_id>
       <concept_desc>Security and privacy~Cryptography</concept_desc>
       <concept_significance>500</concept_significance>
       </concept>
   <concept>
       <concept_id>10002978.10002991.10002995</concept_id>
       <concept_desc>Security and privacy~Privacy-preserving protocols</concept_desc>
       <concept_significance>500</concept_significance>
       </concept>
 </ccs2012>
 \vspace{-4mm}
\end{CCSXML}

\ccsdesc[500]{Security and privacy~Cryptography}
\ccsdesc[500]{Security and privacy~Privacy-preserving protocols}
\vspace{-4mm}
\keywords{Private Set Intersection, Privacy, Secure Multi-Party Computation}

\AtBeginDocument{%
  \providecommand\BibTeX{{%
    \normalfont B\kern-0.5em{\scshape i\kern-0.25em b}\kern-0.8em\TeX}}}

    \settopmatter{printacmref=false}
\setcopyright{none}
\renewcommand\footnotetextcopyrightpermission[1]{}

\begin{document}
\title{Earn While You Reveal: \\ Private Set Intersection that Rewards Participants}

\author{}

\date{}

\begin{abstract}
In Private Set Intersection protocols (PSIs), a non-empty result always reveals information about the private input sets of the parties. Moreover, in various variants of PSI, not all parties necessarily receive or are interested in the result. However, to date, the literature has assumed that those parties not receiving or not interested in the result still contribute their private input sets to the PSI free of charge, although doing so would cost them their privacy. 
 In this work, we introduce a \textit{novel multi-party} PSI, called ``\withRew'', which \emph{rewards} parties for contributing their private input sets to the protocol.  %
\withRew is designed with efficiency in mind, primarily relying on symmetric key primitives.  Its complexities scale linearly with the number of parties and set cardinality. It maintains security even when the majority of parties are compromised by malicious colluding adversaries. Along the way, we devise ``\withFai'', the first \textit{fair multi-party PSI} and introduce the new concept of \textit{unforgeable polynomials}, which may hold independent significance.  \withRew can be utilized in ``Vertical Federate Learning'' schemes to \textit{incentivize} diverse parties, encouraging them to contribute their private inputs and collaboratively develop global models. 

\vspace{-2mm}
\end{abstract}
\maketitle{}
\thispagestyle{empty}


\section{Introduction}

Secure Multi-Party Computation (MPC)  allows multiple mutually distrustful parties to jointly compute a certain functionality on their private inputs without revealing anything beyond the result. Private Set Intersection (PSI) is a subclass of MPC that aims to efficiently achieve the same security property as MPC does. 
PSI has numerous applications. For instance, it has  been used  in Vertical Federated Learning (VFL) \cite{LuD20}, COVID-19 contact tracing schemes  \cite{DBLP:conf/asiacrypt/DuongPT20},  remote diagnostics \cite{BrickellPSW07}, and finding leaked credentials \cite{ThomasPYRKIBPPB19}. 

There exist two facts about PSIs: (i) a non-empty result always reveals something about the parties' private input sets (i.e., the set elements that are in the intersection), and (ii) various variants of PSIs do not output the result to all parties, even in those PSIs that do,  not all of the parties are necessarily interested in it.  Given these facts, one may ask a natural question:


\begin{center}
\emph{How can we incentivise the parties that do not receive the result or\\ are not interested in it to participate in a PSI?}
\end{center}

To date, the literature has not answered the above question. The literature has assumed that all parties will participate in a PSI for free and bear the privacy cost (in addition to computation and computation overheads imposed by the PSI).\footnote{Combining a multi-party PSI with a payment mechanism that always charges a buyer (who initiates the PSI computation and interested in the result) a fixed amount is problematic. Because, it (1) forces the buyer to pay even if some malicious clients affect the results' correctness letting them learn the result without letting it learn correct result, as there exists no fair multi-party PSI in the literature, and (2) forces the buyer to always pay independent  of the exact size of the intersection; if it has to pay more than what it should pay for the exact intersection, then the buyer would be discouraged to participate in the protocol. If the buyer has to pay less than what it should pay for the exact intersection, then other  clients would be discouraged to participate in the protocol.} 
In this work, we answer the above question for the first time. We present a multi-party PSI, called ``\withRew'', that allows a buyer who initiates the PSI computation (and is interested in the result) to pay other parties proportionate to the number of elements it learns about other parties' private inputs.\footnote{Anesidora is in Greek and Roman mythology an epithet of several goddesses. It means sender of gifts. We call our protocol which sends rewards (or gifts) to honest parties \withRew.}  \withRew is efficient and mainly \emph{uses symmetric key primitives}.  Its computation and communication complexities are linear with the number of parties and set cardinality. \withRew remains secure even if the majority of parties are corrupt by active adversaries which may collude with each other.




We develop \withRew in a modular fashion. Specifically, we propose the formal notion of ``PSI with Fair Compensation'' (\p) and devise the first construction, called ``\withFai'', that realises the notion.\footnote{\withFai is the ancient Roman personification of justice. We call our protocol which ensures that parties are treated fairly \withFai.} \p ensures that either all parties get the result or if the protocol aborts in an unfair manner (where only dishonest parties learn the result), then honest parties will receive financial compensation, i.e., adversaries are penalised. Justitia is the first fair multi-party PSI. \withFai itself relies on our new primitive called unforgeable polynomials, that can be of independent interest.

Next, we enhance \p to the notion of ``PSI with Fair Compensation and Reward'' (\ep) and develop \withRew that realises \ep. The latter notion ensures that honest parties (a) are rewarded regardless of whether all parties are honest, or a set of them aborts in an unfair manner and (b) are compensated in the case of an unfair abort. We formally prove the two PSIs using the simulation-based model. To devise efficient PSIs, we have developed a primitive, called ``unforgeable polynomial'' that might be of independent interest. 

A PSI, like Anesidora, that supports more than two parties and rewards set contributors can create opportunities for much richer analytics and incentivise parties to participate. It can be used (1) by an advertiser who wants to conduct advertisements targeted at certain customers by first finding their common shopping patterns distributed across different e-commerce companies' databases \cite{IonKNPSS0SY20}, (2) by a malware detection service that allows a party to send a query to a collection of malware databases held by independent antivirus companies to find out whether all of them consider a certain application as malware \cite{TamrakarLPEPA17}, or (3) by a bank, like ``WeBank'', that uses VFL and PSI to gather information about certain customers from various partners (e.g., national electronic invoice and other financial institutions) to improve its risk management of loans \cite{ChengLCY20}. In all these cases, the set contributors will be rewarded by such a PSI.

We hope that our work initiates future research on developing reward mechanisms for participants of \emph{generic MPC}, as well. Such reward mechanisms have the potential to increase MPC's real-world adoption.  

\begin{paragraph}
{\textbf{Our Contributions Summary.}} In this work, we: (1) devise \withRew, the first PSI that lets participants receive a reward for contributing their set elements to the intersection, (2) develop \withFai, the first PSI that lets either all parties receive the result or if the protocol aborts in an unfair manner,  honest parties receive compensation, and (3) propose formal definitions of the above constructions.
\end{paragraph}



\section{Related Work}\label{sec::related-work}

Since their introduction in \cite{DBLP:conf/eurocrypt/FreedmanNP04}, various PSIs have been designed. PSIs can be divided into \textit{traditional} and \textit{delegated} ones.

In \textit{traditional} PSIs, data owners interactively compute the result using their local data. 
Very recently, Raghuraman and Rindal \cite{RaghuramanR22} proposed two two-party PSIs, one secure against semi-honest/passive and the other against malicious/active adversaries. To date, these two protocols are the fastest two-party PSIs. They mainly rely on  Oblivious Key-Value Stores (OKVS) data structure and Vector Oblivious Linear Evaluation (VOLE). The protocols' computation cost is $O(c)$, where $c$ is  a set's cardinality.  They also impose $O(c\log c^{\st 2}+\kappa)$ and $O(c\cdot \kappa)$ communication costs in the semi-honest and malicious security models respectively, 
where 
$\kappa$ is a security parameter.  
Also, researchers designed PSIs that  allow multiple (i.e., more than two) parties to efficiently compute the intersection. The multi-party PSIs in  \cite{DBLP:conf/scn/InbarOP18,DBLP:conf/ccs/KolesnikovMPRT17} are secure against  passive adversaries while those in \cite{Ben-EfraimNOP21,GhoshN19,ZhangLLJL19,DBLP:conf/ccs/KolesnikovMPRT17,NevoTY21} were designed to remain secure against  active ones. However, Abadi \et  \cite{AbadiMZ21} showed that the PSIs in  \cite{GhoshN19} are susceptible to several attacks.  To date, the  protocols  in   \cite{DBLP:conf/ccs/KolesnikovMPRT17} and  \cite{NevoTY21} are the most  efficient multi-party PSIs  designed to be  secure against passive and active  adversaries respectively. These protocols are secure even if  the majority of parties are corrupt.  
%
%
The former relies on inexpensive symmetric key primitives such as  Programmable Pseudorandom Function (OPPRF) and Cuckoo Hashing, while the latter mainly uses OPPRF and OKVS. 

The overall computation and communication complexities of the PSI in  \cite{DBLP:conf/ccs/KolesnikovMPRT17} are  $O(c\cdot m^{\st 2}+c\cdot m )$ and $O(c\cdot m^{\st 2})$ respectively, as each client needs to interact with the rest (in the ``Conditional Zero-Sharing'' phase), where $m$ is the number of clients. Later, to achieve efficiency, Chandran \et \cite{ChandranD0OSS21} proposed a multi-party PSI that remains secure only if the minority of the parties is corrupt by a semi-honest adversary; thus, it offers a weaker security guarantee than  the one in \cite{DBLP:conf/ccs/KolesnikovMPRT17} does. The PSI in \cite{NevoTY21} has a parameter $t$ that determines how many parties can collude with each other and must be set before the protocol's execution, where $t\in [2, m)$. The protocol divides the parties into three groups, clients: $A_{\st 1},..A_{\st m-t-1}$, leader: $A_{\st m-t}$, and servers: $A_{\st m-t+1},..A_{\st m}$. Each client needs to send a set of messages to every server and the leader which jointly compute the final result. Hence, this protocol's overall computation and communication complexities are $O(c\cdot \kappa(m+t^{\st 2}-t(m+1)))$ and $O(c\cdot m\cdot \kappa)$ respectively.

Dong \et proposed a ``fair'' two-party PSIs \cite{DBLP:conf/dbsec/DongCCR13} that ensure either both parties receive the result or neither does, even if a malicious party aborts prematurely during the protocol's execution. The protocol relies on homomorphic encryption,  zero-knowledge proofs, and polynomial representation of sets. The protocol's  computation and communication complexities are $O(c^{\st 2})$ and $O(c)$  respectively. Since then, various fair two-party PSIs have been proposed, e.g.,  in \cite{DebnathD14,DebnathD16-,DebnathD16}. To date, the fair PSI in \cite{DebnathD16} has better complexity and performance compared to previous fair PSIs. It mainly uses  ElGamal encryption, verifiable encryption, and zero-knowledge proofs. The protocol's computation and communication cost is $O(c)$. However, its overall overhead is still high, as it relies on asymmetric key primitives, e.g.,  zero-knowledge proofs. So far, there exists no fair \emph{multi-party} PSI in the literature. Our \withFai is the first  fair multi-party PSI (that also relies on symmetric key primitives).

 \textit{Delegated} PSIs use  cloud computing  for computation and/or storage, while preserving the privacy of  the computation inputs and outputs from the cloud. They can be divided further into protocols that support \textit{one-off} and \textit{repeated} delegation of PSI computation. The former like \cite{kamarascaling,kerschbaum12,c18} cannot reuse their outsourced encrypted data and require clients to re-encode their data locally for each  computation. The most efficient such protocol is \cite{kamarascaling}, which has been designed for the two-party setting and its computation and communication complexity is $O(c)$.  In contrast, those protocols that support repeated PSI delegation let clients outsource the storage of their encrypted data to the cloud only once, and then execute an unlimited number of computations on the outsourced data. 

The protocol in \cite{eopsi} is the first PSI that efficiently support repeated delegation in the semi-honest model. It relies on the polynomial representation of sets, pseudorandom function, and hash table. Its overall communication and computation complexities are $O(h\cdot d^{\st 2})$ and $O(h\cdot d)$ respectively, where $h$ is the total number of bins in the hash table, $d$ is a bin's capacity (often $d=100$), and $h\cdot d$ is linear with $c$.  
Recently, a multi-party PSI that supports repeated delegation and efficient \emph{updates} has been proposed in \cite{AbadiDMT22}. It allows a party to efficiently update its outsourced set securely. It is also in the semi-honest model and uses a pseudorandom function, hash table, and Bloom filter. The protocol imposes $O(h\cdot d^{\st 2}\cdot m)$ and $O(h\cdot d\cdot m)$  computation and communication costs respectively, during the PSI computation. It also imposes $O(d^{\st 2})$  computation and communication overheads, during the update phase.  Its runtime, during the PSI computation, is up to two times faster than the  PSI in \cite{eopsi}. It is yet to be seen how a fair delegated (two/multi-party) PSI can be designed.


\section{Notations and Preliminaries}

\subsection{Notations}

Table \ref{table:notation-table} summarises the main notations used in this paper. 


\begin{table*}[!htb]
\begin{scriptsize}
\begin{center}
\footnotesize{
\caption{ \small{Notation Table}.}\label{commu-breakdown-party} 
\renewcommand{\arraystretch}{.7}
\scalebox{.837}{
\begin{tabular}{|c|c|c|c|c|c|c|c|c|c|c|c|c|c|} 

\hline 
\multicolumn{1}{|c|}{\cellcolor{yellow!10}\scriptsize{Setting}}&\cellcolor{yellow!10} \scriptsize{Symbol}&\cellcolor{yellow!10} \scriptsize{Description}\\
\hline 


\cellcolor{yellow!10}&\cellcolor{gray!20}\scriptsize$p$&\cellcolor{gray!20}\scriptsize \text{Large prime number}\\   
\cellcolor{yellow!10}&\cellcolor{white!20}\scriptsize{$\mathbb{F}_{\st p}$}&\cellcolor{white!20}\scriptsize \text{A finite field of prime order $p$}\\   
\cellcolor{yellow!10}&\cellcolor{gray!20}\scriptsize$\cl$&\cellcolor{gray!20}\scriptsize \text{Set of all clients, $\{ A_{\st 1},...,   A_{\st m},  D\}$ }\\   
\cellcolor{yellow!10}&\cellcolor{white!20}\scriptsize$D$&\cellcolor{white!20}\scriptsize \text{Dealer client}\\   
\cellcolor{yellow!10}&\cellcolor{gray!20}\scriptsize$A_{\st m}$&\cellcolor{gray!20}\scriptsize{Buyer client}\\ 
\cellcolor{yellow!10}&\cellcolor{white!20}\scriptsize$m$&\cellcolor{white!20}\scriptsize \text{Total number of clients (excluding $D$)}\\   
\cellcolor{yellow!10}&\cellcolor{gray!20}\scriptsize$\mathtt{H}$&\cellcolor{gray!20}\scriptsize \text{Hash function}\\ 
\cellcolor{yellow!10}&\cellcolor{white!20}\scriptsize$|S_{\st\cap}|$&\cellcolor{white!20}\scriptsize{Intersection size}\\ 
\cellcolor{yellow!10}&\cellcolor{gray!20}\scriptsize$\Smin$&\cellcolor{gray!20}\scriptsize{Smallest set's size}\\
\cellcolor{yellow!10}&\cellcolor{white!20}\scriptsize$\Smax$&\cellcolor{white!20}\scriptsize{Largest set's size}\\
\cellcolor{yellow!10}&\cellcolor{gray!20}\scriptsize$|$&\cellcolor{gray!20}\scriptsize{Divisible}\\
\cellcolor{yellow!10}&\cellcolor{white!20}\scriptsize$\setminus$&\cellcolor{white!20}\scriptsize{Set subtraction}\\
\cellcolor{yellow!10}&\cellcolor{gray!20}\scriptsize$c$&\cellcolor{gray!20}\scriptsize{Set's cardinality}\\ 
 \cellcolor{yellow!10}&\cellcolor{white!20}\scriptsize$h$&\cellcolor{white!20}\scriptsize{Total number of bins in a hash table}\\ 
 \cellcolor{yellow!10}&\cellcolor{gray!20}\scriptsize$d$&\cellcolor{gray!20}\scriptsize{A bin's capacity}\\ 
\cellcolor{yellow!10}&\cellcolor{white!20}\scriptsize$\lambda$ &\cellcolor{white!20}\scriptsize Security parameter  \\  

\cellcolor{yellow!10}&\cellcolor{gray!20}\scriptsize \ole&\cellcolor{gray!20}\scriptsize{Oblivious Linear Evaluation}\\ 
\cellcolor{yellow!10}&\cellcolor{white!20}\scriptsize$\ole^{\st +}$&\cellcolor{white!20}\scriptsize{Advanced \ole}\\ 

\cellcolor{yellow!10}&\cellcolor{gray!20}\scriptsize$\comcom$&\cellcolor{gray!20}\scriptsize \text{Commitment algorithm of commitment}\\ 

\cellcolor{yellow!10}&\cellcolor{white!20}\scriptsize$\comver$&\cellcolor{white!20}\scriptsize \text{Verification algorithm of commitment}\\ 

\cellcolor{yellow!10}&\cellcolor{gray!20}\scriptsize$\mkgen$&\cellcolor{gray!20}\scriptsize \text{Tree construction algorithm of Merkle tree}\\ 

\cellcolor{yellow!10}&\cellcolor{white!20}\scriptsize$\mkprove$&\cellcolor{white!20}\scriptsize \text{Proof generation algorithm of Merkle tree}\\ 

\cellcolor{yellow!10}&\cellcolor{gray!20}\scriptsize$\mkver$&\cellcolor{gray!20}\scriptsize \text{Verification algorithm of Merkle tree}\\ 

\cellcolor{yellow!10}&\cellcolor{white!20}\scriptsize{\ct}&\cellcolor{white!20}\scriptsize \text{Coin tossing protocol}\\  

 \cellcolor{yellow!10}&\cellcolor{gray!20}\scriptsize{\vopr}&\cellcolor{gray!20}\scriptsize \text{Verifiable Oblivious Poly. Randomization}\\
 
 \cellcolor{yellow!10} &\cellcolor{white!20}\scriptsize{\zspa}&\cellcolor{white!20}\scriptsize \text{ Zero-sum Pseudorandom Values Agreement}\\
  
  \cellcolor{yellow!10}   &\cellcolor{gray!20}\scriptsize{\zspaa}&\cellcolor{gray!20}\scriptsize \text{\zspa with an External Auditor}\\

   \cellcolor{yellow!10}  &\cellcolor{white!20}\scriptsize{\p}&\cellcolor{white!20}\scriptsize \text{Multi-party PSI with Fair Compensation}\\
     
    \cellcolor{yellow!10}      &\cellcolor{gray!20}\scriptsize{\ep}&\cellcolor{gray!20}\scriptsize \text{Multi-party PSI with Fair Compensation and Reward}\\

  \cellcolor{yellow!10}   &\cellcolor{white!20}\scriptsize{\fpsi}&\cellcolor{white!20}\scriptsize \text{Protocol that realises \p}\\

  \cellcolor{yellow!10}   &\cellcolor{gray!20}\scriptsize{\epsi}&\cellcolor{gray!20}\scriptsize \text{
          Protocol that realises \ep}\\

\cellcolor{yellow!10}&\cellcolor{white!20}\scriptsize$\prf$ &\cellcolor{white!20}\scriptsize  Pseudorandom function \\

  \cellcolor{yellow!10}&\cellcolor{gray!20}\scriptsize$\prp$ &\cellcolor{gray!20}\scriptsize  Pseudorandom permutation \\ 

  \cellcolor{yellow!10}   &\cellcolor{white!20}\scriptsize{$gcd$}&\cellcolor{white!20}\scriptsize \text{Greatest common divisor}\\

\cellcolor{yellow!10}\multirow{-34}{*}{\rotatebox[origin=c]{90}{\cellcolor{yellow!10}\scriptsize{ {Generic}}}}
  \cellcolor{yellow!10}   &\cellcolor{gray!20}\scriptsize{$\negl$}&\cellcolor{gray!20}\scriptsize \text{Negligible function}\\
     \hline

\end{tabular}
}
\scalebox{.825}{
\begin{tabular}{|c|c|c|c|c|c|c|c|c|c|c|c|c|c|} 
\hline 
\multicolumn{1}{|c|}{\cellcolor{yellow!10}\scriptsize{Setting}}&\cellcolor{yellow!10} \scriptsize{Symbol}&\cellcolor{yellow!10} \scriptsize{Description}\\
\hline 
     
\cellcolor{yellow!10}&\cellcolor{white!20}\scriptsize$\SCpc$&\cellcolor{white!20}\scriptsize \text{Prisoner's Contract}\\   

\cellcolor{yellow!10}&\cellcolor{gray!20}\scriptsize$\SCcc$&\cellcolor{gray!20}\scriptsize \text{Colluder’s Contract}\\   

\cellcolor{yellow!10}&\cellcolor{white!20}\scriptsize$\SCtc$&\cellcolor{white!20}\scriptsize \text{Traitor's Contract}\\   
\cellcolor{yellow!10}&\scriptsize  \cellcolor{gray!20}\scriptsize$\cc$&\cellcolor{gray!20}\scriptsize \text{Server’s cost for computing a task}\\   
\cellcolor{yellow!10}&\cellcolor{white!20}\scriptsize$\chc$&\cellcolor{white!20}\scriptsize \text{Auditor's cost for resolving disputes
}\\   
\cellcolor{yellow!10}&\scriptsize  \cellcolor{gray!20}\scriptsize$\dc$&\cellcolor{gray!20}\scriptsize \text{Deposit a server pays to get the job}\\  
\cellcolor{yellow!10}&\cellcolor{white!20}\scriptsize$\wc$&\cellcolor{white!20}\scriptsize \text{Amount a server receives for completing the task}\\  
\multirow{-8}{*}{\rotatebox[origin=c]{90}{\cellcolor{yellow!10}\scriptsize{ {Counter}}}}
\multirow{-8}{*}{\rotatebox[origin=c]{90}{\scriptsize{ {Collusion}}}}
\multirow{-8}{*}{\rotatebox[origin=c]{90}{\scriptsize{ {Contracts}}}}
&\cellcolor{gray!20}\scriptsize$(pk, sk)$&\cellcolor{gray!20}\scriptsize \text{\scf's auditor's public-private key pair}\\  
\hline 



\cellcolor{yellow!10}&\cellcolor{white!20}\scriptsize$\scf$&\cellcolor{white!20}\scriptsize {\fpsi's smart contract}\\   
\cellcolor{yellow!10}&\cellcolor{gray!20}\scriptsize$\bm\omega, \bm\omega',\bm\rho  $&\cellcolor{gray!20}\scriptsize {Random poly. of degree} $d$\\   
\cellcolor{yellow!10}&\cellcolor{white!20}\scriptsize$\bm\gamma, \bm\delta$&\cellcolor{white!20}\scriptsize {Random poly. of degree} $d+1$\\
\cellcolor{yellow!10}&\cellcolor{gray!20}\scriptsize$\bm\nu^{\st{{(C)}}}$&\cellcolor{gray!20}\scriptsize {Blinded poly. sent by each $C$ to \scf}\\ 
\cellcolor{yellow!10}&\cellcolor{white!20}\scriptsize$\bm\phi$&\cellcolor{white!20}\scriptsize {Blinded poly. encoding the intersection}\\   
\cellcolor{yellow!10}&\cellcolor{gray!20}\scriptsize$\bm\chi$&\cellcolor{gray!20}\scriptsize {Poly. sent to \scf to identify misbehaving parties}\\ 
\cellcolor{yellow!10}&\cellcolor{white!20}\scriptsize$\bar L$&\cellcolor{white!20}\scriptsize {List of identified misbehaving parties}\\ 
%

\cellcolor{yellow!10}&\cellcolor{gray!20}\scriptsize&\cellcolor{gray!20}\scriptsize {A portion of a party's deposit into \scf}\\   

\cellcolor{yellow!10}&\multirow{-2}{*}{\cellcolor{gray!20}\scriptsize$\yc$}&\cellcolor{gray!20}\scriptsize{transferred to honest clients if it misbehaves}\\ 

\cellcolor{yellow!10}&\scriptsize$mk$&\scriptsize{Master key of \prf}\\ 
\cellcolor{yellow!10}&\cellcolor{gray!20}\scriptsize$\qinit$&\cellcolor{gray!20}\scriptsize{Initiation predicate}\\ 
\cellcolor{yellow!10}&\scriptsize$\qdel$&\scriptsize{Delivery predicate}\\ 
\cellcolor{yellow!10}&\cellcolor{gray!20}\scriptsize$\qUnFAbt$&\cellcolor{gray!20}\scriptsize{UnFair-Abort predicate}\\ 

\multirow{-14}{*}{\rotatebox[origin=c]{90}{\cellcolor{yellow!10}\scriptsize{ {\withFai (\fpsi)}}}}
\cellcolor{yellow!10}&\scriptsize$\qFAbt$&\scriptsize{Fair-Abort predicate}\\ 

\hline 


\cellcolor{yellow!10}&\cellcolor{gray!20}\scriptsize$\SCe$&\cellcolor{gray!20}\scriptsize {\epsi's smart contract} \\   
\cellcolor{yellow!10}&\cellcolor{white!20}\scriptsize$\dc'$&\cellcolor{white!20}\scriptsize {Extractor's deposit} \\
\cellcolor{yellow!10}&\cellcolor{gray!20}\scriptsize$\yc'$&\cellcolor{gray!20}\scriptsize {Each client's deposit into \scf}\\   
\cellcolor{yellow!10}&\cellcolor{white!20}\scriptsize$\lc$&\cellcolor{white!20}\scriptsize {Reward a client earns for an intersection element}\\   
\cellcolor{yellow!10}&\cellcolor{gray!20}\scriptsize$\rc$&\cellcolor{gray!20}\scriptsize {Extractor's cost for extracting an intersection element}\\  
\cellcolor{yellow!10}&\cellcolor{white!20}\scriptsize$\fc$&\cellcolor{white!20}\scriptsize {Shorthand for $\lc(m-1)$}\\ 
\cellcolor{yellow!10}&\cellcolor{gray!20}&\cellcolor{gray!20}\scriptsize{Price a buyer pays for an intersection element}\\ 
\cellcolor{yellow!10}&\multirow{-2}{*}{\cellcolor{gray!20}\scriptsize$\vc$}&\cellcolor{gray!20}\scriptsize{$\vc=m\cdot \lc+2 \rc$}\\ 
\cellcolor{yellow!10}&\scriptsize$mk'$&\scriptsize{Another master key of \prf}\\ 

\cellcolor{yellow!10}&\cellcolor{gray!20}\scriptsize$ct_{\st mk}$&\cellcolor{gray!20}\scriptsize {Encryption of $mk$ under $pk$}\\   
\cellcolor{yellow!10}&\scriptsize$\qdelwr$&\scriptsize{Delivery-with-Reward predicate}\\ 

\multirow{-12}{*}{\rotatebox[origin=c]{90}{\cellcolor{yellow!10}\scriptsize{ {\withRew (\epsi)}}}}
\cellcolor{yellow!10}&\cellcolor{gray!20}\scriptsize$\qUnFAbtwr$&\cellcolor{gray!20}\scriptsize{UnFair-Abort-with-Reward predicate}\\

\hline


\end{tabular}\label{table:notation-table}}}
\end{center}
\end{scriptsize}
\end{table*}


\subsection{Security Model}\label{sec::sec-model}

In this paper, we use the simulation-based paradigm of secure computation \cite{DBLP:books/cu/Goldreich2004} to define and prove the proposed protocols. Since both types of (static) active and passive adversaries are involved in our protocols, we will provide formal definitions for both types. In this work, we consider a static adversary, we assume there is an authenticated private (off-chain) channel between the clients and we consider a standard public blockchain, e.g., Ethereum.

 \subsubsection{Two-party Computation.} A two-party protocol $\Gamma$ problem is captured by specifying a random process that maps pairs of inputs to pairs of outputs, one for each party. Such process is referred to as a functionality denoted by  $f:\{0,1\}^{\st *}\times\{0,1\}^{\st *}\rightarrow\{0,1\}^{\st *}\times\{0,1\}^{\st *}$, where $f:=(f_{\st 1},f_{\st 2})$. For every input pair $(x,y)$, the output pair is a random variable $(f_{\st 1} (x,y), f_{\st 2} (x,y))$, such that the party with input $x$ wishes to obtain $f_{\st 1} (x,y)$ while the party with input $y$ wishes to receive $f_{\st 2} (x,y)$. When $f$ is deterministic, then $f_{\st 1} =f_{\st 2}$. In the setting where $f$ is asymmetric and only one party (say the first one) receives the result, $f$ is defined as $f:=(f_{\st 1}(x,y), \bot)$.

 \subsubsection{Security in the Presence of Passive Adversaries.}  In the passive adversarial model, the party corrupted by such an adversary correctly follows the protocol specification. Nonetheless, the adversary obtains the internal state of the corrupted party, including the transcript of all the messages received, and tries to use this to learn information that should remain private. Loosely speaking, a protocol is secure if whatever can be computed by a party in the protocol can be computed using its input and output only. In the simulation-based model, it is required that a party’s view in a protocol's 
 execution can be simulated given only its input and output. This implies that the parties learn nothing from the protocol's execution. More formally, party $i$’s view (during the execution of $\Gamma$) on input pair  $(x, y)$ is denoted by $\mathsf{View}_{\st i}^{\st \Gamma}(x,y)$ and equals $(w, r^{\st i}, m_{\st 1}^{\st i}, ..., m_{\st t}^{\st i})$, where $w\in\{x,y\}$ is the input of $i^{\st th}$ party, $r_{\st i}$ is the outcome of this party's internal random coin tosses, and $m_{\st j}^{\st i}$ represents the $j^{\st th}$ message this party receives.  The output of the $i^{\st th}$ party during the execution of $\Gamma$ on $(x, y)$ is denoted by $\mathsf{Output}_{\st i}^{\st \Gamma}(x,y)$ and can be generated from its own view of the execution.  The joint output of both parties is denoted by $\mathsf{Output}^{\st \Gamma}(x,y):=(\mathsf{Output}_{\st 1}^{\st \Gamma}(x,y), \mathsf{Output}_{\st 2}^{\st \Gamma}(x,y))$.

\begin{definition}
Let $f$ be the deterministic functionality defined above. Protocol $\Gamma$ security computes $f$ in the presence of a static  passive adversary if there exist polynomial-time algorithms $(\mathsf {Sim}_{\st 1}, \mathsf {Sim}_{\st 2})$ such that:
\end{definition}

  \begin{equation*}
  \{\mathsf {Sim}_{\st 1}(x,f_{\st 1}(x,y))\}_{\st x,y}\stackrel{c}{\equiv} \{\mathsf{View}_{\st 1}^{\st \Gamma}(x,y) \}_{\st x,y}\\
  \end{equation*}
  \begin{equation*}
    \{\mathsf{Sim}_{\st 2}(y, f_{\st 2}(x,y))\}_{\st x,y}\stackrel{c}{\equiv} \{\mathsf{View}_{\st 2}^{\st \Gamma}(x,y) \}_{\st x,y}
  \end{equation*}

 \subsubsection{Security in the Presence of Active Adversaries.}  In this adversarial model, the corrupted party may
arbitrarily deviate from the protocol specification, to learn the private inputs of the other parties or to influence the outcome of the computation. In this case,  the adversary may not use the input provided. Therefore, beyond the possibility that a corrupted party may learn more than it should, correctness is also required. This means that a corrupted party must not be able to cause the output to be incorrectly distributed. Moreover, we require independence of inputs meaning that a corrupted party cannot make its input depend on the other party’s input. To capture the threats,
the security of a protocol is analyzed by comparing what an adversary can do in the real protocol to what it can do in an ideal scenario that is secure by definition. This is formalized by considering an ideal computation involving an incorruptible Trusted Third Party (TTP) to whom the parties send their inputs and receive the output of the ideal functionality. Below, we describe the executions in the ideal and real models. 
 
First, we describe the execution in the ideal model. Let $P_{\st 1}$ and $P_{\st 2}$ be the parties participating in the
protocol, $i\in \{0, 1\}$ be the index of the corrupted party, and $\mathcal A$ be a non-uniform
probabilistic polynomial-time adversary. Also, let $z$ be an auxiliary input given to $\mathcal A$ while  $x$ and $y$ be the input of party $P_{\st 1}$ and $P_{\st 2}$  respectively.  The honest party, $P_{\st j}$, sends its received input to TTP.  The corrupted party $P_{\st i}$ may either abort (by replacing the input with a special abort message $\Lambda_{\st i}$),  send its received input or send some other input of the same length to TTP. This decision is made by the adversary and may depend on the input value of $P_{\st i}$ and $z$. If TTP receives $\Lambda_{\st i}$, then it sends $\Lambda_{\st i}$ to the honest party and the ideal execution terminates.  Upon obtaining an input pair $(x, y)$, TTP computes $f_{\st 1}(x, y)$ and $f_{\st 2}(x, y)$. It first sends $f_{\st i}(x, y)$ to  $P_{\st i}$ which replies with ``continue'' or $\Lambda_{\st i}$. In the former case, TTP sends  $f_{\st j}(x, y)$ to  $P_{\st j}$ and in the latter it sends $\Lambda_{\st i}$ to  $P_{\st j}$. The honest party always outputs the message that it obtained from TTP. A malicious party may output an arbitrary function of its initial inputs and the message it has obtained from TTP.  The ideal execution of $f$ on inputs $(x, y)$ and $z$ is denoted by $\mathsf{Ideal}^{\st f}_{\st\mathcal{A}(z), i}(x,y)$ and is defined as the output pair of the honest party and $\mathcal{A}$ from the above ideal execution.  In the real model, the real two-party protocol $\Gamma$ is executed
without the involvement of TTP. In this setting, $\mathcal{A}$ sends all messages on
behalf of the corrupted party and may follow an arbitrary strategy.
The honest party follows the instructions of $\Gamma$. The real execution of $\Gamma$ is denoted by $\mathsf{Real}^{\st \Gamma}_{\st\mathcal{A}(z), i}(x,y)$, it is defined as the joint output of the parties engaging in the real execution of $\Gamma$ (on the inputs), in the presence of $\mathcal{A}$.

 Next, we define security. At a high level, the definition states that a secure protocol in the real model emulates the ideal model. This is formulated by stating that adversaries in the ideal model can simulate executions of the protocol in the real model. 
 
\begin{definition}\label{def::MPC-active-adv}
Let $f$ be the two-party functionality defined above and $\Gamma$ be a two-party protocol that computes $f$.   Protocol $\Gamma$ securely computes $f$ with abort in the presence of static active adversaries if for every non-uniform probabilistic polynomial time adversary $\mathcal{A}$ for the real model, there exists a non-uniform probabilistic polynomial-time adversary (or simulator) $\mathsf{Sim}$ for the ideal model, such that for every $i\in \{0,1\}$, it holds that: 

\begin{equation*}
\{\mathsf {Ideal}^{\st f}_{\st \mathsf{Sim}(z), i}(x,y)\}_{\st x,y,z}\stackrel{c}{\equiv} \{\mathsf{Real}_{\st \mathcal{A}(z), i}^{\st \Gamma}(x,y) \}_{\st x,y,z}
\end{equation*}
\end{definition}

\subsection{Smart Contracts}

Cryptocurrencies, such as Bitcoin \cite{bitcoin} and Ethereum \cite{ethereum}, beyond offering a decentralised currency,  support computations on transactions. In this setting, often a certain computation logic is encoded in a computer program, called a \emph{``smart contract''}. To date, Ethereum is the most predominant cryptocurrency framework that enables users to define arbitrary smart contracts. In this framework,  contract code is stored on the blockchain and executed by all parties (i.e., miners) maintaining the cryptocurrency,  when the program inputs are provided by transactions. The program execution's correctness is guaranteed by the security of the underlying blockchain components. To prevent a denial-of-service attack, the framework requires a transaction creator to pay a  fee, called \emph{``gas''}, depending on the complexity of the contract running on it. 

\subsection{Counter Collusion Smart Contracts}\label{Counter-Collusion-Smart-Contracts}

In order to let a party, e.g., a client, efficiently delegate a computation to a  couple of potentially colluding third parties, e.g., servers, Dong   \et \cite{dong2017betrayal} proposed two main smart contracts; namely, ``Prisoner's Contract'' ($\SCpc$) and ``Traitor's Contract'' (\SCtc).  
The Prisoner's contract is signed by the client and the servers. This contract tries to incentivize correct computation by using the following idea. It requires each server to pay a deposit before the computation is delegated. It is equipped with an external auditor that is invoked to detect a misbehaving server only when the servers provide non-equal results.

If a server behaves honestly, then it can withdraw its deposit. Nevertheless, if a cheating server is detected by the auditor, then (a portion) of its deposit is transferred to the client. If one of the servers is honest and the other one cheats, then the honest server receives a reward taken from the cheating server's deposit. However, the dilemma, created by \SCpc between the two servers, can be addressed if they can make an enforceable promise, say via a ``Colluder's Contract'' (\SCcc),  in which one party, called ``ringleader'', would pay its counterparty a bribe if both follow the collusion and provide an incorrect computation to \SCpc. To counter \SCcc, Dong   \et proposed \SCtc, which incentivises a colluding server to betray the other server and report the collusion without being penalised by \SCpc. In this work, we slightly adjust and use these contracts. We have stated the related parameters of these tree contracts in Table \ref{table:notation-table}. We refer readers to Appendix \ref{appendix::Counter-Collusion-Contracts} for the full description of the parameters and contracts.

%

%

\subsection{Pseudorandom Function and Permutation}

Informally, a pseudorandom function is a deterministic function that takes a key of length $\lambda$ and an input; and outputs a value  indistinguishable from that of  a truly random function.  In this paper, we use pseudorandom functions:   $\mathtt {PRF}: \{0,1\}^{\st \lambda}\times \{0,1\}^{\st *} \rightarrow  \mathbb{F}_{\st p}$, where $\log_{\st 2}(p)=\lambda$ is the security parameter. In practice, a pseudorandom function can be obtained from an efficient block cipher \cite{DBLP:books/crc/KatzLindell2007}.

The definition of a pseudorandom permutation, $\mathtt {PRP}: \{0,1\}^{\st \lambda}\times \{0,1\}^{\st *} \rightarrow  \mathbb{F}_{\st p}$, is very similar to that of a pseudorandom function, with a difference; namely, it is required the keyed function $\PRP(k,.)$ to be indistinguishable from a uniform permutation, instead of a uniform function. 


\subsection{Commitment Scheme}\label{subsec:commit}

 A commitment scheme involves a  \emph{sender} and a \emph{receiver}. It also  involves  two phases; namely, \emph{commit} and  \emph{open}. In the commit phase, the sender  commits to a message: $x$ as $\comcom(x,r)=com$, that involves a secret value: $r\stackrel{\st\$}\leftarrow \{0,1\}^{\st\lambda}$. At the end of the commit phase,  the commitment ${com}$ is sent to the receiver. In the open phase, the sender sends the opening $\hat{x}:=(x, r)$ to the receiver who verifies its correctness: $\comver({com},\hat{x})\stackrel{\st ?}=1$ and accepts if the output is $1$.  A commitment scheme must satisfy two properties: (a) \textit{hiding}: it is infeasible for an adversary (i.e., the receiver) to learn any information about the committed  message $x$, until the commitment ${com}$ is opened, and (b) \textit{binding}: it is infeasible for an adversary (i.e., the sender) to open a commitment ${com}$ to different values $\hat{x}':=(x',r')$ than that was  used in the commit phase, i.e., infeasible to find  $\hat{x}'$, \textit{s.t.} $\comver({com},\hat{x})=\comver({com},\hat{x}')=1$, where $\hat{x}\neq \hat{x}'$.  There exist efficient  commitment schemes both in (a) the standard model, e.g., Pedersen scheme \cite{Pedersen91}, and (b)  the random oracle model using the well-known hash-based scheme such that committing  is : $\mathtt{H}(x||r)={com}$ and $\comver({com},\hat{x})$ requires checking: $\mathtt{H}(x||r)\stackrel{\st ?}={com}$, where $\mathtt{H}:\{0,1\}^{\st *}\rightarrow \{0,1\}^{\st\lambda}$ is a collision-resistant hash function, i.e., the probability to find $x$ and $x'$ such that $\mathtt{H}(x)=\mathtt{H}(x')$ is negligible in the security parameter $\lambda$.
\subsection{Hash Tables}
A hash table is an array of   bins each of which can hold a set of elements. It is accompanied by a hash function. To insert an element, we first compute the element's hash,  and then store the element in the bin whose index is the element's hash. In this paper, we set the table's parameters appropriately to ensure the number of elements in each bin does not exceed a predefined capacity. Given the maximum number of elements $c$ and the bin's maximum size $d$, we can determine the number of bins, $h$, by analysing hash tables under the balls into the bins model  \cite{DBLP:conf/stoc/BerenbrinkCSV00}. In Appendix \ref{Preliminary-Hash-Table}, we explain how the hash table parameters are set.


\subsection{Merkle Tree}\label{sec::merkle-tree}

A Merkle tree is a data structure that supports a compact commitment of a set of values/blocks.  As a result, it includes two parties, prover $\mathcal{P}$ and verifier $\mathcal{V}$. The  Merkle tree scheme includes three algorithms $(\mkgen, \mkprove,$ $\mkver)$, defined as follows: 

\begin{itemize}
\item[$\bullet$] The algorithm that constructs a Merkle tree, $\mkgen$, is run by $\mathcal{V}$. It takes  blocks, $u:=u_{\st 1},...,u_{\st n}$, as input. Then, it groups the blocks  in pairs. Next,   a collision-resistant hash function, $\mathtt{H}(.)$, is used to hash each pair. After that, the hash values are grouped in pairs and each pair is further hashed, and this process is repeated until only a single hash value, called ``root'', remains. This yields a  tree with the leaves corresponding to the input blocks and the root corresponding to the last remaining hash value. $\mathcal{V}$ sends the root to $\mathcal{P}$.
\item[$\bullet$] The proving algorithm, $\mkprove$, is run by $\mathcal{P}$. It takes a block index, $i$, and a tree as inputs. It outputs a vector proof, of  $\log_{\st 2}(n)$ elements. The proof asserts the membership of $i$-th block in the tree, and consists of all the sibling nodes on a path from the $i$-th block to the root of the Merkle tree (including $i$-th block). The proof is given to $\mathcal{V}$.
\item[$\bullet$] The verification algorithm, $\mkver$, is run by $\mathcal{V}$. It takes as an input $i$-th block, a proof, and the tree's root. It checks if the $i$-th block corresponds to the root. If the verification passes, it outputs $1$; otherwise, it outputs $0$.

\end{itemize}

The Merkle tree-based scheme has two properties: \emph{correctness} and \emph{security}. Informally, the correctness requires that if both parties run the algorithms correctly, then a proof is always accepted by  $\mathcal{V}$. The security requires that a computationally bounded malicious $\mathcal{P}$ cannot convince  $\mathcal{V}$ into accepting an incorrect proof, e.g., proof for a non-member block. The security relies on the assumption that it is computationally infeasible to find the hash function's collision. Usually, for the sake of simplicity, it is assumed that the number of blocks, $n$, is a power of $2$. The height of the tree, constructed on $m$ blocks, is $\log_{\st 2}(n)$. 
\subsection{Polynomial Representation of Sets}\label{sec::poly-rep}

The idea of using a polynomial to represent a set's elements was proposed by Freedman  \et in \cite{DBLP:conf/eurocrypt/FreedmanNP04}. Since then,   the idea has been widely used,  e.g., in \cite{DBLP:conf/fc/AbadiTD16,Feather2020,GhoshS19,DBLP:conf/crypto/KissnerS05}. In this representation, set elements $S=\{s_{\st 1},...,s_{\st d}\}$ are defined over  $\mathbb{F}_{\st p}$ and  set $S$ is represented as a polynomial of   form: $\mathbf{p}(x)=\prod\limits ^{\st {d}}_{\st i=1}(x-s_{\st i})$, where $\mathbf{p}(x) \in \mathbb{F}_{\st p}[X]$ and $\mathbb{F}_{\st p}[X]$ is a polynomial ring.  Often a   polynomial,  $\mathbf{p}(x)$, of degree $d$ is  represented in the ``coefficient form'' as follows:  $\mathbf{p}(x)=a_{\st 0}+a_{\st 1}\cdot x+...+ a_{\st d}\cdot x^{\st d}$. The form $\prod\limits ^{\st {d}}_{\st i=1}(x-s_{\st i})$ is a special case of the coefficient form. As shown in \cite{BonehGHWW13,DBLP:conf/crypto/KissnerS05}, for two sets $S^{\st (A)}$ and $S^{\st (B)}$ represented by polynomials $\mathbf{p}_{\st A}$ and $\mathbf{p}_{\st B}$ respectively, their product, which is polynomial $\mathbf{p}_{\st A}\cdot \mathbf{p}_{\st  B}$,  represents the set union, while their greatest common divisor, $gcd($$\mathbf{p}_{\st A}$$,\mathbf{p}_{\st B})$, represents the set intersection. For two polynomials $\mathbf{p}_{\st A}$ and $\mathbf{p}_{\st B}$ of degree $d$, and two random polynomials $\bm\gamma_{\st A}$ and  $\bm\gamma_{\st B}$ of degree $d$, it is proven in~\cite{BonehGHWW13,DBLP:conf/crypto/KissnerS05} that: $\bm\theta=\bm\gamma_{\st A}\cdot \mathbf{p}_{\st A}+\bm\gamma_{\st B}\cdot\mathbf{p}_{\st B}=\bm\mu\cdot gcd(\mathbf{p}_{\st A},\mathbf{p}_{\st B})$, where $\bm\mu$ is a uniformly random polynomial, and polynomial $\bm\theta$ contains only information about the elements in  $S^{\st (A)}\cap S^{\st (B)}$, and contains no information about other elements in $S^{\st (A)}$ or $S^{\st (B)}$.  

Given a polynomial $\bm\theta$ that encodes sets intersection, one can find the set elements in the intersection via one of the following approaches. First, via polynomial evaluation: the party who already has one of the original input sets, say  $\mathbf{p}_{\st A}$,  evaluates $\bm\theta$ at every element $s_{\st i}$ of $\mathbf{p}_{\st A}$ and considers $s_{\st i}$ in the intersection if $\mathbf{p}_{\st A}(s_{\st i})=0$. Second,   via polynomial root extraction:   the party who does not have one of the original input sets, extracts the roots of $\bm\theta$,  which contain  the roots of (i) random polynomial  $\bm\mu$ and (ii) the polynomial that represents the intersection, i.e., $gcd(\mathbf{p}_{\st A},\mathbf{p}_{\st B})$. In this approach, to distinguish errors (i.e., roots of $\bm\mu$) from the intersection, PSIs in \cite{eopsi,DBLP:conf/crypto/KissnerS05} use the \emph{``hash-based padding technique''}. In this technique, every element $u_{\st i}$ in the set universe $\mathcal{U}$, becomes $s_{\st i}=u_{\st i}||\mathtt{H}(u_{\st i})$, where $\mathtt{H}$ is a cryptographic hash function with a sufficiently large output size. Given a field's arbitrary element, $s \in \mathbb{F}_p$ and $\mathtt{H}$'s output size $|\mathtt{H}(.)|$, we can parse $s$ into $x_{\st 1}$ and $x_{\st 2}$, such that $s=x_{\st 1}||x_{\st 2}$ and  $|x_{\st 2}|=|\mathtt{H}(.)|$. In a  PSI that uses polynomial representation and this padding technique, after we extract each root of  $\bm\theta$, say $s$, we parse it into $(x_{\st 1}, x_{\st 2})$ and check $x_{\st 2}\stackrel{?}=\mathtt{H}(x_{\st 1})$.  If the equation holds, then we consider $s$ as an element of the intersection.



\subsection{Horner's Method}
Horner's method \cite{DBLP:journals/ibmrd/Dorn62} allows for efficiently evaluating polynomials at a given point, e.g., $x_{\st 0}$. Specifically, given a polynomial of the form: $\tau(x)= a_{\st 0}+a_{\st 1}\cdot x+a_{\st 2}\cdot x^{\st 2}+...+a_{\st d}\cdot x^{\st d}$ and a point: $x_{\st 0}$, one can efficiently evaluate the polynomial at $x_{\st 0}$ iteratively, in the following fashion: $\tau(x_{\st 0})=a_{\st 0}+x_{\st 0}(a_{\st 1} + x_{\st 0}(a_{\st 2}+...+x_{\st 0}(a_{\st d-1}+x_{\st 0}\cdot a_{\st d})...)))$. Evaluating  a polynomial of degree $d$ naively requires  $d$ additions and $\frac{(d^{\st 2}+d)}{2}$ multiplications. However, using Horner's method the evaluation requires only $d$ additions and $d$ multiplications. We use this method throughout the paper.

\subsection{Oblivious Linear Function Evaluation}\label{sec::OLE-plus}

Oblivious Linear function Evaluation (\ole) is a two-party protocol that involves a sender and receiver. In \ole,  the sender  has two  inputs  $a, b\in \mathbb{F}_{\st p}$ and the receiver has a single input, $c \in \mathbb{F}_{p}$.  The protocol allows the receiver to learn only $s = a\cdot c + b \in \mathbb{F}_{\st p}$, while the sender learns nothing. Ghosh \textit{et al.} \cite{GhoshNN17} proposed an efficient \ole that has $O(1)$ overhead and involves mainly symmetric key operations. Later, in \cite{GhoshN19} an enhanced \ole, called $\ole^{\st +}$ was proposed. The latter ensures that the receiver cannot learn anything about the sender's inputs,  even if it sets its input to $0$. In this paper, we use $\ole^{\st +}$. We refer readers to Appendix \ref{apndx:F-OLE-plus}, for its construction.  

\subsection{Coin-Tossing Protocol}

A Coin-Tossing protocol, \ct, allows two mutually distrustful parties, say $A$ and $B$, to jointly generate a single random bit. Formally, \ct computes the functionality $\fct(in_{\st A}, in_{\st B})\rightarrow (out_{\st A}, out_{\st B})$, which takes $in_{\st A}$ and  $in_{\st B}$ as inputs of $A$ and $B$ respectively and outputs $out_{\st A}$ to $A$ and $out_{\st B}$ to $B$, where $out_{\st A}=out_{\st B}$. A basic security requirement of a \ct is that the resulting bit is (computationally) indistinguishable from a truly random bit. 

Blum proposed a simple \ct in \cite{Blum82} that works as follows. Party $A$ picks a random bit $in_{\st A}\stackrel{\st \$}\leftarrow\{0,1\}$, commits to it and sends the commitment to $B$ which sends its choice of random input, $in_{\st B}\stackrel{\st \$}\leftarrow\{0,1\}$, to $A$. Then, $A$ sends the opening of the commitment (including $in_{\st A}$) to $B$, which checks whether the commitment matches its opening. If so, each party computes the final random bit as $in_{\st A}\oplus in_{\st B}$.  

There have also been \emph{fair} coin-tossing protocols, e.g., in \cite{MoranNS09}, that ensure either both parties learn the result or nobody does. These protocols can be generalised to \emph{multi-party} coin-tossing protocols to generate a \emph{random string} (rather than a single bit), e.g., see \cite{BeimelOO10,KiayiasRDO17}.
The overall computation and communication complexities of (fair) multi-party coin-tossing protocols are often linear with the number of participants. In this paper, any secure multi-party \ct that generates a random string can be used. For the sake of simplicity, we let a multi-party \fct take $m$ inputs and output a single value, i.e., $\fct(in_{\st 1}, ..., in_{\st m})\rightarrow out$. 




%

\section{Definition of Multi-party PSI with Fair Compensation}\label{sec::F-PSI-model}

In this section, we present the notion of multi-party PSI with Fair Compensation  (\p) which allows either all clients to get the result or the honest parties to be financially compensated if the protocol aborts in an unfair manner, where only dishonest parties learn the result.

In a  $\mathcal{PSI}^{\st \mathcal{FC}}$, three types of parties are involved; namely, (1) a set of clients $\{A_{\st 1},...,A_{\st m}\}$ potentially \emph{malicious} (i.e., active adversaries) and all but one may collude with each other, (2) a non-colluding dealer, $D$, potentially semi-honest (i.e., a passive adversary) and has an input set, and (3) an auditor \aud potentially semi-honest, where all parties except \aud have input set. For simplicity, we assume that given an address one can determine whether it belongs to \aud.

The basic functionality that any multi-party PSI computes can be defined as $f^{\st\text{PSI}} (S_{\st 1},..., S_{\st m+1})\rightarrow\underbrace{(S_{\st\cap},..., S_{\st\cap})}_{\st m+1}$, where $S_{\st\cap}= S_{\st 1} \cap S_{\st 2}, ...,\cap\  S_{\st m+1}$.  To formally define a \p, we equip the above PSI functionality with four predicates,  $Q:=(\qinit, \qdel, \qUnFAbt, \qFAbt)$, which ensure that certain financial conditions are met. 
   We borrow three of these predicates (i.e., $\qinit, \qdel, \qUnFAbt$) from the ``fair and robust multi-party computation'' initially proposed in \cite{KiayiasZZ16}; nevertheless, we will (i) introduce an additional predicate  \qFAbt and (ii) provide more formal accurate definitions of these predicates. 
   
Predicate \qinit specifies under which condition a protocol that realises \p should start executing, i.e., when all set owners have enough deposit. Predicate \qdel determines in which situation parties receive their output, i.e., when honest parties receive their deposit back. Predicate \qUnFAbt specifies under which condition the simulator can force parties to abort if the adversary learns the output,  i.e., when an honest party receives its deposit back plus a predefined amount of compensation. Predicate \qFAbt specifies under which condition the simulator can force parties to abort if the adversary receives no output, i.e., when honest parties receive their deposits back. We observed that the latter predicate should have been defined in the generic framework in \cite{KiayiasZZ16} too; as the framework should have also captured the cases where an adversary may abort without learning any output after the onset of the protocol.  Intuitively, by requiring any protocol that realises \p to implement a wrapped version of $f^{\st\text{PSI}}$ that includes $Q$, we will ensure that an honest set owner only aborts in an unfair manner if \qUnFAbt returns  $1$, it only aborts in a fair manner if \qFAbt returns  $1$, and outputs a valid value if \qdel returns $1$. Now, we formally define each of these predicates.

 \begin{definition}
  [\qinit: Initiation predicate] Let $\mathcal{G}$ be a stable ledger, $adr_{\st sc}$ be smart contract $sc$'s address, $Adr$ be a set of $m+1$ distinct addresses, and $\xc$ be a fixed amount of coins. Then, predicate $\qinit(\mathcal{G}, adr_{\st sc}, m+1, Adr, \xc)$ returns $1$ if every address in $Adr$ has at least $\xc$ coins in $sc$; otherwise, it returns $0$. 
 \end{definition}

    \begin{definition}  [\qdel:
    Delivery predicate] Let $pram:=(\mathcal{G}, adr_{\st sc}, \xc)$ be the parameters defined above, and   $adr_{\st i}\in Adr$ be the address of an honest party. 
    %
    %
    Then, predicate $\qdel(pram, adr_{\st i})$ returns $1$ if $adr_{\st i}$ has sent $\xc$ amount to $sc$ and received  $\xc$ amount from it; thus,  its balance in $sc$ is $0$. Otherwise, it returns $0$. 
  \end{definition}

   \begin{definition}  [\qUnFAbt: UnFair-Abort predicate]
 Let $pram:=(\mathcal{G}, adr_{\st sc}, \xc)$ be the parameters defined above, and $Adr'\subset Adr$ be a set containing honest parties' addresses, $m' = |Adr'|$,  and   $adr_{\st i}\in Adr'$. Let also $G$ be a compensation function that takes as input  three parameters $(\depsc, adr_{\st i}, m')$, where $\depsc$ is the amount of coins  that all $m+1$ parties  deposit. It returns the amount of compensation each honest party must receive, i.e., $G(\depsc, ard_{\st i}, m')\rightarrow \xci$. Then, predicate $\qUnFAbt$ is defined as $\qUnFAbt(pram, G, \depsc, m', adr_{\st i})\rightarrow (a,b)$, where $a=1$ if $adr_{\st i}$ is an honest party's address and $adr_{\st i}$ has sent $\xc$ amount to $sc$ and received  $\xc+\xci$  from it, and $b=1$ if $adr_{\st i}$ is \aud's address and $adr_{\st i}$ received $\xci$  from $sc$. Otherwise, $a=b=0$. 
  \end{definition}

\begin{definition}  [\qFAbt: Fair-Abort predicate]
 Let $pram:=(\mathcal{G}, adr_{\st sc}, \xc)$ be the parameters defined above, and $Adr'\subset Adr$ be a set containing honest parties' addresses, $m' = |Adr'|$,     $adr_{\st i}\in Adr'$, and  $adr_{\st j}$ be \aud's address. Let $G$ be the compensation function, defined above and let $G(deps, ard_{\st j}, m')\rightarrow \xc_{\st j}$ be the compensation that the auditor must receive.  Then, predicate $\qFAbt(pram, G, \depsc, m', adr_{\st i}, adr_{\st j})$ returns $1$, if $adr_{\st i}$ (s.t. $adr_{\st i}\neq adr_{\st j}$) has sent $\xc$ amount to $sc$ and received  $\xc$  from it, and $adr_{\st j}$ received $\xc_{\st j}$  from $sc$. Otherwise, it returns $0$. 
 \end{definition}

 Next, we present a formal definition of \p. 
 
\begin{definition}[\p]\label{def::PSI-Q-fair}
Let $f^{\st \text{PSI}}$ be the multi-party PSI functionality defined above. We say  protocol $\Gamma$ realises  $f^{\st \text{PSI}}$ with $Q$-fairness in the presence of $m-1$ static active-adversary clients (i.e., $A_{\st j}$s) or a static passive dealer $D$ or passive auditor $Aud$, if for every non-uniform probabilistic polynomial time adversary $\mathcal{A}$ for the real model, there exists a non-uniform probabilistic polynomial-time adversary (or simulator) $\mathsf{Sim}$ for the ideal model, such that for every $I\in \{A_{\st 1},...,A_{\st m}, D, Aud\}$, it holds that: 

\begin{equation*}
\{\mathsf {Ideal}^{\st \mathcal{W}(f^{\st \text{PSI}},Q)}_{\st \mathsf{Sim}(z), I}(S_{\st 1},..., S_{\st m+1})\}_{\st S_{\st 1},..., S_{\st m+1},z}\stackrel{c}{\equiv} \{\mathsf{Real}_{\st \mathcal{A}(z), I}^{\st \Gamma}(S_{\st 1},..., S_{\st m+1}) \}_{\st S_{\st 1},..., S_{\st m+1},z}
\end{equation*}
where  $z$ is an auxiliary input given to $\mathcal{A}$ and  $\mathcal{W}(f^{\st \text{PSI}},Q)$ is a functionality that wraps $f^{\st \text{PSI}}$ with predicates $Q:=(\qinit, \qdel, \qUnFAbt, \qFAbt)$. 
  \end{definition}

\section{Other Subroutines Used in \withFai}\label{sec::subroutines}
In this section, we present three subroutines and a primitive that we developed and are used in the instantiation of \p, i.e., \withFai. 

\subsection{Verifiable Oblivious Polynomial Randomisation (\vopr)}\label{sec::vopr}


In the \vopr, two parties are involved, (i) a sender which is potentially a passive adversary and (ii) a receiver that is potentially an active adversary. The protocol allows the receiver with input polynomial $\bm\beta$ (of degree $e'$) and the sender with input random polynomials $\bm\psi$ (of degree $e$) and  $\bm{\alpha}$ (of degree $e+e'$)   to compute: $\bm\theta=\bm\psi\cdot \bm\beta+\bm\alpha$, such that (a) the receiver learns only $\bm\theta$ and nothing about the sender's input even if it sets $\bm \beta=0$, (b) the sender learns nothing, and (c) the receiver's misbehaviour is detected in the protocol. Thus, the functionality that  \vopr computes is defined as $f^{\st {\vopr}}( (\bm\psi, \bm{\alpha}), \bm\beta)\rightarrow(\bot, \bm\psi\cdot \bm\beta+\bm\alpha)$. 
We will use {\vopr} in \withFai for two main reasons:  (a) to let a party re-randomise its counterparty's polynomial (representing its set) and (b) to impose a MAC-like structure to the randomised polynomial; such a structure will allow a verifier to detect if \vopr's output has been modified. 

Now, we outline how we design \vopr without using any (expensive) zero-knowledge proofs.\footnote{Previously, Ghosh \textit{et al.}  \cite{GhoshN19} designed a protocol called Oblivious Polynomial Addition (OPA) to meet similar security requirements that we laid out above. But, as shown in \cite{AbadiMZ21}, OPA  is susceptible to several serious attacks. } In the setup phase, both parties represent their input polynomials in the regular coefficient form; therefore, the sender's polynomials are defined as $\bm\psi=\sum\limits^{\st e}_{\st i=0}g_{\st i}\cdot x^{\st i}$ and  $\bm\alpha=\sum\limits^{\st e+e'}_{\st j=0}a_{\st j}\cdot x^{\st j}$ and the receiver's polynomial is defined as $\bm\beta=\sum\limits^{\st e'}_{\st i=0}b_{\st i}\cdot x^{\st i}$, where $b_{\st i}\neq 0$. However, the sender computes each coefficient $a_{\st j}$ (of polynomial $\bm \alpha$) as follows,  $a_{\st j}=\sum\limits^{\substack{\st k=e'\\ \st t=e}}_{\st t,k=0} a_{\st t,k}$,  where  $t+k=j$ and each $a_{\st t,k}$ is a random value. For instance, if $e=4$ and $e'=3$, then $a_{\st 3}=a_{\st \st 0,3}+a_{\st 3,0}+a_{\st 1,2}+a_{\st 2,1}$. Shortly, we explain why polynomial $\bm\alpha$ is constructed this way.

In the computation phase,  to compute polynomial $\bm\theta$, the two parties interactively multiply and add the related coefficients in a secure way using $\ole^{\st +}$ (presented in Section \ref{sec::OLE-plus}). Specifically,
%
%
for every $j$  (where $0\leq j\leq e'$) the sender sends $g_{\st i}$ and $a_{\st i,j}$ to an instance of  $\ole^{\st +}$, while the receiver sends $b_{\st j}$ to the same instance,  which returns $c_{\st i,j}=g_{\st i}\cdot b_{\st j}+ a_{\st i,j}$ to the receiver. This process is repeated for every $i$, where $0 \leq i \leq e$. Then, the receiver uses $c_{\st i,j}$ values to construct the resulting polynomial, $\bm\theta$.

The reason that the sender imposes the above structure to (the coefficients of)  $\bm\alpha$ in the setup, is to let the parties securely compute $\bm\theta$ via  $\ole^{\st +}$. Specifically, by imposing this structure (1) the sender  can blind each product $g_{\st i}\cdot b_{\st j}$  with  random value $a_{\st i,j}$ which is a component of $\bm\alpha$'s coefficient and (2) the receiver can construct a result polynomial of the form $\bm\theta=\bm\psi\cdot \bm\beta+\bm\alpha$.

Now, we outline how the verification works. To check the result's correctness, the sender picks and sends a random value $z$ to the receiver which computes  $\bm\theta(z)$ and $\bm\beta(z)$ and sends these two values  to the sender. The sender computes  $\bm\psi(z)$ and $\bm\alpha(z)$ and then checks if equation  $\bm\theta({ z})=\bm\psi({ z})\cdot \bm\beta({ z})+\bm\alpha({ z})$ holds. It accepts the result if the check passes.   

Figure \ref{fig:VOPR} describes \vopr in detail. Note, \vopr requires that the sender to insert non-zero coefficients, i.e., $b_{\st i}\neq 0$ for all $i,0 \leq i \leq e'$. If a   sender inserts a zero-coefficient, then it will learn only a random value (due to  $\ole^{\st +}$), accordingly it cannot pass \vopr's verification phase. However, such a requirement will not affect Justitia's correctness, as we will discuss in Section \ref{Fair-PSI-Protocol} and Appendix \ref{sec::error-prob}.


\begin{figure}[ht]
\setlength{\fboxsep}{1pt}
\begin{center}
    \begin{tcolorbox}[enhanced,width=5.5in, 
    drop fuzzy shadow southwest,
    colframe=black,colback=white]

\begin{enumerate}
\small{
\item[$\bullet$] \textit{Input.}
\begin{enumerate}
\item[$\bullet$]  \textit{Public Parameters}: upper bound on input polynomials' degree: $e$ and $e'$. 
\item[$\bullet$]  \textit{Sender Input}:  random polynomials: $\bm\psi=\sum\limits^{\st e}_{\st i=0}g_{\st i}\cdot x^{\st i}$ and  $\bm\alpha=\sum\limits^{\st e+e'}_{\st j=0}a_{\st j}\cdot x^{\st j}$, where $g_{\st i}\stackrel{\st \$}\leftarrow \mathbb{F}_p$.  Each $a_{\st j}$ has the  form: $a_{\st j}=\sum\limits^{\substack{\st k=e'\\ \st t=e}}_{\st t,k=0} a_{\st t,k}$,  such that $t+k=j$ and $a_{\st t,k}\stackrel{\st \$}\leftarrow \mathbb{F}_p$.

\item[$\bullet$] \textit{Receiver Input}:  polynomial $\bm\beta=\bm\beta_{\st 1}\cdot \bm\beta_{\st 2}=\sum\limits^{\st e'}_{\st i=0}b_{\st i}\cdot x^{\st i}$, where $\bm\beta_{\st 1}$ is a random polynomial of degree $1$, $\bm\beta_{\st 2}$ is an arbitrary polynomial of degree $e'-1$, and $b_{\st i}\neq 0$.

\end{enumerate}
\item[$\bullet$] \textit{Output.} The receiver gets $\bm\theta=\bm\psi\cdot \bm\beta+\bm\alpha$.
\item \textbf{Computation:}

\begin{enumerate} 

\item Sender and receiver together for every $j$, $0\leq j\leq e'$,  invoke $e+1$ instances of $\ole^{\st +}$. In particular, $\forall j, 0\leq j\leq e'$: sender sends $g_{\st i}$ and $a_{\st i,j}$ while the receiver sends $b_{\st j}$ to $\ole^{\st +}$ that returns: $c_{\st i,j}=g_{\st i}\cdot b_{\st j}+ a_{\st i,j}$ to the receiver ($\forall i, 0\leq i\leq e$).

 \item The receiver sums component-wise values $c_{\st i,j}$  that results in polynomial:
 \vspace{-2mm}
  $$\bm\theta=\bm\psi\cdot \bm\beta+\bm\alpha=\sum\limits^{\substack{\st i=e\\ \st j=e'}}_{\st i,j=0}c_{\st i, j}\cdot x^{\st i+j}$$ 
  \vspace{-2mm}
  %

\end{enumerate}
\item \label{Verification} \textbf{Verification:}
\begin{enumerate}

\item \label{picking-random-x}Sender: picks a random (non-zero) value  $z$ and sends it to the receiver.

\item\label{receiver-OLE-invocation} Receiver: sends $\theta_{\st z}=\bm\theta(z)$ and $\beta_{\st z}=\bm\beta(z)$ to the sender.

\item\label{receiver-OLE-invocation} Sender:  computes $\psi_{\st z}=\bm\psi(z)$ and $\alpha_{\st z}=\bm\alpha(z)$ and checks   if equation  $\theta_{\st z}=\psi{\st z}\cdot \beta_{\st z}+\alpha_{\st z}$ holds. If the equation holds, it concludes that the computation was performed correctly. Otherwise, it aborts. 
\end{enumerate}
}
 \end{enumerate}
 \end{tcolorbox}
\end{center}
\caption{Verifiable Oblivious Polynomial Randomization ({\vopr}) } 
\label{fig:VOPR}
\end{figure}

\begin{theorem}\label{theorem::VOPR}
Let $f^{\st \vopr}$ be the functionality defined above. If the enhanced \ole (i.e., $\ole^{\st +}$) is secure against malicious (or active) adversaries, then the  Verifiable Oblivious Polynomial Randomisation (\vopr), presented in Figure \ref{fig:VOPR}, securely computes $f^{\st \vopr}$ in the presence of (i) a semi-honest sender and honest receiver or (ii) a malicious receiver and honest sender. 
\end{theorem}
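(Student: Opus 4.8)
The plan is to carry out the argument in the $\ole^{\st +}$-hybrid model --- i.e., with every $\ole^{\st +}$ invocation replaced by an ideal call to the $\ole^{\st +}$ functionality --- and then invoke the sequential composition theorem for secure two-party computation \cite{DBLP:books/cu/Goldreich2004}: since $\ole^{\st +}$ securely realises its functionality against malicious adversaries by hypothesis, a simulator in the hybrid model yields one for the real protocol. Two observations are used throughout. First, in the hybrid model the receiver's only messages during the Computation phase are the values $c_{\st i,j}=g_{\st i}\cdot b_{\st j}+a_{\st i,j}$, and each pad $a_{\st i,j}$ is drawn fresh and uniformly; hence, conditioned on the receiver's $\ole^{\st +}$-inputs, the whole tuple $\{c_{\st i,j}\}$ is uniform, and the sender's (uniformly random) polynomial $\bm\psi$ is information-theoretically hidden from the receiver beyond the combination $\bm\theta=\bm\psi\bm\beta+\bm\alpha$. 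Second, I assume the $(e+1)(e'+1)$ ideal calls are issued so that the receiver commits all of its inputs before any output is returned; this lets the simulator extract the receiver's effective input and query the trusted party before it must answer those calls.

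For case (i) (semi-honest sender, honest receiver), the simulator runs the honest sender's code on its input, sampling $\bm\psi,\bm\alpha$, their decomposition $\{a_{\st t,k}\}$, and the challenge $z$ exactly as the sender would; the ideal $\ole^{\st +}$ calls deliver nothing to the sender, so nothing is simulated there. For the Verification phase it samples $\beta_{\st z}\stackrel{\st\$}{\leftarrow}\mathbb{F}_{\st p}$, sets $\theta_{\st z}:=\bm\psi(z)\cdot\beta_{\st z}+\bm\alpha(z)$, and outputs the resulting view. The only $\bm\beta$-dependent entry of the real view is $\beta_{\st z}=\bm\beta(z)=\bm\beta_{\st 1}(z)\cdot\bm\beta_{\st 2}(z)$; since $\bm\beta_{\st 1}$ is a fresh uniformly random degree-$1$ polynomial drawn inside the protocol, $\bm\beta_{\st 1}(z)$ --- and therefore $\bm\beta(z)$ --- is uniform for every fixed $z$ unless $z$ is one of the at most $e'-1$ roots of $\bm\beta_{\st 2}$, an event of probability at most $(e'-1)/p$. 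Hence the simulated and real views are statistically indistinguishable.

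For case (ii) (malicious receiver $\mathcal{A}$, honest sender), the simulator runs $\mathcal{A}$ internally while emulating the honest sender and every ideal $\ole^{\st +}$ instance; from $\mathcal{A}$'s inputs to those calls it reads values $b_{\st j}$ and forms the extracted polynomial $\bm\beta^{\st *}:=\sum_{\st j}b_{\st j}\cdot x^{\st j}$ (a prescribed product form is not enforced, which is immaterial for the functionality). It sends $\bm\beta^{\st *}$ to the trusted party, obtains $\bm\theta$, and returns to $\mathcal{A}$ a tuple $\{\hat c_{\st i,j}\}$ sampled uniformly subject to $\sum_{\st i,j}\hat c_{\st i,j}\cdot x^{\st i+j}=\bm\theta$, so that an honestly computed output of $\mathcal{A}$ equals $\bm\theta$. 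It then plays the honest verifier --- picks a uniform $z$, sends it, and receives $(\theta_{\st z}^{\st *},\beta_{\st z}^{\st *})$ --- and instructs the trusted party to release the sender's output iff $\beta_{\st z}^{\st *}=\bm\beta^{\st *}(z)$ and $\theta_{\st z}^{\st *}=\bm\theta(z)$, and to abort otherwise, finally outputting $\mathcal{A}$'s output. Indistinguishability rests on: (a) $\mathcal{A}$'s view is distributed as in the real hybrid execution, because there the $\{c_{\st i,j}\}$ are uniform (fresh pads) while here the $\{\hat c_{\st i,j}\}$ are uniform after the uniformly distributed $\bm\theta$ is marginalised out, and they are consistent with the output the receiver would compute; (b) the honest sender's accept/abort bit is reproduced --- when $\beta_{\st z}^{\st *}=\bm\beta^{\st *}(z)$ the real test $\theta_{\st z}^{\st *}=\bm\psi(z)\beta_{\st z}^{\st *}+\bm\alpha(z)$ reduces to $\theta_{\st z}^{\st *}=\bm\theta(z)$, which the simulator can check, and when $\beta_{\st z}^{\st *}\neq\bm\beta^{\st *}(z)$ the real test holds only if $\theta_{\st z}^{\st *}=\bm\theta(z)+\bm\psi(z)\cdot(\beta_{\st z}^{\st *}-\bm\beta^{\st *}(z))$, which, because $\bm\psi(z)$ is uniform given $\mathcal{A}$'s view, happens with probability at most $1/p$, so ``abort'' is correct up to a negligible error; (c) the receiver's output is a fixed function of its view in both worlds and the sender's bit is a function of $(z,\theta_{\st z}^{\st *},\beta_{\st z}^{\st *},\bm\theta)$ in both, so the joint distributions are statistically close. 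Composition then transfers this to the real protocol.

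I expect the Verification step of case (ii) to be the main obstacle: the simulator must decide whether the honest sender accepts without ever seeing $\bm\psi$ or $\bm\alpha$. The plan sidesteps this by pulling $\bm\theta$ from the ideal functionality up front and fixing the simulated $\ole^{\st +}$ outputs to be consistent with it, so that a rule expressed purely in $(\bm\theta,z,\theta_{\st z}^{\st *},\beta_{\st z}^{\st *})$ matches the real check whenever $\mathcal{A}$ reports the honest evaluation $\bm\beta^{\st *}(z)$, and by leaning on the information-theoretic masking of $\bm\psi(z)$ by the $\ole^{\st +}$ pads (which uses that $\bm\psi$ is uniformly random) to dispose of every other behaviour of $\mathcal{A}$. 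Extraction of $\bm\beta^{\st *}$, uniformity of the $c_{\st i,j}$, and the passage from the hybrid model back to the real protocol are then routine.
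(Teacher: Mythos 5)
Your proof is correct and follows essentially the same route as the paper's: both work in the $\ole^{\st +}$-hybrid model, extract $\bm\beta$ from the receiver's $\ole^{\st +}$-inputs, exploit the joint uniformity of the pad-masked $c_{\st i,j}$ for receiver-side simulation, use that the random degree-$1$ factor makes $\bm\beta(z)$ statistically close to uniform for sender-side simulation, and lean on the information-theoretic hiding of $\bm\psi(z)$ to argue that a cheating receiver fails the verification check with overwhelming probability. Your write-up of the corrupt-receiver simulator is somewhat tidier than the paper's --- pulling $\bm\theta$ from the trusted party first and sampling $\{\hat c_{\st i,j}\}$ uniformly subject to $\sum\hat c_{\st i,j}x^{\st i+j}=\bm\theta$ keeps the simulated $\ole^{\st +}$ outputs consistent with the ideal output (whereas the paper's description has the simulator draw fresh $\bm\psi,\bm\alpha$, so its $c_{\st i,j}$ do not obviously sum to the $\bm\theta$ it then checks against), and your two-way case split on $\beta_{\st z}^{\st *}\stackrel{?}{=}\bm\beta^{\st *}(z)$ compresses the paper's four-way accept/abort analysis --- but the underlying argument is the same.
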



\begin{proof}

Before proving Theorem \ref{theorem::VOPR}, we present Lemma \ref{theorem::evaluation-of-random-poly} and Theorem \ref{theorem:coef-poly-prod} that will be used in the proof of Theorem \ref{theorem::VOPR}. 
Informally, Lemma \ref{theorem::evaluation-of-random-poly} states that the evaluation of a random polynomial at a fixed value results in a uniformly random value. 


\begin{lemma}\label{theorem::evaluation-of-random-poly}
Let $x_{\st i}$ be an element of a finite field $\mathbb{F}_{\st p}$, picked uniformly at random and $\bm\mu(x)$ be a random polynomial of constant degree $d$ (where $d=const(p)$) and defined over $\mathbb{F}_{\st p}[X]$. 
 Then, the evaluation of $\bm\mu(x)$ at $x_{\st i}$ is distributed uniformly at random over the elements of the  field, i.e., $Pr[\bm\mu(x_{\st i})=y]=\frac{1}{p}$, where $y\in \mathbb{F}_{\st p}$.

\end{lemma}

\begin{proof} Let $\bm\mu(x)=a_{\st 0}+\sum\limits^{\st d}_{\st j=1} a_{\st j}x^{\st j}$, where the  coefficients  are distributed uniformly at random over the field.

Then, for any choice of $x$ and random coefficients   $a_{\st 1},...,a_{\st d}$, it holds that:

$$Pr[\bm\mu(x)=y]=Pr[\sum\limits_{\st i=0}^{\st d} a_{\st j}\cdot x^{\st j} = y] = Pr[a_{\st 0} = y-\sum\limits_{\st j=1}^{\st d} a_{\st j}\cdot x^{\st j}] = \frac{1}{p}$$

  $\forall y\in \mathbb{F}_{\st p}$, as $a_{\st 0}$ has been picked uniformly at random from $\mathbb{F}_{\st p}$. 
\hfill\(\Box\)
\end{proof}

Informally, Theorem \ref{theorem:coef-poly-prod} states that the product of two arbitrary polynomials (in coefficient form) is a polynomial whose roots are the union of the two original polynomials.  Below, we formally state it. The theorem has been taken from \cite{AbadiMZ21}.

\begin{theorem}\label{theorem:coef-poly-prod}
Let $\mathbf{p}$ and   $\mathbf{q}$ be two arbitrary non-constant polynomials of degree $d$ and $d'$ respectively, such that  $\mathbf{p} , \mathbf{q}   \in \mathbb{F}_{\st p}[X]$ and they are in coefficient form. Then, the product of the two polynomials is a polynomial whose roots include precisely the two polynomials' roots. 
\end{theorem}


We refer readers to Appendix \ref{sec::proof-of-poly-union} for the proof of Theorem \ref{theorem:coef-poly-prod}. Next, we  prove the main theorem, i.e., Theorem \ref{theorem::VOPR}, by considering the case where each party is corrupt, in turn.


\

\noindent\textbf{Case 1: Corrupt sender.} In the real execution, the sender's view is defined as follows:

$$ \mathsf{View}_{\st S}^{\st \vopr} \Big((\bm\psi, \bm{\alpha}), \bm\beta\Big)=\{\bm\psi, \bm{\alpha}, r_{\st S},  \bm\beta(z), \bm\theta(z), \mathsf{View}^{\st \ole^{\st +}}_{\st S}, \bot \}$$
where $r_{\st S}$ is the outcome of internal random coins of the sender and $\mathsf{View}^{\st \ole^{\st +}}_{\st S}$ refers to the sender's real-model view during the execution of  $\ole^{\st +}$. The simulator $\mathsf{Sim}^{\st \vopr}_{\st S}$, which receives $\bm\psi$ and $\bm \alpha$, works as follows. 
\begin{enumerate}
\item generates an empty view. It appends to the view polynomials $(\bm\psi$, $\bm{\alpha})$ and coins $r'_{\st S}$ chosen uniformly at random. 
\item computes polynomial $\bm\beta=\bm\beta_{\st 1} \cdot \bm\beta_{\st 2}$, where $\bm\beta_{\st 1}$ is a random polynomial of degree $1$ and $\bm\beta_{\st 2}$ is an arbitrary polynomial of degree $e'-1$. Next,  it constructs polynomial $\bm\theta$ as follows: $\bm\theta=\bm\psi\cdot \bm\beta+\bm \alpha$.

\item picks value $z\stackrel{\st\$}\leftarrow \mathbb{F}_{p}$. Then, it evaluates polynomials $\bm\beta$ and $\bm\theta$  at point $z$. This results in values $\bm\beta_{\st z}$ and $\bm\theta_{\st z}$ respectively. It appends these two values to the view. 
\item extracts the sender-side simulation of $\ole^{\st +}$ from  $\ole^{\st +}$'s simulator. Let $\mathsf{Sim}^{\st \ole^{\st +}}_{\st S}$ be this simulation. Note, the latter simulation is guaranteed to exist, as $\ole^{\st +}$ has been proven secure (in \cite{GhoshN19}). It appends $\mathsf{Sim}^{\st \ole^{\st +}}_{\st S}$ and $\bot$ to its view. 
\end{enumerate}

Now, we are ready to show that the two views are computationally indistinguishable. The sender's inputs are identical in both models, so they have identical distributions. Since the real-model semi-honest adversary samples its randomness according to the protocol's description, the random coins in both models have identical distributions.  Next, we explain why values $\bm\beta(z)$ in the real model and $\bm\beta_{\st z}$ in the ideal model are (computationally) indistinguishable. In the real model, $\bm\beta(z)$ is the evaluation of polynomial $\bm\beta=\bm\beta_{\st 1}\cdot \bm\beta_{\st 2}$ at random point $z$, where $\bm\beta_{\st 1}$ is a random polynomial. We know that $\bm\beta(z)=\bm\beta_{\st 1}(z)\cdot \bm\beta_{\st 2}(z)$, for any (non-zero) $z$.  Moreover, by Lemma \ref{theorem::evaluation-of-random-poly}, we know that $\bm\beta_{\st 1}(z)$ is a uniformly random value. Therefore, $\bm\beta(z)=\bm\beta_{\st 1}(z)\cdot \bm\beta_{\st 2}(z)$ is a uniformly random value as well. In the ideal world, polynomial $\bm\beta$ has the same structure as $\bm\beta$ has (i.e., $\bm\beta=\bm\beta_{\st 1}\cdot \bm\beta_{\st 2}$, where $\bm\beta_{\st 1}$ is a random polynomial). That means $\bm\beta_{\st z}$ is a uniformly random value too. Thus,  $\bm\beta(z)$ and $\bm\beta_{\st z}$ are computationally indistinguishable. Next, we turn our attention to values $\bm\theta(z)$ in the real model and $\bm\theta_{\st z}$ in the ideal model. We know that $\bm\theta(z)$ is a function of $\bm\beta_{\st 1}(z)$, as polynomial $\bm\theta$ has been defined as $\bm\theta=\bm\psi\cdot \bm(\bm\beta_{\st 1}\cdot \bm\beta_{\st 2})+\bm\alpha$. Similarly, $\bm\theta_{\st z}$ is a function of  $\bm\beta_{\st z}$. As we have already discussed,  $\bm\beta(z)$ and $\bm\beta_{\st z}$ are computationally indistinguishable, so are their functions $\bm\theta(z)$ and $\bm\theta_{\st z}$. Moreover, as  $\ole^{\st +}$ has been proven secure,  $\mathsf{View}^{\st \ole^{\st +}}_{\st S}$ and  $\mathsf{Sim}^{\st \ole^{\st +}}_{\st S}$ are computationally indistinguishable. It is also clear that $\bot$ is identical in both models. We conclude that the two views are computationally indistinguishable.

\

\noindent\textbf{Case 2: Corrupt receiver.}  Let $\mathsf{Sim}^{\st\vopr}_{\st R}$ be the simulator, in this case, which uses a subroutine adversary, $\mathcal{A}_{\st R}$. $\mathsf{Sim}^{\st \vopr}_{\st R}$ works as follows. 
\begin{enumerate}
\item simulates ${\ole^{\st +}}$ and receives $\mathcal{A}_{\st R}$'s input coefficients $b_{\st j}$ for all $j$, $0\leq j \leq e'$, as we are in $f_{\st \ole^{\st +}}$-hybrid model.
\item reconstructs polynomial $\bm \beta$, given the above coefficients. 
\item simulates the honest sender's inputs as follows. 
It picks two random polynomials: ${\bm\psi}=\sum\limits^{\st e}_{\st i=0}{g}_{\st i}\cdot x^{\st i}$ and  ${\bm\alpha}=\sum\limits^{\st e+e'}_{\st j=0}{a}_{\st j}\cdot x^{\st j}$, such that ${g}_{\st i}\stackrel{\st \$}\leftarrow \mathbb{F}_{\st p}$ and  every $a_{\st j}$ has the  form: $a_{\st j}=\sum\limits^{\substack{\st k=e'\\ \st t=e}}_{\st t,k=0} a_{\st t,k}$,  where $t+k=j$ and $a_{\st t,k}\stackrel{\st \$}\leftarrow \mathbb{F}_{\st p}$. 
\item sends to ${\ole^{\st +}}$'s functionality values $g_{\st i}$ and $a_{\st i,j}$ and receives   $c_{\st i,j}$ from this functionality (for all $i,j$).
\item sends all ${c}_{\st i,j}$ to TTP and receives polynomial ${\bm\theta}$. 
\item picks a random value $ z$ from $\mathbb{F}_{\st p}$. Then, it computes $ \psi_{\st  z} = {\bm\psi}(  z)$ and $\alpha_{\st  z}= {\bm\alpha}(z)$. 
\item sends $ z$ and all ${c}_{\st i,j}$ to $\mathcal{A}_{\st R}$ which sends back $\theta_{ z}$ and $\beta_{\st z}$ to the simulator. 
\item sends  $ \psi_{\st  z}$ and $\alpha_{\st z}$ to $\mathcal{A}_{\st R}$. 
\item checks if the following relation hold:
\begin{equation}\label{equ::beta}
 \beta_{\st \bar z}={\bm\beta}( z) \hspace{6mm} \wedge \hspace{6mm} \theta_{\st  z}={\bm\theta}( z) \hspace{6mm}\wedge\hspace{6mm} {\bm\theta}( z)=  \psi_{\st  z}\cdot \beta_{\st  z}+\alpha_{\st  z}
 \end{equation}
 %
%


%
 %
 If Relation \ref{equ::beta} does not hold, it aborts (i.e., sends abort signal $\Lambda$ to the sender) and still proceeds to the next step. 
 
 \item outputs whatever $\mathcal{A}_{\st R}$ outputs. 
\end{enumerate}

We first focus on the adversary's output. Both values of $z$ in the real and ideal models have been picked uniformly at random from $\mathbb{F}_{\st p}$; therefore, they have identical distributions. In the real model, values  $\psi_{\st z}$ and $\alpha_{\st z}$ are the result of the evaluations of two random polynomials at (random) point $z$. In the ideal model, values $\psi_{\st  z}$ and $\alpha_{\st z}$ are also the result of the evaluations of two random polynomials (i.e., ${\bm\psi}$ and ${\bm\alpha}$) at point $ z$.  By Lemma \ref{theorem::evaluation-of-random-poly}, we know that the evaluation of a random polynomial at an arbitrary value  
 yields a uniformly random value in $\mathbb{F}_{\st p}$. Therefore, the distribution of pair $(\psi_{\st z}, \alpha_{\st z})$ in the real model is identical to that of pair $( \psi_{\st  z}, \alpha_{\st  z})$ in the ideal model. Moreover, the final result (i.e., values ${c}_{\st i,j}$) in the real model has the same distribution as the final result (i.e., values ${c}_{\st i,j}$)  in the ideal model, as they are the outputs of the ideal calls to $f_{\st \ole^{\st +}}$, as we are in the $f_{\st \ole^{\st +}}$-hybrid model. 

Next, we turn our attention to the sender's output. We will show that the output distributions of the honest sender in the ideal and real models are statistically close. 
%
%
Our focus will be on the probability that it aborts in each model, as it does not receive any other output. In the ideal model, $\mathsf{Sim}^{\st \vopr}_{\st R}$ is given the honestly generated result polynomial ${\bm \theta}$ (computed by TTP) and the adversary's input polynomial ${\bm \beta}$. $\mathsf{Sim}^{\st \vopr}_{\st R}$ aborts with a probability of 1 if Relation \ref{equ::beta} does not hold. However, in the real model, the honest sender (in addition to its inputs) is given only $\beta_{\st  z}$ and $\theta_{\st  z}$ and is not given polynomials ${\bm\beta}$ and  ${\bm\theta}$; it wants to check if the following equation holds, $\theta_{\st z} =  \psi_{\st z}\cdot  \beta_{\st z}+ \alpha_{\st z}$. Note, polynomial $\bm\theta=\bm\psi\cdot \bm\beta+\bm\alpha$ (resulted from ${c}_{\st i,j}$) is well-structured, as it satisfies the following three conditions, regardless of the adversary's input $\bm\beta$ to $\ole^{\st +}$, (i) $deg(\bm\theta)=Max \Big(deg(\bm\beta)+deg(\bm\psi), deg(\bm\alpha) \Big)$, as $\mathbb{F}_{\st p}[X]$ is an integral domain and ($\bm\psi,\bm\alpha$) are random polynomials, (ii)  the roots of the product polynomial $\bm\nu=\bm\psi\cdot \bm\beta$ contains exactly both polynomials' roots, by Theorem \ref{theorem:coef-poly-prod}, and (iii)  the roots of $\bm\nu+\bm\alpha$ is the intersection of the roots of $\bm\nu$ and $\bm\alpha$, as shown in \cite{DBLP:conf/crypto/KissnerS05}. Furthermore, polynomial $\bm \theta$ reveals no information (about  $\bm \psi$ and $\bm\alpha$ except their degrees) to the adversary and the pair $(\psi_{\st z}, \alpha_{\st z})$ is given to the adversary after it sends the pair $(\theta_{\st z}, \beta_{\st z})$ to the sender. 
There are exactly four cases where pair $(\theta_{\st z}, \beta_{\st z})$ can be constructed by the real-model adversary. Below, we state each case and the probability that the adversary is detected in that case during the verification, i.e., $\theta_{\st z}\stackrel{\st ?}=\psi{\st z}\cdot \beta_{\st z}+\alpha_{\st z}$. 
\begin{enumerate}
\item $\theta_{\st z}= \bm\theta(z) \wedge  \beta_{\st  z}={\bm\beta}( z)$. This is a trivial non-interesting case, as the adversary has behaved honestly, so it can always pass the verification. 
\item $\theta_{\st z}\neq \bm\theta(z) \wedge  \beta_{\st  z}={\bm\beta}( z)$. In this case, the adversary is detected with a probability of 1. 
\item $\theta_{\st z}= \bm\theta(z) \wedge  \beta_{\st  z}\neq {\bm\beta}( z)$.  In this case, the adversary is also detected with a probability of 1.
\item $\theta_{\st z}\neq \bm\theta(z) \wedge  \beta_{\st  z} \neq {\bm\beta}( z)$. In this case, the adversary is detected with an overwhelming probability, i.e., $1-\frac{1}{2^{\st 2\lambda}}$. 
\end{enumerate}

As we illustrated above, in the real model, the lowest probability that the honest sender would abort in case of adversarial behaviour is $1-\frac{1}{2^{\st 2\lambda}}$. Thus, the honest sender's output distributions in the ideal and real models are statistically close, i.e., $1$ \text{vs} $1-\frac{1}{2^{\st 2\lambda}}$. 

We conclude that the distribution of the joint outputs of the honest sender and adversary in the real and ideal models are computationally indistinguishable. 
  \hfill\(\Box\)\end{proof}


\subsection{Zero-sum Pseudorandom Values Agreement Protocol (\zspa)}

The \zspa  allows $m$ parties (the majority of which is potentially malicious) to efficiently agree on (a set of vectors, where each $i$-th vector has) $m$ pseudorandom values such that their sum equals zero. At a high level, the parties first sign a smart contract, register their accounts/addresses in it, and then run a  coin-tossing protocol \ct to agree on a key: $k$.  Next, one of the parties generates $m-1$ pseudorandom values $z_{\scriptscriptstyle i, j}$ (where $1\leq j\leq m-1$) using key $k$ and $\mathtt{PRF}$. It sets the last value as the additive inverse of the sum of the values generated, i.e. $z_{\scriptscriptstyle i, m}=-\sum\limits^{\scriptscriptstyle m-1}_{\scriptscriptstyle j=1}z_{\scriptscriptstyle i, j}$ (similar to the standard XOR-based secret sharing \cite{Schneier0078909}). 
%
%
Then, it constructs a Merkel tree on top of the pseudorandom values and stores only the tree's root $g$ and the key's hash value $q$ in the smart contract.  Then, each party (using the key) locally checks if the values (on the contract) have been constructed correctly; if so, then it sends a (signed) ``approved" message to the contract which only accepts messages from registered parties. Hence, the functionality that \zspa computes is defined as $f^{\st \zspa}\underbrace{(\bot,..., \bot)}_{\st m}\rightarrow \underbrace{((k, g, q),..., (k, g,q))}_{\st m}$, where $g$ is the Markle tree's root built on the pseudorandom values $z_{\st i, j}$, $q$ is the hash value of the key used to generate the pseudorandom values, and $m\geq 2$. Figure \ref{fig:ZSPA} presents \zspa in detail.

Briefly, \zspa will be used in \withFai to allow clients $\{A_{\st 1},...,A_{\st m}\}$ to provably agree on a set of pseudorandom values, where each set represents a pseudorandom polynomial (as the elements of the set are considered the polynomial's coefficients). Due to \zspa's property, the sum of these polynomials is zero.  Each of these polynomials will be used by a client to blind/encrypt the messages it sends to the smart contract, to protect the privacy of the plaintext message (from \aud, D, and the public). To compute the sum of the plaintext messages, one can easily sum the blinded messages, which removes the blinding polynomials. 


\begin{figure}[ht]
\setlength{\fboxsep}{1pt}
\begin{center}
    \begin{tcolorbox}[enhanced,width=5.5in, 
    drop fuzzy shadow southwest,
    colframe=black,colback=white]

\small{

\begin{enumerate}
\item[$\bullet$]    {Parties.} A set of clients $\{    A_{\st 1},...,  A_{\st m}\}$.
\item[$\bullet$]    {Input.}  $m$: the total number of participants, $adr$: an address of a deployed smart contract which maintains the clients' addresses and accepts messages only from them, and $b$: the total number of vectors. Let $b'=b-1$. 
\item[$\bullet$]   {Output.}  $k$: a secret key that generates $b$ vectors $[z_{\scriptscriptstyle 0,1},...,z_{\scriptscriptstyle 0,m}],...,[z_{\scriptscriptstyle b',1},...,z_{\scriptscriptstyle b',m}]$ of pseudorandom values, $q$: hash of the key,  $g$: a Merkle tree's root, and a vector of (signed) messages.


\item {\textbf{Coin-tossing.} $\ct (in_{\st 1},..., in_{\st m})\rightarrow k$}. 

All participants run a coin-tossing protocol to agree on $\mathtt{PRF}$'s key, $k$.
\item\label{ZSPA:val-gen}  {\textbf{Encoding.} $\mathtt{Encode}(k, m)\rightarrow (g,q)$}.

 One of the parties takes the following steps:  
\begin{enumerate}

\item for every $i$ (where $0\leq i \leq b'$), generates $m$ pseudorandom values as follows. 
 $$\forall j, 1\leq j \leq m-1: z_{\scriptscriptstyle i,j}=\mathtt{PRF}(k,i||j), \hspace{5mm} z_{\scriptscriptstyle i,m}=-\sum\limits^{\scriptscriptstyle m-1}_{\scriptscriptstyle j=1}z_{\scriptscriptstyle i,j}$$
\item   constructs a Merkel tree on top of all pseudorandom values,  $\mkgen(z_{\scriptscriptstyle 0,1},...,z_{\scriptscriptstyle b',m})\rightarrow g$. 

\item sends the Merkel tree's root: $g$,   and the key's hash: $q=\mathtt {H}(k)$ to $adr$. 

\end{enumerate}

\item\label{ZSPA:verify}{\textbf{Verification.} $\mathtt{Verify}(k, g, q, m, flag)\rightarrow (a, s)$}. 
Each party $A_{\st t}$ checks if, all $z_{\scriptscriptstyle i,j}$ values, the root $g$, and key's hash $q$ have been correctly generated, by retaking  step \ref{ZSPA:val-gen}. 
\begin{itemize}
\item If the checks pass, then it sets $a=1$ and
\begin{itemize}
\item if $flag=A$, where $A\in \{    A_{\st 1},...,  A_{\st m}\}$, then  it sets $s$ to a (signed) ``approved'' message, and sends $s$ to $adr$
\item  if $flag= A$, where $A\notin \{    A_{\st 1},...,  A_{\st m}\}$, then it sets $s = \bot$. 
\end{itemize}
\item If the checks do not pass, it aborts by returning $a=0$ and $s = \bot$. 
\end{itemize}
%
 \end{enumerate}
}
 \end{tcolorbox}
\end{center}
\caption{Zero-sum Pseudorandom Values Agreement (\zspa). The use of $flag$ allows an external party (e.g., an auditor) to locally run the verification without having to send any messages to the smart contract.} 
\label{fig:ZSPA}
\end{figure}

\begin{theorem}\label{theorem::ZSPA-comp-correctness}
Let $f^{\st \zspa}$ be the functionality defined above. If \ct is secure against a malicious adversary and the correctness of $\mathtt{PRF}$, $\mathtt{H}$, and Merkle tree holds, then \zspa,  in Figure \ref{fig:ZSPA}, securely computes $f^{\st \zspa}$ in the presence of $m-1 $ malicious  adversaries. 
\end{theorem}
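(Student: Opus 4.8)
The plan is to prove Theorem~\ref{theorem::ZSPA-comp-correctness} in the $f^{\st\ct}$-hybrid model, invoking the security of \ct as a black box via the composition theorem, and then arguing that all remaining messages are either publicly computable or deterministically derived from the shared key $k$. I would structure the argument around a single simulator $\mathsf{Sim}^{\st\zspa}$ that handles an arbitrary set $I\subset\{A_{\st 1},\dots,A_{\st m}\}$ of up to $m-1$ malicious parties controlled by a subroutine adversary $\mathcal{A}$; since $|I|\le m-1$, at least one honest party is present, which is what lets the simulator observe the contract state and drive the honest-party outputs. The key observation making this easy is that $f^{\st\zspa}$ takes no private inputs ($\bot$ from every party) and its output $(k,g,q)$ is a deterministic function of the coin-tossing output $k$ alone, via \prf, \mkgen, and $\mathtt{H}$. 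So once $k$ is fixed, everything else on the contract is reproducible by anyone.

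The simulation proceeds as follows. First, $\mathsf{Sim}^{\st\zspa}$ emulates the ideal \ct call: it receives $\mathcal{A}$'s inputs $\{in_{\st j}\}_{\st j\in I}$ intended for $f^{\st\ct}$, samples $k\stackrel{\st\$}\leftarrow\{0,1\}^{\st\lambda}$ (consistent with how the real $f^{\st\ct}$ would output a uniform string), and hands $k$ back to $\mathcal{A}$ as the \ct output. Next it splits into two cases depending on whether the party designated to run $\mathtt{Encode}$ is honest or corrupt. If that party is honest, $\mathsf{Sim}$ itself computes $z_{\st i,j}=\prf(k,i||j)$ for $1\le j\le m-1$, sets $z_{\st i,m}=-\sum_{\st j=1}^{\st m-1}z_{\st i,j}$, builds $g\leftarrow\mkgen(\dots)$ and $q=\mathtt{H}(k)$, and posts $(g,q)$ to the simulated contract exactly as the honest party would; it then emulates the honest parties' $\mathtt{Verify}$ steps, which always pass, producing signed ``approved'' messages. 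If instead the $\mathtt{Encode}$ party is corrupt, $\mathsf{Sim}$ reads the $(g',q')$ that $\mathcal{A}$ writes to the contract, recomputes the correct $(g,q)$ from $k$, and on behalf of each honest verifier checks $g'\stackrel{?}=g \wedge q'\stackrel{?}=q$ (equivalently, re-runs step~\ref{ZSPA:val-gen}); if the check fails the simulator makes the honest parties abort with $a=0,s=\bot$ and sends an abort to the TTP, and otherwise it lets them post ``approved''. In either case $\mathsf{Sim}$ forwards the appropriate ``continue''/abort decision to the ideal functionality so that honest parties' ideal outputs match, and outputs whatever $\mathcal{A}$ outputs.

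For indistinguishability I would compose three steps: (i) replacing the real \ct sub-protocol by its ideal functionality changes the view only negligibly, by the assumed malicious security of \ct and the sequential composition theorem; (ii) in the resulting hybrid, the key $k$ is uniform in both worlds, so the induced values $z_{\st i,j}$, the root $g$, and the hash $q$ are identically distributed in the real and ideal executions --- here I invoke \emph{correctness} of \prf, $\mathtt{H}$, and the Merkle tree, exactly as the theorem hypothesis states, to argue that honest parties' recomputation in $\mathtt{Verify}$ agrees bit-for-bit with what was posted whenever the poster was honest, so honest verifiers never spuriously abort; (iii) when the $\mathtt{Encode}$ party is malicious, the real honest verifiers accept iff the posted $(g',q')$ equals the value recomputed from $k$, which is precisely the predicate $\mathsf{Sim}$ evaluates, so the joint distribution of (honest outputs, adversary output) coincides. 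Since $f^{\st\zspa}$ has no private inputs, there is no privacy obligation beyond what \ct already provides, and the zero-sum property $\sum_{\st j}z_{\st i,j}=0$ holds by construction in both worlds.

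The main obstacle is a definitional subtlety rather than a cryptographic one: \zspa is specified as an interaction with a smart contract, but the theorem is phrased as secure computation of a plain functionality $f^{\st\zspa}$ with no contract. I would address this by treating the contract as a public broadcast/bulletin-board abstraction folded into the communication model (so that ``sending to $adr$'' is simply an authenticated broadcast visible to all parties and to the environment), which is consistent with the paper's stated network assumptions; the contract holds no secrets and performs no computation here, so this is sound. A secondary point needing care is that the protocol does not itself fix \emph{which} party runs $\mathtt{Encode}$ or how ties/aborts among multiple honest approvers are resolved --- I would assume, as is standard, a fixed public ordering (e.g. the lowest-indexed party) so that the simulator knows whether the encoder is corrupt, and note that if the encoder aborts the honest parties simply abort, which the simulator mirrors.
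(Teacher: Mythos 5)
Your proposal is correct and takes essentially the same route as the paper: both work in the $f_{\st \ct}$-hybrid model, split on whether the party running $\mathtt{Encode}$ (the ``sender'') is corrupt or honest, have the simulator recompute $(g,q)$ deterministically from the shared key $k$ and compare against (or hand to) the adversary, and reduce indistinguishability to the uniformity of $k$ plus the correctness of $\prf$, $\mathtt{H}$, and the Merkle tree. The paper's write-up handles corrupt-sender and corrupt-receiver as two separate simulators and simply declares that the sender speaks directly to receivers rather than the contract, whereas you fold everything into one simulator and spend a paragraph justifying the bulletin-board abstraction and the fixed choice of encoder; these are presentational refinements of the same argument rather than a different proof.
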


\begin{proof}
For the sake of simplicity, we will assume the sender, which generates the result, sends the result directly to the rest of the parties, i.e., receivers, instead of sending it to a smart contract. We first consider the case in which the sender is corrupt. 

\

\noindent\textbf{Case 1: Corrupt sender.}  Let $\mathsf{Sim}^{\st \zspa}_{\st S}$ be the simulator using a subroutine adversary, $\mathcal{A}_{\st S}$. $\mathsf{Sim}^{\st \zspa}_{\st S}$ works as follows. 
\begin{enumerate}
\item simulates  \ct  and receives the output value $k$ from $f_{\st \ct}$, as we are in $f_{\st \ct}$-hybrid model.
\item sends $k$ to TTP and receives back from it $m$ pairs, where each pair is of the form $( g,  q)$. 
\item sends $ k$ to $\mathcal{A}_{\st S}$ and receives back from it $m$ pairs  where each pair is of the form $( g',  q')$. 
\item checks whether the following equations hold (for each pair): $ g= g' \hspace{2mm} \wedge  \hspace{2mm}  q= q'$. If the two equations do not hold, then it aborts (i.e., sends abort signal $\Lambda$ to the receiver) and proceeds to the next step.
\item outputs whatever $\mathcal{A}_{\st S}$ outputs.
 \end{enumerate}
 
 We first focus on the adversary’s output. In the real model, the only messages that the adversary receives are those messages it receives as the result of the ideal call to $f_{\st \ct}$. These messages have identical distribution to the distribution of the messages in the ideal model, as the \ct is secure. Now, we move on to the receiver’s output. We will show that the output distributions of the honest receiver in the ideal and real models are computationally indistinguishable. In the real model,  each element of pair $(g, p)$ is the output of a deterministic function on the output of $f_{\st \ct}$. We know the output of $f_{\st \ct}$ in the real and ideal models have an identical distribution, and so do the evaluations of deterministic functions (i.e., Merkle tree, $\mathtt{H}$, and $\mathtt{PRF}$) on them, as long as these three functions' correctness holds. Therefore, each pair $(g,q)$ in the real model has an identical distribution to pair $(g,  q)$ in the ideal model.  For the same reasons, the honest receiver in the real model aborts with the same probability as  $\mathsf{Sim}^{\st \zspa}_{\st S}$ does in the ideal model.  We conclude that the distributions of the joint outputs of the adversary and honest receiver in the real and ideal models are  (computationally) indistinguishable. 

\

\noindent\textbf{Case 2: Corrupt receiver.}   Let $\mathsf{Sim}^{\st \zspa}_{\st R}$ be the simulator that uses subroutine adversary $\mathcal{A}_{\st R}$. $\mathsf{Sim}^{\st \zspa}_{\st R}$ works as follows. 

\begin{enumerate}
\item simulates   \ct  and receives the output value $ k$ from $f_{\st \ct}$.
\item sends $ k$ to TTP and receives back $m$ pairs of the form $( g,  q)$ from TTP. 
\item sends $( k,  g,  q)$ to $\mathcal{A}_{\st R}$ and outputs whatever  $\mathcal{A}_{\st R}$ outputs. 
 \end{enumerate}

In the real model, the adversary receives two sets of messages, the first set includes the transcripts (including $ k$) it receives when it makes an ideal call to $f_{\st \ct}$ and the second set includes pair $(g, q)$. As we already discussed above (because we are in the  $f_{\st \ct}$-hybrid model) the distributions of the messages it receives from $f_{\st \ct}$ in the real and ideal models are identical. Moreover, the distribution of $f_{\st \ct}$'s output (i.e., $\bar k$ and $k$) in both models is identical; therefore, the honest sender's output distribution in both models is identical. As we already discussed,  the evaluations of deterministic functions (i.e., Merkle tree, $\mathtt{H}$, and $\mathtt{PRF}$) on $f_{\st \ct}$'s outputs have an identical distribution. Therefore, each pair $(g, q)$ in the real model has an identical distribution to the pair $(g, q)$ in the ideal model.  Hence, the distribution of the joint outputs of the adversary and honest receiver in the real and ideal models is indistinguishable.
  \hfill\(\Box\)\end{proof}

In addition to the security guarantee (i.e., computation's correctness against malicious sender or receiver) stated by Theorem \ref{theorem::ZSPA-comp-correctness}, \zspa offers  (a) privacy against the public, and (b)  non-refutability. Informally, privacy here means that given the state of the contract (i.e., $g$ and  $q$), an external party cannot learn any information about any of the pseudorandom values,  $z_{\scriptscriptstyle j}$; while non-refutability means that if a party sends ``approved" then in future cannot deny the knowledge of the values whose representation is stored in the contract. 

\begin{theorem}
If  $\mathtt{H}$ is preimage resistance, $\mathtt{PRF}$ is secure, the signature scheme used in the smart contract is secure (i.e., existentially unforgeable under chosen message attacks), and the blockchain is secure (i.e., offers persistence and liveness properties \cite{GarayKL15}) then \zspa offers (i) privacy against the public and (ii) non-refutability. 
\end{theorem}

\begin{proof}
First, we focus on privacy. Since key $k$, for $\mathtt{PRF}$, has been picked uniformly at random and $\mathtt{H}$ is preimage resistance, the probability that given $g$ the adversary can find $k$ is negligible in the security parameter, i.e., $\negl(\lambda)$. Furthermore, because $\mathtt{PRF}$ is secure (i.e., its outputs are indistinguishable from random values) and  $\mathtt{H}$ is preimage resistance, given the Merkle tree's root $g$, the probability that the adversary can find a leaf node, which is the output of $\mathtt{PRF}$, is $\negl(\lambda)$ too. 

Now we move on the non-refutability. Due to the persistency property of the blockchain, once a transaction/message goes more than $v$ blocks deep into the blockchain of one honest player (where $v$ is a security parameter), then it will be included in every honest player's blockchain with overwhelming probability, and it will be assigned a permanent
position in the blockchain (so it will not be modified with an overwhelming probability). Also, due to the liveness property,   all transactions originating from honest parties will eventually end up at a depth of more than $v$ blocks in an honest player's blockchain; therefore, the adversary cannot
perform a selective denial of service attack against honest account holders.  Moreover, due to the security of the digital signature (i.e., existentially unforgeable under chosen message attacks), one cannot deny sending the messages it sent to the blockchain and smart contract. 
\hfill\(\Box\)
\end{proof}

\subsection{\zspa's Extension: \zspa with an External Auditor (\zspaa)}

In this section, we present an extension of \zspa, called \zspaa which lets a (trusted) third-party auditor, \aud, help identify misbehaving clients in the \zspa and generate a vector of random polynomials with a specific structure. Specifically, the functionality that \zspaa computes is defined as 
$f^{\st \zspaa}(\underbrace{\bot,..., \bot}_{\st m}, b, \bm\zeta)\rightarrow (\underbrace{(k, g, q),..., (k, g,q)}_{\st m},$ $\bm\mu^{\st (1)},..., \bm\mu^{\st (m)})$, where $g$ is the Markle tree's root built on the pseudorandom values $z_{\st i, j}$, $q$ is the hash value of the key used to generate the pseudorandom values, $m\geq 2$, $b$ is constant value, $\bm\zeta$ is a random polynomial of degree $1$, each $\bm\mu^{\st (j)} $ has the form $\bm\mu^{\st (j)} = \bm\zeta\cdot \bm\xi^{\st (j)}-\bm\tau^{\st (j)}$,  $\bm\xi^{\st (j)}$ is a random polynomial of degree $b-2$, and $\bm\tau^{\st (j)}=\sum\limits^{\st b-1}_{\st i=0}z_{\st i,j}\cdot x^{\st i}$.

Informally, \zspaa requires that misbehaving parties are always detected, except with a negligible probability. \aud of this protocol will be invoked by \withFai when \withFai's smart contract detects that a combination of the messages sent by the clients is not well-formed. Later, in \withFai's proof, we will show that even a \emph{semi-honest} \aud who observes all messages that clients send to \withFai's smart contracts, cannot learn anything about their set elements. We present \zspaa in Figure \ref{fig:arbiter}.


\begin{figure}[ht]
\setlength{\fboxsep}{1pt}
\begin{center}
    \begin{tcolorbox}[enhanced,width=5.5in, 
    drop fuzzy shadow southwest,
    colframe=black,colback=white]

{\small{

\begin{enumerate}

\item[$\bullet$]    {Parties.} A set of clients $\{ A_{\st 1},...,  A_{\st m}\}$ and an external auditor, \aud. 

\item[$\bullet$]    {Input.}  $m$: the total number of participants (excluding the auditor), $\bm\zeta$: a random polynomial of degree $1$, $b$: the total number of vectors, and $adr$: a deployed smart contract's address. Let $b'=b-1$.


\item[$\bullet$]  {Output of  each} $  A_{\st j}$.   $k$: a secret key that generates $b$ vectors $[z_{\scriptscriptstyle 0,1},...,z_{\scriptscriptstyle 0,m}],...,[z_{\scriptscriptstyle b',1},...,z_{\scriptscriptstyle b', m}]$ of pseudorandom values, $h$: hash of the key,  $g$: a Merkle tree's root, and a vector of signed messages.

\item[$\bullet$]    {Output of \aud.} $L$: a list of misbehaving parties' indices, and  $\vv{{\mu}}$: a vector of random polynomials.
\item\label{ZSPA::ZSPA-invocation} {\textbf{\zspa invocation.}  $\zspa(\bot,..., \bot)\rightarrow \Big((k, g, q),..., (k, g,q )\Big)$}. 

All parties in $\{A_{\st 1},...,  A_{\st m}\}$ call the same instance of \zspa, which results in  $(k, g, q), ..., (k, g, q)$. 

\item\label{ZSPA-A::Auditor-computation}  {\textbf{Auditor computation.} $\mathtt{Audit}(m, \vv{{k}},  q, \bm\zeta, b, g)\rightarrow (L, \vv{{\mu}})$}. 

\aud\ takes the below steps. Note,  each $k_{\st j}\in \vv{{k}}$ is given by  $  A_{\st j}$. An honest party's input, $k_{\st j}$,  equals $k$, where $1\leq j \leq m$.

\begin{enumerate}
\item runs the checks in the verification phase (i.e., Phase \ref{ZSPA:verify}) of \zspa for every $j$, i.e., $\mathtt{Verify}(k_{\st j}, g, q, m, auditor)\rightarrow (a_{\st j}, s)$.
\item appends $j$ to $L$, if any checks fails, i.e., if $a_{\st j}=0$. In this case, it skips the next two steps for the current $j$.

%
%
%
%
%

\item\label{ZSPA-A::gen-z} For every $i$ (where $0\leq i \leq b'$), it recomputes $m$ pseudorandom values: 
$\forall j, 1\leq j \leq m-1: z_{\st i,j}=\mathtt{PRF}(k,i||j), \hspace{4mm} z_{\st i,m}=-\sum\limits^{\st m-1}_{\st j=1}z_{\st i,j}$.
 \item generates polynomial $\bm\mu^{\st (j)}$ as follows: 
   $\bm\mu^{\st (j)} = \bm\zeta\cdot \bm\xi^{\st (j)}-\bm\tau^{\st (j)}$, 
    where $\bm\xi^{\st (j)}$ is a random polynomial of degree $b'-1$ and $\bm\tau^{\st (j)}=\sum\limits^{\st b'}_{\st i=0}z_{\st i,j}\cdot x^{\st i}$. By the end of this step, a vector $\vv{{\mu}}$ containing at most $m$ polynomials is generated. 
 \item returns   list $L$ and $\vv{{\mu}}$.
 
\end{enumerate}
 \end{enumerate}
}}
 \end{tcolorbox}
\end{center}
\caption{\zspa with an external auditor (\zspaa)} 
\label{fig:arbiter}
\end{figure}

\begin{theorem}\label{theorem::ZSPA-A}
If \zspa is secure, $\mathtt{H}$ is second-preimage resistant, and the correctness of $\mathtt{PRF}$, $\mathtt{H}$, and Merkle tree holds,  then \zspaa securely computes $f^{\st \zspaa}$ in the presence of $m-1 $ malicious adversaries.
\end{theorem}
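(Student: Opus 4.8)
The plan is to prove Theorem~\ref{theorem::ZSPA-A} by the standard simulation-based argument, working in the \zspa-hybrid model so that the call to \zspa in Phase~\ref{ZSPA::ZSPA-invocation} is replaced by an ideal invocation handled by \zspa's simulator (guaranteed to exist by Theorem~\ref{theorem::ZSPA-comp-correctness}). Since the set of corrupt parties may include up to $m-1$ of the clients but \emph{not} the auditor \aud (who is the trusted third party whose honest behaviour the protocol assumes), I would split the analysis into two cases: (1) a set of at most $m-1$ malicious clients is corrupt, and (2) no party is corrupt (or only a passive observer), where we must argue the honest-auditor output is correctly distributed. The bulk of the work is Case~1.

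For Case~1, the simulator $\mathsf{Sim}^{\st \zspaa}$ proceeds as follows. First it runs \zspa's simulator to extract the corrupt clients' behaviour in the coin-tossing and encoding phases; from this it learns the agreed key $k$, the root $g$, and the hash $q$, and (by sending $\bot$ inputs on behalf of honest clients) it obtains the honest clients' outputs $(k,g,q)$ from the \zspa functionality. Then it simulates the auditor-computation phase: it receives from each corrupt $A_{\st j}$ the key $k_{\st j}$ that $A_{\st j}$ forwards to \aud, runs $\mathtt{Verify}(k_{\st j},g,q,m)$ exactly as the honest auditor would, and forms the list $L$ accordingly; for indices surviving the check it recomputes the $z_{\st i,j}$ values and samples the random polynomials $\bm\xi^{\st (j)}$ to build $\vv{\mu}$ as in Phase~\ref{ZSPA-A::Auditor-computation}. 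Finally it submits $k$ (and whatever inputs the corrupt parties effectively used) to the ideal functionality $f^{\st \zspaa}$, delivers the outputs, and outputs whatever the adversary outputs. Indistinguishability then follows from three observations: the \zspa-phase transcript is indistinguishable from real by Theorem~\ref{theorem::ZSPA-comp-correctness}; the list $L$ is computed by a deterministic function ($\mathtt{Verify}$, i.e.\ \prf, $\mathtt{H}$, and Merkle-tree recomputation) of values whose distributions already match, so by correctness of these primitives $L$ is identically distributed in both worlds; and $\vv{\mu}$ depends only on the (matching) $z_{\st i,j}$'s, on $\bm\zeta$, and on freshly sampled random polynomials $\bm\xi^{\st (j)}$, hence is identically distributed as well. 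Case~2 is immediate: with only honest parties the real execution \emph{is} the ideal functionality composed with \zspa's (honest-party) correctness, so the simulator just runs the honest protocol.

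I would also include the ``detection'' guarantee informally promised in the statement (``misbehaving parties are always detected, except with negligible probability''): if a corrupt $A_{\st j}$ supplies $k_{\st j}\neq k$, then since $q=\mathtt{H}(k)$ is fixed on-chain, passing $\mathtt{Verify}$ would require $\mathtt{H}(k_{\st j})=q=\mathtt{H}(k)$ with $k_{\st j}\neq k$, contradicting second-preimage resistance of $\mathtt{H}$; and if $A_{\st j}$ supplied the ``right'' $k$ but the posted root $g$ is inconsistent with the $z_{\st i,j}$'s it recomputes, the honest auditor's recomputation of the Merkle root detects this by correctness of the tree. Hence every misbehaving index is placed in $L$ except with probability negligible in $\lambda$, which is exactly what is needed to argue the simulated $L$ matches the real one.

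The main obstacle I anticipate is not any single calculation but pinning down the precise ideal functionality $f^{\st \zspaa}$ and the adversary's ``effective input'' when corrupt clients may each feed a \emph{different} key $k_{\st j}$ to the auditor: one has to be careful that the ideal functionality is specified to take these per-party keys (or the resulting $L$) as adversarial inputs, otherwise the real and ideal $L$'s need not coincide. Once the functionality is stated to mirror $\mathtt{Verify}$ on each $k_{\st j}$, the rest reduces to invoking \zspa's security (Theorem~\ref{theorem::ZSPA-comp-correctness}), second-preimage resistance of $\mathtt{H}$, and correctness of \prf, $\mathtt{H}$, and the Merkle tree — all assumed in the hypothesis.
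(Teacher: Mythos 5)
Your plan is sound and rests on the same three pillars as the paper's proof: simulation in the $f^{\st\zspa}$-hybrid model, correctness of \prf, $\mathtt{H}$, and the Merkle tree to argue that the list $L$ and the polynomials $\bm\mu^{\st (j)}$ are identically distributed in both worlds, and second-preimage resistance of $\mathtt{H}$ to rule out a corrupt party convincing \aud with a key $k_{\st j}\neq k$ that is nevertheless consistent with the on-chain pair $(g,q)$. The case decomposition you choose, though, is not the paper's: the paper splits into a \emph{corrupt sender} (the \zspa party that posts $(g,q)$, possibly colluding with $m-2$ others) and a \emph{corrupt receiver} (a party that forwards a key $k_{\st j}$ to \aud), rather than ``some clients corrupt'' versus ``all honest.'' The second-preimage argument lives entirely in the paper's corrupt-receiver case; the corrupt-sender case is handled by the simulator simply comparing the adversary's $(g',q')$ against the honestly-derived pair and populating $L$ accordingly. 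Your merging of the two misbehaviours into a single case is also valid and arguably reflects the general adversary (which may corrupt both the sender and several receivers at once) more directly; and your uniform use of the $f^{\st\zspa}$-hybrid is cleaner than the paper, whose corrupt-sender case reaches below \zspa and simulates in the $f_{\st\ct}$-hybrid instead. Your flagged concern about pinning down $f^{\st\zspaa}$'s interface is also fair: the paper never writes out the ideal functionality for \zspaa explicitly, and the simulator's reconstruction of $L$ from per-party keys only makes sense if those keys (or the resulting $L$) are adversarially chosen inputs at the ideal interface — an implicit assumption worth making explicit.
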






\begin{proof}
First, we consider the case where a sender, who (may collude with $m-2$ senders and) generates pairs $(g,q)$, is corrupt. 

\

\noindent\textbf{Case 1: Corrupt sender.}  Let $\mathsf{Sim}^{\st \zspaa}_{\st S}$ be the simulator using a subroutine adversary, $\mathcal{A}_{\st S}$. Below, we explain how $\mathsf{Sim}^{\st \zspaa}_{\st S}$ works.

\begin{enumerate}
\item simulates  \ct  and receives the output value $ k$ from $f_{\st \ct}$.
\item sends $ k$ to TTP and receives back from it $m$ pairs, where each pair is of the form $( g,  q)$. 
\item sends $ k$ to $\mathcal{A}_{\st S}$ and receives back from it $m$ pairs  where each pair is of the form $( g',  q')$. 

\item constructs an empty vector $ L$. $\mathsf{Sim}^{\st \zspaa}_{\st S}$ checks whether the following equations hold for each $j$-th pair: $ g= g' \hspace{2mm} \wedge  \hspace{2mm}  q= q'$. If these two equations do not hold, it sends an abort message $\Lambda$ to other receiver clients, appends the index of the pair (i.e., $j$) to $ L$, and proceeds to the next step for the valid pairs. In the case where there are no valid pairs, it moves on to step \ref{adversary-outputs}. 
\item picks a random polynomial ${\bm \zeta}$ of degree $1$. Moreover, for every $j\notin  L$, $\mathsf{Sim}^{\st \zspaa}_{\st S}$ picks a random polynomial ${\bm\xi}^{\st (j)}$ of degree $b'-1$, where $1\leq j \leq m$.

\item computes $m$ pseudorandom values  for every $i,j'$, where $0\leq i \leq b'$ and $j'\notin  L$ as follows.   
$$\forall j', 1\leq j' \leq m-1: z_{\st i,j}=\mathtt{PRF}( k,i||j')\hspace{5mm}  \text{and }\hspace{5mm} z_{\st i,m}=-\sum\limits^{\st m-1}_{\st j'=1}z_{\st i,j}$$ 
 \item generates polynomial ${\bm\mu}^{\st (j)}$, for every $j \notin L$, as follows:
   ${\bm\mu}^{\st (j)} = {\bm\zeta}\cdot {\bm\xi}^{\st (j)}-\bm\tau^{\st (j)}$, where ${\bm\tau}^{\st (j)}=\sum\limits^{\st b'}_{\st i=0}z_{\st i,j}\cdot x^{\st i}$.
\item sends the above ${\bm \zeta}$,  ${\bm\xi}^{\st (j)}$, and ${\bm\mu}^{\st (j)}$ to all parties (i.e., $\mathcal{A}_{\st S}$ and the receivers), for every $j\notin L$. 
\item\label{adversary-outputs} outputs whatever $\mathcal{A}_{\st S}$ outputs.
 \end{enumerate}

 Now, we focus on the adversary’s output. In the real model, the messages that the adversary receives include those messages it receives as the result of the ideal call to $f_{\st \ct}$ and (${\bm \zeta}, {\bm\xi}^{\st (j)}, \bm\mu^{\st (j)}$), where $j \notin L$ and $1\leq j\leq m$. Those messages yielded from the ideal calls have identical distribution to the distribution of the messages in the ideal model, as \ct is secure. The distribution of each $\bm\mu^{\st (j)}$ depends on the distribution of its components; namely, ${\bm \zeta}, {\bm\xi}^{\st (j)}$, and $\bm \tau^{\st j}$. As we are in the $f_{\st \ct}$-hybrid model, the distributions of $\bm \tau^{\st (j)}$ in the real model and ${\bm \tau}^{\st (j)}$ in the ideal model are identical, as they were derived from the output of $f_{\st \ct}$. Furthermore, in the real model, each polynomial  ${\bm \zeta}$ and ${\bm\xi}^{\st (j)}$ has been picked uniformly at random and they are independent of the clients' and the adversary's inputs. The same arguments hold for  ($ {\bm \zeta}, {\bm\xi}^{\st (j)}, {\bm\mu}^{\st (j)}$) in the ideal model. Therefore, (${\bm \zeta}, {\bm\xi}^{\st (j)}, {\bm\mu}^{\st (j)}$) in the real model and ($ {\bm \zeta}, {\bm\xi}^{\st (j)}, {\bm\mu}^{\st (j)}$) in the ideal model have identical distributions.

 Next, we turn our attention to the receiver’s output. We will show that the output distributions of an honest receiver and the auditor in the ideal and real models are computationally indistinguishable. In the real model,  each element of the pair $(g, p)$ is the output of a deterministic function on the output of $f_{\st \ct}$. We know the outputs of $f_{\st \ct}$ in the real and ideal models have an identical distribution, and so do the evaluations of deterministic functions (namely Merkle tree, $\mathtt{H}$, and $\mathtt{PRF}$) on them. Therefore, each pair $(g,q)$ in the real model has an identical distribution to the pair $( g, q)$ in the ideal model.  For the same reasons, the honest receiver in the real model aborts with the same probability as  $\mathsf{Sim}^{\st \zspaa}_{\st S}$ does in the ideal model.  The same argument holds for the arbiter's output, as it performs the same checks that an honest receiver does.  Thus, the distribution of the joint outputs of the adversary, honest receiver, and honest in the real and ideal models is computationally indistinguishable. 

\

\noindent\textbf{Case 2: Corrupt receiver.}   Let $\mathsf{Sim}^{\st \zspaa}_{\st R}$ be the simulator that uses subroutine adversary $\mathcal{A}_{\st R}$. Below, we explain how $\mathsf{Sim}^{\st \zspaa}_{\st R}$ works.

\begin{enumerate}
\item simulates  \zspa and receives the $m$ output pairs of the form $( k,  g,  q) $ from $f^{\st \zspa}$.
\item sends $( k,  g,  q) $ to $\mathcal{A}_{\st R}$ and receives $m$ keys, $ k'_{\st j}$, where $1\leq j \leq m$. 
\item\label{ZSPA-A-Case-2-generate-z} generates an empty vector $ L$. Next, for every $j$, $\mathsf{Sim}^{\st \zspaa}_{\st R}$ computes $q'_{\st j}$ as $\mathtt {H}( k'_{\st j})=q_{\st j}$.  It generates $g_{\st j}$ as follows.

\begin{enumerate}

\item\label{gen-pr-vals} for every $i$ (where $0\leq i \leq b'$), generates $m$ pseudorandom values as below. 
 $$\forall j, 1\leq j' \leq m-1: z_{\scriptscriptstyle i,j}=\mathtt{PRF}( k'_{\st j},i||j'), \hspace{5mm} z_{\scriptscriptstyle i,m}=-\sum\limits^{\scriptscriptstyle m-1}_{\scriptscriptstyle j=1}z_{\scriptscriptstyle i,j}$$
\item   constructs a Merkel tree on top of all pseudorandom values,  $\mathtt{MT.genTree}(z_{\scriptscriptstyle 0,1},...,z_{\scriptscriptstyle b', m})\rightarrow  g'_{\st j}$. 
\end{enumerate}
\item checks if the following equations hold for each $j$-th pair: 
$( k=    k'_{\st j}) \ \wedge\  ( g=   g'_{\st j})  \ \wedge\ ( q=  q'_{\st j})$.
%
%
\item If these equations do not hold for $j$-th value, it appends $j$ to $ L$ and proceeds to the next step for the valid value. In the case where there is no valid value, it moves on to step \ref{adversary-outputs--}.

\item picks a random polynomial ${\bm \zeta}$ of degree $1$. Also, for every $j\notin  L$, it picks a random polynomial ${\bm\xi}^{\st (j)}$ of degree $b'-1$, where $1\leq j \leq m$. 
%

%
 \item generates polynomial ${\bm\mu}^{\st (j)}$, for every $j \notin L$, as follows:
   ${\bm\mu}^{\st (j)} = {\bm\zeta}\cdot {\bm\xi}^{\st (j)}-\bm\tau^{\st (j)}$, where ${\bm\tau}^{\st (j)}=\sum\limits^{\st b'}_{\st i=0}z_{\st i,j}\cdot x^{\st i}$, and values $z_{\st i,j}$ were generated in step \ref{gen-pr-vals}.
\item sends the above ${\bm \zeta}$,  ${\bm\xi}^{\st (j)}$, and ${\bm\mu}^{\st (j)}$ to $\mathcal{A}_{\st R}$, for every $j\notin L$ and $1\leq j \leq m$. 
\item\label{adversary-outputs--} outputs whatever  $\mathcal{A}_{\st R}$ outputs. 
 \end{enumerate}

In the real model, the adversary receives two sets of messages, the first set includes the transcripts (including $ k,  g,  q$) it receives when it makes an ideal call to $f^{\st \zspa}$ and the second set includes pairs $({\bm \zeta},  {\bm\xi}^{\st (j)}, {\bm\mu}^{\st (j)})$, for every $j\notin  L$ and $1\leq j \leq m$. Since we are in the  $f^{\st \zspa}$-hybrid model and (based on our assumption) there is at least one honest party participated in  $\zspa$ (i.e., there are at most $m-1$ malicious participants of \zspa), the distribution of the messages it receives from $f^{\st \zspa}$ in the real and ideal models is identical. Furthermore, as we discussed in Case 1, (${\bm \zeta}, {\bm\xi}^{\st (j)}, {\bm\mu}^{\st (j)}$) in the real model and ($ {\bm \zeta}, {\bm\xi}^{\st (j)}, {\bm\mu}^{\st (j)}$) in the ideal model have identical distribution. The honest sender's output distribution in both models is identical, as the distribution of $f_{\st \ct}$'s output (i.e., $ k$) in both models is identical.

Now we show that the probability that the auditor aborts in the ideal and real models are statistically close. In the ideal model, $\mathsf{Sim}^{\st \zspaa}_{\st R}$ is given the ideal functionality's output that includes key $ k$. Therefore, it can check whether the key that $\mathcal{A}_{\st R}$ sends to it equals $ k$, i.e., $ k\stackrel{\st ?}=    k'_{\st j}$. Thus, it aborts with the probability $1$. However, in the real model, an honest auditor is not given the output of \ct (say key $k$) and it can only check whether the key is consistent with the hash value $q$ and the Merkle tree's root $g$ stored on the blockchain. This means the adversary can distinguish the two models if in the real model it sends a key $\ddot k$, such that $\ddot k\neq k$ and still passes the checks. Specifically, it sends the invalid key $\ddot k$ that can generate  valid pair $(g, q)$, as follows: $\mathtt{H}(\ddot k)=q$ and $\mkgen(z'_{\scriptscriptstyle 0,1}, ..., z'_{\scriptscriptstyle b', m})\rightarrow  g$, where each $z'_{\scriptscriptstyle i,j}$ is derived from $\ddot k$ using the same technique described in step \ref{ZSPA-A-Case-2-generate-z} above. Nevertheless, this means that the adversary breaks the second preimage resistance property of  $\mathtt{H}$; however, $\mathtt{H}$ is the second-preimage resistance and the probability that the adversary succeeds in finding the second preimage is negligible in the security parameter, i.e., $\negl(\lambda)$ where $\lambda$ is the hash function's security parameter. Therefore, in the real model, the auditor aborts if an invalid key is provided with a probability $1-\negl(\lambda)$ which is statically close to the probability that $\mathsf{Sim}^{\st \zspaa}_{\st R}$ aborts in the same situation in the ideal model, i.e., $1-\negl(\lambda)$ vs $1$. 
Hence, the distribution of the joint outputs of the adversary, honest sender, and honest auditor in the real and ideal models is indistinguishable.
  \hfill\(\Box\)\end{proof}

\subsection{Unforgeable Polynomials}

In this section, we introduce the notion of ``unforgeable polynomials''. Informally, an unforgeable polynomial has a secret factor. To ensure that an unforgeable polynomial has not been tampered with, a verifier can check whether the polynomial is divisible by the secret factor.

To turn an arbitrary polynomial $\bm\pi$ of degree $d$ into an unforgeable polynomial $\bm\theta$, one can (i) pick three secret random polynomials $(\bm\zeta, \bm\omega, \bm \gamma)$ and (ii) compute $\bm\theta=\bm\zeta\cdot \bm\omega\cdot\bm \pi + \bm \gamma \bmod p$, where  $deg(\bm\zeta)= 1, deg(\bm\omega)=d,$ and   $deg(\bm\gamma)= 2d+1$. 

To verify whether $\bm\theta$ has been tampered with, a verifier (given $\bm\theta, \bm \gamma$, and $\bm\zeta$) can check if $\bm\theta-\bm \gamma$ is divisible by $\bm\zeta$. Informally, the security of an unforgeable polynomial states that an adversary (who does not know the three secret random polynomials) cannot tamper with an unforgeable polynomial without being detected, except with a negligible probability, in the security parameter. Below, we formally state it. 

\begin{theorem}[Unforgeable Polynomial]\label{proof::unforgeable-poly}
%
Let polynomials $\bm\zeta$, $\bm\omega$, and $\bm\gamma$ be three secret uniformly random polynomials (i.e., $\bm\zeta,\bm\omega, \bm\gamma\stackrel{\st\$}\leftarrow \mathbb F_{\st p}[x]$), polynomial $\bm\pi$ be an arbitrary polynomial,   $deg(\bm\zeta)= 1, deg(\bm\omega)=d,  deg(\bm\gamma)= 2d+1$,  $deg(\bm\pi)=d$, and $p$ be a $\lambda$-bit prime number. Also, let polynomial $\bm\theta$ be defined as  $\bm\theta=\bm\zeta\cdot \bm\omega\cdot\bm \pi+\bm \gamma \bmod p$. Given $(\bm\theta,\bm\pi)$, the probability that an adversary (which does not know $\bm\zeta, \bm\omega$, and $\bm\gamma$) can forge $\bm\theta$ to an arbitrary polynomial $\bm\delta$ such that  $\bm\delta\neq \bm\theta$, $deg(\bm\delta)= const(\lambda)$, and $\bm\zeta$ divides $\bm\delta-\bm\gamma$ is negligible in the security parameter $\lambda$, i.e.,

$$Pr[ \bm\zeta \ | \ (\bm\delta-\bm\gamma) ]\leq \negl(\lambda)$$

\end{theorem}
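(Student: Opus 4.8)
The plan is to argue that, from the adversary's view, the secret polynomial $\bm\zeta$ is information-theoretically hidden by the masking term $\bm\gamma$, so that any forged polynomial $\bm\delta$ the adversary produces is divisible-by-$\bm\zeta$-after-subtracting-$\bm\gamma$ only if $\bm\zeta$ happens (by chance) to divide a fixed polynomial determined by $\bm\delta$. First I would observe that since $\bm\gamma$ is a uniformly random polynomial of degree $2d+1$ and $\bm\zeta\cdot\bm\omega\cdot\bm\pi$ has degree exactly $2d+1$ as well (degree $1+d+d$), the sum $\bm\theta=\bm\zeta\cdot\bm\omega\cdot\bm\pi+\bm\gamma$ is itself a uniformly random polynomial of degree at most $2d+1$, and crucially $\bm\theta$ reveals nothing about $\bm\zeta$ (nor about $\bm\omega,\bm\gamma$ individually beyond their degrees): for any candidate value of $\bm\zeta$, there is exactly one consistent $\bm\gamma$, and all are equally likely. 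The pair $(\bm\theta,\bm\pi)$ given to the adversary therefore leaves $\bm\zeta$ uniformly distributed among the degree-$1$ polynomials (equivalently, among the $\approx p$ monic linear factors, up to a scalar, i.e. among the $p$ possible roots in $\mathbb F_p$).

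Next I would fix the adversary's output $\bm\delta$ (with $\bm\delta\neq\bm\theta$ and $\deg(\bm\delta)=\mathrm{const}(\lambda)$) and ask for the probability, over the still-uniform choice of $\bm\zeta$, that $\bm\zeta\mid(\bm\delta-\bm\gamma)$. Here I would substitute $\bm\gamma=\bm\theta-\bm\zeta\cdot\bm\omega\cdot\bm\pi$, so that $\bm\delta-\bm\gamma=(\bm\delta-\bm\theta)+\bm\zeta\cdot\bm\omega\cdot\bm\pi$. Since $\bm\zeta$ trivially divides the second summand, we have $\bm\zeta\mid(\bm\delta-\bm\gamma)$ if and only if $\bm\zeta\mid(\bm\delta-\bm\theta)$. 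Now $\bm\delta-\bm\theta$ is a fixed, nonzero polynomial of degree at most $\max(\mathrm{const}(\lambda),2d+1)$, hence it has at most that many roots in $\mathbb F_p$; a linear polynomial $\bm\zeta$ divides it precisely when $\bm\zeta$'s root is one of those (finitely many) roots. Because $\bm\zeta$'s root is uniform over $\mathbb F_p$ from the adversary's perspective, the probability that $\bm\zeta\mid(\bm\delta-\bm\theta)$ is at most $\deg(\bm\delta-\bm\theta)/p$, which is $\mathrm{const}(\lambda)/2^{\lambda}=\negl(\lambda)$. The hypothesis $GCD(\bm\zeta,\bm\gamma)=1$ just rules out the degenerate possibility that the forgery is the trivial one; it also guarantees $\bm\delta-\bm\theta\neq 0$ cannot be ``explained'' by $\bm\zeta$ dividing $\bm\gamma$.

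The main obstacle — and the step I would be most careful with — is making the information-theoretic hiding argument fully rigorous in the presence of an adaptive adversary: one must argue that whatever $\bm\delta$ the adversary chooses, it is chosen as a function of $(\bm\theta,\bm\pi)$ only, so that $\bm\delta-\bm\theta$ is independent of the residual randomness in $\bm\zeta$ conditioned on that view. Concretely I would phrase it as: condition on the adversary's entire view $(\bm\theta,\bm\pi)$; conditioned on this, $\bm\zeta$ is uniform over its domain (this is the key lemma to state and prove carefully, using that the map $(\bm\zeta,\bm\omega,\bm\gamma)\mapsto(\bm\theta,\bm\zeta\cdot\bm\omega\cdot\bm\pi)$ followed by forgetting $\bm\zeta\cdot\bm\omega\cdot\bm\pi$ leaves $\bm\zeta$ marginally uniform); then $\bm\delta$ is determined by the view and the union bound over its at-most-$\mathrm{const}(\lambda)+2d+1$ roots finishes the argument. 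A secondary subtlety is bookkeeping on degrees — ensuring $\deg(\bm\zeta\cdot\bm\omega\cdot\bm\pi)$ really is $2d+1$ so that $\bm\gamma$ (degree $2d+1$) fully masks it rather than merely masking the lower-degree part — and handling the edge case $d=0$, but these are routine.
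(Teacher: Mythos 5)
Your proof is correct and follows the same high-level strategy as the paper's---an information-theoretic hiding argument that $\bm\zeta$ remains uniform conditioned on the view $(\bm\theta,\bm\pi)$, followed by a union bound over roots---but your reduction step is genuinely sharper. The paper reasons directly about $\bm\zeta \mid (\bm\delta-\bm\gamma)$, counting irreducible factors of $\bm\tau=\bm\delta-\bm\gamma$; this is delicate because $\bm\tau$ itself involves the secret $\bm\gamma$, which is correlated with $\bm\zeta$ through $\bm\theta$, so the ``uniform $\bm\zeta$ hits a fixed set of factors'' step implicitly relies on exactly the disentangling you make explicit. Your substitution $\bm\gamma=\bm\theta-\bm\zeta\cdot\bm\omega\cdot\bm\pi$, yielding $\bm\delta-\bm\gamma=(\bm\delta-\bm\theta)+\bm\zeta\cdot\bm\omega\cdot\bm\pi$ and hence $\bm\zeta\mid(\bm\delta-\bm\gamma)\Longleftrightarrow\bm\zeta\mid(\bm\delta-\bm\theta)$, cleanly eliminates the secret masking term and reduces the event to divisibility of a polynomial ($\bm\delta-\bm\theta$, nonzero since $\bm\delta\neq\bm\theta$) that is fully determined by the adversary's view and output; the union bound over its at most $\max(\mathrm{const}(\lambda),2d+1)$ roots is then immediate. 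Two small bookkeeping points: (1) ``for any candidate $\bm\zeta$ there is exactly one consistent $\bm\gamma$'' should be ``exactly one consistent $\bm\gamma$ per choice of $\bm\omega$''---the count is the same for every $\bm\zeta$, which is what uniformity actually needs; and (2) your bound with denominator $p\approx 2^{\lambda}$ is the right one, whereas the paper states $2^{2\lambda}$, which under-estimates the forging probability by a factor of $p$ (only the root of $\bm\zeta$, not both of its coefficients independently, is relevant to the divisibility event); negligibility holds either way, so the theorem's conclusion is unaffected.
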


\begin{proof}

Let $\bm\tau=\bm\delta-\bm\gamma$ and $\bm\zeta=a\cdot x+b$. Since $\bm\gamma$ is a random polynomial of degree $2d+1$ and unknown to the adversary, given $(\bm\theta, \bm\pi)$,  the adversary cannot learn anything about the factor $\bm\zeta$; as from its point of view every polynomial of degree $1$ in $\mathbb{F}_{\st p}[X]$ is equally likely to be $\bm\zeta$. Moreover,  polynomial $\bm\tau$ has at most $Max\big(deg(\bm\delta), 2d+1\big)$ irreducible non-constant factors.  For $\bm\zeta $ to divide $\bm\tau$,  one of the factors of $\bm\tau$ must be equal to $\bm\zeta$. We  also know that $\bm\zeta$ has been picked uniformly at random (i.e., $a,b
\stackrel{\st \$}\leftarrow \mathbb F_{\st p}$). Thus, the probability that $\bm\zeta $ divides $\bm\tau$ is negligible in the security parameter, $\lambda$. Specifically,

$$Pr[ \bm\zeta \ | \ (\bm\delta-\bm\gamma)]\leq \frac{Max\big(deg(\bm\delta), 2d+1\big)} {2^{\st \lambda}}=\negl(\lambda)$$
\end{proof} \hfill\(\Box\)

 An interesting feature of an unforgeable polynomial is that the verifier can perform the check without needing to know the original polynomial $\bm\pi$. Another appealing feature of the unforgeable polynomial is that it supports \emph{linear combination} and accordingly \emph{batch verification}. Specifically, to turn $n$ arbitrary polynomials $[\bm\pi_{\st 1},..., \bm\pi_{\st n}]$ into unforgeable polynomials, one can construct  $\bm\theta_{\st i}=\bm\zeta\cdot \bm\omega_{\st i}\cdot \bm\pi_{\st i}+ \bm\gamma_{\st i} \bmod p$, where $\forall i, 1\leq i\leq n$.

To check whether all polynomials $[\bm\theta_{\st 1},..., \bm\theta_{\st n}]$ are intact, a verifier can (i) compute their sum $\bm \chi=\sum\limits_{\st i=1}^{\st n}\bm\theta_{\st i}$ and (ii) check whether $\bm \chi- \sum\limits_{\st i=1}^{\st n}\bm\gamma_{\st i} $ is divisible by $\bm \zeta$.  Informally, the security of an unforgeable polynomial states that an adversary (who does not know the three secret random polynomials for each $\bm\theta_{\st i}$) cannot tamper with any subset of the unforgeable polynomials without being detected, except with a negligible probability. We formally state it, below.

\begin{theorem}[Unforgeable Polynomials' Linear Combination]\label{Unforgeable-Polynomials-Linear-Combination}
 Let polynomial $\bm\zeta$ be a secret polynomial picked uniformly at random; also, let   $\vv{\bm\omega}=[\bm\omega_{\st 1},..., \bm\omega_{\st n}]$ and $\vv{\bm\gamma}=[\bm\gamma_{\st 1},..., \bm\gamma_{\st n}]$ be two vectors of secret uniformly random polynomials (i.e., ${\bm\zeta}, \bm\omega_{\st i}, \bm\gamma_{\st i} \stackrel{\st\$}\leftarrow \mathbb F_{\st p}[x]$),  $\vv{\bm\pi}=[\bm\pi_{\st 1},..., \bm\pi_{\st n}]$ be a vector of arbitrary polynomials,   $deg(\bm\zeta)= 1, deg(\bm\omega_{\st i})=d,  deg(\bm\gamma_{\st i})= 2d+1$,  $deg(\bm\pi_{\st i})=d$,  $p$ be a $\lambda$-bit prime number, and $1\leq i \leq n$. Moreover, let polynomial $\bm\theta_{\st i}$ be defined as  $\bm\theta_{\st i}=\bm\zeta\cdot \bm\omega_{\st i}\cdot \bm\pi_{\st i}+ \bm\gamma_{\st i} \bmod p$, and $\vv{\bm\theta} = [\bm\theta_{\st 1},..., \bm\theta_{\st n}]$.  Given $(\vv{\bm\theta}, \vv{\bm\pi})$, the probability that an adversary (which does not know $\bm\zeta, \vv{\bm\omega}$, and $\vv{\bm\gamma}$) can forge $t$ polynomials, without loss of generality, say $\bm\theta_{\st 1},..., \bm\theta_{\st t} \in \vv{\bm\theta}$ to arbitrary polynomials $\bm\delta_{\st 1},..., \bm\delta_{\st t}$ such that   $\sum\limits_{\st j=1}^{\st t}\bm\delta_{\st j}\neq \sum\limits_{\st j=1}^{\st t}\bm\theta_{\st j}$, $deg(\bm\delta_{\st j})= const(\lambda)$, and $\bm\zeta$ divides $(\sum\limits_{\st j=1}^{\st t}\bm\delta_{\st j} + \sum\limits_{\st j=t+1}^{\st n}\bm\theta_{\st j} - \sum\limits_{\st j=1}^{\st n}\bm\gamma_{\st j} )$ is negligible in the security parameter $\lambda$, i.e.,  

$$Pr[ \bm\zeta \ | \ (\sum\limits_{\st j=1}^{\st t}\bm\delta_{\st j} + \sum\limits_{\st j=t+1}^{\st n}\bm\theta_{\st j} - \sum\limits_{\st j=1}^{\st n}\bm\gamma_{\st j} ) ]\leq \negl(\lambda)$$

\end{theorem}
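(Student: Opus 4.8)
The plan is to collapse the entire linear combination into a single ``tampering difference'' polynomial and then run the argument behind Theorem~\ref{proof::unforgeable-poly}. First I would set $\bm\chi=\sum_{\st j=1}^{\st n}\bm\theta_{\st j}$, the aggregate a verifier obtains from untampered values, and note that by construction $\bm\chi-\sum_{\st j=1}^{\st n}\bm\gamma_{\st j}=\bm\zeta\cdot\big(\sum_{\st j=1}^{\st n}\bm\omega_{\st j}\cdot\bm\pi_{\st j}\big)$ is divisible by $\bm\zeta$. Next, let $\bm\chi^{\st *}=\sum_{\st j=1}^{\st t}\bm\delta_{\st j}+\sum_{\st j=t+1}^{\st n}\bm\theta_{\st j}$ denote the aggregate the verifier actually computes after the forgery, and set $\bm\Delta=\bm\chi^{\st *}-\bm\chi=\sum_{\st j=1}^{\st t}(\bm\delta_{\st j}-\bm\theta_{\st j})$, which is non-zero by the hypothesis $\sum_{\st j=1}^{\st t}\bm\delta_{\st j}\neq\sum_{\st j=1}^{\st t}\bm\theta_{\st j}$. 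Since $\bm\chi^{\st *}-\sum_{\st j=1}^{\st n}\bm\gamma_{\st j}=\bm\zeta\cdot\big(\sum_{\st j=1}^{\st n}\bm\omega_{\st j}\cdot\bm\pi_{\st j}\big)+\bm\Delta$, the forged aggregate passes the divisibility check precisely when $\bm\zeta$ divides $\bm\Delta$, so it suffices to bound $Pr[\bm\zeta\mid\bm\Delta]$.

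I would then argue that, given the adversary's view $(\vv{\bm\theta},\vv{\bm\pi})$, the secret factor $\bm\zeta$ is statistically close to uniform over the degree-$1$ polynomials of $\mathbb{F}_{\st p}[X]$. For each $i$, the mask $\bm\gamma_{\st i}$ is a fresh uniformly random polynomial of degree $2d+1$, which equals $deg(\bm\zeta\cdot\bm\omega_{\st i}\cdot\bm\pi_{\st i})$; hence $\bm\gamma_{\st i}$ acts as a \emph{one-time pad} on that product inside $\bm\theta_{\st i}$, so each $\bm\theta_{\st i}$ is independent of $(\bm\zeta,\bm\omega_{\st i})$ and, jointly, $\vv{\bm\theta}$ leaks nothing about $(\bm\zeta,\vv{\bm\omega})$ even though every $\bm\theta_{\st i}$ carries the same $\bm\zeta$. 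The only constraint left on $\bm\zeta$ is the conditioning $GCD(\bm\zeta,\bm\gamma_{\st i})=1$ for all $i$, which excludes at most $n(2d+1)$ monic linear polynomials among roughly $p$ candidates, a $\negl(\lambda)$ fraction. Consequently $\bm\Delta$, which is a function only of the public data and the adversary's (possibly randomised) choices, is independent of $\bm\zeta$.

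Finally I would close with the same counting as in the proof of Theorem~\ref{proof::unforgeable-poly}: writing $\bm\zeta=a\cdot x+b$ with $a,b\stackrel{\st \$}\leftarrow\mathbb{F}_{\st p}$, the fixed non-zero polynomial $\bm\Delta$ has degree at most $Max\big(const(\lambda),2d+1\big)$, because $deg(\bm\delta_{\st j})=const(\lambda)$ and $deg(\bm\theta_{\st j})=2d+1$, hence at most that many roots; $\bm\zeta$ divides $\bm\Delta$ only if $a\neq 0$ and $-b/a$ is one of those roots, an event of probability at most $\frac{Max(const(\lambda),2d+1)}{2^{\st 2\lambda}}=\negl(\lambda)$, which is the claimed bound. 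I expect the only genuine subtlety to be the middle step: arguing carefully that exposing $n$ samples all sharing the same $\bm\zeta$, together with the $n$ coprimality conditions, still leaves $\bm\zeta$ essentially uniform, so that the final bound is statistical rather than perfect; the rest is the verbatim $t$-fold analogue of Theorem~\ref{proof::unforgeable-poly}, with $\sum_{\st j=1}^{\st t}(\bm\delta_{\st j}-\bm\theta_{\st j})$ playing the role that $\bm\delta-\bm\theta$ plays there.
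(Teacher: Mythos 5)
Your proof is correct, and it actually takes a cleaner and more defensible route than the paper's own argument. The paper works with the per-term quantities $\bm\tau_{\st j}=\bm\delta_{\st j}-\bm\gamma_{\st j}$, asserts that $\bm\zeta$ dividing $\sum_{\st j=1}^{\st t}\bm\delta_{\st j}+\sum_{\st j=t+1}^{\st n}\bm\theta_{\st j}-\sum_{\st j=1}^{\st n}\bm\gamma_{\st j}$ ``equals the probability that $\bm\zeta$ equals to one of the factors of every $\bm\tau_{\st j}$,'' and derives a $t$-fold product bound $\prod_{\st j}\frac{Max(deg(\bm\delta_{\st j}),2d+1)}{2^{\st 2\lambda}}$. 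That step is not justified: divisibility of a sum by a common linear factor does not force each summand to be divisible by it, and there is only a single random draw of $\bm\zeta$, so a $t$-fold independence product cannot be extracted. You instead isolate the one quantity that actually matters, $\bm\Delta=\sum_{\st j=1}^{\st t}(\bm\delta_{\st j}-\bm\theta_{\st j})$, show by linearity that the verifier's divisibility check passes iff $\bm\zeta\mid\bm\Delta$, observe that $\bm\Delta\neq 0$ by hypothesis and is a function of public data and the adversary's choices only, and then apply the Theorem~\ref{proof::unforgeable-poly} root-counting bound once. The one-time-pad observation that each $\bm\gamma_{\st i}$ fully masks $\bm\zeta\cdot\bm\omega_{\st i}\cdot\bm\pi_{\st i}$, so that publishing $n$ samples sharing the same $\bm\zeta$ still leaves $\bm\zeta$ essentially uniform modulo the $GCD$ conditioning, is exactly the subtlety the paper glosses over, and you handle it properly. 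Your final bound is the single-shot $\frac{Max(const(\lambda),2d+1)}{2^{\st 2\lambda}}$ rather than the paper's product; it is larger but is the one your argument (and, as far as I can tell, any correct argument of this form) actually delivers, and it is still negligible, so the theorem statement is unaffected.
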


\begin{proof}  
This proof is a generalisation of that of Theorem \ref{proof::unforgeable-poly}.  
Let $\bm\tau_{\st j}=\bm\delta_{\st j}-\bm\gamma_{\st j}$ and $\bm\zeta=a\cdot x+b$. Since  every $\bm\gamma_{\st j}$ is a random polynomial of degree $2d+1$ and unknown to the adversary, given $(\vv{\bm\theta}, \vv{\bm\pi})$,  the adversary cannot learn anything about the factor $\bm\zeta$. Each polynomial $\bm\tau_{\st j}$ has at most $Max\big(deg(\bm\delta_{\st j}), 2d+1\big)$ irreducible non-constant factors. 
%
%
We  know that $\bm\zeta$ has been picked uniformly at random (i.e., $a,b
\stackrel{\st \$}\leftarrow \mathbb F_{\st p}$). Therefore, the probability that $\bm\zeta$ divides $\sum\limits_{\st j=1}^{\st t}\bm\delta_{\st j} + \sum\limits_{\st j=t+1}^{\st n}\bm\theta_{\st j} - \sum\limits_{\st j=1}^{\st n}\bm\gamma_{\st j}$ equals the probability that $\bm\zeta$ equals to one of the factors of  every $\bm\tau_{\st j}$, that is negligible in the security parameter. Concretely,

$$Pr[ \bm\zeta \ | \ (\sum\limits_{\st j=1}^{\st t}\bm\delta_{\st j} + \sum\limits_{\st j=t+1}^{\st n}\bm\theta_{\st j} - \sum\limits_{\st j=1}^{\st n}\bm\gamma_{\st j} ) ]\leq  \frac{\prod \limits^{\st t}_{\st j=1}Max\big(deg(\bm\delta_{\st j}), 2d+1\big)} {2^{\st \lambda t}}=\negl(\lambda)$$
\hfill\(\Box\)
\end{proof}

It is not hard to see that, Theorem \ref{Unforgeable-Polynomials-Linear-Combination} is a generalisation of Theorem \ref{proof::unforgeable-poly}. Briefly, in \withFai, we will use unforgeable polynomials (and their linear combinations) to allow a smart contract to efficiently check whether the polynomials that the clients send to it are intact, i.e., they are \vopr's outputs.

\section{\withFai: A Concrete Construction of \p}

\subsection{Main Challenges to Overcome}

 We need to address several key challenges, to design an efficient scheme that realises \p. Below, we outline these challenges.

 \subsubsection{Keeping Overall Complexities Low.}
 
 In general, in multi-party PSIs, each client needs to send messages to the rest of the clients and/or engage in secure computation with them, e.g., in \cite{DBLP:conf/scn/InbarOP18,DBLP:conf/ccs/KolesnikovMPRT17}, which would result in communication and/or computation quadratic with the number of clients. To address this challenge, we  (a) allow one of the clients as a dealer to interact with the rest of the clients,\footnote{This approach has similarity with the non-secure PSIs in \cite{GhoshN19}.} and   (b) we use a smart contract, which acts as a bulletin board to which most messages are sent and also performs lightweight computation on the clients' messages. The combination of these approaches will keep the overall communication and computation linear with the number of clients (and sets' cardinality).

 \subsubsection{Securely Randomising Input Polynomials.}  In multi-party PSIs that rely on the polynomial representation, it is essential that an input polynomial of a client be randomised by another client \cite{AbadiMZ21}. To do that securely and efficiently, we require the dealer and each client together to engage in an instance of \vopr, which we developed in Section \ref{sec::subroutines}. 
 
 \subsubsection{Preserving the Privacy of Outgoing Messages.} Although the use of regular public smart contracts (e.g., Ethereum) will help keep overall complexity low, it introduces another challenge; namely, if clients do not protect the privacy of the messages they send to the smart contracts, then other clients (e.g., dealer) and non-participants of PSI (i.e., the public) can learn the clients' set elements and/or the intersection. Because standard smart contracts do not automatically preserve messages' privacy. To efficiently protect the privacy of each client's messages (sent to the contracts) from the dealer, we require the clients (except the dealer) to engage in \zspaa which lets each of them generate a pseudorandom polynomial with which it can blind its message. To protect the privacy of the intersection from the public, we require all clients to run a coin-tossing protocol to agree on a blinding polynomial, with which the final result that encodes the intersection on the smart contract will be blinded.  
 
 \subsubsection{Ensuring the Correctness of Subroutine Protocols' Outputs.} 
 
 In general, any MPC that must remain secure against an active adversary is equipped with a verification mechanism that ensures an adversary is detected if it deviates from the protocol and affects messages' integrity, during the protocol's execution. This is the case for the subroutine protocols that we use, i.e., \vopr and \zspaa. Nevertheless, this type of verification itself is not always sufficient. Because in certain cases, the output of an MPC protocol may be fed as input to another MPC and we need to ensure that the \emph{actual/intact} output of the first MPC is fed to the second one. This is the case in our PSI's subroutines as well. To address this challenge, we use unforgeable polynomials; specifically, the output of \vopr is an unforgeable polynomial (that encodes the actual output); if the adversary tampers with the \vopr's output and uses it later, then a verifier can detect it. We will have the same integrity guarantee for the output of \zspaa for free. Because (i) \vopr is called before \zspaa, and (ii) if clients use intact outputs of \zspaa, then the final result (i.e., the sum of all clients' messages) will not contain any output of \zspaa, as they would cancel out each other. Thus, by checking the correctness of the final result, one can ensure the correctness of the outputs of \vopr and \zspaa, in one go.

\subsection{Description of \withFai (\fpsi)}\label{Fair-PSI-Protocol}

\subsubsection{An overview.} At a high level, \withFai (\fpsi) works as follows. First, each client encodes its set elements into a polynomial. All clients sign a smart contract and deposit a predefined amount of coins into it.  Next,  one of the clients as a dealer, $D$, randomises the rest of the clients' polynomials and imposes a certain structure to their polynomials. The clients also randomise $D$'s polynomials. The randomised polynomials reveal nothing about the clients' original polynomials representing their set elements. Then, all clients send their randomised polynomials to the smart contract.  The contract combines all polynomials and checks whether the resulting polynomial still has the structure imposed by $D$. If the contract concludes that the resulting polynomial does not have the structure, then it invokes an auditor, \aud, to identify misbehaving clients and penalise them. Nevertheless, if the resulting polynomial has the structure, then the contract outputs an encoded polynomial and refunds the clients' deposits. In this case, all clients can use the resulting polynomial (output by the contract) to locally find the intersection. Figure \ref{fig:parties-interactions-in-Jus} outlines the interaction between parties.

\begin{figure}[htp]
    \centering
    \includegraphics[width=14cm]{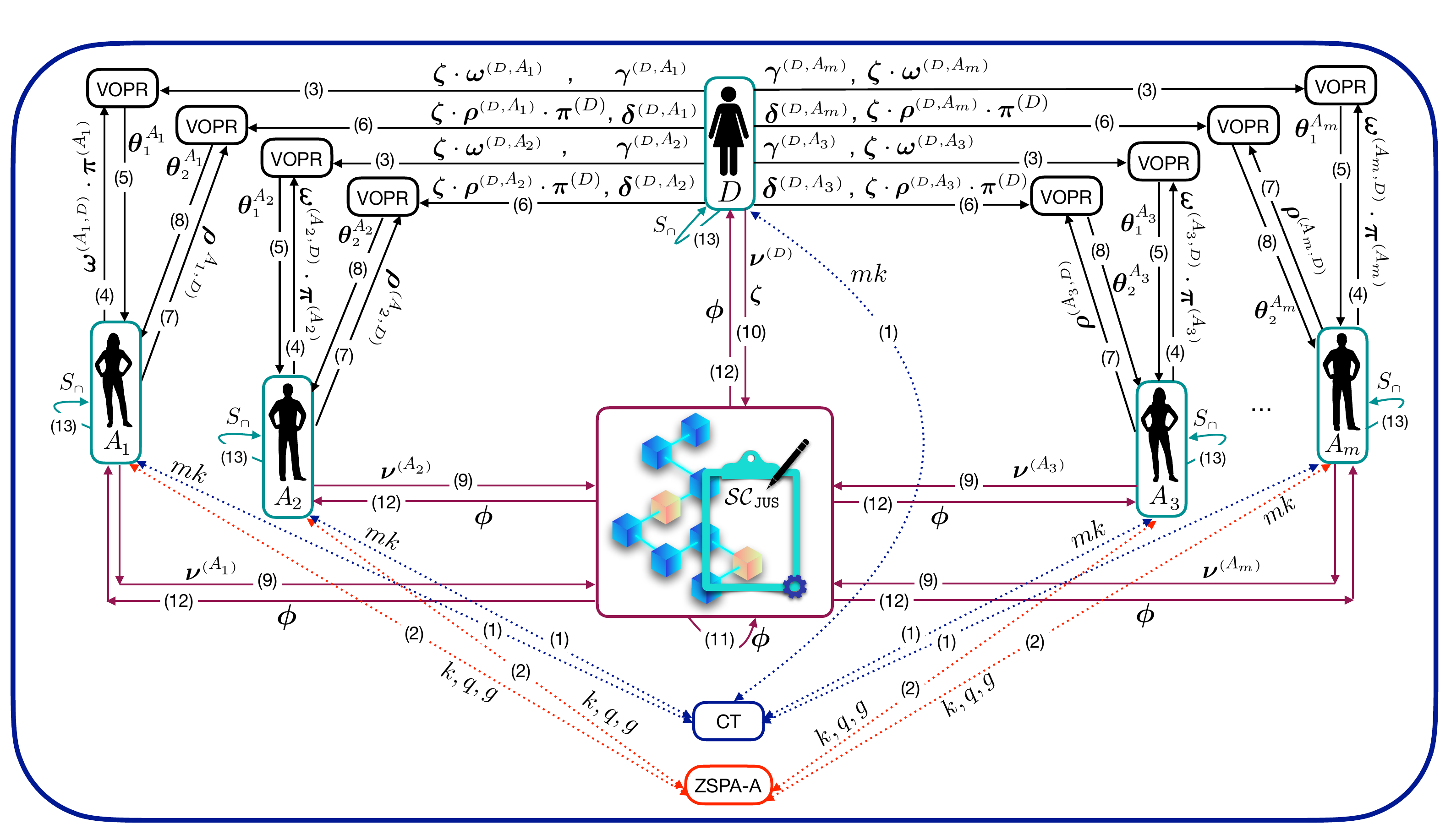}
    \caption{Outline of the interactions between parties in \withFai}\label{fig:parties-interactions-in-Jus}
\end{figure}


One of the novelties of \fpsi is a lightweight verification mechanism which allows a smart contract to efficiently verify the correctness of the clients' messages without being able to learn/reveal the clients' set elements. To achieve this, $D$ randomises each client's polynomials and constructs unforgeable polynomials on the randomised polynomials (in one go). If any client modifies an unforgeable polynomial that it receives and sends the modified polynomial to the smart contract,  then the smart contract would detect it, by checking whether the sum of all clients' (unforgeable) polynomials is divisible by a certain polynomial of degree $1$.  The verification is lightweight because: (i) it does not use any public key cryptography (often computationally expensive), (ii) it needs to perform only polynomial division, and (iii) it can perform batch verification, i.e., it sums all clients randomised polynomials and then checks the result's correctness.


%

%
%

Now, we describe \fpsi in more detail. First, all clients sign and deploy a  smart contract, \scf. Each of them put a certain amount of deposit into it. Then, they together run \ct to agree on a key, $mk$, that will be used to generate a set of blinding polynomials to hide the final result from the public. Next, each client locally maps its set elements to a hash table and represents the content of each hash table's bin as a polynomial, $\bm\pi$. After that, for each bin, the following steps are taken.  All clients, except $D$, engage in \zspaa to agree on a set of pseudorandom blinding factors such that their sum is zero.  

Then, $D$ randomises and constructs an unforgeable polynomial on each client's polynomial, $\bm\pi$. To do that, $D$ and every other client engage in \vopr that returns to the client a polynomial. $D$ and every other client invoke \vopr again to randomise $D$'s polynomial. \vopr returns to the client another unforgeable polynomial. Note that the output of \vopr reveals nothing about any client's original polynomial $\bm\pi$, as this polynomial has been blinded/encrypted with another secret random polynomial by $D$, during the execution of \vopr. Each client sums the two polynomials,  blinds the result (using the output of  \zspaa), and sends it to \scf.

After all of the clients send their input polynomials to \scf, $D$ sends to \scf a \emph{switching polynomial} that will allow \scf to obliviously switch the secret blinding polynomials used by $D$ (during the execution of \vopr) to blind each client's original polynomial $\bm\pi$  to another blinding polynomial that all clients can generate themselves, by using key $mk$.  The switching polynomial is constructed in a way that does not affect the verification of unforgeable polynomials.

Next, $D$ sends to \scf a secret polynomial, $\bm\zeta$, that will allow \scf to check unforgeable polynomials' correctness. Then, \scf\ sums all clients' polynomials and checks if $\bm\zeta$ can divide the sum. \scf\ accepts the clients' inputs if the polynomial divides the sum; otherwise, it invokes \aud to identify misbehaving parties.  In this case, all honest parties' deposit is refunded to them and the deposit of misbehaving parties is distributed among the honest ones as well. If all clients behave honestly,  then each client can locally find the intersection. To do that, it uses $mk$ to locally remove the blinding polynomial from the sum (that the contract generated), then evaluates the unblinded polynomial at each of its set elements and considers an element in the intersection when the evaluation equals zero. 

\subsubsection{Detailed Description of \fpsi.} Below, we elaborate on how \fpsi exactly works (see Table \ref{table:notation-table} for description of the main notations used). 

\begin{enumerate}


\item\label{gen-FPSI-cont} All clients in $\cl=\{ A_{\st 1},...,   A_{\st m},  D\}$ sign a smart contract: \scf and deploy it to a blockchain. All clients get the deployed contract's address. Also, all clients engage in \ct to agree on a secret master key, $mk$.

\item \label{encode-encrypt} Each client in $\cl$  builds a  hash table,  $\mathtt{HT}$, and inserts the set elements into  it, i.e.,  $\forall i: \mathtt{H}( s_{\st i})={indx}$, then $ s_{\st i}\rightarrow \mathtt{HT}_{\st indx}$. It pads every bin with random dummy elements to $d$ elements (if needed). Then,  for every bin, it constructs a polynomial whose roots are the bin's content: $\bm\pi=\prod\limits^{\st d}_{\st i=1} (x-s'_{\st i})$, where $s'_{\st i}$ is either $s_{\st i}$ or a random value. 
\item \label{ZSPA} Every client $ C$ in $\cl\setminus D$, for every bin, agree on $b=3d+3$ vectors of pseudorandom blinding factors: $z_{\st i,j}$, such that the sum of each vector elements is zero, i.e., $\sum\limits^{\st m}_{\st j=1}z_{\st i,j}=0$, where $0\leq i\leq b-1$. To do that, they participate in step \ref{ZSPA::ZSPA-invocation} of \zspaa. By the end of this step, for each bin, they agree on a secret key $k$ (that will be used to generate the zero-sum values) as well as two values stored in $\mathcal{SC}_{\fpsi}$, i.e., $q$: the key's hash value and $g$: a Merkle tree's root. After time $t_{\st 1}$,  $D$ ensures that all other clients have agreed on the vectors (i.e., all provided ``approved''  to the contract). If the check fails, it halts. 
\item\label{F-PSI::each-client-deposit} Each client in $\cl$ deposits $\yc+\chc$ amount to $\mathcal{SC}_{\fpsi}$. After time $t_{\st 2}$, every client ensures that in total $(\yc+\chc)\cdot (m+1)$ amount has been deposited. Otherwise, it halts and the clients' deposit is refunded.

\item\label{JUS::check-non-zero-coeff}  $D$ picks a  random polynomial $\bm\zeta \stackrel{\st\$}\leftarrow \mathbb{F}_{\st p}[X]$ of degree $1$, for each bin.  
It, for each client $C$, allocates to each bin two random polynomials: $\bm\omega^{\st(D,C)}, \bm\rho^{\st (D,C)}\stackrel{\st\$}\leftarrow \mathbb{F}_{\st p}[X]$ of degree $d$, and  two  random polynomials: $\bm\gamma^{\st (D,C)}, \bm\delta^{\st (D,C)} \leftarrow \mathbb{F}_{\st p}[X]$ of degree $3d+1$. Also, each client $C$, for each bin, picks two  random polynomials: $\bm\omega^{\st (C,D)}, \bm\rho^{\st (C,D)}\stackrel{\st\$}\leftarrow \mathbb{F}_{\st p}[X]$ of degree $d$, and ensures polynomials $\bm\omega^{\st (C,D)}\cdot \bm\pi^{\st  {  {(C)}}}$ and  $\bm\rho^{\st (C,D)}$ do not contain zero coefficients. 

\item\label{e-psi::D-randomises}  $D$ randomises other clients' polynomials. To do so, for every bin, it invokes an instance of {\vopr} (presented in Fig. \ref{fig:VOPR}) with  each client $  C$; where  $D$ sends $\bm\zeta \cdot \bm\omega^{\st  {  {(D,C)}}}$ and $\bm\gamma^{\st  {  {(D,C)}}}$, while client $ C$ sends $\bm\omega^{\st  {  {(C,D)}}}\cdot \bm\pi^{\st  {  {(C)}}}$ to {\vopr}. Each client $C$, for every bin, receives a blind polynomial of the following form: 

$$\bm\theta^{  {  {(C)}}}_{\st 1}=\bm\zeta \cdot \bm\omega^{\st  {  {(D,C)}}}\cdot \bm\omega^{\st  {  {(C,D)}}}\cdot \bm\pi^{\st  {  {(C)}}}+\bm\gamma^{\st  {  {(D,C)}}}$$
 from {\vopr}. If any party aborts, the deposit would be refunded to all parties.

\item\label{e-psi::C-randomises} Each client $    {  C}$ randomises  $ {D}$'s polynomial. To do that, each client $    {  C}$, for each bin,  invokes an instance of {\vopr} with   $ {D}$,    where each client $    {  C}$  sends $\bm\rho^{\st  {  {(C,D)}}}$, while  ${D}$  sends $\bm\zeta\cdot\bm \rho^{\st  {  {(D,C)}}}\cdot \bm\pi^{  {  {(D)}}}$ and $\bm\delta^{\st  {  {(D,C)}}}$ to {\vopr}. Every client   $    {  C}$, for each bin,  receives a blind polynomial of the following form: 

$$\bm\theta^{  {  {(C)}}}_{\st 2}=\bm\zeta \cdot \bm\rho^{\st  {  {(D,C)}}}\cdot \bm\rho^{\st  {  {(C,D)}}}\cdot \bm\pi^{\st  {  {(D)}}}+\bm\delta^{\st  {  {(D,C)}}}$$
 from {\vopr}. If any party aborts, the deposit would be refunded to all parties.

\item\label{blindPoly-C-sends-to-contract} Each client $ C$, for every bin, masks the sum of polynomials $\bm\theta^{\st  {  {(C)}}}_{\st 1}$ and $\bm\theta^{\st  {  {(C)}}}_{\st 2}$  using the blinding factors: $z_{\st i,c}$, generated in step \ref{ZSPA}. Specifically, it computes the following blind polynomial (for every bin):  

$$\bm\nu^{ \st {  {(C)}} }= \bm\theta^{ \st {  {(C)}}}_{\st 1}+\bm\theta^{\st  {  {(C)}}}_{\st 2}+\bm \tau^{\st  {  {(C)}} }$$

where $\bm\tau^{\st  {  {(C)}}}=\sum\limits^{\st 3d+2}_{\st i=0}z_{\st i,c}\cdot x^{\st i}$. Next, it sends  all $\bm\nu^{\st  {  {(C)}} }$ to $\mathcal{SC}_{\fpsi}$. If any party aborts, the deposit would be refunded to all parties.



\item\label{f-psi::D-gen-random-poly} ${D}$ ensures all clients sent their inputs to $\mathcal{SC}_{\fpsi}$. If the check fails, it halts and the deposit would be refunded to all parties. It allocates a fresh pseudorandom polynomial $\bm\gamma'$ of degree $3d$, to each bin. To do so, it uses $mk$ to derive a key for each bin: $k_{\st  { {indx}}}=\mathtt{PRF}(mk, {    {   {indx}}})$ and then uses the derived key to generate $3d+1$ pseudorandom coefficients $g_{\st  { {j,indx}}}=\mathtt{PRF}(k_{\st  { {indx}}}, j)$ where $ 0\leq j \leq 3d$. Also, for each bin, it allocates a fresh random polynomial:  $\bm\omega'^{\st  {  {(D)}}}$ of degree $d$. 

\item\label{f-psi::D-gen-switching-poly}  $ {D}$,  for every bin, computes a polynomial of the form:  

$$\bm\nu^{\st  {  {(D)}}}=\bm\zeta \cdot  \bm\omega'^{\st  {  {(D)}}}\cdot \bm\pi^{\st  {  {(D)}} }-\sum\limits^{\st  {   A}_{\st  {   m}}}_{\st   {  {C }= }   {   A}_{\st  {  1}}}(\bm\gamma^{\st  {  {(D,C)}}} + \bm\delta^{\st  {  {(D,C)}}}) + \bm\zeta \cdot \bm\gamma'$$ 
It sends to $\mathcal{SC}_{\fpsi}$  polynomials $\bm\nu^{\st  {  {(D)}}}$ and $\bm\zeta$, for each bin.

 \item\label{compute-res-poly}  $\mathcal{SC}_{\fpsi}$ takes the following steps:
 \begin{enumerate}
 \item for every bin, sums all related polynomials  provided by all clients in $\bar{P}$:
 
 \begin{equation*}
\begin{split}
 \bm\phi&= \bm\nu^{\st  {  {(D)}} }+\sum\limits^{\st  {   A}_{\st  {   m}}}_{\st   {  {C }= }   {   A}_{\st  {  1}}}\bm\nu^{\st  {  {(C)}} }\\
 &= \bm\zeta\cdot \bigg(\bm\omega'^{\st  {  {(D)}}}\cdot \bm\pi^{\st  {  {(D)}} } +\sum\limits^{\st  {   A}_{  {   m}}}_{\st  {  {C }= }   {   A}_{\st  {  1}}}(\bm\omega^{\st  {  {(D,C)}}} \cdot \bm\omega^{\st  {  {(C,D)}}}\cdot \bm\pi^{\st  {  (C)}}) +\bm\pi^{\st  {  {(D)}}}\cdot\sum\limits^{\st  {   A}_{ \st {   m}}}_{\st  {C= }   {   A}_{\st  {  1}}}(\bm\rho^{\st  {  {(D,C)}}} \cdot \bm\rho^{\st  {  {(C,D)}}}) + \bm\gamma'\bigg)
  \end{split}
\end{equation*}
 
  \item\label{F-PSI:detect-misbehaving-party} checks whether, for every bin, $\bm\zeta$ divides $\bm\phi$. If the check passes, it sets $Flag=True$. Otherwise, it sets $Flag=False$. 
  

 \end{enumerate}
 
\item\label{F-PSI::flag-is-true} If the above check passes (i.e., $Flag=True$), then the following steps are taken:

\begin{enumerate}
 \item $\mathcal{SC}_{\fpsi}$ sends back each party's deposit, i.e., $\yc+\chc$ amount.
 
  \item each client (given $\bm\zeta$ and $mk$) finds the elements in the intersection as follows. 
  \begin{enumerate}
  \item derives a bin's pseudorandom polynomial, $\bm\gamma'$, from $mk$. 
  
  \item removes the blinding polynomial from each bin's polynomial: 
  
  $$\bm\phi'=\bm\phi-\bm\zeta\cdot \bm\gamma'$$ 
  
  \item\label{F-PSI::find-intersection} evaluates each bin's unblinded polynomial at every element $s_{\st i}$ belonging to that bin and considers the element in the intersection if the evaluation is zero: i.e., $\bm\phi'(s_{\st i})=0$.
 
 \end{enumerate}

 \end{enumerate}
 
 \item\label{F-PSI::flag-is-false} If the check does not pass (i.e., $Flag=False$), then the following steps are taken.

 \begin{enumerate}

 \item\label{auditor}  \aud asks every client $    {  C}$ to send to it the  $\mathtt{PRF}$'s key (generated in step \ref{ZSPA}), for every bin. It inserts the keys to $\vv k$.  It generates a list $\bar L$ initially empty. Then, for every bin,  \aud takes step \ref{ZSPA-A::Auditor-computation} of \zspaa, i.e., invokes  $\mathtt{Audit}(m, \vv{{k}},  q, \bm\zeta, 3d+3, g)\rightarrow (L, \vv{{\mu}})$.  Every time it invokes $\mathtt{Audit}$, it appends the elements of returned $L$ to $\bar L$.  \aud for each bin sends  $ \vv{{\mu}}$ to $\mathcal{SC}_{\fpsi}$. It also sends  to $\mathcal{SC}_{\fpsi}$ the list $\bar L$ of all misbehaving clients detected so far.

 \item to  help identify further  misbehaving clients, $D$ takes the following steps,  for each bin of client $    {  C}$ whose ID is not in $\bar L$.   
 \begin{enumerate}
 \item\label{gen-unmaking-poly} computes polynomial $\bm\chi^{  {  {(D, C)}}}$ as follows. 
 
 $$\bm\chi^{ \st {  {(D, C)}}}=\bm\zeta\cdot \bm\eta^{ \st {  {(D,C)}}}-(\bm\gamma^{ \st {  {(D,C)}}}+\bm\delta^{ \st {  {(D,C)}}})$$
 
 
  where $\bm\eta^{ \st {  {(D,C)}}}$ is a fresh random polynomial of degree $3d+1$. 
  
  \item\label{send-unmaking-poly} sends  polynomial $\bm\chi^{ \st {  {(D, C)}}}$ to  $\mathcal{SC}_{\fpsi}$.

 \end{enumerate}
  Note, if $\bar L$ contains all clients' IDs, then $D$ does not need to take the above steps \ref{gen-unmaking-poly} and \ref{send-unmaking-poly}. 
 
 \item  $\mathcal{SC}_{\fpsi}$,   takes the following steps to check if the client misbehaved,  for each bin of client $    {  C}$ whose ID is not in $\bar L$.
 
 
  \begin{enumerate}
  
 \item computes  polynomial $\bm\iota^{\st  {  {(C)}}}$ as follows: 
 
  \begin{equation*}
\begin{split}
 \bm\iota^{ \st {  {(C)}}}&=\bm\chi^{\st  {  {(D, C)}}}+\bm\nu^{\st  {  {(C)}}} +\bm\mu^{ \st {  {(C)}}} \\ 
 &=\bm\zeta\cdot(\bm\eta^{ \st {  {(D,C)}}} + \bm\omega^{ \st {  {(D,C)}}}\cdot \bm\omega^{ \st {  {(C,D)}}}\cdot \bm\pi^{ \st {  {(C)}}}+\bm\rho^{ \st {  {(D,C)}}}\cdot \bm\rho^{ \st {  {(C,D)}}}\cdot \bm\pi^{ \st {  {(D)}}}+\bm\xi^{ \st {  {(C)}}})
 \end{split}
\end{equation*}

 where $\bm\mu^{ \st {  {(C)}}} \in \vv{\mu}$ generated and sent to $\mathcal{SC}_{\fpsi}$  by \aud in step \ref{auditor}.   
  \item checks if $\bm\zeta$  divides $\bm\iota^{ \st {  {(C)}}}$. If the check fails, it appends the client's ID to  a list $ L'$.
  \end{enumerate}
   If $\bar L$ contains all clients' IDs, then $\mathcal{SC}_{\fpsi}$ does not take the above two steps. 

   \item  $\mathcal{SC}_{\fpsi}$  refunds the honest parties' deposit. Also, it retrieves the total amount of  $\chc$ from the deposit of dishonest clients (i.e., those clients whose IDs are in $\bar L$ or $L'$) and sends it to \aud.  It also splits the remaining deposit of the misbehaving parties among the honest ones. Thus, each honest client  receives $\yc+\chc+\frac{m'\cdot (\yc+\chc)-\chc}{m-m'}$ amount in total, where $m'$ is the total number of misbehaving parties.

  \end{enumerate}

  \end{enumerate}
 


One may be tempted to replace $\withFai$ with a scheme in which all clients send their encrypted sets to a server (potentially semi-honest and plays \aud's role) which computes the result in a privacy-preserving manner.  We highlight that the main difference is that in this (hypothetical) scheme the server is \emph{always involved};  whereas, in our protocol, \aud remains offline as long as the clients behave honestly and it is invoked only when the contract detects misbehaviours.  

In step \ref{JUS::check-non-zero-coeff}, \fpsi requires each client $C$ to ensure polynomials $\bm\omega^{\st (C,D)}\cdot \bm\pi^{\st  {  {(C)}}}$ and  $\bm\rho^{\st (C,D)}$ do not contain zero coefficients. However,  this check can be removed from this step, if we allow the protocol to output error with a negligible probability. We refer readers to Appendix \ref{sec::error-prob} for further discussion. 


 Next, we present a theorem that formally states the security of \fpsi. 
 
 \begin{theorem}\label{theorem::F-PSI-security}
Let polynomials $\bm\zeta$, $\bm\omega$, and $\bm\gamma$ be three secret uniformly random polynomials. If  $\bm\theta=\bm\zeta\cdot \bm\omega\cdot\bm \pi+\bm \gamma \bmod p$ is an unforgeable polynomial (w.r.t. Theorem \ref{proof::unforgeable-poly}), \zspaa, \vopr,  $\mathtt{PRF}$, and smart contracts are secure, then \fpsi securely realises  $f^{\st \text{PSI}}$ with $Q$-fairness (w.r.t. Definition \ref{def::PSI-Q-fair}) in the presence of $m-1$  active-adversary clients (i.e., $A_{\st j}$s) or a passive dealer client, passive auditor, or passive public. 
 \end{theorem}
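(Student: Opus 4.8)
The plan is to prove Theorem~\ref{theorem::F-PSI-security} in the simulation paradigm, carrying out the argument in the $(f^{\vopr},f^{\zspaa},f_{\ct})$-hybrid model so that the already-established security of the subroutines (Theorems~\ref{theorem::VOPR},~\ref{theorem::ZSPA-A} and~\ref{theorem::ZSPA-comp-correctness}) and the security of \scf can be invoked directly, while treating the ledger/contract as an ideal object whose state transitions the simulator emulates round by round. Since the corruption scenarios in the statement are mutually exclusive, I would split the proof into four cases: (i) a coalition of up to $m-1$ actively corrupt clients $A_{\st j}$ (with $D$ and \aud honest); (ii) a semi-honest dealer $D$; (iii) a semi-honest auditor \aud; and (iv) a passive public observer of \scf. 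In every case I build a simulator that interacts with the wrapped ideal functionality $\mathcal{W}(f^{\text{PSI}},Q)$ of Definition~\ref{def::PSI-Q-fair} and argue that the real and ideal ensembles are computationally indistinguishable; in every case I also check that $\mathsf{Sim}$ drives the predicate wrapper $Q$ correctly, i.e.\ it lets the ideal execution begin only once all $m+1$ addresses hold $\yc+\chc$ on \scf ($\qinit$), it makes each honest address net zero on the delivery branch ($\qdel$), it pays each honest client $\yc+\chc+\frac{m'\cdot(\yc+\chc)-\chc}{m-m'}$ and \aud the fee $\chc$ on an unfair abort ($\qUnFAbt$), and it issues a plain refund on an abort that precedes any output ($\qFAbt$).

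\textbf{Case (i)} is the substantive one. The simulator runs the real-model adversary and, because it plays the ideal $f^{\vopr}$ and $f^{\zspaa}$, it \emph{extracts} each corrupt client $C$'s receiver input to \vopr (step~\ref{e-psi::D-randomises}) and to \zspaa, and from the former recovers the effective set $S_{\st C}$ that $C$ contributes (keeping only the roots conforming to the set-encoding convention and discarding the masking roots introduced by $\bm\omega^{(C,D)}$). It submits the $S_{\st C}$ to the TTP, receives $S_\cap$, and must then fabricate the honest parties' messages: honest $D$'s \vopr/\zspaa inputs, the switching polynomial $\bm\nu^{(D)}$ and $\bm\zeta$ of step~\ref{f-psi::D-gen-switching-poly}, any honest client's $\bm\nu^{(C)}$ (at least one $A_{\st j}$ is honest, by the $\le m-1$ corruption bound), and \aud's outputs on the misbehaviour branch. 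The crucial point is that every honest outgoing polynomial is one-time-padded by a value the coalition cannot reconstruct: honest $\bm\nu^{(C)}$ are blinded by the zero-sum factors $z_{\st i,c}$, whose key is protected because at least one honest party participates in the relevant \zspaa instance (Theorem~\ref{theorem::ZSPA-comp-correctness}, privacy); honest $D$'s \vopr contributions are masked by fresh $\bm\gamma^{(D,C)},\bm\delta^{(D,C)}$; and the contract output $\bm\phi$ is masked by $\bm\zeta\cdot\bm\gamma'$ with $\bm\gamma'$ a \prf image unknown to the adversary. Hence all of these are simulated by uniformly random polynomials of the prescribed degrees, with one residual polynomial pinned down so that the public divisibility test $\bm\zeta\mid\bm\phi$ comes out consistent with whether the coalition cheated. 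Using Lemma~\ref{theorem::evaluation-of-random-poly} and Theorems~\ref{theorem:coef-poly-prod},~\ref{proof::unforgeable-poly} and~\ref{Unforgeable-Polynomials-Linear-Combination}, I argue that in the real protocol $\bm\zeta\mid\bm\phi$ holds iff no corrupt client tampered with its unforgeable polynomial, except with probability $\negl(\lambda)$ (batch verification of the linear combination of unforgeable polynomials), so $\mathsf{Sim}$'s decision of which branch to drive and which parties to flag matches the real execution; and the standard fact that $\sum_{\st i}(\text{random poly})\cdot\bm\pi_{\st i}$ is a uniformly random multiple of the $\gcd$ of the $\bm\pi_{\st i}$~\cite{DBLP:conf/crypto/KissnerS05,BonehGHWW13} shows that the $\bm\phi$ the simulator builds from $S_\cap$ is distributed as in the real world and evaluates to zero exactly on $S_\cap$.

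\textbf{Cases (ii)--(iv)} are passive. For a semi-honest $D$ (all $A_{\st j}$ and \aud honest) no audit is triggered; $D$'s view is its own \vopr/\zspaa transcripts --- in which, as sender, it learns nothing by Theorem~\ref{theorem::VOPR} --- together with the \scf state, where the $\bm\nu^{(C)}$ are hidden from $D$ by the $z_{\st i,c}$ (note $D\notin$ \zspaa) and $\bm\phi$ reveals only $S_\cap$, which $D$ obtains as its output anyway; so $\mathsf{Sim}$ reconstructs a consistent transcript from $S_{\st m+1}$ and $S_\cap$. For a semi-honest \aud and for the passive public, again no misbehaviour occurs and the party sees only the public \scf state $g,q,\{\bm\nu^{(C)}\},\bm\nu^{(D)},\bm\zeta,\bm\phi$ and the deposit flow; since neither knows the \zspaa key nor the master key $mk$, the \zspaa privacy theorem and the pseudorandomness of \prf make every such value simulatable by random polynomials (and for the public $\mathsf{Sim}$ works from $\bot$), while the fixed deposit transfers dictated by $\qinit$/$\qdel$ are reproduced verbatim. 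I expect the main obstacle to be Case~(i): precisely pinning down, across \emph{every} adversarial interleaving of the many rounds (the two \vopr invocations, the \zspaa, the contract submissions, the switching polynomial $\bm\nu^{(D)}$, and the audit phase), which abort/flag event $\mathsf{Sim}$ must induce in the ideal world so that both the financial outcome \emph{and} the set of flagged parties coincide with the real world --- this is exactly where the unforgeable-polynomial guarantee of Theorem~\ref{proof::unforgeable-poly} and the second-preimage argument inside \zspaa (used in Theorem~\ref{theorem::ZSPA-A}) carry the load, and where the bookkeeping around $\bar L$, $L'$ and the compensation formula is most delicate.
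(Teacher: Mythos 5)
Your proposal is correct and follows essentially the same route as the paper: the same four-case decomposition (a coalition of up to $m-1$ active clients, a passive $D$, a passive \aud, and the passive public), the same hybrid-model simulators that extract the corrupt clients' effective sets from $f^{\st\vopr}$, embed the TTP's intersection into the blinded honest messages, invoke Theorems \ref{proof::unforgeable-poly} and \ref{Unforgeable-Polynomials-Linear-Combination} to match the contract's divisibility check and the flagged-party lists, and verify the predicate bookkeeping for $\qinit$, $\qdel$, $\qUnFAbt$, $\qFAbt$. The only cosmetic divergence is in the auditor case, which the paper handles by reducing to the corrupt-clients case under the worst-case assumption that the corrupt clients reveal their inputs and randomness to \aud, whereas you argue directly from the public contract state; both suffice for the stated corruption model.
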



\subsection{Proof of \fpsi}\label{sec::F-PSI-proof}

In this section, we prove Theorem \ref{theorem::F-PSI-security}, i.e., the security of \fpsi.

\begin{proof}
We prove Theorem  \ref{theorem::F-PSI-security} by considering the case where each party is corrupt, at a time.

\

\noindent\textbf{Case 1: Corrupt $m-1$ clients in $\{  {  A}_{ \st {   1}}, ...,   {  A}_{ \st {   m}}\}$}.  Let $P'$ be a set of at most $m-1$ corrupt clients, where $P'\subset \{  {  A}_{ \st {   1}}, ...,   {  A}_{ \st {   m}}\}$. Let set $\hat P$ be defined as follows: $\hat P=\{  {  A}_{ \st {   1}}, ...,   {  A}_{ \st {   m}}\}\setminus P'$. Also, let $\mathsf{Sim}^{\st\fpsi}_{\st A}$ be the simulator, which uses a subroutine adversary, $\mathcal{A}$.  Below, we explain how $\mathsf{Sim}^{\st \fpsi}_{\st A}$ (which receives the input sets of honest dealer $D$ and honest client(s) in $\hat P$) works.

\begin{enumerate}
\item constructs and deploys a smart contract. It sends the contract's address to $\mathcal{A}$. 
\item simulates \ct and receives the output value, $ {mk}$, from its functionality, $f_{\st \ct}$.
\item\label{sim::ZSPA-A-invocation} simulates \zspaa for each bin and receives the output value, $( k,  g,  q)$, from its functionality, $f^{\st \zspaa}$.
\item deposits in the contract the total amount of $(\yc+\chc)\cdot (m-|P'|+1)$ coins on behalf of $D$ and honest client(s) in $\hat P$. It sends to $\mathcal{A}$ the amount deposited in the contract. 
\item checks if $\mathcal{A}$ has deposited $(\yc+\chc)\cdot |P'|$ amount. If the check fails, it instructs the ledger to refund the coins that every party deposited and sends message $abort_{\st 1}$ to TTP (and accordingly to all parties); it outputs whatever $\mathcal{A}$ outputs and then halts. 

\item picks a random polynomial ${\bm\zeta}$ of degree $1$, for each bin. Also, $\mathsf{Sim}^{\st \fpsi}_{\st A}$, for each client $  {  C}\in \{  {  A}_{ \st {   1}}, ...,   {  A}_{ \st {   m}}\}$ allocates to each bin two random polynomials: (${\bm\omega}^{ \st {  {(D,C)}}}, {\bm\rho}^{ \st {  {(D,C)}}}$) of degree $d$ and   two random polynomials: (${\bm\gamma}^{ \st {  {(D,C)}}}$, ${\bm\delta}^{ \st {  {(D,C)}}}$) of degree $3d+1$. Moreover, $\mathsf{Sim}^{\st \fpsi}_{\st A}$ for every honest client $C'\in \hat P$, for each bin, picks two random polynomials: (${\bm\omega}^{ \st {  {(C',D)}}}$, ${\bm\rho}^{ \st {  {(C',D)}}}$) of degree $d$. 
\item\label{F-PSI::sim-A-first-VOPR-invocation} simulates \vopr using inputs ${\bm\zeta} \cdot {\bm\omega}^{ \st {  {(D,C)}}}$ and ${\bm\gamma}^{ \st {  {(D,C)}}}$ for each bin. Accordingly, it receives the inputs of clients $C''\in P'$, i.e., ${\bm\omega}^{ \st {  {(C'',D)}}}\cdot {\bm\pi}^{ \st {  {(C'')}}}$, from its functionality $f^{\st \vopr}$, for each bin.  
%
%
\item extracts the roots of polynomial ${\bm\omega}^{ \st {  {(C'',D)}}}\cdot {\bm\pi}^{ \st {  {(C'')}}}$ for each bin and appends those roots that are in the sets universe to a new set $S^{\st(C'')}$. 
\item simulates  \vopr again using inputs ${\bm\zeta}\cdot {\bm \rho}^{ \st {  {(D,C)}}}\cdot {\bm\pi}^{ \st {  {(D)}}}$ and ${\bm\delta}^{ \st {  {(D,C)}}}$, for each bin. 
\item sends to TTP the input sets of all parties; namely, (i) client $D$'s input set: $S^{\st (D)}$, (ii) honest clients' input sets: $S^{\st (C')}$ for all $C'$ in $\hat P$, and (iii) $\mathcal{A}$'s input sets: $S^{\st(C'')}$, for all $C''$ in $P'$.  For each bin, it receives the intersection set, $S_{\st\cap}$, from TTP. 
\item represents the intersection set for each bin as a polynomial, ${\bm \pi}$, as follows: ${\bm \pi}=\prod\limits^{\st |S_{\st\cap}|}_{\st i=1 }(x-s_{\st i})$, where $s_{\st i}\in S_{\st\cap}$. 

\item constructs polynomials ${\bm\theta}^{ \st {  {(C')}}}_{\st 1}={\bm\zeta} \cdot {\bm\omega}^{ \st {  {(D,C')}}}\cdot {\bm\omega}^{ \st {  {(C',D)}}}\cdot {\bm\pi}+{\bm\gamma}^{ \st {  {(D,C')}}}, {\bm\theta}^{ \st {  {(C')}}}_{\st 2}=  {\bm\zeta} \cdot  {\bm\rho}^{ \st {  {(D,C')}}}\cdot  {\bm\rho}^{ \st {  {(C',D)}}}\cdot  {\bm\pi}+  {\bm\delta}^{ \st {  {(D,C')}}}$, and $ {\bm\nu}^{ \st {  {(C')}} }=  {\bm\theta}^{ \st {  {(C')}}}_{\st 1}+ {\bm\theta}^{ \st {  {(C')}}}_{\st 2}+ {\bm \tau}^{ \st {  {(C')}} }$, for each bin and each honest client $C'\in\hat P$, where $ {\bm\tau}^{ \st {  {(C)}}}=\sum\limits^{\st 3d+2}_{\st i=0}z_{\st i,c}\cdot x^{\st i}$ and each $z_{\st i,c}$ is derived from $  k$. 
\item sends to $\mathcal{A}$ polynomial  $ {\bm\nu}^{ \st {  {(C')}} }$ for each bin and each client $C'$. 
\item\label{F-PSI::sim-A-receive-nu-from-adv} receives $ {\bm\nu}^{ \st {  {(C'')}} }$  from $\mathcal{A}$, for each bin and each client $C''\in P'$. It ensures that the output for every $C''$ has been provided. Otherwise, it halts. 
\item if there is any abort within steps \ref{F-PSI::sim-A-first-VOPR-invocation}--\ref{F-PSI::sim-A-receive-nu-from-adv}, then it sends $abort_{\st 2}$ to TTP and instructs the ledger to refund the coins that every party deposited.  It outputs whatever $\mathcal{A}$ outputs and then halts. 
\item constructs polynomial  $ {\bm\nu}^{ \st {  {(D)}}}= {\bm\zeta} \cdot   {\bm\omega'}^{ \st {  {(D)}}}\cdot  {\bm\pi} - \sum\limits^{ \st {   A}_{ \st {   m}}}_{  \st {  {C }= }  \st {   A}_{ \st {  1}}}( {\bm\gamma}^{ \st {  {(D,C)}}} +  {\bm\delta}^{ \st {  {(D,C)}}}) +  {\bm\zeta} \cdot  {\bm\gamma'}$ for each bin, where $ {\bm\omega'}^{ \st {  {(D)}}}$ is a fresh random polynomial of degree $d$ and $ {\bm\gamma'}$ is a pseudorandom polynomial derived from $  {mk}$.
\item sends to $\mathcal{A}$ polynomials $ {\bm\nu}^{ \st {  {(D)}}}$ and $ {\bm\zeta}$ for each bin. 
\item given each $ {\bm\nu}^{ \st {  {(C'')}} }$, computes polynomial $ {\bm\phi'}$ as follows $ {\bm\phi'} = \sum\limits_{\st \forall C''\in P'} {\bm\nu}^{ \st {  {(C'')}} }- \sum\limits_{\st \forall C''\in P'}( {\bm\gamma}^{ \st {  {(D,C'')}}} +  {\bm\delta}^{ \st {  {(D,C'')}}})$, for every bin. Then, $\mathsf{Sim}^{\st \fpsi}_{\st A}$ checks whether  $ {\bm\zeta}$  divides ${\bm\phi'}$, for every bin. If the check passes, it sets $Flag=True$. Otherwise, it sets $Flag=False$. 
\item if $Flag=True$:

\begin{enumerate}
 \item instructs the ledger to send back each party's deposit, i.e., $\yc+\chc$ amount. It sends a message $deliver$ to TTP.  
 \item outputs whatever $\mathcal{A}$ outputs and then halts. 
 \end{enumerate}
\item if $Flag=False$: 
\begin{enumerate}
 \item receives $|P'|$ keys of the $\mathtt{PRF}$ from $\mathcal{A}$, i.e., $\vv k'=[   k'_{\st 1}, ...,   k'_{\st |P'|}]$, for every bin. 
\item checks whether the following equation holds: $  k'_{\st j}=  k$, for every $  k'_{\st j}\in\vv k'$. Note that $  k$ is the output of $f^{\st \zspaa}$ generated in step \ref{sim::ZSPA-A-invocation}. It constructs an empty list $  L'$ and appends to it the indices (e.g., $j$) of the keys that do not pass the above check. 
\item simulates \zspaa and receives from $f^{\st \zspaa}$ the output that contains a vector of random polynomials, $\vv{\mu}'$, for each valid key. 
\item sends to  $\mathcal{A}$, the list $  L'$ and vector  $\vv{\mu}'$, for every bin. 
\item  for each bin of client $  {  C}$ whose index (or ID) is not in $  L'$ computes polynomial $ {\bm\chi}^{ \st {  {(D, C)}}}$ as follows: 
 $ {\bm\chi}^{ \st {  {(D, C)}}}= {\bm\zeta}\cdot  {\bm\eta}^{ \st {  {(D,C)}}}-( {\bm\gamma}^{ \st {  {(D,C)}}}+ {\bm\delta}^{ \st {  {(D,C)}}})$,   where $ {\bm\eta}^{ \st {  {(D,C)}}}$ is a fresh random polynomial of degree $3d+1$. Note, $C$ includes both honest and corrupt clients, except those clients whose index is in  $  L'$. $\mathsf{Sim}^{\st \fpsi}_{\st A}$ sends every polynomial $ {\bm\chi}^{ \st {  {(D, C)}}}$ to  $\mathcal{A}$. 
 \item given each $ {\bm\nu}^{ \st {  {(C'')}} }$ (by $\mathcal{A}$ in step \ref{F-PSI::sim-A-receive-nu-from-adv}), computes polynomial $ {\bm\phi'}^{ \st {  {(C'')}} }$ as follows: ${\bm\phi'}^{ \st {  {(C'')}} } =  {\bm\nu}^{ \st {  {(C'')}} }-  {\bm\gamma}^{ \st {  {(D,C'')}}} -  {\bm\delta}^{ \st {  {(D,C'')}}}$, for every bin. Then, $\mathsf{Sim}^{\st \fpsi}_{\st A}$ checks whether  $ {\bm\zeta}$  divides $ {\bm\phi'}^{ \st {  {(C'')}} }$, for every bin. It appends the index of those clients that did not pass the above check to a new list, $  L''$. Note that $  L'\cap   L''=\bot$.
\item if  $  L'$ or $  L''$ is not empty, then it instructs the ledger: (i) to refund the coins of those parties whose index is not in $  L'$ and $  L''$, (ii) to retrieve $\chc$ amount from the adversary (i.e., one of the parties whose index is in one of the lists) and send the $\chc$ amount to the auditor, and (iii) to compensate each honest party (whose index is not in the two lists)  $\frac{m'\cdot (\yc+\chc)-\chc}{m-m'}$ amount, where $m'=|  L'|+|  L''|$.  Then, it sends  message $abort_{\st 3}$ to TTP. 
\item outputs whatever $\mathcal{A}$ outputs and halts. 
 \end{enumerate}
\end{enumerate}

Next, we show that the real and ideal models are computationally indistinguishable. We first focus on the adversary’s output. In the real and ideal models, the adversary sees the transcripts of ideal calls to $f_{\st \ct}$ as well as this functionality outputs, i.e., $mk$. Due to the security of \ct (as we are in the $f_{\st \ct}$-hybrid world), the transcripts of $f_{\st \ct}$ in both models have identical distribution, so have the random output of $f_{\st \ct}$, i.e., $mk$. The same holds for (the transcripts and) outputs (i.e., $(  k,   g,   q)$) of $f^{\st \zspaa}$ that the adversary observes in the two models. Also, the deposit amount is identical in both models. Thus, in the case where $abort_{\st 1}$ is disseminated at this point; the adversary's output distribution in both models is identical.

The adversary also observes (the transcripts and) outputs of ideal calls to $f^{\st \vopr}$ in both models, i.e., output ($\bm\theta^{ \st {  {(C'')}}}_{\st 1}=\bm\zeta \cdot \bm\omega^{ \st {  {(D, C'')}}}\cdot \bm\omega^{ \st {  {(C'', D)}}}\cdot \bm\pi^{ \st {  {(C'')}}}+\bm\gamma^{ \st {  {(D,C'')}}}, \bm\theta^{ \st {  {(C'')}}}_{\st 2}=\bm\zeta \cdot \bm\rho^{ \st {  {(D, C'')}}}\cdot \bm\rho^{ \st {  {(C'',D)}}}\cdot \bm\pi^{ \st {  {(D)}}}+\bm\delta^{ \st {  {(D, C'')}}}$) for each corrupted client $C''$. However, due to the security of \vopr, the $\mathcal{A}$'s view, regarding \vopr, in both models have identical distribution.  
In the real model, the adversary observes the polynomial ${\bm\nu}^{ \st {  {(C)}}}$ that each honest client $C$ stores in the smart contract. Nevertheless, this is a blinded polynomial comprising of two uniformly random blinding polynomials (i.e., $\bm\gamma^{ \st {  {(D,C)}}}$ and $\bm\delta^{ \st {  {(D,C)}}}$) unknown to the adversary. In the ideal model, $\mathcal{A}$ is given polynomial  $ {\bm\nu}^{ \st {  {(C')}}}$ for each honest client $C'$. This polynomial has also been blinded via two uniformly random polynomials (i.e., $ {\bm\gamma}^{ \st {  {(D, C')}}}$ and $ {\bm\delta}^{ \st {  {(D, C')}}}$) unknown to $\mathcal{A}$. Thus, ${\bm\nu}^{ \st {  {(C)}}}$ in the real model and $ {\bm\nu}^{ \st {  {(C)}}}$ in the ideal model have identical distributions. As a result, in the case where $abort_{\st 2}$ is disseminated at this point; the adversary's output distribution in both models is identical.

Furthermore, in the real world, the adversary observes polynomials ${\bm\zeta}$ and  $\bm\nu^{ \st {  {(D)}}}$  that  $D$ stores in the smart contract. Nevertheless, ${\bm\zeta}$ is a uniformly random polynomial, also polynomial $\bm\nu^{ \st {  {(D)}}}$ has been blinded; its blinding factors are the additive inverse of  the sum of the random polynomials $\bm\gamma^{ \st {  {(D,C)}}}$ and $\bm\delta^{ \st {  {(D,C)}}}$ unknown to the adversary, for every client $C\in \{  {  A}_{ \st {   1}}, ...,   {  A}_{ \st {   m}}\}$ and  $D$. In the ideal model, $\mathcal{A}$ is given $ {\bm\zeta}$ and $ {\bm\nu}^{ \st {  {(D)}}}$, where the former is a random polynomial and the latter is a blinded polynomial that has been blinded with the additive inverse of the sum of random polynomials $ {\bm\gamma}^{ \st {  {(D,C)}}}$ and $ {\bm\delta}^{ \st {  {(D,C)}}}$ unknown to it,  for all client $C$. Therefore,  (${\bm\zeta}, \bm\nu^{ \st {  {(D)}}}$)  in the real model and  ($ {\bm\zeta},  {\bm\nu}^{ \st {  {(D)}}}$) in the ideal model component-wise have identical distribution.

Also, the sum of less than $m+1$ blinded polynomials ${\bm\nu}^{\st (A_{\st 1})},...,{\bm\nu}^{\st (A_{\st m})}, \bm\nu^{ \st {  {(D)}}}$   in the real model has identical distribution to the sum of less than $m+1$ blinded polynomials $ {\bm\nu}^{\st (A_{\st 1})},...,  {\bm\nu}^{\st (A_{\st m})},  {\bm\nu}^{ \st {  {(D)}}}$ in the ideal model, as such a combination would still be blinded by a set of random blinding polynomials unknown to the adversary. Now we discuss why the two polynomials $\frac{\bm\phi}{\bm\zeta}- \bm\gamma'$ in the real model and $\frac{ {\bm\phi}} { {\bm\zeta}}-  {\bm\gamma'}$ in the ideal model are indistinguishable. Note that we divide and then subtract  polynomials ${\bm\phi}$ because the adversary already knows (and must know) polynomials $(\bm\zeta, \bm\gamma')$. In the real model, polynomial $\frac{\bm\phi}{\bm\zeta}- \bm\gamma'$ has the following form: 
\begin{equation}\label{equ::-corrupt-A-real-world-phi}
 \frac{\bm\phi}{\bm\zeta}- \bm\gamma'=  \bm\omega'^{ \st {  {(D)}}}\cdot \bm\pi^{ \st {  {(D)}} } +\sum\limits^{ \st {   A}_{ \st {   m}}}_{ \st {  {C }= }  \st {   A}_{ \st {  1}}}(\bm\omega^{ \st {  {(D,C)}}} \cdot \bm\omega^{ \st {  {(C,D)}}}\cdot \bm\pi^{ \st {  (C)}}) +\bm\pi^{ \st {  {(D)}}}\cdot\sum\limits^{ \st {   A}_{ \st {   m}}}_{ \st {  {C }= }  \st {   A}_{ \st {  1}}}(\bm\rho^{ \st {  {(D,C)}}} \cdot \bm\rho^{ \st {  {(C,D)}}})=\bm\mu\cdot gcd( \bm\pi^{ \st {  {(D)}} },\bm\pi^{\st (A_1)}, ..., \bm\pi^{\st (A_m)})
\end{equation}
In Equation \ref{equ::-corrupt-A-real-world-phi}, every element of   $[\bm\omega'^{ \st {  {(D)}}},..., \bm\omega^{ \st {  {(D,C)}}}, \bm\rho^{ \st {  {(D,C)}}}]$ is a uniformly random polynomial for every  client $C\in \{  {  A}_{ \st {   1}}, ...,   {  A}_{ \st {   m}}\}$  (including corrupt ones) and client $D$; because it has been picked by (in this case honest) client $D$. Thus,  as shown in Section \ref{sec::poly-rep}, $\frac{\bm\phi}{\bm\zeta}- \bm\gamma'$ has the form $\bm\mu\cdot gcd( \bm\pi^{ \st {  {(D)}} },\bm\pi^{\st (A_1)}, ..., \bm\pi^{\st (A_m)})$, where $\bm\mu$ is a uniformly random polynomial and $gcd( \bm\pi^{ \st {  {(D)}} },\bm\pi^{\st (A_1)}, ..., \bm\pi^{\st (A_m)})$ represents the intersection of the input sets. 

In the ideal model, $\mathcal{A}$ can construct polynomial $ {\bm\phi}$ using its (well-formed) inputs $ {\bm\nu}^{ \st {  {(C'')}} }$ and polynomials $ {\bm\nu}^{ \st {  {(C')}} }$ that the simulator has sent to it, for all $C'\in\hat P$ and all $C''\in P'$. Thus, in the ideal model, polynomial $\frac{ {\bm\phi}}{\bm\zeta}-  {\bm\gamma'}$ has the following form:
\begin{equation}\label{equ::-corrupt-A-ideal-world-phi}
\begin{split}
\frac{ {\bm\phi}}{\bm\zeta}-  {\bm\gamma'}  = &\  { \bm\pi}\cdot\Big(\sum\limits_{\st \forall C'\in \hat P}(\bm\omega^{ \st {  {(D, C')}}} \cdot \bm\omega^{ \st {  {(C', D)}}}) + \sum\limits_{\st \forall C'\in \hat P}(\bm\rho^{ \st {  {(D, C')}}} \cdot \bm\rho^{ \st {  {(C', D)}}})\Big)+\\+ & \Big(\sum\limits_{\st \forall C''\in P'}(\bm\omega^{ \st {  {(D,C'')}}} \cdot \bm\omega^{ \st {  {(C'',D)}}}\cdot \bm\pi^{ \st {  (C'')}}) +\bm\pi^{ \st {  {(D)}}}\cdot\sum\limits_{\st \forall C''\in P'} (\bm\rho^{ \st {  {(D, C'')}}} \cdot \bm\rho^{ \st {  {(C'',D)}}})\Big) \\  =  &\ {\bm\mu}\cdot gcd(  {\bm\pi}, \bm\pi^{ \st {  {(D)}}}, \bm\pi^{ \st {  {(C'')}}})
 \end{split}
\end{equation}
In Equation \ref{equ::-corrupt-A-ideal-world-phi}, every element of the vector $[\bm\omega^{ \st {  {(D,C')}}}, \bm\omega^{ \st {  {(D,C'')}}}, \bm\rho^{ \st {  {(D,C')}}}, \bm\rho^{ \st {  {(D,C'')}}}]$ is a uniformly random polynomial for all $C'\in\hat P$ and all $C''\in P'$, as they have been picked by $\mathsf{Sim}^{\st \fpsi}_{\st A}$. Therefore, $\frac{ {\bm\phi}}{\bm\zeta}-  {\bm\gamma'}$ equals $ {\bm\mu}\cdot gcd(  {\bm\pi}, \bm\pi^{ \st {  {(D)}}}, \bm\pi^{ \st {  {(C'')}}})$, such that $ {\bm\mu}$ is a uniformly random polynomial and $gcd(  {\bm\pi}, \bm\pi^{ \st {  {(D)}}}, \bm\pi^{ \st {  {(C'')}}})$ represents the intersection of the input sets. We know that $gcd( \bm\pi^{ \st {  {(D)}} },\bm\pi^{\st (A_1)}, ..., \bm\pi^{\st (A_m)})=gcd(  {\bm\pi}, \bm\pi^{ \st {  {(D)}}}, \bm\pi^{ \st {  {(C'')}}})$, as $ {\bm\pi}$ includes the intersection of all clients' sets. Also, $ {\bm\mu}$ has identical distribution in the two models, because they are uniformly random polynomials. Thus,  $\frac{\bm\phi}{\bm\zeta}- \bm\gamma'$ in the real model and $\frac{ {\bm\phi}} { {\bm\zeta}}-  {\bm\gamma'}$ in the ideal model are indistinguishable.

Now we focus on the case where $Flag=False$.  In the real model, the adversary observes the output of $\mathtt{Audit}(.)$ which is a list of indices $  L$ and a vector of random polynomials $\vv\mu$ picked by an honest auditor.  In the ideal model, $\mathcal{A}$ is given a  list $  L'$ of indices and a vector of random polynomials $\vv{\mu}'$ picked by the simulator. Thus, the pair ($  L, \vv\mu$) in the real model has identical distribution to the pair ($  L', \vv\mu'$) in the ideal model. Moreover, in the real model, the adversary observes each polynomial $\bm\chi^{ \st {  {(D, C)}}}=\bm\zeta\cdot \bm\eta^{ \st {  {(D,C)}}}-(\bm\gamma^{ \st {  {(D,C)}}}+\bm\delta^{ \st {  {(D,C)}}})$ that $D$ stores in the contract, for each bin and each client $C$ whose index is not in $  L$. This is a blinded polynomial with blinding factor $\bm\eta^{ \st {  {(D,C)}}}$ which itself is a uniformly random polynomial picked by $D$. In the ideal model, $\mathcal{A}$ is given a polynomial of the form $ {\bm\chi}^{ \st {  {(D, C)}}}= {\bm\zeta}\cdot  {\bm\eta}^{ \st {  {(D,C)}}}-( {\bm\gamma}^{ \st {  {(D,C)}}}+ {\bm\delta}^{ \st {  {(D,C)}}})$, for each bin and each client $C$ whose index is not in $  L'$. This is also a blinded polynomial whose blinding factor is  $ {\bm\eta}^{ \st {  {(D,C)}}}$ which itself is a random polynomial picked by the simulator.  Therefore, $\bm\chi^{ \st {  {(D, C)}}}$ in the real model has  identical distribution to $\bm\chi^{ \st {  {(D, C)}}}$ in the ideal model.  In the real model, the adversary  observes polynomial $ \bm\iota^{ \st {  {(C)}}}= \bm\zeta\cdot(\bm\eta^{ \st {  {(D,C)}}} + \bm\omega^{ \st {  {(D,C)}}}\cdot \bm\omega^{ \st {  {(C,D)}}}\cdot \bm\pi^{ \st {  {(C)}}}+\bm\rho^{ \st {  {(D,C)}}}\cdot \bm\rho^{ \st {  {(C,D)}}}\cdot \bm\pi^{ \st {  {(D)}}}+\bm\xi^{ \st {  {(C)}}})$ which is a blinded polynomial whose blinding factor is the sum of the above random polynomials, i.e., $\bm\eta^{ \st {  {(D,C)}}}+\bm\xi^{ \st {  {(C)}}}$. In the ideal model, $\mathcal{A}$ already has  polynomials $ {\bm\chi}^{ \st {  {(D, C)}}},  {\bm\nu}^{ \st {  {(C)}}}$, and $ {\bm\mu}^{ \st {  {(C)}}}$, where $ {\bm\mu}^{ \st {  {(C)}}}\in {\vv \mu'}$; this lets $\mathcal{A}$ compute
 $  {\bm\iota}^{ \st {  {(C)}}} =  {\bm\chi}^{ \st {  {(D, C)}}} +   {\bm\nu}^{ \st {  {(C)}}} +  {\bm\mu}^{ \st {  {(C)}}}= {\bm\zeta}\cdot ( {\bm\eta}^{ \st {  {(D,C)}}}+  {\bm\omega}^{ \st {  {(D,C')}}}\cdot  {\bm\omega}^{ \st {  {(C',D)}}}\cdot  {\bm\pi}+  {\bm\rho}^{ \st {  {(D,C')}}}\cdot  {\bm\rho}^{ \st {  {(C',D)}}}\cdot  {\bm\pi}+ {\bm\xi}^{ \st {  {(C)}}})$,
  where $ {\bm\xi}^{ \st {  {(C)}}}$ is a random blinding polynomial used in $ {\bm\mu}^{ \st {  {(C)}}}$. Nevertheless, $ {\bm\iota}^{ \st {  {(C)}}}$ itself is a blinded polynomial whose blinding factor is the sum of random polynomials, i.e., $ {\bm\eta}^{ \st {  {(D,C)}}}+ {\bm\xi}^{ \st {  {(C)}}}$. Hence, the distribution of polynomial $ \bm\iota^{ \st {  {(C)}}}$ in the real model and $ {\bm\iota}^{ \st {  {(C)}}}$ in the ideal model are identical. Moreover, the integer $\yc+\chc+\frac{m'\cdot (\yc+\chc)-\chc}{m-m'}$ has identical distribution  in both  models. 

Next, we show that the honest party aborts with the same probability in the real and ideal models. Due to the security of \ct, an honest party (during the execution of \ct) aborts with the same probability in both models; in this case, the adversary learns nothing about the parties' input set and the sets' intersection as the parties have not sent out any encoded input set yet.  The same holds for the probability that honest parties abort during the execution of \zspaa.  In this case, an aborting adversary also learns nothing about the parties' input set and the sets' intersection. Since all parties' deposit is public, an honest party can look up and detect if not all parties have deposited a sufficient amount with the same probability in both models. If parties halt because of insufficient deposit, no one learns about the parties' input set and the sets' intersection because the inputs (representation) have not been sent out at this point.

Due to the security of \vopr, honest parties abort with the same probability in both models. In the case of an abort during \vopr execution, the adversary would learn nothing (i) about its counter party' input set due to the security of \vopr, and (ii) about the rest of the honest parties' input sets and the intersection as the other parties' input sets are still blinded by random blinding factors known only to $D$. In the real model, $D$ can check if all parties provided their encoded inputs, by reading from the smart contract.  The simulator also can do the same check to ensure  $\mathcal{A}$ has provided the encoded inputs of all corrupt parties. Therefore, in both models, an honest party with the same probability would detect the violation of such a requirement, i.e., providing all encoded inputs. Even in this case, if an adversary aborts (by not proving its encoded inputs), then it learns nothing about the honest parties' input sets and the intersection for the reason explained above. 


In the real model, the smart contract sums every client $C$'s polynomial $\bm\nu^{ \st {  {(C)}} }$ with each other and with $D$'s polynomial $\bm\nu^{ \st {  {(D)}} }$, which removes the blinding factors that $D$ initially inserted (during the execution of \vopr), and then checks whether the result is divisible by  $\bm \zeta$. Due to (a) Theorem \ref{Unforgeable-Polynomials-Linear-Combination} (i.e., unforgeable polynomials’ linear combination), (b) the fact that the smart contract is given the random polynomial $\bm \zeta$ in plaintext, (c) no party (except honest client $D$) knew anything about $\bm \zeta$ before they send their input to the contract, and (d) the security of the contract (i.e., the adversary cannot influence the correctness of the above verification performed by the contract), the contract can detect if a set of outputs of \vopr has been tampered with, with a probability at least $1-\negl(\lambda)$. In the ideal model, $\mathsf{Sim}^{\st \fpsi}_{\st A}$ also can remove the blinding factors and it knows the random polynomial $ {\bm \zeta}$, unlike the adversary who does not know $ {\bm \zeta}$ when it sends the outputs of \vopr to $\mathsf{Sim}^{\st \fpsi}_{\st A}$. So, $\mathsf{Sim}^{\st \fpsi}_{\st A}$ can detect when $\mathcal{A}$ modifies a set of the outputs of \vopr that were sent to  $\mathsf{Sim}^{\st \fpsi}_{\st A}$ with a probability at least $1-\negl(\lambda)$,  due to Theorem \ref{Unforgeable-Polynomials-Linear-Combination}. Hence, the smart contract in the real model and the simulator in the ideal model would abort with a similar probability. 

Moreover, due to the security of \zspaa, the probability that an invalid key $k_{\st i}\in \vv{k}$ is added to the list $  L$ in the real world is similar to the probability that  $\mathsf{Sim}^{\st \fpsi}_{\st A}$ detects an invalid key $  k'_{\st i}\in \vv{k'}$ in the ideal world. In the real model, when $Flag=False$, the smart contract can identify each ill-structured output of \vopr (i.e.,  $\bm\nu^{ \st {  {(C)}} }$) with a probability of at least $1-\negl(\lambda)$ by checking whether $\bm\zeta$  divides $\bm\iota^{ \st {  {(C)}}}$, due to  (a) Theorem \ref{proof::unforgeable-poly} (i.e., unforgeable polynomial), (b) the fact that the smart contract is given $\bm \zeta$ in plaintext, (c) no party (except honest client $D$) knew anything about $\bm \zeta$ before they send their input to the contract, and (d) the security of the contract. In the ideal model, when $Flag=False$, given each $ {\bm\nu}^{ \st {  {(C'')}} }$, $\mathsf{Sim}^{\st \fpsi}_{\st A}$ can remove its blinding factors from  $ {\bm\nu}^{ \st {  {(C'')}} }$ which results in $ {\bm\phi'}^{ \st {  {(C'')}} }$ and then check if $ {\bm\zeta}$  divides $ {\bm\phi'}^{ \st {  {(C'')}} }$. The simulator can also detect an ill-structured  $ {\bm\nu}^{ \st {  {(C'')}} }$ with a probability of at least $1-\negl(\lambda)$, due to Theorem \ref{proof::unforgeable-poly}, the fact that the simulator is given $ {\bm \zeta}$ in plaintext,  and the adversary is not given any knowledge about $ {\bm \zeta}$ before it sends to the simulator the outputs of \vopr.  Hence, the smart contract in the real model and $\mathsf{Sim}^{\st \fpsi}_{\st A}$ in the ideal model would detect an ill-structured input of an adversary with the same probability. 


Now, we analyse the output of the predicates $(\qinit, \qdel, \qUnFAbt, \qFAbt)$ in the real and ideal models. In the real model, all clients proceed to prepare their input set only if the predefined amount of coins has been deposited by the parties; otherwise, they will be refunded and the protocol halts. In the ideal model also the simulator proceeds to prepare its inputs only if a sufficient amount of deposit has been put in the contract; otherwise, it would send message $abort_{\st 1}$ to TTP. Thus, in both models, the parties proceed to prepare their inputs only if $\qinit(.) \rightarrow1$. 
In the real model, if there is an abort after the parties ensure there is enough deposit and before client $D$ provides its encoded input to the contract, then all parties would be able to retrieve their deposit in full; in this case, the aborting adversary would not be able to learn anything about honest parties input sets, because the parties' input sets are still blinded by random blinding polynomials known only to client $D$. In the ideal model, if there is any abort during steps \ref{F-PSI::sim-A-first-VOPR-invocation}--\ref{F-PSI::sim-A-receive-nu-from-adv}, then the simulator sends $abort_{\st 2}$ to TTP and instructs the ledger to refund the coins that every party deposited. Also, in the case of an abort (within the above two points of time), the auditor is not involved. Thus, in both models,  in the case of an abort within the above points of time, we would have $\qFAbt(.)\rightarrow1$. In the real model, if $Flag=True$, then all parties would be able to learn the intersection and the smart contract refunds all parties, i.e., sends each party $\yc+\chc$ amount which is the amount each party initially deposited. 

In the ideal model, when $Flag=True$, then $\mathsf{Sim}^{\st \fpsi}_{\st A}$ can extract the intersection (by summing the output of \vopr provided by all parties and removing the blinding polynomials) and sends back each party's deposit, i.e., $\yc+\chc$ amount. Hence, in both models in the case of $Flag=True$, when all of the parties receive the result, we would have $\qdel(.)\rightarrow 1$. In the real model, when $Flag=False$, only the adversary which might corrupt $m'$ clients would be able to learn the result; in this case, the contract sends (i) $\chc$ amount to the auditor, and (ii) $\frac{m'\cdot (\yc+\chc)-\chc}{m-m'}$ amount as a compensation, to each honest party, in addition to each party's deposit $\yc+\chc$. In the ideal model,  when $Flag=False$, $\mathsf{Sim}^{\st \fpsi}_{\st A}$  sends $abort_{\st 3}$ to TTP and instructs the ledger to distribute the same amount among the auditor (e.g., with address $adr_{\st j}$) and every honest party (e.g., with address $adr_{\st i}$) as the contract does in the real model. Thus, in both models when $Flag=False$, we would have $\qUnFAbt(., ., ., ., adr_{\st i})\rightarrow (a=1, .)$ and  $\qUnFAbt(., ., ., ., adr_{\st j})\rightarrow (., b=1)$. 

We conclude that the distributions of the joint outputs of the honest client $C\in \hat P$, client $D$, \aud, and the adversary in the real and ideal models are computationally indistinguishable.

\

\noindent\textbf{Case 2: Corrupt dealer $D$}.  In the real execution, the dealer's view is defined as follows:

$$ \mathsf{View}_{\st D}^{\st \fpsi} \Big(S^{\st (D)},(S^{\st (1)},..., S^{\st (m)})\Big)=$$ $$ \{S^{\st (D)}, adr_{\st sc}, m\cdot(\yc+\chc), r_{\st D}, \mathsf{View}^{\st \ct}_{\st D}, k, g, q, \mathsf{View}^{\st \vopr}_{\st D}, \bm\nu^{\st (A_1)},..., \bm\nu^{\st (A_m)}, S_{\st \cap}\}$$
where  $\mathsf{View}^{\st \ct}_{\st D}$ and $\mathsf{View}^{\st \vopr}_{\st D}$ refer to the dealer's real-model view during the execution of \ct and \vopr respectively. Also, $r_{\st D}$ is the outcome of internal random coins of client $D$ and $adr_{\st sc}$ is the address of contract $\mathcal{SC}_{\fpsi}$. The simulator $\mathsf{Sim}^{\st \fpsi}_{\st D}$, which receives all parties' input sets, works as follows. 

\begin{enumerate}

\item receives from the subroutine adversary polynomials $ {\bm\zeta}, ( {\bm\gamma}^{\st(A_1)},  {\bm\delta}^{\st (A_1)}),..., ( {\bm\gamma}^{\st(A_m)},  {\bm\delta}^{\st (A_m)})$,  $( {\bm\omega}'^{\st (A_1)},  {\bm\rho}'^{\st (A_1)}),..., $ $( {\bm\omega}'^{\st (A_m)},  {\bm\rho}'^{\st (A_m)})$, where $deg( {\bm\gamma}^{\st(C)})=deg( {\bm\delta}^{\st(C)})=3d+1, deg( {\bm\omega}'^{\st (C)}) =deg( {\bm\rho}'^{\st (C)})= d$, and $deg( {\bm\zeta})=1$, where $C\in  \{  {  A}_{ \st {   1}}, ...,   {  A}_{ \st {   m}}\} $.
\item generates an empty view. It appends to the view, the input set $S^{\st (D)}$. It constructs and deploys a smart contract. Let $  {adr}_{\st sc}$ be the contract's address. It appends $ {adr}_{\st sc}$ to the view.

\item appends to the view integer $m\cdot(\yc+\chc)$   and coins $r'_{\st D}$ chosen uniformly at random. 
\item extracts the simulation of \ct from \ct's simulator for client $D$. Let $\mathsf{Sim}^{\st \ct}_{\st D}$ be the simulation. It appends $\mathsf{Sim}^{\st \ct}_{\st D}$ to the view. 
\item picks a random key, $  k'$, and derives pseudorandom values $z'_{\st i,j}$ from the key (the same way is done in Figure \ref{fig:ZSPA}). It constructs a Merkle tree on top of all values $z'_{\st i, j}$. Let $g'$ be the root of the resulting tree. It appends $  k', g'$, and $q'=\mathtt{H}(  k')$ to the view. 
\item invokes \vopr's functionality twice and extracts the simulation of \vopr from \vopr's simulator for client $D$. Let $\mathsf{Sim}^{\st \vopr}_{\st D}$ be the simulation. It appends $\mathsf{Sim}^{\st \vopr}_{\st D}$ to the view.  
\item given the parties' input sets, computes a polynomial $ {\bm\pi}$ that represents the intersection of the sets. 

\item picks $m$ random polynomials $ {\bm\tau}^{\st (A_1)}, ...,  {\bm\tau}^{\st (A_m)}$ of degree $3d+1$ such that their sum is $0$.  


\item picks $m$ pairs of random polynomials $( {\bm\omega}^{\st (A_1)},  {\bm\rho}^{\st (A_1)}),..., ( {\bm\omega}^{\st (A_m)},  {\bm\rho}^{\st (A_m)})$, where each polynomial is of degree $d$. Then, $\mathsf{Sim}^{\st \fpsi}_{\st D}$ for each client $C\in  \{  {  A}_{ \st {   1}}, ...,   {  A}_{ \st {   m}}\} $ computes polynomial $ {\bm\nu}^{\st (C)}= {\bm\zeta}\cdot {\bm\pi}\cdot(  {\bm\omega}^{\st (C)}\cdot  {\bm\omega}'^{\st (C)}+  {\bm\rho}^{\st (C)}\cdot  {\bm\rho}'^{\st (C)})+ {\bm\delta}^{\st (C)}+ {\bm\gamma}^{\st(C)}+  {\bm\tau}^{ \st {  {(C)}}}$. 
\item appends $ {\bm\nu}^{\st (A_1)},...,  {\bm\nu}^{\st (A_m)}$ and the intersection of the sets $S'_{\st \cap}$ to the view. 
\end{enumerate}

 Next, we will show that the two views are computationally indistinguishable. $D$'s input $S^{\st (D)}$ is
 identical in both models; therefore, they have identical distributions. Also, the contract's address has the same distribution in both views, and so has the integer $ m\cdot(\yc+\chc)$. Since the real-model semi-honest adversary samples its randomness according to the protocol’s description, the random coins in both models (i.e., $r_{\st D}$  and $r'_{\st D}$) have identical distributions. Moreover, due to the security of  \ct, $\mathsf{View}^{\st \ct}_{\st D}$ and $\mathsf{Sim}^{\st \ct}_{\st D}$ have identical distributions. Keys $k$ and $  k'$ have identical distributions, as both have been picked uniformly at random from the same domain.  In the real model, each element of the pair $(g, p)$ is the output of a deterministic function on a random value $k$. We know that $k$ in the real model has identical distribution to $  k'$ in the ideal model, so do the evaluations of deterministic functions (i.e., Merkle tree, $\mathtt{H}$, and $\mathtt {PRF}$) on them. Therefore, each pair $(g, q)$ in the real model component-wise has an identical distribution to each pair $(g', q')$ in the ideal model.  
 Furthermore, due to the security of the \vopr, $\mathsf{View}^{\st \vopr}_{\st D}$ and $\mathsf{Sim}^{\st \vopr}_{\st D}$ have identical distributions.

 In the real model, each $\bm\nu^{\st (C)}$ has been blinded by a pseudorandom polynomial (i.e., derived from $\mathtt{PRF}$'s output) unknown to client $D$. In the ideal model, however, each $ {\bm\nu}^{\st (C)}$ has been blinded by a random polynomial unknown to client $D$. Due to the security of $\mathtt{PRF}$,  its outputs are computationally indistinguishable from truly random values.  Therefore, ${\bm\nu}^{\st (C)}$ in the real model and $ {\bm\nu}^{\st (C)}$ in the ideal model are computationally indistinguishable. Now we focus on the sum of all ${\bm\nu}^{\st (C)}$ in the real model and the sum of all $ {\bm\nu}^{\st (C)}$ in the ideal model, as adding them together would remove the above blinding polynomials that are unknown to client $D$.  Specifically, in the real model, after client $D$ sums all  ${\bm\nu}^{\st (C)}$ and removes the blinding factors and $\bm\zeta$ that it initially imposed, it would get a polynomial of the  form $\bm{\hat\phi}= \frac{ \sum\limits^{ \st {   A}_{ \st {   m}}}_{  \st {  {C }= }  \st {   A}_{ \st {  1}}}\bm\nu^{ \st {  {(C)}} }-\sum\limits^{ \st {   A}_{ \st {   m}}}_{  \st {  {C }= }  \st {   A}_{ \st {  1}}}(\bm\gamma^{ \st {  {(D,C)}}} + \bm\delta^{ \st {  {(D,C)}}})}{\bm\zeta}= \sum\limits^{ \st {   A}_{ \st {   m}}}_{ \st {  {C }= }  \st {   A}_{ \st {  1}}}(\bm\omega^{ \st {  {(D,C)}}} \cdot \bm\omega^{ \st {  {(C,D)}}}\cdot \bm\pi^{ \st {  (C)}}) +\bm\pi^{ \st {  {(D)}}}\cdot\sum\limits^{ \st {   A}_{ \st {   m}}}_{ \st {  {C }= }  \st {   A}_{ \st {  1}}}(\bm\rho^{ \st {  {(D,C)}}} \cdot \bm\rho^{ \st {  {(C,D)}}}) $, where $\bm\omega^{ \st {  {(C,D)}}}$ and $\bm\rho^{ \st {  {(C,D)}}}$ are random polynomials unknown to client $D$. 
 In the ideal model, after summing all $ {\bm\nu}^{\st (C)}$ and removing the random polynomials that it already knows, it would get a polynomial of the following form: 
 $\bm{\hat\phi}'= \frac{\sum\limits^{ \st {   A}_{ \st {   m}}}_{  \st {  {C }= }  \st {   A}_{ \st {  1}}} {\bm\nu}^{\st(C)}-\sum\limits^{ \st {   A}_{ \st {   m}}}_{  \st {  {C }= }  \st {   A}_{ \st {  1}}}( {\bm\gamma}^{\st(C)} +  {\bm\delta}^{\st(C)})}{ {\bm\zeta}}=  {\bm\pi}\cdot\sum\limits^{ \st {   A}_{ \st {   m}}}_{ \st {  {C }= }  \st {   A}_{ \st {  1}}}( {\bm\omega}'^{\st(C)} \cdot  {\bm\omega}^{\st(C)})+( {\bm\rho}'^{\st(C)} \cdot  {\bm\rho}^{\st(C)})  $.

 As shown in Section \ref{sec::poly-rep},  polynomial $\bm{\hat\phi}$ has the form $\bm\mu\cdot gcd( \bm\pi^{ \st {  {(D)}} },\bm\pi^{\st (A_1)}, ..., \bm\pi^{\st (A_m)})$, where $\bm\mu$ is a uniformly random polynomial and $gcd( \bm\pi^{ \st {  {(D)}} },\bm\pi^{\st (A_1)}, ..., \bm\pi^{\st (A_m)})$ represents the intersection of the input sets. Moreover, it is evident that $\bm{\hat\phi}'$ has the form $ {\bm\mu} \cdot  {\bm\pi}$, where $ {\bm\mu} $ is a random polynomial and $ {\bm\pi}$ represents the intersection. We know that both $gcd( \bm\pi^{ \st {  {(D)}} },\bm\pi^{\st (A_1)}, ..., \bm\pi^{\st (A_m)})$ and $ {\bm\pi}$  represent the same intersection, also  ${\bm\mu} $ in the real model and $ {\bm\mu} $ in the ideal model have identical distribution as they are uniformly random polynomials. Thus, two polynomials $\bm{\hat\phi}$ and $\bm{\hat\phi}'$ are indistinguishable. Also, the output $S_{\st \cap}$ is identical in both views. We conclude that the two views are computationally indistinguishable.

 \

\noindent\textbf{Case 3: Corrupt auditor}.  In this case, by using the proof that we have already provided for Case 1 (i.e.,  $m-1$ client $A_{\st j}$s are corrupt), we can easily construct a simulator that generates a view computationally distinguishable from the real-model semi-honest auditor. 
 The reason is that, in the worst-case scenario where $m-1$ malicious client $A_{\st j}$s reveal their input sets and randomness to the auditor, the auditor's view would be similar to the view of these corrupt clients, which we have shown to be indistinguishable. The only extra messages the auditor generates, that a corrupt client $A_{\st j}$ would not see in plaintext, are random blinding polynomials $(\bm\xi^{\st (A_1)},..., \bm\xi^{\st (A_m)})$ generated during the execution of $\mathtt{Audit}(.)$ of \zspaa; however, these polynomials are picked uniformly at random and independent of the parties' input sets. Thus, if the smart contract detects misbehaviour and invokes the auditor, even if $m-1$ corrupt client $A_{\st j}$ reveals their input sets, then the auditor cannot learn anything about honest parties' input sets.    
 
 \

\noindent\textbf{Case 4: Corrupt public}. In the real model, the view of the public (i.e., non-participants of the protocol) is defined as below:

 $$ \mathsf{View}_{\st Pub}^{\st \fpsi} \Big(\bot, S^{\st (D)},(S^{\st (A_1)},..., S^{\st (A_m)})\Big)=$$ $$ \{\bot, adr_{\st sc}, (m+1)\cdot(\yc+\chc), k, g, q, \bm\nu^{\st (A_1)},..., \bm\nu^{\st (A_m)}, \bm\nu^{\st (D)}\}$$

 Now, we describe how the simulator $\mathsf{Sim}^{\st \fpsi}_{\st Pub}$  works. 
 
 \begin{enumerate}
 \item generates an empty view and appends to it an empty symbol, $\bot$. It constructs and deploys a smart contract. It appends the contract's address, $  {adr}_{\st sc}$ and integer $(m+1)\cdot(\yc+\chc)$ to the view.
\item picks a random key, $  k'$, and derives pseudorandom values $z'_{\st i,j}$ from the key,  in the same way, done in Figure \ref{fig:ZSPA}. It constructs a Merkle tree on top of the $z'_{\st i, j}$ values. Let $g'$ be the root of the resulting tree. It appends $  k', g'$, and $q' = \mathtt{H}(  k')$ to the view. 
\item for each client $C\in \{  {  A}_{ \st {   1}}, ...,   {  A}_{ \st {   m}}\}$ and client $D$ generates a random polynomial of degree $3d+1$ (for each bin), i.e.,  $ {\bm\nu}^{\st (A_1)}, ...,  {\bm\nu}^{\st (A_m)},  {\bm\nu}^{\st (D)}$. 
 \end{enumerate}
 
 Next, we will show that the two views are computationally indistinguishable. In both views, $\bot$ is identical. Also, the contract’s addresses (i.e., ${adr}_{\st sc}$) has the same
distribution in both views, and so has the integer $(m+1)\cdot(\yc+\chc)$. Keys $k$ and $  k'$ have identical distributions as well, because both of them have been picked uniformly at random from the same domain.  In the real model, each element of pair $(g, p)$ is the output of a deterministic function on the random key $k$. We know that $k$ in the real model has identical distribution to $  k'$ in the ideal model, and so do the evaluations of deterministic functions on them. Hence, each pair $(g, q)$ in the real model component-wise has an identical distribution to each pair $(g', q')$ in the ideal model. In the real model, each polynomial  $\bm\nu^{\st (C)}$ is a blinded polynomial comprising of two uniformly random blinding polynomials (i.e., $\bm\gamma^{ \st {  {(D,C)}}}$ and $\bm\delta^{ \st {  {(D,C)}}}$) unknown to the adversary. In the ideal model, each polynomial  $ {\bm\nu}^{\st (C)}$ is a random polynomial; thus, polynomials $\bm\nu^{\st (A_1)},..., \bm\nu^{\st (A_m)}$ in the real model have identical distribution to  polynomials $ {\bm\nu}^{\st (A_1)}, ...,  {\bm\nu}^{\st (A_m)}$ in the ideal model.  Similarly,  polynomial $\bm\nu^{\st (D)}$ has been blinded in the real model; its blinding factors are the additive inverse of  the sum of the random polynomials $\bm\gamma^{ \st {  {(D,C)}}}$ and $\bm\delta^{ \st {  {(D,C)}}}$ unknown to the adversary. In the ideal model, polynomial $ {\bm\nu}^{\st (D)}$ is a uniformly random polynomial; thus, ${\bm\nu}^{\st (D)}$ in the real model and $ {\bm\nu}^{\st (D)}$ in the ideal model have identical distributions.
Moreover, in the real model even though the sum $\bm\phi$ of polynomials  $\bm\nu^{\st (A_1)},..., \bm\nu^{\st (A_m)}, \bm\nu^{\st (D)}$ would remove some of the blinding random polynomials, it is still a blinded polynomial with a pseudorandom blinding factor $\bm\gamma'$ (derived from the output of $\mathtt{PRF}$), unknown to the adversary. In the ideal model, the sum of polynomials $ {\bm\nu}^{\st (A_1)}, ...,  {\bm\nu}^{\st (A_m)},  {\bm\nu}^{\st (D)}$ is also a random polynomial. Thus, the sum of the above polynomials in the real model is computationally indistinguishable from the sum of those polynomials in the ideal model. We conclude that the two views are computationally indistinguishable. 
  \hfill\(\Box\)\end{proof}



\section{Definition of Multi-party PSI with Fair Compensation and Reward}

In this section, we upgrade \p to ``multi-party PSI with Fair Compensation and Reward'' (\ep), which (in addition to offering the features of \p) allows honest clients who contribute their set to receive a reward by a buyer who initiates the PSI computation and is interested in the result.




%

In \ep, there are (1) a set of clients $\{A_{\st 1},...,A_{\st m}\}$ a subset of which is potentially active adversaries and may collude with each other, (2) a non-colluding dealer, $D$, potentially semi-honest, and (3) an auditor $Aud$ potentially semi-honest, where all clients (except \aud) have input set. Furthermore,  in \ep  there are two ``extractor'' clients, say $A_{\st 1}$ and $A_{\st 2}$, where $(A_{\st 1},A_{\st 2})\in \{A_{\st 1},...,A_{\st m}\}$. These extractor clients volunteer to extract the (encoded) elements of the intersection and send them to a public bulletin board, i.e., a smart contract. In return, they will be paid. 
%
%
We assume these two extractors act rationally only when they want to carry out the paid task of extracting the intersection and reporting it to the smart contract, so they can maximise their profit.\footnote{Thus, similar to any $A_{\st i}$ in \p, these extractors might be corrupted by an active adversary during the PSI computation.} For simplicity, we let client $A_{\st m}$ be the buyer, i.e., the party which initiates the PSI computation and is interested in the result.

 The formal definition of \ep is built upon the definition of \p (presented in Section \ref{sec::F-PSI-model}); nevertheless, in \ep, we ensure that honest non-buyer clients receive a \emph{reward} for participating in the protocol and revealing a portion of their inputs deduced from the result. We:  (i)  upgrade the predicate \qdel to  \qdelwr to ensure that when honest clients receive the result, then an honest non-buyer client receives its deposit back plus a reward and a buyer client receives its deposit back minus the paid reward, and (ii) upgrade the predicate  \qUnFAbt to \qUnFAbtwr to ensure when an adversary aborts in an unfair manner (i.e., aborts but learns the result) then an honest party receives its deposit back plus a predefined amount of compensation plus a reward.  The other two predicates (i.e., \qinit and \qFAbt) remain unchanged. Given the above changes, we denote the four predicates as $\bar Q:=(\qinit,  \qdelwr, \qUnFAbtwr, \qFAbt)$. Below, we present the formal definition of predicates \qdelwr and \qUnFAbtwr.

    \begin{definition}  [\qdelwr:
    Delivery-with-Reward predicate] Let $\mathcal{G}$ be a stable ledger, $adr_{\st sc}$ be smart contract $sc$'s address, $adr_{\st i}\in Adr$ be the address of an honest party, $\xc$ be a fixed amount of coins, and $pram:=(\mathcal{G}, adr_{\st sc}, \xc)$. Let $R$ be a reward function that takes as input the computation result: $res$, a party's address: $adr_{\st i}$, a reward a party should receive for each unit of revealed information:  $\lc$, and input size: $inSize$.  Then $R$ is defined as follows, if $adr_{\st i}$ belongs to a non-buyer, then it returns the total amount that $adr_{\st i}$ should be rewarded and if $adr_{\st i}$ belongs to a buyer client, then it returns the reward's leftover that the buyer can collect, i.e., $R(res, adr_{\st i}, \lc, inSize)\rightarrow \rewci$.    Then, the delivery with reward predicate $\qdelwr(pram,  adr_{\st i}, res, \lc, inSize)$ returns $1$ if $adr_{\st i}$ has sent $\xc$ amount to $sc$ and received at least $\xc+\rewci$ amount from it. Else, it returns $0$.

    %
    %

 %
  \end{definition}

   \begin{definition}  [\qUnFAbtwr: UnFair-Abort-with-Reward predicate]
 Let $pram:=(\mathcal{G}, adr_{\st sc}, \xc)$ be the parameters defined above, and $Adr'\subset Adr$ be a set containing honest parties' addresses, $m' = |Adr'|$,  and   $adr_{\st i}\in Adr'$. Let also $G$ be a compensation function that takes as input  three parameters $(\depsc, adr_{\st i}, m')$, where $\depsc$ is the amount of coins that all $m+1$ parties deposit, $adr_{\st i}$ is an honest party's address, and $m' = |Adr'|$; it returns the amount of compensation each honest party must receive, i.e., $G(\depsc, ard_{\st i}, m')\rightarrow \xci$. Let $R$ be the reward function defined above, i.e., $R(res, adr_{\st i}, \lc, inSize)\rightarrow \rewci$, and let $\hat {pram}:=(res, \lc, inSize)$.  Then, predicate \qUnFAbtwr is defined as $\qUnFAbtwr(pram, \hat {pram}, G, R, \depsc, m', adr_{\st i})\rightarrow (a,b)$, where $a=1$ if $adr_{\st i}$ is an honest party's address which has sent $\xc$ amount to $sc$ and received  $\xc+\xci+\rewci$  from it, and $b=1$ if $adr_{\st i}$ is an auditor's address which received $\xci$  from $sc$. Otherwise, $a=b=0$. 
  \end{definition}

Next, we present the formal definition of multi-party PSI with Fair Compensation and Reward, \ep.


\begin{definition}[\ep]\label{def::PSI-Q-fair-reward}
Let $f^{\st \text{PSI}}$ be the multi-party PSI functionality defined in Section \ref{sec::F-PSI-model}. We say  protocol $\Gamma$ realises  $f^{\st \text{PSI}}$ with $\bar Q$-fairness-and-reward in the presence of $m-3$ static active-adversary clients $A_{\st j}$s and $two$ rational clients $A_{\st i}s$ or a static passive dealer  $D$ or passive auditor $Aud$, if for every non-uniform probabilistic polynomial time adversary $\mathcal{A}$ for the real model, there exists a non-uniform probabilistic polynomial-time adversary (or simulator) $\mathsf{Sim}$ for the ideal model, such that for every $I\in \{A_{\st 1},...,A_{\st m}, D, Aud\}$, it holds that: 

\begin{equation*}
\{\mathsf {Ideal}^{\st \mathcal{W}(f^{\st \text{PSI}}, \bar Q)}_{\st \mathsf{Sim}(z), I}(S_{\st 1},..., S_{\st m+1})\}_{\st S_{\st 1},..., S_{\st m+1},z}\stackrel{c}{\equiv} \{\mathsf{Real}_{\st \mathcal{A}(z), I}^{\st \Gamma}(S_{\st 1},..., S_{\st m+1}) \}_{\st S_{\st 1},..., S_{\st m+1},z}
\end{equation*}
where  $z$ is an auxiliary input given to $\mathcal{A}$ and  $\mathcal{W}(f^{\st \text{PSI}}, \bar Q)$ is a functionality that wraps $f^{\st \text{PSI}}$ with predicates $\bar Q:=(\qinit,  \qdelwr, \qUnFAbtwr, \qFAbt)$. 
  \end{definition}


\section{\withRew: A Concrete Construction of \ep}

\subsection{Main Challenges to Overcome}

\subsubsection{Rewarding Clients Proportionate to the Intersection Cardinality.}
In PSIs, the main private information about the clients which is revealed to a result recipient is the private set elements that the clients have in common. Thus, honest clients must receive a reward proportionate to the intersection cardinality, from a buyer. To receive the reward, the clients need to reach a consensus on the intersection cardinality. The naive way to do that is to let every client find the intersection and declare it to the smart contract. Under the assumption that the majority of clients are honest, then the smart contract can reward the honest result recipient (from the buyer's deposit). Nevertheless, the honest majority assumption is strong in the context of multi-party PSI. Moreover, this approach requires all clients to extract the intersection, which would increase the overall costs.  Some clients may not even be interested in or available to do so. This task could also be conducted by a single entity, such as the dealer; but this approach would introduce a single point of failure and all clients have to depend on this entity.  
To address these challenges, we allow any two clients to become extractors.  Each of them finds and sends to the contract the (encrypted) elements in the intersection. It is paid by the contract if the contract concludes that it is honest. This allows us to avoid (i) the honest majority assumption, (ii) requiring all clients to find the intersection, and (iii) relying on a single trusted/semi-honest party to complete the task.

\subsubsection{Dealing with Extractors' Collusion.}
Using two extractors itself introduces another challenge; namely, they may collude with each other (and with the buyer) to provide a consistent but incorrect result, e.g., both may declare that only $s_{\st 1}$ is in the intersection while 
 the actual intersection contains $100$ set elements, including $s_{\st 1}$.  This behaviour will not be detected by a verifier unless the verifier always conducts the delegated task itself too, which would defeat the purpose of delegation. To efficiently address this issue, we use the counter-collusion smart contracts (outlined in Section \ref{Counter-Collusion-Smart-Contracts}) which creates distrust between the two extractors and incentivises them to act honestly. 


\subsection{Description of \withRew (\epsi)}



\subsubsection{An Overview.} To construct  \epsi, we mainly use \fpsi, deterministic encryption, ``double-layered'' commitments, the hash-based padding technique (from Section \ref{sec::poly-rep}), and the counter-collusion smart contracts. 
At a high level, \epsi works as follows. First, all clients run step \ref{gen-FPSI-cont} of \fpsi to agree on a set of parameters and \fpsi's smart contract.  They deploy another smart contract, say $\SCe$. They also agree on a secret key, $mk'$. Next, the buyer places a certain deposit into $\SCe$. This deposit will be distributed among honest clients as a reward. 
%
%
The extractors and $D$ deploy one of the counter-collusion smart contracts, i.e., \SCpc. These three parties deposit a certain amount on this contract.  Each honest extractor will receive a portion of $D$'s deposit for carrying out its task honestly and each dishonest extractor will lose a portion of its deposit for acting maliciously. 
Then, each client encrypts its set elements (under $mk'$ using deterministic encryption) and then represents the encrypted elements as a polynomial. The reason each client encrypts its set elements is to ensure that the privacy of the plaintext elements in the intersection will be preserved from the public.


%
Next, the extractors commit to the encryption of their set elements and publish the commitments. 
All clients (including $D$) take the rest of the steps in \fpsi using their input polynomials. This results in a blinded polynomial,  whose correctness is checked by \fpsi's smart contact. 

If  \fpsi's smart contact approves the result's correctness, then all parties receive the money that they deposited in \fpsi's contract. In this case, each extractor finds the set elements in the intersection. Each extractor proves to $\SCe$ that the encryptions of the elements in the intersection are among the commitments that the extractor previously published. 
If $\SCe$ accepts both extractors' proofs, then it pays each client (except the buyer) a reward, where the reward is taken from the buyer's deposit. The extractors receive their deposits back and are paid for carrying out the task honestly. Nevertheless, if $\SCe$ does not accept one of the extractors' proofs (or one extractor betrays the other), then it invokes the auditor in the counter-collusion contracts to identify the misbehaving extractor.  Then, $\SCe$ pays each honest client (except the buyer) a reward, taken from the misbehaving extractor. $\SCe$ also refunds the buyer's deposit.

If  \fpsi's smart contact does not approve the result's correctness and \aud identified misbehaving clients, then honest clients will receive (1) their deposit back from \fpsi's contract, and (2)  compensation and reward, taken from misbehaving clients. Moreover, the buyer and extractors receive their deposit back from $\SCe$. Figure \ref{fig:parties-interactions-in-ANE} outlines the interaction between parties.

\begin{figure}[htp]
    \centering
    \includegraphics[width=14cm]{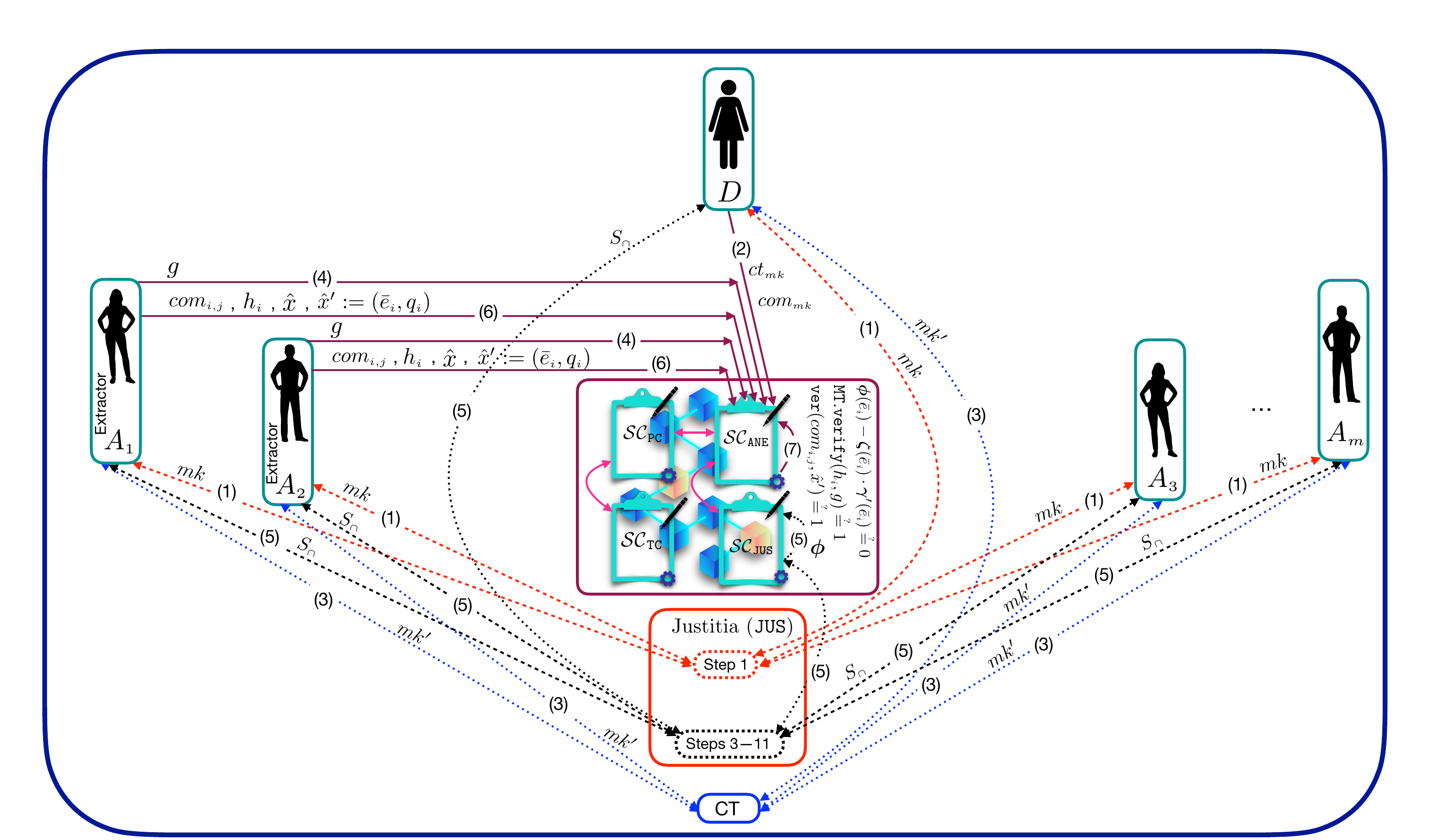}
    \caption{Outline of the interactions between parties in \withRew}\label{fig:parties-interactions-in-ANE}
\end{figure}


\subsubsection{Detailed Description of \epsi.} Next, we describe the protocol in more detail (Table \ref{table:notation-table} summarises the main notations used).

\begin{enumerate}


\item\label{e-psi::call-F-PSI-stepOne}  All clients in $\cl=\{ A_{\st 1},...,   A_{\st m},  D\}$ together run step \ref{gen-FPSI-cont} of \fpsi (in Section \ref{Fair-PSI-Protocol}) to deploy \fpsi's contract $\mathcal{SC}_{\fpsi}$ and agree on a  master key, $mk$. 

\item\label{e-psi::deploy-SC-E-PSI} All clients in $\cl$  deploy a new smart contract, $\SCe$. The address of $\SCe$ is given to all clients. 

\item The buyer, client $ { A}_{\st {  m}}$, before time $t_{\st 1}$ deposits $\Smin\cdot \vc$  amount to $\SCe$. 
\item\label{e-PSI::buyer-deposit} All clients after  time $t_{\st 2}>t_{\st 1}$ ensure that the buyer has deposited $\Smin\cdot \vc$ amount on $\SCe$. Otherwise, they abort.

\item\label{e-PSI::extractor-deposit} $D$ signs \SCpc with the extractors. $\SCe$ transfers $\Smin\cdot \rc$ amount (from the buyer deposit) to \SCpc for each extractor. This is the maximum amount to be paid to an honest extractor for honestly declaring the elements of the intersection. 
Each extractor  deposits $\dc'=\dc+\Smin\cdot \fc$ amount in \SCpc at time $t_{\st 3}$. At time $t_{\st 4}$ all clients ensure that the extractors deposited enough coins; otherwise, they withdraw their deposit and abort. 

\item\label{e-psi::commit-to-mk} $D$ encrypts $mk$ under the public key of the dispute resolver (in \SCpc); let $ct_{\st mk}$ be the resulting ciphertext.  It also generates a commitment of $mk$ as follows: $z'=\mathtt{PRF}(mk, 0),\ com_{\st mk}=\comcom(mk, z')$. It stores $ct_{\st mk}$  and ${com}_{\st mk}$ in $\SCe$.

\item\label{e-psi::gen-mk-prime} All clients in $\cl$ engage in \ct to agree on another key, $mk'$.
\item\label{Smart-PSI:encode-elem} Each client  in $\cl$ maps the elements of its set $S:\{ s_{\st 1},..., s_{\st c}\}$ to random values by encrypting them as: $\forall i, 1\leq i\leq c: e_{\st i}=\mathtt{PRP}(mk', s_{\st i})$. 
Then, it encodes its encrypted set element as $\bar{e}_{\st i} =e_{\st i} || \mathtt{H}(e_{\st i})$.  
After that, it constructs a hash table  $\mathtt{HT}$ and inserts the encoded elements into the table. $\forall i: \mathtt{H}( \bar{e}_{\st i})={ {  {j}}}$, then $\bar{e}_{\st i}\rightarrow \mathtt{HT}_{\st {  {j}}}$. It pads every bin with random dummy elements to $d$ elements (if needed). Then,  for every bin, it builds a polynomial whose roots are the bin's content: $\bm\pi^{\st { {(I)}}}=\prod\limits^{\st d}_{\st i=1} (x-e'_{\st i})$, where $e'_{\st i}$ is either $\bar{e}_{\st i}$, or a dummy value.

\item\label{merkel-tree-cons} Every extractor in $\{A_{\st 1}, A_{\st 2}\}$: 

\begin{enumerate}
%
%
\item\label{smart-PSI::commit-to-bin} for each $j$-th bin, commits to the bin's elements: $com_{\st{i,j}}=\comcom(e'_{\st i}, q_{\st i})$, where $q_{\st i}$ is a fresh randomness  used for the commitment and $e'_{\st i}$ is either $\bar{e}_{\st i}$, or a dummy value of the bin. 

\item  constructs a Merkel tree on top of all committed values: $\mkgen(com_{\st 1,1},...,com_{\st d,h})\rightarrow g$. 
\item stores the Merkel tree's root $g$ on $\SCe$.
\end{enumerate}

\item\label{e-psi::invoke-remainer-F-PSI} All clients in $\cl$   run steps \ref{ZSPA}--\ref{compute-res-poly} of \fpsi, where each client now deposits (in the $\mathcal{SC}_{\fpsi}$) $\yc'$ amount where $\yc'>\Smin\cdot \vc+{\chc}$. Recall, at the end of step \ref{compute-res-poly}  of \fpsi for each $j$-th bin (i) a random polynomial $\bm\zeta$ has been registered in $\mathcal{SC}_{\fpsi}$, (ii) a polynomial $\bm\phi$ (blinded by a random polynomial $\bm\gamma'$) has been extracted by $\mathcal{SC}_{\fpsi}$, and (iii) $\mathcal{SC}_{\fpsi}$  has checked this polynomial's  correctness. If the latter check:

\begin{itemize}
\item[$\bullet$] passes (i.e., $Flag=True$): all parties run step \ref{F-PSI::flag-is-true} of \fpsi (with a minor difference, see Section \ref{sec::Discussion-Anesidora}).  In this case, each party receives $\yc'$ amount it deposited in $\mathcal{SC}_{\fpsi}$. They proceed to step \ref{smart-PSI::extractors} below.
\item[$\bullet$]  fails (i.e., $Flag=False$): all parties run step \ref{F-PSI::flag-is-false}  of \fpsi. In this case,
(as in \fpsi) \aud is paid $\chc$ amount, and each honest party receives back its deposit, i.e., $\yc'$ amount. Also,  from the misbehaving parties' deposit  $\frac{m'\cdot \yc'-\chc}{m-m'}$ amount is sent to each honest client,  to reward and compensate the client $\Smin\cdot \lc$ and $\frac{m'\cdot \yc'-\chc}{m-m'}- \Smin\cdot \lc$ amounts respectively, where $m'$ is the total number of misbehaving parties.  Moreover, $\SCe$ returns to the buyer its deposit (i.e., $\Smin\cdot \vc$ amount paid to $\SCe$), and returns to each extractor its deposit, i.e., $\dc'$ amount paid to \SCpc. Then, the protocol halts. 
\end{itemize}

\item\label{smart-PSI::extractors} Every extractor client: 
\begin{enumerate}

\item finds the elements in the intersection. To do so, it first encodes each of its set elements to get $\bar e_{\st i}$, as explained in step \ref{Smart-PSI:encode-elem}.  
%
 %
 Then, it determines to which bin the encrypted value belongs, i.e., ${ {  {j}}}=\mathtt{H}( \bar{e}_{\st i})$. Next, it evaluates the resulting polynomial (for that bin) at the encrypted element. It considers the element in the intersection if the evaluation is zero, i.e., $\bm\phi( \bar{e}_{\st i})-\bm\zeta( \bar{e}_{\st i})\cdot \bm\gamma'( \bar{e}_{\st i})=0$. If the extractor is a traitor, by this point it should have signed \SCtc with $ { D}$ and provided all the inputs (e.g., correct result) to \SCtc. 

\item \label{extractor-proves} proves that every element in the intersection is among the elements it has committed to. Specifically, for each element in the intersection, say $\bar{e}_{\st i}$, it sends to $\SCe$:

\begin{enumerate}
%
%
\item [$\bullet$]  commitment $com_{\st i,j}$ (generated in step \ref{smart-PSI::commit-to-bin}, for $\bar{e}_{\st i}$) and its  opening ${\hat x}':=(\bar{e}_{\st i},  q_{\st i})$.

%
%

%
\item[$\bullet$]   proof $h_{\st i}$ that asserts $com_{\st i,j}$ is a leaf node of   a Merkel tree with  root $g$. 

 \end{enumerate}
\item sends the opening of commitment $com_{\st mk}$, i.e., pair $\hat {x}:=(mk, z')$, to $\SCe$. This is done only once for all elements in the intersection.  

 \end{enumerate}
\item\label{e-psi::SC-verification} Contract $\SCe$:
\begin{enumerate}

\item\label{e-psi::SC-verification--derive-mk}  verifies the opening of the commitment for $mk$, i.e., $\comver(com_{\st mk},\hat{x})=1$. If the verification passes, it generates the index of the bin to which $ \bar{e}_{\st i}$ belongs, i.e., ${ {  {j}}}=\mathtt{H}(\bar {e}_{\st i})$. It  uses $mk$ to derive the pseudorandom polynomial $\bm\gamma'$ for $j$-th bin.

 \item\label{e-psi::SC-verification--check-three-vals} checks whether (i) the opening of commitment is valid,  (ii) the Merkle tree proof is valid, and (iii) the encrypted element is the resulting polynomial's root. Specifically, it ensures that the following relation holds:

$$\Bigg(\comver(com_{\st i,j}, \hat{x}')=1\Bigg) \hspace{1mm} \wedge \hspace{1mm} \Bigg(\mkver(h_{\st i},g)=1\Bigg) \hspace{1mm} \wedge \hspace{1mm}  \Bigg(\bm\phi( \bar{e}_{\st i})-\bm\zeta( \bar{e}_{\st i})\cdot \bm\gamma'( \bar{e}_{\st i})=0\Bigg)$$



\end{enumerate}


\item The parties are paid as follows. 

\begin{itemize}
\item[$\bullet$]  if all proofs of both extractors are valid, both extractors provided identical elements of the intersections (for each bin), and there is no traitor, then $\mathcal{SC}_{\epsi}$:
\begin{enumerate}
 \item takes $|S_{\st\cap}|\cdot m\cdot \lc$ amount from the buyer's deposit (in $\mathcal{SC}_{\epsi}$) and distributes it among all clients, except the buyer. 
 \item calls \SCpc which returns the extractors' deposit (i.e., $\dc'$ amount each) and pays each extractor $|S_{\st\cap}|\cdot \rc$ amount, for doing their job correctly. 
 \item checks if $|{ { {S}}}_{\scriptscriptstyle\cap}|<\Smin$. If the check passes, then it returns $(\Smin-|S_{\scriptscriptstyle\cap}|)\cdot \vc$ amount  to the buyer.
 \end{enumerate}
\item[$\bullet$] if both extractors failed to deliver any result, then $\mathcal{SC}_{\epsi}$:
\begin{enumerate}
\item refunds the buyer, by sending $\Smin\cdot \vc$ amount (deposited in $\mathcal{SC}_{\epsi}$) back to the buyer. 
\item retrieves each extractor's deposit (i.e., $\dc$ amount) from the \SCpc and distributes it among the rest of the clients (except the buyer and extractors).  
 \end{enumerate}
 \item[$\bullet$]\label{smart-PSI-inconsistency} Otherwise (e.g., if some proofs are invalid, if an extractor's result is inconsistent with the other extractor's result, or there is a traitor), $\mathcal{SC}_{\epsi}$ invokes (steps 8.c and 9 of) \SCpc and its auditor to identify the misbehaving extractor, with the help of $ct_{\st mk}$ after decrypting it. Then, $\mathcal{SC}_{\epsi}$ asks \SCpc to pay the auditor the total amount of $\chc$ taken from the deposit of the extractor(s) who provided incorrect result to $\mathcal{SC}_{\epsi}$. Moreover,

\begin{enumerate}
\item if \underline{both extractors cheated}:
\begin{enumerate}
\item\label{both-cheated-no-traitor} if there \underline{is no traitor}, then $\mathcal{SC}_{\epsi}$ refunds the buyer, by sending $\Smin\cdot \vc$ amount (deposited in $\mathcal{SC}_{\epsi}$) back to the buyer. It also distributes $2\cdot \dc'- \chc$ amount (taken from the extractors' deposit in \SCpc) among the rest of  clients  (except the buyer and extractors). 
\item if there \underline{is a traitor}, then:
\begin{enumerate}
\item\label{both-cheated-honest-traitor} if the traitor delivered a \underline{correct result} in \SCtc, $\mathcal{SC}_{\epsi}$ retrieves $\dc'-\dc$ amount from the other dishonest extractor's deposit (in \SCpc) and distributes it among the rest of the clients (except the buyer and dishonest extractor). Also, it asks \SCpc to send $|S_{\st\cap}|\cdot \rc+\dc'+\dc-\chc$ amount to the traitor (via \SCtc). 
\SCtc refunds the traitor's deposit, i.e., $\chc$ amount. It refunds the buyer, by sending $\Smin\cdot \vc-|S_{\st\cap}|\cdot \rc$ amount (deposited in $\mathcal{SC}_{\epsi}$) back to it.

%
\item if the traitor delivered an \underline{incorrect result} in \SCtc, $\mathcal{SC}_{\epsi}$ pays the buyer and rest of clients in the same way it does in step \ref{both-cheated-no-traitor}. 
%
%
%
\SCtc refunds the traitor, i.e., $\chc$ amount.  

\end{enumerate}
\end{enumerate}
\item if \underline{one of the extractors cheated}: 
\begin{enumerate}
\item if there \underline{is no traitor}, then $\mathcal{SC}_{\epsi}$ calls \SCpc that (a) returns the honest extractor's deposit (i.e., $\dc'$ amount), (b) pays this extractor $|S_{\st\cap}|\cdot \rc$ amount, for doing its job honestly, and (c) pays this extractor $ \dc-\chc$ amount taken from the dishonest extractor's deposit. 
%
%
%
 $\mathcal{SC}_{\epsi}$ pays the buyer and the rest of the clients in the same way it does in step \ref{both-cheated-honest-traitor}.

%
\item if there \underline{is a traitor}
%

\begin{enumerate}
\item\label{one-cheated-exists-traitor-honest-traitor}  if the traitor delivered a \underline{correct result} in \SCtc (but it cheated in $\mathcal{SC}_{\epsi}$), then $\mathcal{SC}_{\epsi}$ calls \SCpc that (a) returns the other honest extractor's deposit (i.e., $\dc'$ amount), (b) pays the honest extractor $|S_{\st\cap}|\cdot \rc$ amount taken from the buyer's deposit, for doing its job honestly,  (c) pays the honest extractor $\dc- \chc$ amount taken from the traitor's deposit,  
%
%
 (d)
 pays to the traitor $|S_{\st\cap}|\cdot \rc$ amount taken from the buyer’s deposit (via the \SCtc), and (e) refunds the traitor $\dc'-\dc$ amount taken from its own deposit.  \SCtc refunds the traitor's deposit (i.e., $\chc$ amount).  $\mathcal{SC}_{\epsi}$ takes $|S_{\st\cap}|\cdot m\cdot \lc$ amount from the buyer's deposit (in $\mathcal{SC}_{\epsi}$) and distributes it among all clients, except the buyer. If $|{ { {S}}}_{\scriptscriptstyle\cap}|<\Smin$,  then  $\mathcal{SC}_{\epsi}$ returns $(\Smin-|{ { {S}}}_{\scriptscriptstyle\cap}|)\cdot \vc$ amount (deposited in $\mathcal{SC}_{\epsi}$) back  to the buyer.

\item  if the traitor delivered an \underline{incorrect result} in \SCtc (and it cheated in $\mathcal{SC}_{\epsi}$), then $\mathcal{SC}_{\epsi}$ pays the honest extractor in the same way it does in step \ref{one-cheated-exists-traitor-honest-traitor}.  
%
%
%
\SCtc refunds the traitor's deposit, i.e., $\chc$ amount. 
%
%
%
Also, $\mathcal{SC}_{\epsi}$ pays the buyer and the rest of the clients in the same way it does in step \ref{both-cheated-honest-traitor}.

\end{enumerate}

\end{enumerate}

\end{enumerate}
\end{itemize}

\end{enumerate}

 \begin{theorem}\label{theorem::E-PSI-security}
If  $\mathtt{PRP}$, $\mathtt{PRF}$, the commitment scheme, smart contracts, the Merkle tree scheme, \fpsi and the counter-collusion contracts are secure and the public key encryption is semantically secure,  then  \epsi realises  $f^{\st \text{PSI}}$ with $\bar Q$-fairness-and-reward (w.r.t. Definition \ref{def::PSI-Q-fair-reward}) in the presence of $m-3$ static active-adversary clients $A_{\st j}$s and $two$ rational clients $A_{\st i}s$ or a static passive dealer $D$ or passive auditor $Aud$, or passive public which sees the intersection cardinality.
 \end{theorem}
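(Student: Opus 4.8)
The plan is to prove Theorem \ref{theorem::E-PSI-security} in the simulation-based paradigm, reusing the proof of Theorem \ref{theorem::F-PSI-security} as a black box and treating the extra machinery of \epsi --- the encryption layer under $mk'$, the ``double-layered'' commitments, $ct_{\st mk}$, and the counter-collusion contracts --- as additional, already-secure sub-components. As in the proof of \fpsi, I would argue indistinguishability of the real and ideal executions by considering four corruption scenarios in turn: (1) the adversary controls $m-3$ active clients together with the two extractors (who behave rationally only during extraction); (2) a semi-honest dealer $D$; (3) a semi-honest auditor \aud; and (4) the passive public, which in \epsi additionally observes the (encrypted) intersection elements and hence $|S_{\st\cap}|$.

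\textbf{Case 1 (main case).} The simulator $\mathsf{Sim}^{\st \epsi}_{\st A}$ first deploys $\mathcal{SC}_{\fpsi}$ and $\SCe$, simulates the two \ct executions to obtain $mk$ and $mk'$, checks the buyer's deposit $\Smin\cdot\vc$ into $\SCe$ and the extractors' deposits $\dc'$ into \SCpc, and on insufficiency instructs the ledger to refund everyone and aborts in a way consistent with $\qinit$. Since each client's input to the \fpsi sub-protocol is the polynomial built from the $\mathtt{PRP}(mk',\cdot)$-encrypted and hash-padded set, and $\mathtt{PRP}(mk',\cdot)$ is a deterministic permutation, the intersection of the encrypted sets is exactly the $\mathtt{PRP}(mk',\cdot)$-image of $S_{\st\cap}$; therefore $\mathsf{Sim}^{\st \epsi}_{\st A}$ can extract the corrupt clients' sets exactly as $\mathsf{Sim}^{\st \fpsi}_{\st A}$ does (by root extraction, then decrypting under $mk'$), call its own TTP with the honest parties' real sets, re-encrypt the returned intersection, and drive the \fpsi simulation with it. The extractors' Merkle roots $g$ are simulated by committing, via \comcom, to the reconstructed encrypted intersection elements padded with dummies; $com_{\st mk}$ is produced honestly from $mk$; and $ct_{\st mk}$ is simulated as an encryption of a dummy key, indistinguishable by semantic security of the public-key scheme. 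The counter-collusion part is handled by invoking the simulators guaranteed by the security of \SCpc/\SCtc/\SCcc from \cite{dong2017betrayal}. Finally, I would verify that the coin transfers the simulator instructs the ledger to perform in each branch of the payment step match the predicates $\bar Q=(\qinit,\qdelwr,\qUnFAbtwr,\qFAbt)$: when $Flag=True$ and both extractors' proofs verify, each non-buyer receives its deposit plus its share of $|S_{\st\cap}|\cdot m\cdot\lc$ and the buyer receives its deposit minus the paid reward, so $\qdelwr$ returns $1$; when $Flag=False$, the \fpsi-style distribution of deposits together with the per-client reward $\Smin\cdot\lc$ makes $\qUnFAbtwr$ (for honest clients) and the auditor-payment clause evaluate to $1$; and an early abort before $D$ publishes its encoded input reproduces $\qFAbt$.

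\textbf{Cases 2--4.} These essentially reuse the corresponding cases of the \fpsi proof, augmenting the transcript/view with the new objects introduced by \epsi: the address of $\SCe$; the buyer's and extractors' deposits (public integers, identical in both worlds); $com_{\st mk}$ and $ct_{\st mk}$ (indistinguishable by hiding of \comcom and semantic security, respectively); the extractors' Merkle roots (indistinguishable by hiding of \comcom and collision-resistance of $\mathtt{H}$); and the published padded encrypted intersection elements $\bar e_{\st i}=e_{\st i}\|\mathtt{H}(e_{\st i})$, whose distribution is simulated using $\mathtt{PRP}$-security --- which is precisely why, in Case 4, the simulator must receive $|S_{\st\cap}|$ as part of its input. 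In all these cases the fresh polynomials and commitments added by \epsi are either outputs of subroutines already proven secure (\fpsi, \zspaa, \vopr) or are blinded/committed/encrypted with randomness unknown to the adversary, so their distributions carry over unchanged.

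\textbf{Main obstacle.} The hard part will be the exhaustive case analysis of the payment step --- both extractors honest; both cheat, with or without a traitor; exactly one cheats, with or without a traitor delivering a correct vs.\ incorrect result in \SCtc --- and showing that for every branch the ideal-world simulator can determine which branch applies and reproduce exactly the same ledger transfers, so that $\qdelwr$ and $\qUnFAbtwr$ evaluate identically in both worlds. This rests on the rationality of the extractors combined with the incentive structure of \SCpc/\SCtc from \cite{dong2017betrayal}, which guarantees that a rational extractor maximising profit declares the correct intersection. A secondary subtlety is covering an extractor that is corrupted as an active adversary during the \fpsi phase yet ``rational'' during extraction; here I would exploit the fact that, conditioned on reaching step \ref{smart-PSI::extractors}, the blinded result polynomial is already fixed and its correctness has been verified by $\mathcal{SC}_{\fpsi}$, so the extraction phase can be analysed independently of the earlier adversarial behaviour.
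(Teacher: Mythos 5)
Your plan mirrors the paper's proof almost exactly: the same four corruption cases (extractors plus $m-3$ active clients, semi-honest $D$, semi-honest \aud, passive public), the same reuse of the \fpsi simulator with the $\mathtt{PRP}(mk',\cdot)$ encryption layer threaded through (decrypt extracted roots under $mk'$ before calling TTP, then re-encrypt and hash-pad its answer), the same reliance on hiding/binding of \comcom, semantic security of the public-key scheme, and Merkle-tree collision resistance for the new $\SCe$-side objects, the same decoupling of the active-adversary \fpsi phase (handled by $\mathcal{A}'$) from the rational extraction phase (handled by $\mathcal{A}_{\st 1},\mathcal{A}_{\st 2}$ under the \SCpc/\SCtc incentive analysis of Dong \et), and the same exhaustive branch-by-branch matching of ledger transfers against $\bar Q=(\qinit,\qdelwr,\qUnFAbtwr,\qFAbt)$. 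The only cosmetic deviation is that you would simulate $ct_{\st mk}$ as an encryption of a dummy key, whereas the paper's simulator simply encrypts the real $mk$ it already holds from simulating \ct; both variants are sound and ultimately appeal to the same semantic-security assumption.
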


Before we prove Theorem \ref{theorem::E-PSI-security} in Section \ref{sec::E-PSI-proof}, we  present several remarks on the \epsi. 
%

\subsection{Further Discussion on \withRew}\label{sec::Discussion-Anesidora}
There is a simpler but costlier approach to finding the intersection without involving the extractors; that is the smart contract finds the (encoded) elements of the intersection and distributes the parties' deposit according to the number of elements it finds. This approach is simpler, as we do not need the involvement of (i) the extractors and (ii) the three counter collusion contracts. Nevertheless, it is costlier, because the contract itself needs to factorise the unblinded resulting polynomial and find the roots, which would cost it $O(d^{\st 2})$ for each bin, where $d$ is the size of each bin. Our proposed approach however moves such a computation off-chain, leading to a lower monetary computation cost. 

The reason each client uses the hash-based padding to encode each encrypted element $e_{\st i}$  as $\bar{e}_{\st i} =e_{\st i} || \mathtt{H}(e_{\st i})$ is to allow the auditor in the counter collusion contracts to find the error-free intersection, without having to access to one of the original (encrypted) sets. 

Compared to \fpsi, there is a minor difference in finding the result in \epsi. Specifically, because in \epsi each set element  $s_{\st i}$ is encoded as  (i) $e_{\st i}=\mathtt{PRP}(mk', s_{\st i})$ and then (ii) $\bar{e}_{\st i} =e_{\st i} || \mathtt{H}(e_{\st i})$ by a client, then when the client wants to find the intersection it needs to first regenerate $\bar{e}_{\st i}$ as above and then treat it as a set element to check if  $\bm\phi'(\bar e_{\st i})=0$, in step \ref{F-PSI::find-intersection} of \fpsi.

In \epsi, each extractor uses double-layered commitments (i.e., it first commits to the encryption of each element and then constructs a Merkle tree on top of all commitments) for efficiency and privacy purposes. Constructing a Merkle tree on top of the commitments allows the extractor to store only a single value in $\SCe$ would impose a much lower storage cost compared to the case where it would store all commitments in $\SCe$. Also, committing to the elements' encryption allows it to hide from other clients the encryption of those elements that are not in the intersection. Recall that encrypting each element is not sufficient to protect one client's elements from the rest of the clients, as they all know the decryption key.

To increase their reward, malicious clients may be tempted to insert ``garbage'' elements into their sets with the hope that those garbage elements appear in the result and accordingly they receive a higher reward. However, they would not succeed as long as there exists a semi-honest client (e.g., dealer $D$) which uses actual set elements. In this case, by the set intersection definition, those garbage elements will not appear in the intersection.

In \epsi, for the sake of simplicity, we let each party receive a fixed reward, i.e., $\lc$, for every element it contributes to the intersection. However, it is possible to make the process more flexible/generic. For instance, we could define a Reward Function $RF$ that takes $\lc$, an (encoded) set element $e_{\st i}$ in the intersection, its distribution/value $val_{\st e_{\st i}}$, and output a reward $rew_{\st e_{\st i}}$ that each party should receive for contributing that element to the intersection, i.e., $RF(\lc, e_{\st i}, val_{\st e_{\st i}})\rightarrow rew_{\st e_{\st i}}$. 


%




\subsection{Proof of \epsi}\label{sec::E-PSI-proof}

In this section, we prove Theorem \ref{theorem::E-PSI-security}, i.e., the security of \epsi.

\begin{proof}
We prove the theorem by considering the case where each party is corrupt, at a time.

\


\noindent\textbf{Case 1: Corrupt extractors $\{A_{\st 1}, A_{\st 2}\}$ and $m-3$ clients in $\{  {  A}_{ \st {   3}}, ...,   {  A}_{ \st {   m}}\}$}.  Let set $G$ include extractors $\{A_{\st 1}, A_{\st 2}\}$ and a set of at most $m-3$ corrupt clients in $\{  {  A}_{ \st {   3}}, ...,   {  A}_{ \st {   m}}\}$. Let set $\hat G$ be a set of honest clients in  $\{  {  A}_{ \st {   3}}, ...,   {  A}_{ \st {   m}}\}$. Also, let $\mathsf{Sim}^{\st\epsi}_{\st A}$ be the simulator. We let the simulator  interact with (i) active adversary  $\mathcal{A}'$ that may corrupt $m-3$ clients in $\{  {  A}_{ \st {   3}}, ...,   {  A}_{ \st {   m}}\}$, and (ii) two rational adversaries  $\mathcal{A}'':=(\mathcal{A}_{\st 1}, \mathcal{A}_{\st 2})$ that corrupt extractors $(A_{\st 1},A_{\st 2})$ component-wise.  
%
%
In the simulation, before the point where the extractors are invoked to provide proofs and results, the simulator directly deals with active adversary  $\mathcal{A}'$. However, when the extractors are involved (to generate proofs and extract the result) we require the simulator to interact with each rational adversary $\mathcal{A}_{\st 1}$ and $\mathcal{A}_{\st 2}$.  We allow these two subroutine adversaries $(\mathcal{A}', \mathcal{A}'')$  to internally interact with each other.  Now, we explain how $\mathsf{Sim}^{\st \epsi}_{\st A}$, which receives the input sets of honest dealer $D$ and honest client(s) in $\hat G$,  works.

\begin{enumerate}
\item constructs and deploys two smart contracts (for \fpsi and \epsi). It sends the contracts' addresses to $\mathcal{A}'$.  It also simulates \ct and receives the output value, $ {mk}$, from its functionality, $f_{\st \ct}$.
\item\label{sim::case1-buyer-deposit} deposits $S_{\st min}\cdot \vc$ amount to $\mathcal{SC}_{\epsi}$  if buyer $A_{\st m}$ is honest, i.e.,  $A_{\st m}\in \hat G$. Otherwise, $\mathsf{Sim}^{\st \epsi}_{\st _{\st A}}$ checks if  $\mathcal{A}'$ has deposited $\Smin\cdot \vc$ amount in $\mathcal{SC}_{\epsi}$. If the check fails, it instructs the ledger to refund the coins that every party deposited and sends message $abort_{\st 1}$ to TTP (and accordingly to all parties); it outputs whatever $\mathcal{A}'$ outputs and then halts.

%
\item\label{e-psi::deploy-prisoners} constructs and deploys a (Prisoner's) contract and transfers $\wc=\Smin\cdot \rc$  amount for each extractor. $\mathsf{Sim}^{\st \epsi}_{\st A}$ ensures that each extractor deposited $\dc'=\dc+\Smin\cdot \fc$ coins in this contract; otherwise, it instructs the ledger to refund the coins that every party deposited and sends message $abort_{\st 1}$ to TTP; it outputs whatever $\mathcal{A}'$ outputs and then halts.

\item encrypts $ mk$ under the public key of the dispute resolver;  let $ct_{\st {mk}}$ be the resulting ciphertext.   It also generates a commitment of ${mk}$ as follows: $ com_{\st {mk}}=\comcom({mk}, \mathtt{PRF}({mk}, 0))$. It stores $ct_{\st {mk}}$  and ${com}_{\st {mk}}$ in $\mathcal{SC}_{\epsi}$. 
\item  simulates \ct again and receives the output value $ {mk}'$ from  $f_{\st \ct}$.

\item\label{sim::case1-merkle-tree-root} receives from $\mathcal{A}'$ a Merkle tree's root $g'$ for each extractor.

\item\label{sim::case1-F-PSI} simulates the steps of \ref{ZSPA}--\ref{compute-res-poly} in \fpsi. For completeness, we include the steps that the simulator takes in this proof. Specifically, $\mathsf{Sim}^{\st \epsi}_{\st _{\st A}}$:

\begin{enumerate}
\item\label{sim::E-PSI-ZSPA-A-invocation-} simulates \zspaa for each bin and receives the output value $( k,  g,  q)$ from $f^{\st \zspaa}$.
\item deposits in the contract the  amount of $\yc'=\Smin\cdot \vc+\chc$ for client $D$ and each honest client in $\hat G$. It sends to $\mathcal{A}'$ the amount deposited in the contract. 
\item\label{sim::case1-check-Adv-deposited-y.|G|}  checks if $\mathcal{A}'$ has deposited $\yc'\cdot |G|$ amount (in addition to $\dc'$ amount deposited in step \ref{e-psi::deploy-prisoners} above). If the check fails, it instructs the ledger to refund the coins that every party deposited and sends message $abort_{\st 1}$ to TTP (and accordingly to all parties); it outputs whatever $\mathcal{A}'$ outputs and then halts.
\item picks a random polynomial ${\bm\zeta}$ of degree $1$, for each bin. $\mathsf{Sim}^{\st \epsi}_{\st A}$, for each client $  {  C}\in \{  {  A}_{ \st {   1}}, ...,   {  A}_{ \st {   m}}\}$ allocates to each bin two degree $d$ random polynomials: (${\bm\omega}^{ \st {  {(D,C)}}}, {\bm\rho}^{ \st {  {(D,C)}}}$), and   two  degree $3d+1$ random polynomials: (${\bm\gamma}^{ \st {  {(D,C)}}}$, ${\bm\delta}^{ \st {  {(D,C)}}}$). Also, $\mathsf{Sim}^{\st \epsi}_{\st A}$ for each honest client $C'\in \hat G$, for each bin, picks two  degree $d$ random polynomials: (${\bm\omega}^{ \st {  {(C',D)}}}$, ${\bm\rho}^{ \st {  {(C',D)}}}$). 

\item\label{E-PSI::sim-A-first-VOPR-invocation} simulates \vopr using inputs ${\bm\zeta} \cdot {\bm\omega}^{ \st {  {(D,C)}}}$ and ${\bm\gamma}^{ \st {  {(D,C)}}}$ for each bin. It receives the inputs of clients $C''\in G$, i.e., ${\bm\omega}^{ \st {  {(C'',D)}}}\cdot {\bm\pi}^{ \st {  {(C'')}}}$, from its functionality $f^{\st \vopr}$, for each bin.  
\item extracts the roots of polynomial ${\bm\omega}^{ \st {  {(C'',D)}}}\cdot {\bm\pi}^{ \st {  {(C'')}}}$ for each bin and appends those roots that are in the sets universe to a new set $S^{\st(C'')}$. 
\item simulates again \vopr using inputs ${\bm\zeta}\cdot {\bm \rho}^{ \st {  {(D,C)}}}\cdot {\bm\pi}^{ \st {  {(D)}}}$ and ${\bm\delta}^{ \st {  {(D,C)}}}$, for each bin.
\item sends to TTP the input sets of all parties; specifically, (i) client $D$'s input set: $S^{\st (D)}$, (ii) honest clients' input sets: $S^{\st (C')}$ for all $C'$ in $\hat G$, and (iii) $\mathcal{A}'$'s input sets: $S^{\st(C'')}$, for all $C''$ in $G$.  For each bin, it receives the intersection set, $S_{\st\cap}$, from TTP. 
\item represents the intersection set for each bin as a polynomial, ${\bm \pi}$, as follows. First, it encrypts each element $s_{\st i}$ of $S_{\st\cap}$ as $e_{\st i}=\mathtt{PRP}({mk}', s_{\st i})$. Second, it encodes each encrypted element as $\bar{e}_{\st i} =e_{\st i} || \mathtt{H}(e_{\st i})$. Third, it constructs ${\bm \pi}$ as ${\bm \pi}=\prod\limits^{\st |S_{\st\cap}|}_{\st i=1 }(x-s_{\st i})\cdot \prod\limits^{\st d-|S_{\st\cap}|}_{\st j=1 }(x-u_{\st j})$, where $u_{\st j}$ is a dummy value. 
\item constructs polynomials ${\bm\theta}^{ \st {  {(C')}}}_{\st 1}={\bm\zeta} \cdot {\bm\omega}^{ \st {  {(D,C')}}}\cdot {\bm\omega}^{ \st {  {(C',D)}}}\cdot {\bm\pi}+ {\bm\gamma}^{ \st {  {(D,C')}}}, {\bm\theta}^{ \st {  {(C')}}}_{\st 2}= {\bm\zeta} \cdot {\bm\rho}^{ \st {  {(D,C')}}}\cdot {\bm\rho}^{ \st {  {(C',D)}}}\cdot {\bm\pi}+ {\bm\delta}^{ \st {  {(D,C')}}}$, and $ {\bm\nu}^{ \st {  {(C')}} }=  {\bm\theta}^{ \st {  {(C')}}}_{\st 1}+ {\bm\theta}^{ \st {  {(C')}}}_{\st 2}+ {\bm \tau}^{ \st {  {(C')}} }$, for each bin and each honest client $C'\in\hat G$, such that $ {\bm\tau}^{ \st {  {(C)}}}=\sum\limits^{\st 3d+2}_{\st i=0}z_{\st i,c}\cdot x^{\st i}$ and each value $z_{\st i,c}$ is derived from key $ k$ generated in step \ref{sim::E-PSI-ZSPA-A-invocation-}.  It sends to $\mathcal{A}'$ polynomial  $ {\bm\nu}^{ \st {  {(C')}} }$ for each bin and each honest client $C'\in\hat G$. 
\item\label{E-PSI::sim-A-receive-nu-from-adv} receives ${\bm\nu}^{ \st {  {(C'')}} }$  from $\mathcal{A}'$, for each bin and each corrupt client $C''\in G$. It checks whether the output for every $C''$ has been provided. Otherwise, it halts. 
\item if there is any abort within steps \ref{E-PSI::sim-A-first-VOPR-invocation}--\ref{E-PSI::sim-A-receive-nu-from-adv}, then it sends $abort_{\st 2}$ to TTP and instructs the ledger to refund the coins that every party deposited.  It outputs whatever $\mathcal{A}'$ outputs and then halts. 
\item constructs polynomial  ${\bm\nu}^{ \st {  {(D)}}}={\bm\zeta} \cdot  {\bm\omega'}^{ \st {  {(D)}}}\cdot {\bm\pi} - \sum\limits^{ \st {   A}_{ \st {   m}}}_{  \st {  {C }= }  \st {   A}_{ \st {  1}}}({\bm\gamma}^{ \st {  {(D,C)}}} + {\bm\delta}^{ \st {  {(D,C)}}}) + {\bm\zeta} \cdot {\bm\gamma'}$ for each bin on behalf of client $D$, where ${\bm\omega'}^{ \st {  {(D)}}}$ is a fresh random polynomial of degree $d$ and ${\bm\gamma'}$ is a pseudorandom polynomial derived from $ {mk}$.
\item sends to $\mathcal{A}'$ polynomials ${\bm\nu}^{ \st {  {(D)}}}$ and ${\bm\zeta}$ for each bin. 
\item\label{sim::case1-sim-check-res} computes polynomial $ {\bm\phi'}$ as $ {\bm\phi'} = \sum\limits_{\st \forall C''\in G}{\bm\nu}^{ \st {  {(C'')}} }- \sum\limits_{\st \forall C''\in G}({\bm\gamma}^{ \st {  {(D,C'')}}} + {\bm\delta}^{ \st {  {(D,C'')}}})$, for every bin. Next, it checks if  ${\bm\zeta}$  divides ${\bm\phi'}$, for every bin. If the check passes, it sets $Flag=True$. Otherwise, it sets $Flag=False$. 
\item if $Flag=True$, then instructs the ledger to send back each party's deposit, i.e., $\yc'$ amount. It sends a message $deliver$ to TTP.  It proceeds to step \ref{sim::case1-what-adv-extract-sends} below. 

\item if $Flag=False$: 
\begin{enumerate}
 \item receives $|G|$ keys of the $\mathtt{PRF}$ from $\mathcal{A}'$, i.e., $\vv k'=[  k'_{\st 1}, ...,  k'_{\st |G|}]$, for every bin. 
\item checks if $ k'_{\st j}= k$, for every $ k'_{\st j}\in\vv k'$. Recall,  $ k$ was generated in step \ref{sim::ZSPA-A-invocation}. It constructs an empty list $ L'$ and appends to it the indices (e.g., $j$) of the keys that do not pass the above check. 
 \item receives from $f^{\st \zspaa}$ the output containing a vector of random polynomials, $\vv{\mu}'$, for each valid key. 
 \item sends to  $\mathcal{A}'$, $ L'$ and $\vv{\mu}'$, for every bin. 
 \item  for each bin of client $  {  C}$ whose index is not in $ L'$ computes polynomial ${\bm\chi}^{ \st {  {(D, C)}}}$ as
 ${\bm\chi}^{ \st {  {(D, C)}}}={\bm\zeta}\cdot {\bm\eta}^{ \st {  {(D,C)}}}-({\bm\gamma}^{ \st {  {(D,C)}}}+{\bm\delta}^{ \st {  {(D,C)}}})$,   where ${\bm\eta}^{ \st {  {(D,C)}}}$ is a fresh random polynomial of degree $3d+1$. Note that  $C$ includes both honest and corrupt clients, except those clients whose index is in  $ L'$. $\mathsf{Sim}^{\st \epsi}_{\st A}$ sends every polynomial ${\bm\chi}^{ \st {  {(D, C)}}}$ to  $\mathcal{A}'$. 
 \item\label{sim::case1-check-final-res-for-each-client} given each ${\bm\nu}^{ \st {  {(C'')}} }$ (by $\mathcal{A}'$ in step \ref{E-PSI::sim-A-receive-nu-from-adv}), computes polynomial $ {\bm\phi'}^{ \st {  {(C'')}} }$ as follows: ${\bm\phi'}^{ \st {  {(C'')}} } = {\bm\nu}^{ \st {  {(C'')}} }- {\bm\gamma}^{ \st {  {(D,C'')}}} - {\bm\delta}^{ \st {  {(D,C'')}}}$, for every bin.  $\mathsf{Sim}^{\st \epsi}_{\st A}$ checks if  ${\bm\zeta}$  divides $ {\bm\phi'}^{ \st {  {(C'')}} }$, for every bin. It appends the index of those clients that did not pass the above check to a new list, $ L''$. 
 \item if  $ L'$ or $ L''$ is not empty, then instructs the ledger: (a) to refund $\yc'$ amount to each  client whose index is not in $ L'$ and $ L''$, (b) to retrieve $\chc$ amount from the adversary (i.e., one of the parties whose index is in one of the lists) and send the $\chc$ amount to \aud, and (c) to reward and compensate each honest party (whose index is not in the two lists)  $\frac{m'\cdot \yc'-\chc}{m-m'}$ amount, where $m'=| L'|+| L''|$.  Then, it sends message $abort_{\st 3}$ to TTP. 
\item outputs whatever $\mathcal{A}'$ outputs and halts.
 \end{enumerate}
 \end{enumerate}
\item\label{sim::case1-what-adv-extract-sends}  for each $I\in \{1,2\}$, receives from $\mathcal{A}_{\st I}$ (1) a set $E^{\st (I)}$ of encoded encrypted elements, e.g., $\bar e_{\st i}$, in the intersection, (2) each $\bar e_{\st i}$'s commitment $com_{\st i,j}$, (3) each $com_{\st i,j}$'s opening $\hat{x}'$, (4) a proof $h_{\st i}$ that each $com_{\st i,j}$ is a leaf node of a Merkle tree with root $g'$ (given to simulator in step \ref{sim::case1-merkle-tree-root} above), and (5) the opening $\hat x$ of commitment  $com_{\st {mk}}$.
\item encrypts each element $s_{\st i}$ of $S_{\st\cap}$ as $e_{\st i}=\mathtt{PRP}({mk}', s_{\st i})$. Then, it encodes each encrypted element as $\bar{e}_{\st i} =e_{\st i} || \mathtt{H}(e_{\st i})$. Let set $S'$ include all encoded encrypted elements in the intersection.
\item\label{sim::case1-traitor-cont} sings a \SCtc with $\mathcal{A}_{\st I}$, if $\mathcal{A}_{\st I}$ decides to be a traitor extractor. In this case, $\mathcal{A}_{\st I}$, provides the intersection to \SCtc.  $\mathsf{Sim}^{\st \epsi}_{\st A}$ checks this intersection's validity. Shortly (in step \ref{sim::case1-both-cheated-traitor-correct-res}), we will explain how $\mathsf{Sim}^{\st \epsi}_{\st A}$ acts based on the outcome of this check. 
\item\label{sim::case1-check-result-correctness} checks if each set $E^{\st (I)}$ equals set $S'$. 
\item checks if $com_{\st i,j}$ matches the opening $\hat{x}'$ and the opening corresponds to a unique element in $S'$. 
\item\label{sim::case1-ver-Merkle-tree-proof} verifies each commitment's proof, $h_{\st i}$. Specifically, given the proof and root $g$, it ensures the commitment $com_{\st i,j}$  is a leaf node of a Merkle tree with a root node $g'$. It also checks whether the opening $\hat x$ matches $com_{\st {mk}}$. 
%
%
\item \label{sim::case1-all-check-pass} if all the checks in steps \ref{sim::case1-check-result-correctness}--\ref{sim::case1-ver-Merkle-tree-proof} pass, then instructs the ledger (i) to take $|S_{\st\cap}|\cdot m\cdot \lc$ amount from the buyer's deposit and distributes it among all clients, except the buyer, (ii) to return the extractors deposit (i.e., $\dc'$ amount each) and pay each extractor $|S_{\st\cap}|\cdot \rc$ amount, and (iii) to return $(\Smin-|S_{\scriptscriptstyle\cap}|)\cdot \vc$ amount to the buyer.
\item \label{sim::case1-no-result} if neither extractor sends the extractor's set intersection $(E^{\st (A_{\st 1})}, E^{\st (A_{\st 2})})$  in step \ref{sim::case1-what-adv-extract-sends}, then instructs the ledger  (i) to refund the buyer, by sending $\Smin\cdot \vc$ amount back to the buyer and (ii) to
 retrieve each extractor's deposit (i.e., $\dc'$ amount) from \SCpc and distribute it among the rest of the clients (except the buyer and extractors).  
\item if the checks in step \ref{sim::case1-all-check-pass} fail or in step  \ref{sim::case1-no-result} both $\mathcal{A}_{\st 1}$  and $\mathcal{A}_{\st 2}$ send the extractors' set intersection but they are inconsistent with each other, then it tags the extractor whose proof or set intersection was invalid as a misbehaving extractor. $\mathsf{Sim}^{\st \epsi}_{\st A}$ instructs the ledger to pay the auditor (of \SCpc) the total amount of $\chc$ coins taken from the misbehaving extractor(s) deposit. Furthermore, $\mathsf{Sim}^{\st \epsi}_{\st A}$ takes the following steps. 
\begin{enumerate}
\item\label{sim::case1-both-cheated-no-traitor} if both extractors cheated and there is no traitor, then instructs the ledger (i) to refund the buyer $\Smin\cdot \vc$ amount, and (ii) to take $2 \dc'-\chc$ amount from the misbehaving extractors'  deposit and distribute it to the rest of the clients except the buyer and extractors. 
\item\label{sim::case1-both-cheated-traitor-correct-res} if both extractors cheated, there is a traitor, and the traitor delivered a correct result (in step \ref{sim::case1-traitor-cont}), then instructs the ledger (i) to take $\dc'-\dc$ amount from the other misbehaving extractor's deposit and distribute it among the rest of the clients (except the buyer and dishonest extractor), (ii) to distribute $|S_{\st\cap}|\cdot \rc+\dc'+\dc-\chc$ amount to the traitor, (iii) to refund the traitor  $\chc$ amount, and (iv) to refund the buyer $\Smin\cdot \vc-|S_{\st\cap}|\cdot \rc$ amount. 
\item if both extractors cheated, there is a traitor, and the traitor delivered an incorrect result (in step \ref{sim::case1-traitor-cont}), then instructs the ledger to distribute coins the same way it does in step \ref{sim::case1-both-cheated-no-traitor}. 
\item if one of the extractors cheated and there is no traitor, then instructs the ledger (i) to return the honest extractor’s deposit (i.e., $\dc'$ amount), (ii) to pay the honest extractor $|S_{\st\cap}|\cdot \rc$ amount,  (iii)  to pay this extractor $ \dc-\chc$ amount taken from the dishonest extractor's deposit, and (iv) to pay the buyer and the rest of the clients the same way it does in step \ref{sim::case1-both-cheated-traitor-correct-res}. 
\item if one of the extractors cheated, there is a traitor, and the traitor delivered a correct result (in step \ref{sim::case1-traitor-cont}), then instructs the ledger (i) to return the other honest extractor's deposit (i.e., $\dc'$ amount), (ii) to pay the honest extractor $|S_{\st\cap}|\cdot \rc$ amount, taken from the buyer's deposit,  (iii) to pay the honest extractor $\dc- \chc$ amount, taken from the traitor's deposit,   (iv) to pay to the traitor $|S_{\st\cap}|\cdot \rc$ amount, taken from the buyer’s deposit,  (v) to refund the traitor $\dc'-\dc$ amount, (vi) to refund the traitor $\chc$ amount,  (vii) to take $|S_{\st\cap}|\cdot m\cdot \lc$ amount from the buyer's deposit and distribute it among all clients, except the buyer, and (viii) to return $(\Smin-|S_{\scriptscriptstyle\cap}|)\cdot \vc$ amount back  to the buyer. 
\item if one of the extractors cheated, there is a traitor, and the traitor delivered an incorrect result (in step \ref{sim::case1-traitor-cont}), then instructs the ledger (i) to pay the honest extractor the same way it does in step \ref{one-cheated-exists-traitor-honest-traitor}, (ii) to refund the traitor  $\chc$ amount, and (iii)  to pay the buyer and the rest of the clients in the same way it does in step \ref{sim::case1-both-cheated-traitor-correct-res}.
\item outputs whatever $\mathcal{A}$ outputs and then halts.
\end{enumerate}
\end{enumerate}

Next, we show that the real and ideal models are computationally indistinguishable. We first focus on the adversary’s output. The addresses of the smart contracts have identical distribution in both models. In the real and ideal models, the adversary sees the transcripts of ideal calls to $f_{\st \ct}$ as well as the functionality outputs ($mk,  mk'$). Due to the security of \ct (as we are in the $f_{\st \ct}$-hybrid world), the transcripts of $f_{\st \ct}$ in both models have identical distribution, so have the random outputs of $f_{\st \ct}$, i.e., ($mk,  mk'$). Also, the deposit amounts $\Smin\cdot \vc$ and $\wc$ have identical distributions in both models. Due to the semantical security of the public key encryption, the ciphertext $ct_{\st mk}$ in the real model is computationally indistinguishable from the ciphertext $ct_{\st {mk}}$ in the ideal model. Due to the hiding property of the commitment scheme, commitment $com_{\st mk}$ in the real model is computationally indistinguishable from commitment $com_{\st {mk}}$ in the ideal model.  
Moreover, due to the security of \fpsi, all transcripts and outputs produced in the ideal model, in step \ref{sim::case1-F-PSI} above, have identical distribution to the corresponding transcripts and outputs produced in \fpsi in the real model.  
The address of \SCtc has the same distribution in both models. The amounts each party receives in the real and ideal models are the same, except when both extractors produce an identical and incorrect result (i.e., intersection) in the real model, as we will shortly discuss, this would not occur under the assumption that the extractors are rational and due to the security of the counter collusion smart contracts.

Now, we show that an honest party aborts with the same probability in the real and ideal models. As before, for the sake of completeness, we include the \fpsi in the following discussion as well.  Due to the security of \ct, an honest party, during \ct invocation, aborts with the same probability in both models; in this case, the adversary learns nothing about the parties' input set and the sets' intersection as the parties have not sent out any encoded input set yet. In both models, an honest party can read the smart contract and check if sufficient amounts of coins have been deposited. Thus, it would halt with the same probability in both models.  If the parties halt because of insufficient amounts of deposit, no one could learn about (i) the parties' input set and (ii) the sets' intersection because the inputs (representation) have not been dispatched at this point.  Due to the security of \zspaa, an honest party during \zspaa execution aborts with the same probability in both models.  In this case, an aborting adversary also learns nothing about the parties' input set and the sets' intersection. 

Due to the security of \vopr, honest parties abort with the same probability in both models. In the case where a party aborts during the execution of \vopr, the adversary would learn nothing (i) about its counter party's input set, and (ii) about the rest of the honest parties' input sets and the intersection as the other parties' input sets remain blinded by random blinding factors known only to client $D$. In the real model, client $D$ can check if all parties provided their encoded inputs via reading the state of the smart contract.  The simulator can perform the same check to ensure  $\mathcal{A}'$ has provided the encoded inputs of all corrupt parties. So, in both models, an honest party with the same probability detects if not all encoded inputs have been provided. In this case, if an adversary aborts and does not provide its encoded inputs (i.e., polynomials ${\bm\nu}^{ \st {  {(C'')}} }$), then it learns nothing about the honest parties' input sets and the intersection, for the same reason explained above.

In the real model,  the contract sums every client $C$'s polynomial $\bm\nu^{ \st {  {(C)}} }$ with each other and with client $D$'s polynomial $\bm\nu^{ \st {  {(D)}} }$, that ultimately removes the blinding factors that  $D$ initially inserted (during the \vopr execution), and then checks if the result is divisible by  $\bm \zeta$. Due to (a) Theorem \ref{Unforgeable-Polynomials-Linear-Combination}, (b) the fact that the smart contract is given the random polynomial $\bm \zeta$ in plaintext, (c) no party (except honest $D$) knew polynomial $\bm \zeta$ before they send their input to the contract, and (d) the security of the contract (i.e., the adversary cannot influence the correctness of the smart contract's verifications), the contract can detect if a set of outputs of \vopr were tampered with, with a probability at least $1-\negl(\lambda)$. In the ideal model, $\mathsf{Sim}^{\st \epsi}_{\st A}$ (in step \ref{sim::case1-sim-check-res}) can remove the blinding factors and it knows the random polynomial ${\bm \zeta}$. So, $\mathsf{Sim}^{\st \epsi}_{\st A}$ can detect when $\mathcal{A}'$ tampers with a set of the outputs of \vopr (sent to  $\mathsf{Sim}^{\st \epsi}_{\st A}$) with a probability at least $1-\negl(\lambda)$,  due to Theorem \ref{Unforgeable-Polynomials-Linear-Combination}. Therefore, the smart contract in the real model and the simulator in the ideal model would abort with a similar probability.

Due to the security of \zspaa, the probability that in the real model an invalid $k_{\st i}\in \vv{k}$ is appended to $ L$ is similar to the probability that  $\mathsf{Sim}^{\st \epsi}_{\st A}$ detects an invalid $ k'_{\st i}\in \vv{k'}$ in the ideal model. In the real model, when $Flag=False$, the smart contract can identify each ill-structured output of \vopr (i.e.,  $\bm\nu^{ \st {  {(C)}} }$) with a probability of at least $1-\negl(\lambda)$ by checking whether $\bm\zeta$  divides $\bm\iota^{ \st {  {(C)}}}$, due to  (a) Theorem \ref{proof::unforgeable-poly} (i.e., unforgeable polynomial), (b) the fact that the smart contract is given $\bm \zeta$ in plaintext, (c) no party (except honest client $D$) knew anything about $\bm \zeta$ before they send their input to the contract, and (d) the security of the contract.  
In the ideal model, when $Flag=False$, given each ${\bm\nu}^{ \st {  {(C'')}} }$, $\mathsf{Sim}^{\st \epsi}_{\st A}$ can remove its blinding factors from  ${\bm\nu}^{ \st {  {(C'')}} }$ which results in $ {\bm\phi'}^{ \st {  {(C'')}} }$ and then can check if ${\bm\zeta}$  divides $ {\bm\phi'}^{ \st {  {(C'')}} }$, in step \ref{sim::case1-check-final-res-for-each-client}. $\mathsf{Sim}^{\st \epsi}_{\st A}$ can detect an ill-structured  ${\bm\nu}^{ \st {  {(C'')}} }$ with a probability of at least $1-\negl(\lambda)$, due to Theorem \ref{proof::unforgeable-poly}, the fact that the simulator is given ${\bm \zeta}$ in plaintext,  and the adversary is not given any knowledge about ${\bm \zeta}$ before it sends to the simulator the outputs of \vopr.  Therefore, the smart contract in the real model and $\mathsf{Sim}^{\st \epsi}_{\st A}$ in the ideal model can detect an ill-structured input of an adversary with the same probability.  
The smart contract in the real model and $\mathsf{Sim}^{\st \epsi}_{\st A}$ in the ideal model can detect and abort with the same probability if the adversary provides an invalid opening to each commitment $com_{\st i,j}$ and $com_{\st {mk}}$, due to the binding property of the commitment scheme. Also, the smart contract in the real model and $\mathsf{Sim}^{\st \epsi}_{\st A}$ in the ideal model, can abort with the same probability if a Merkle tree proof is invalid, due to the security of the Merkle tree, i.e., due to the collision resistance of Merkle tree's hash function.

Note that in the ideal model, $\mathsf{Sim}^{\st \epsi}_{\st A}$ can detect and abort with a probability of $1$, if $\mathcal{A}_{\st I}$ does not send to the simulator all encoded encrypted elements of the intersection,  i.e., when $E^{\st (I)}\neq S'$. Because the simulator already knows all elements in the intersection (and the encryption key). Thus, it can detect with a probability of $1$ if both the intersection sets that the extractors provide are identical but incorrect.   
 In the real world, if the extractors collude with each other and provide identical but incorrect intersections,   then an honest client (or the smart contract) cannot detect it. Thus, the adversary can distinguish the two models, based on the probability of aborting.  However, under the assumption that the smart contracts (of Dong \textit{et al.} \cite{dong2017betrayal}) are secure (i.e., are counter-collusion), and the extractors are rational, such an event (i.e., providing identical but incorrect result without one extractor betraying the other) would not occur in either model, as the real model and $(\mathcal{A}_{\st 1}, \mathcal{A}_{\st 2})$ rational adversaries follow the strategy that leads to a higher payoff. Specifically, as shown in \cite{dong2017betrayal}, providing incorrect but identical results is not the preferred strategy of the extractors; instead, the betrayal of one extractor by the other is the most profitable strategy in the case of (enforceable) collusion between the two extractors. This also implies that the amounts that the extractors would receive in both models are identical.


Now, we analyse the output of the predicates $\bar Q:=(\qinit,  \qdelwr, \qUnFAbtwr, \qFAbt)$ in the real and ideal models. In the real model, all clients proceed to prepare their input set only if the predefined amount of coins have been deposited by the parties; otherwise (if in steps \ref{sim::case1-buyer-deposit},\ref{e-PSI::buyer-deposit},\ref{e-PSI::extractor-deposit} of \epsi and step \ref{F-PSI::each-client-deposit} of \fpsi there is not enough deposit), they will be refunded and the protocol halts. In the ideal model, the simulator proceeds to prepare its inputs only if enough deposit has been placed in the contract. Otherwise, it would send message $abort_{\st 1}$ to TTP, during steps \ref{sim::case1-buyer-deposit}--\ref{sim::case1-check-Adv-deposited-y.|G|}. Thus, in both models, the parties proceed to prepare their inputs only if $\qinit(.) \rightarrow1$.  
In the real model, if there is an abort after the parties ensure there is enough deposit and before client $D$ provides its encoded input to the contract, then all parties can retrieve their deposit in full. In this case, the aborting adversary cannot learn anything about honest parties' input sets, as the parties' input sets have been blinded by random blinding polynomials known only to client $D$. In the ideal model, if there is any abort during steps \ref{E-PSI::sim-A-first-VOPR-invocation}--\ref{E-PSI::sim-A-receive-nu-from-adv}, then the simulator sends $abort_{\st 2}$ to TTP and instructs the ledger to refund every party's deposit. In the case of an abort, within the above two steps, the auditor is not involved, and paid. Therefore, in both models,  in the case of an abort within the above steps, we would have $\qFAbt(.)\rightarrow1$.

In the real model, if $Flag=True$, then all parties can locally extract the intersection, regardless of the extractors' behaviour. In this case, each honest party receives $\yc'$ amount that it initially deposited in $\mathcal{SC}_{\fpsi}$. Moreover, each honest party receives \emph{at least} $|S_{\st\cap}|\cdot \lc$ amount as a reward, for contributing to the result. In this case, the honest buyer always collects the leftover of its deposit. Specifically, if both extractors act honestly, and the intersection cardinality is smaller than $|\Smin|$, then the buyer collects its deposit leftover, after paying all honest parties. If any extractor misbehaves, then the honest buyer fully recovers its deposit (and the misbehaving extractor pays the rest). Even in the case that an extractor misbehaves and then becomes a traitor to correct its past misbehaviour, the buyer collects its deposit leftover if the intersection cardinality is smaller than $|\Smin|$.  In the ideal model, when $Flag=True$, then $\mathsf{Sim}^{\st \epsi}_{\st A}$ can extract the intersection by summing the output of \vopr provided by all parties and removing the blinding polynomials. In this case, it sends back each party's deposit placed in $\mathcal{SC}_{\fpsi}$, i.e., $\yc'$ amount. Also, in this case, each honest party receives at least $|S_{\st\cap}|\cdot \lc$ amount as a reward and the honest buyer always collects the leftover of its deposit. Thus, in both models in the case of $Flag=True$, we would have $\qdelwr(.)\rightarrow 1$.

In the real model, when $Flag=False$, only the adversary can learn the result. In this case, the contract sends (i) $\chc$ amount to \aud, and (ii) $\frac{m'\cdot \yc'-\chc}{m-m'}$ amount, as compensation and reward, to each honest party, in addition to each party's initial deposit. In the ideal model,  when $Flag=False$, $\mathsf{Sim}^{\st \epsi}_{\st A}$ sends $abort_{\st 3}$ to TTP and instructs the ledger to distribute the same amount the contract distributes among the auditor (e.g., with address $adr_{\st j}$) and every honest party (e.g., with address $adr_{\st i}$) in the real model.  Thus, in both models when $Flag=False$, we would have $\qUnFAbtwr(., ., ., ., adr_{\st i})\rightarrow (a=1, .)$ and  $\qUnFAbtwr(., ., ., ., adr_{\st j})\rightarrow (., b=1)$. 

We conclude that the distribution of the joint outputs of the honest client $C\in \hat G$, client $D$, \aud, and the adversary in the real and ideal models are computationally indistinguishable.

\

\noindent\textbf{Case 2: Corrupt dealer $D$}.  In the real execution, the dealer's view is defined as follows:

$$ \mathsf{View}_{\st D}^{\st \epsi} \Big(S^{\st (D)}, (S^{\st (1)},..., S^{\st (m)})\Big)=$$ $$ \{S^{\st (D)}, adr_{\st sc}, S_{\minn}\cdot \vc, 2\cdot \dc', 
 r_{\st D},  \mathsf{View}^{\st \fpsi}_{\st D}, \mathsf{View}^{\st \ct}_{\st D}, (com_{\st 1,j}, \bar{e}_{\st 1}, q_{\st 1}, h_{\st 1})..., (com_{\st sz, j'}, \bar{e}_{\st sz}, q_{\st {sz}}, h_{\st sz}), g, \hat{x}:=({mk}, z'), 
  S_{\st \cap}\}$$
where  $\mathsf{View}^{\st \ct}_{\st D}$ and $\mathsf{View}^{\st \vopr}_{\st D}$ refer to $D$'s real-model view during the execution of \ct and \vopr respectively. Also, $r_{\st D}$ is the outcome of internal random coins of $D$, $adr_{\st sc}$ is the address of contract, $\mathcal{SC}_{\epsi}$, $(j, ...,j')\in \{1,..., h\}$, $z'=\mathtt{PRF}(\bar{mk}, 0)$, $sz=|S_{\st \cap}|$, and $h_{\st i}$ is a Merkle tree proof asserting that $com_{\st i,j}$ is a leaf node of a Merkle tree with root node $g$. The simulator $\mathsf{Sim}^{\st \epsi}_{\st D}$, which receives all parties' input sets, works as follows. 

\begin{enumerate}

%
\item generates an empty view. It appends to the view, the input set $S^{\st (D)}$. It constructs and deploys a smart contract. It appends the contract's address, $ {adr}_{\st sc}$, to the view.

\item appends to the view integer $\Smin\cdot \vc$, and $2\cdot \dc'$. Also, it appends uniformly random coins $r'_{\st D}$ to the view. 

\item extracts the simulation of \fpsi from \fpsi's simulator for client $D$. Let $\mathsf{Sim}^{\st \fpsi}_{\st D}$ be the simulation, that also includes a random key $mk$. It appends $\mathsf{Sim}^{\st \fpsi}_{\st D}$ to the view.

 \item extracts the simulation of \ct from \ct's simulator, yielding  the simulation $\mathsf{Sim}^{\st \ct}_{\st D}$ that includes its output $mk'$. It appends $\mathsf{Sim}^{\st \ct}_{\st D}$ to the view.

 \item\label{sim::case2-encode-intersection} encrypts each element $s_{\st i,j}$ in the intersection set $S_{\st \cap}$  as $e_{\st i,j}=\mathtt{PRP}(mk', s_{\st i,j})$ and then encodes the result as $\bar{e}_{\st i,j} = e_{\st i,j}||\mathtt{H}(e_{\st i, j})$. It commits to each encoded value as $com_{\st i,j}=\comcom(\bar{e}_{\st i,j}, q_{\st i,j})$, where $j$ is the index of the bin to which $\bar{e}_{\st i,j}$ belongs and  $q_{\st i, j}$ is a random value.

  \item\label{sim::case2-gen-commitments} It constructs $(com'_{\st 1,1}, ..., com'_{\st d, h})$ where each $com'_{\st i, j}$ is a value picked uniformly at random from the commitment scheme output range. For every $j$-th bin, it sets each  $com'_{\st i', j}$ to unique $com_{\st i,j}$ if value $com_{\st i,j}$ for $j$-th bin has been generated in step \ref{sim::case2-encode-intersection}. Otherwise, the original values of $com'_{\st i', j}$ remains unchanged.

 
%
%
 

 \item constructs a  Merkle tree on top of the values $(com'_{\st 1,1}, ..., com'_{\st d, h})$ generated in step \ref{sim::case2-gen-commitments}. Let $g$ be the resulting tree's root.  
 \item for each element $s_{\st i,j}$ in the intersection, it constructs $(com'_{\st i',j}, \bar{e}_{\st i,j}, q_{\st i,j}, h_{\st i,j})$, where $com'_{\st i',j}$ is the commitment of $ \bar{e}_{\st i,j}$, $com'_{\st i',j}\in com$,  $(\bar{e}_{\st i,j}$ $q_{\st i,j})$ is the commitment's opening generated in step \ref{sim::case2-encode-intersection}, and $h_{\st i,j}$ is a Merkle tree proof asserting that $com'_{\st i',j}$ is a leaf of a Merkle tree with root $g$. It appends all  $(com'_{\st i',j}, \bar{e}_{\st i,j}, q_{\st i,j}, h_{\st i,j})$ along with $g$ to the view. 

\item generates a commitment to $mk$ as $com_{\st mk}=\comcom(mk, z')$ where $z'=\mathtt{PRF}(mk, 0)$. It appends $(mk, z')$ along with $S_{\st \cap}$ to the view. 

\end{enumerate}

 Now, we will discuss why the two views are computationally indistinguishable.  $D$'s input $S^{\st (D)}$ is
 identical in both models, so they have identical distributions.  The contract's address has the same distribution in both models. The same holds for the integers $\Smin\cdot \vc$ and $ 2\cdot \dc'$. Also, because the real-model semi-honest adversary samples its randomness according to the protocol’s description, the random coins in both models (i.e., $r_{\st D}$  and $r'_{\st D}$) have identical distribution. Due to the security of \fpsi, $\mathsf{View}^{\st \fpsi}_{\st D}$ and $\mathsf{Sim}^{\st \fpsi}_{\st D}$ have identical distribution, so do  $\mathsf{View}^{\st \ct}_{\st D}$ and $\mathsf{Sim}^{\st \ct}_{\st D}$ due to the security of \ct. The intersection elements in both models have identical distributions and the encryption scheme is schematically secure. Therefore, each intersection element's representation (i.e., $\bar{e}_{\st i}$ in the real model and $\bar{e}_{\st i, j}$ in the deal model) are computationally indistinguishable. Each $q_{\st i}$ in the real model and $q_{\st i, j}$ in the ideal model have identical distributions as they have been picked uniformly at random. Each commitment $com_{\st i,j}$ in the real model is computationally indistinguishable from commitment $com'_{\st i',j}$ in the ideal model.

 In the real model, each Merkle tree proof $h_{\st i}$ contains two leaf nodes (along with intermediary nodes that are the hash values of a subset of leaf nodes) that are themselves the commitment values. Also, for each $h_{\st i}$, only one of the leaf node's openings (that contains an element in the intersection) is seen by $D$. The same holds in the ideal model, with the difference that for each Merkle tree proof $h_{\st i, j}$ the leaf node whose opening is not provided is a random value, instead of an actual commitment. However, due to the hiding property of the commitment scheme, in the real and ideal models,  these two leaf nodes (whose openings are not provided) and accordingly the two proofs are computationally indistinguishable. In both models, values $mk$ and $z'$ are random values, so they have identical distributions. Furthermore, the intersection $S_{\st \cap}$ is identical in both models. Thus, we conclude that the two views are computationally indistinguishable.

 \

\noindent\textbf{Case 3: Corrupt auditor}.  In this case, the real-model view of the auditor is defined as  

$$ \mathsf{View}_{\st Aud}^{\st \epsi} \Big(\bot, S^{\st (D)}, (S^{\st (1)},..., S^{\st (m)})\Big)=$$

$$ \{\mathsf{View}^{\st \fpsi}_{\st Aud}, adr_{\st sc}, \Smin\cdot \vc, 2\cdot \dc', (com_{\st 1,j}, \bar{e}_{\st 1}, q_{\st 1}, h_{\st 1})..., (com_{\st sz, j'}, \bar{e}_{\st sz}, q_{\st {sz}}, h_{\st sz}), g, \hat{x}:=({mk}, z')\}$$

  Due to the security of \fpsi, \aud's view $\mathsf{View}^{\st \fpsi}_{\st Aud}$ during the execution of \fpsi can be easily simulated. As we have shown in Case 2, for the remaining transcript, its real-model view can be simulated too. However, there is a difference between Case 3 and Case 2; namely, in the former case, \aud does not have the $\mathtt{PRP}$'s key $mk'$ used to encrypt each set element. However, due to the security of  $\mathtt{PRP}$, it cannot distinguish each encrypted encoded element from a uniformly random element and cannot distinguish $\mathtt{PRP}(mk',.)$ from a uniform permutation. Therefore, each value $\bar{e}_{\st j}$ in the real model has identical distribution to each value $\bar{e}_{\st i,j}$ (as defined in Case 2) in the ideal model, as both are the outputs of $\mathtt{PRP}$.

 
 \

\noindent\textbf{Case 4: Corrupt public}. In the real model, the view of the public (i.e., non-participants of the protocol) is defined as below:

 $$ \mathsf{View}_{\st Pub}^{\st \epsi} \Big(\bot, S^{\st (D)},(S^{\st (A_1)},..., S^{\st (A_m)})\Big)=$$ $$ 
\{\mathsf{View}^{\st \fpsi}_{\st Pub}, adr_{\st sc}, \Smin\cdot \vc, 2\cdot \dc', (com_{\st 1,j}, \bar{e}_{\st 1}, q_{\st 1}, h_{\st 1})..., (com_{\st sz, j'}, \bar{e}_{\st sz}, q_{\st {sz}}, h_{\st sz}), g, \hat{x}:=({mk}, z')\}$$

  Due to the security of \fpsi, the public's view $\mathsf{View}^{\st \fpsi}_{\st Pub}$ during \fpsi's execution can be simulated in the same way which is done in Case 4, in Section \ref{sec::F-PSI-proof}. The rest of the public's view overlaps with \aud's view in Case 3, excluding $\mathsf{View}^{\st \fpsi}_{\st Aud}$. Therefore, we can use the argument provided in Case 3 to show that the rest of the public's view can be simulated.     
   We  conclude that the public's real and ideal views are computationally indistinguishable. 
  \hfill\(\Box\)\end{proof}


\section{Evaluation}\label{sec::valuation}
In this section, we analyse the asymptotic costs of \epsi. We also compare its costs and features with the fastest two and multiple parties PSIs in \cite{AbadiDMT22,DBLP:conf/ccs/KolesnikovMPRT17,NevoTY21,RaghuramanR22}) and with the fair PSIs in \cite{DebnathD16,DBLP:conf/dbsec/DongCCR13}. Tables \ref{table::Asymptotic-Cost} and \ref{table::comparisonTable} summarise the result of the cost analysis and the comparison respectively.


 \begin{table}[!htb]

\caption{ {\small{Asymptotic costs of different parties in \epsi. In the table, $h$ is the total number of bins, $d$ is a bin's capacity (i.e., $d=100$), $m$ is the total number of clients (excluding $D$), $|S|$ is a set cardinality, and $\bar\xi$ is \ole's security parameter.
}}} \label{table::Asymptotic-Cost} 
\begin{center}
\renewcommand{\arraystretch}{1.2}
\begin{tabular}{|c|c|c|c|c|} 

        \cline{1-3}  
{\scriptsize {Party}}&{\scriptsize {Computation Cost}}&{\scriptsize {Communication Cost}}\\
     \cline{1-3}  
\hline

\scriptsize Client $A_{\st  3},...,    A_{\st   m}$& \cellcolor{gray!50}   \scriptsize$O\Big(h\cdot d(m+d)\Big)$& \cellcolor{gray!50}  \scriptsize$O\Big(h\cdot d^{\st 2}\cdot \bar\xi\Big)$\\
     \cline{1-3}  
\scriptsize Dealer $D$&   \cellcolor{gray!20}\scriptsize$O\Big(h\cdot m(d^{\st 2}+d)\Big)$ &  \cellcolor{gray!20}\scriptsize$O\Big(h\cdot d^{\st 2}\cdot \bar\xi\cdot m\Big)$\\
      \cline{1-3}

       
   \scriptsize   {Auditor $\aud$ }& \cellcolor{gray!50}\scriptsize$O\Big(h\cdot m\cdot d\Big)$&  \cellcolor{gray!50}\scriptsize$O\Big(h\cdot d\Big)$\\      
            \cline{1-3} 

 \scriptsize{Extractor} $A_{\st  1},    A_{\st   2}$& \cellcolor{gray!20}\scriptsize$O\Big(h\cdot d(m+d)\Big)$& \cellcolor{gray!20}\scriptsize$O\Big(|S_{\scriptscriptstyle\cap}|\cdot \log_{\st 2}|S|\Big)$\\
     \cline{2-3}
     \cline{1-3}  
     
 \scriptsize Smart contract $\mathcal{SC}_{\epsi}$\ \&\ $\mathcal{SC}_{\fpsi}$& \cellcolor{gray!50}\scriptsize $O\Big( |S_{\st \cap}|(d+ \log_{\st 2} |S|)+h\cdot m\cdot d\Big)$& \cellcolor{gray!50}\scriptsize ---\\
 
   \hline
   
   \hline

     \scriptsize Overall Complexity & \cellcolor{gray!20}\scriptsize $O\Big(h\cdot d^{2}\cdot m \Big)=O\Big(|S|\cdot d\cdot m\Big)$& \cellcolor{gray!20}\scriptsize {$O\Big(h\cdot d^{\st 2}\cdot \bar\xi\cdot m\Big)$}\\
     
      \cline{1-3}  

\end{tabular}  
\end{center}
\end{table}


\begin{table} 

\begin{center}
\caption{ \small{Comparison of the asymptotic complexities and features of state-of-the-art PSIs. In the table, $t$ is a parameter that determines the maximum number of colluding parties and $\kappa$ is a security parameter.}}  \label{table::comparisonTable} 
\begin{tabular}{|c|c|c|c|c|c|c|c|c|c|} 
\hline



\multirow{2}{*} {\scriptsize {Schemes}} &\multicolumn{2}{c|}{\scriptsize Asymptotic Cost}&\multicolumn{5}{c|}{\scriptsize{Features}} \\

\cline{2-8}

& \scriptsize{Computation}&\scriptsize{Communication}&{\scriptsize{Fairness}}&{ \scriptsize Rewarding}& {\scriptsize{ Sym-key based}}& {\scriptsize{Multi-party}}&\scriptsize Active Adversary\\

\hline 


\scriptsize  \scriptsize{ \cite{AbadiDMT22}}&\cellcolor{gray!20}\scriptsize{$O( h\cdot d^{\st 2}\cdot m)$}&\cellcolor{gray!20}\scriptsize$O(h\cdot d\cdot m)$&\cellcolor{gray!20}\scriptsize\textcolor{red}{$\times$}&\cellcolor{gray!20}\scriptsize\textcolor{red}{$\times$}&\cellcolor{gray!20}\scriptsize\textcolor{blue}\checkmark  &\cellcolor{gray!20}\scriptsize\textcolor{blue}\checkmark&\cellcolor{gray!20}\scriptsize\textcolor{red}{$\times$} \\

\hline

\scriptsize \cite{DebnathD16}&\cellcolor{gray!50}\scriptsize{$O(|S|)$}&\cellcolor{gray!50}\scriptsize{$O(|S|)$}&\cellcolor{gray!50}\scriptsize\textcolor{blue}\checkmark&\cellcolor{gray!50}\scriptsize\textcolor{red}{$\times$}&\cellcolor{gray!50}\scriptsize\textcolor{red}{$\times$} &\cellcolor{gray!50}\scriptsize\textcolor{red}{$\times$}&\cellcolor{gray!50}\scriptsize\textcolor{blue}\checkmark \\

\hline

\scriptsize {\cite{DBLP:conf/dbsec/DongCCR13}}&\cellcolor{gray!20}\scriptsize{$O(|S|^{\st 2}$)}&\cellcolor{gray!20}\scriptsize$O(|S|)$&\cellcolor{gray!20}\scriptsize\textcolor{blue}\checkmark&\cellcolor{gray!20}\scriptsize\textcolor{red}{$\times$}  &\cellcolor{gray!20}\scriptsize\textcolor{red}{$\times$} &\cellcolor{gray!20}\scriptsize\textcolor{red}{$\times$}&\cellcolor{gray!20} \scriptsize\textcolor{blue}\checkmark\\ 

\hline
\scriptsize \cite{DBLP:conf/ccs/KolesnikovMPRT17}   &\cellcolor{gray!50}\scriptsize{$O(|S|\cdot m^{\st 2}+|S|\cdot m )$}&\cellcolor{gray!50}\scriptsize$O(|S|\cdot m^{\st 2})$&\cellcolor{gray!50}\scriptsize\textcolor{red}{$\times$}&\cellcolor{gray!50}\scriptsize\textcolor{red}{$\times$}  &\cellcolor{gray!50}\scriptsize\textcolor{blue}\checkmark &\cellcolor{gray!50}\scriptsize\textcolor{blue}\checkmark&\cellcolor{gray!50}\scriptsize\textcolor{red}{$\times$}\\ 

\hline

\scriptsize \cite{NevoTY21}&\cellcolor{gray!20}\scriptsize{$O(|S|\cdot \kappa(m+t^{\st 2}-t(m+1)))$}&\cellcolor{gray!20}\scriptsize{$O(|S|\cdot m\cdot \kappa)$}&\cellcolor{gray!20}\scriptsize{\textcolor{red}{$\times$}}&\cellcolor{gray!20}\scriptsize\textcolor{red}{$\times$}&\cellcolor{gray!20}\scriptsize\textcolor{blue}\checkmark  &\cellcolor{gray!20}\scriptsize\textcolor{blue}\checkmark&\cellcolor{gray!20}\scriptsize\textcolor{blue}\checkmark\\ 

\hline

\scriptsize \cite{RaghuramanR22}&\cellcolor{gray!50}\scriptsize{$O(|S|)$}&\cellcolor{gray!50}\scriptsize{$O(|S|\cdot \kappa)$}&\cellcolor{gray!50}\scriptsize{\textcolor{red}{$\times$}}&\cellcolor{gray!50}\scriptsize\textcolor{red}{$\times$} &\cellcolor{gray!50}\scriptsize\textcolor{blue}\checkmark &\cellcolor{gray!50}\scriptsize{\textcolor{red}{$\times$}} &\cellcolor{gray!50}\scriptsize\textcolor{blue}\checkmark\\ 

\hline

{\scriptsize \textbf{Ours:} \epsi}&\cellcolor{gray!20}\scriptsize{$O (h\cdot d^{2}\cdot m)$}&\cellcolor{gray!20}\scriptsize$O (h\cdot d^{\st 2}\cdot \bar\xi\cdot m )$&\cellcolor{gray!20}\scriptsize\textcolor{blue}\checkmark&\cellcolor{gray!20}\scriptsize \textcolor{blue}\checkmark&\cellcolor{gray!20}\scriptsize\textcolor{blue}\checkmark &\cellcolor{gray!20}\scriptsize\textcolor{blue}\checkmark&\cellcolor{gray!20}\scriptsize\textcolor{blue}\checkmark \\

\hline 

\end{tabular}
\end{center}
\end{table}

\subsection{Computation Cost}

\subsubsection{Client's and Dealer's Costs.}

In step \ref{e-psi::call-F-PSI-stepOne}, the cost of each client (including dealer $D$) is $O(m)$ and mainly involves an invocation of \ct. 
In steps \ref{e-psi::deploy-SC-E-PSI}--\ref{e-PSI::extractor-deposit}, the clients' cost is negligible as it involves deploying smart contracts and reading from the deployed contracts. 
In step \ref{e-psi::gen-mk-prime}, the clients' cost is  $O(m)$, as they need to invoke an instance of \ct. 
In step \ref{Smart-PSI:encode-elem}, each client invokes \prp and $\mathtt{H}$ linear with its set's cardinality. In the same step, it also constructs $h$ polynomials, where the construction of each polynomial involves $d$  modular multiplications and additions. Thus, its complexity in this step is $O(h\cdot d)$. It has been shown in \cite{AbadiDMT22} that $O(h\cdot d)=O(|S|)$ and  $d=100$ for all set sizes. 

In step \ref{e-psi::invoke-remainer-F-PSI}, each client   $A_{\st  1},...,    A_{\st   m}$ (excluding $D$): (i) invokes an instance of \zspaa which involves $O(h\cdot m)$ invocation of \ct, $3h\cdot m (d+1)$ invocation of \prf, $3h\cdot m (d+1)$ addition, and $O(h\cdot m\cdot d)$ invocation of $\mathtt{H}$ (in step \ref{ZSPA} of subroutine \fpsi), (ii) invokes $2h$ instances of \vopr, where each \vopr invocation involves $2d(1+d)$ invocations of $\ole^{\st +}$, multiplications, and additions  (in steps \ref{e-psi::D-randomises} and \ref{e-psi::C-randomises} of \fpsi), and (iii) performs $h(3d+2)$ modular addition (in step \ref{blindPoly-C-sends-to-contract} of  \fpsi).  
 Also, if $Flag=True$, each client (including $D$) invokes $\prf$ $h (3d+1)$ times, and performs $h (3d+1)$ additions, and performs polynomial evaluations linear with the total number of bins $h$, where each evaluation involves  $O(d)$  additions and $O(d)$ multiplications (in step \ref{F-PSI::flag-is-true} of \fpsi). 
 
 Step \ref{e-psi::commit-to-mk} involves only $D$ whose cost in this step is constant, as it involves invoking a public key encryption, \prf,  and commitment only once. Furthermore, $D$:  (a) invokes $2h\cdot m$ instances of \vopr  (in steps \ref{e-psi::D-randomises} and \ref{e-psi::C-randomises} of \fpsi), (b) invokes $\prf$ $h(3d+1)$ times (in step \ref{f-psi::D-gen-random-poly} of \fpsi), and (c) performs $h(d^{\st 2}+1)$ multiplications and $3h\cdot m\cdot d$ additions (in step \ref{f-psi::D-gen-switching-poly} of \fpsi). If $Flag=False$, then $D$ performs $O(h\cdot m\cdot d)$ multiplications and additions (in step \ref{F-PSI::flag-is-false} of \fpsi).

 \subsubsection{Auditor's Cost.}
 
If $Flag=False$, then \aud invokes $\prf$ $3h\cdot m(d+1)$ times  and  invokes $\mathtt{H}$ $O(h\cdot m\cdot d)$ times (in step \ref{F-PSI::flag-is-false} of \fpsi).

 \subsubsection{Extractor's Cost.}

 In step \ref{merkel-tree-cons}, each extractor invokes the commitment scheme linear with the number of its set cardinality $|S|$ and constructs a Merkle tree on top of the commitments. 
In step \ref{smart-PSI::extractors}, each extractor invokes $\mathtt{H}$ linear with its set cardinality $|S|$; it also performs polynomial evaluations linear with $|S|$.

 \subsubsection{Smart Contracts' Cost.}

In step \ref{e-psi::invoke-remainer-F-PSI}, the subroutine smart contract $\mathcal{SC}_{\fpsi}$ performs $h\cdot m(3d+1)$ additions and $h$ polynomial divisions,  where each division includes dividing a polynomial of degree $3d+1$ by a polynomial of degree $1$ (in step \ref{compute-res-poly} of \fpsi). In step \ref{e-psi::SC-verification--derive-mk}, $\mathcal{SC}_{\epsi}$ invokes the commitment's verification algorithm $\comver$ once,  $\mathtt{H}$ at most $|S_{\st \cap}|$ times, and $\prf$ $|S_{\st \cap}| (3d+1)$ times. In step \ref{e-psi::SC-verification--check-three-vals}, $\mathcal{SC}_{\epsi}$ invokes  $\comver$ at most $|S_{\st \cap}|$ times, and calls $\mathtt{H}$ $O(|S_{\st \cap}|\cdot \log_{\st 2} |S|)$ times. In the same step, it  performs polynomial evaluation linear with  $|S_{\st \cap}|$. Thus, its overall complexity is $O( |S_{\st \cap}|(d+ \log_{\st 2} |S|))$.
%



%




\subsection{Communication Cost}

In steps  \ref{e-psi::call-F-PSI-stepOne} and \ref{e-psi::gen-mk-prime}, the communication cost of the clients is dominated by the cost of \ct which is $O(m)$. In steps \ref{e-psi::deploy-SC-E-PSI}--\ref{e-psi::commit-to-mk}, the clients' cost is negligible, as it involves sending a few transactions to the smart contracts, e.g., $\mathcal{SC}_{\fpsi}$, $\mathcal{SC}_{\epsi}$, and \SCpc. Step \ref{merkel-tree-cons} involves only extractors whose cost is $O(h)$ as each of them only sends to $\mathcal{SC}_{\epsi}$  a single value for each bin. In step \ref{e-psi::invoke-remainer-F-PSI}, the clients' cost is dominated by \vopr's cost; specifically, each pair of client and $D$ invokes \vopr $O(d^{\st 2})$ times for each bin; therefore, the cost of each client (excluding $D$) is $O(h\cdot d^{\st 2}\cdot \bar\xi)$ while the cost of $D$ is $O(h\cdot d^{\st 2}\cdot \bar\xi\cdot m)$, where $\bar\xi$ is the subroutine \ole's security parameter. 
Step \ref{smart-PSI::extractors}  involves only the extractors, where each extractor's cost is dominated by the size of the Merkle tree's proof it sends to $\mathcal{SC}_{\epsi}$, i.e., $O(|S_{\scriptscriptstyle\cap}|\cdot \log_{\st 2}|S|)$, where $|S|$ is the extractor's set cardinality. 
In step \ref{F-PSI::flag-is-false}, \aud sends $h$ polynomials of degree $3d+1$ to $\mathcal{SC}_{\fpsi}$; thus, its complexity is $O(h\cdot d)$. 
The rest of the steps impose negligible communication costs.

\subsection{Comparison}
Below we show that \epsi offers various features that the state-of-the-art PSIs do not offer simultaneously while keeping its overall overheads similar to the efficient PSIs.  

\subsubsection{Computation Complexity.} The  computation complexity  of \epsi is similar to that of PSI in \cite{AbadiDMT22}, but is better than the multiparty PSI's complexity in \cite{DBLP:conf/ccs/KolesnikovMPRT17} as  the latter's complexity is quadratic with the number of parties, i.e., $O(|S|\cdot d\cdot m)$ versus $O(|S|\cdot m^{\st 2}+|S|\cdot m )$. Also, \epsi's complexity  is better than the complexity of the PSI in  \cite{NevoTY21}  that is quadratic with parameter $t$. Similar to the two-party PSIs in \cite{DebnathD16,RaghuramanR22}, \epsi's complexity is linear with $|S|$.  The two-party PSI in \cite{DBLP:conf/dbsec/DongCCR13} imposes a higher computation overhead than \epsi does, as its complexity is quadratic with sets' cardinality. Hence, the complexity of \epsi is: (i) linear with the set cardinality, similar to the above schemes except the one in \cite{DBLP:conf/dbsec/DongCCR13} and (ii) linear with the total number of parties, similar to  the above multi-party schemes, except the one in \cite{DBLP:conf/ccs/KolesnikovMPRT17}. 
Hence, the computation complexity of \epsi is linear with the set cardinality and the number of parties, similar to the above schemes except for the ones in \cite{DBLP:conf/ccs/KolesnikovMPRT17,DBLP:conf/dbsec/DongCCR13} whose complexities are quadratic with the set cardinality or the number of parties.

\subsubsection{Communication Complexity.}  \epsi's communication complexity is slightly higher than the complexity of the PSI in \cite{AbadiDMT22}, by a factor of $d\cdot \bar\xi$. However, it is better than the  PSI's complexity in \cite{DBLP:conf/ccs/KolesnikovMPRT17} as the latter has a complexity quadratic with the number of parties. \epsi's complexity is slightly higher than the one in \cite{NevoTY21}, by a factor of $d$. Similar to the two-party PSIs in  \cite{DebnathD16,RaghuramanR22,DBLP:conf/dbsec/DongCCR13}, \epsi's complexity is linear with $c$. 
Therefore, the communication complexity of \epsi is linear with the set cardinality and number of parties, similar to the above schemes except the one in \cite{DBLP:conf/ccs/KolesnikovMPRT17} whose complexity is quadratic with the number of parties.

\subsubsection{Features.} \epsi is the only scheme that offers all the five features, i.e., supports fairness, rewards participants, is based on symmetric key primitives, supports multi-party, and is secure against active adversaries. After \epsi is the scheme in \cite{NevoTY21} which offers three of the above features. The rest of the schemes support only two of the above features. For the sake of fair comparison, we highlight that our \epsi and  \fpsi are the only PSIs that use smart contracts (that require additional but standard blockchain-related assumptions), whereas the rest of the above protocols do not use smart contracts.



\section{Conclusion and Future Direction}\label{sec::concl}

Private Set Intersection (PSI) is a crucial protocol with numerous real-world applications. In this work, we proposed, \withFai, the first multi-party fair PSI that ensures that either all parties get the result or if the protocol aborts in an unfair manner, then honest parties will receive financial compensation. We then upgraded it to \withRew, the first PSI ensuring that honest parties who contribute their private sets receive a reward proportionate to the number of elements they reveal. Since an MPC that rewards participants for contributing their private inputs would help increase its real-world adoption, an interesting open question is:
\begin{center}
 \emph{How can we generalise the idea of rewarding participants to MPC?}
 \end{center}
\bibliographystyle{ACM-Reference-Format}
\bibliography{ref}


\begin{thebibliography}{52}


\ifx \showCODEN    \undefined \def \showCODEN     #1{\unskip}     \fi
\ifx \showDOI      \undefined \def \showDOI       #1{#1}\fi
\ifx \showISBNx    \undefined \def \showISBNx     #1{\unskip}     \fi
\ifx \showISBNxiii \undefined \def \showISBNxiii  #1{\unskip}     \fi
\ifx \showISSN     \undefined \def \showISSN      #1{\unskip}     \fi
\ifx \showLCCN     \undefined \def \showLCCN      #1{\unskip}     \fi
\ifx \shownote     \undefined \def \shownote      #1{#1}          \fi
\ifx \showarticletitle \undefined \def \showarticletitle #1{#1}   \fi
\ifx \showURL      \undefined \def \showURL       {\relax}        \fi
\providecommand\bibfield[2]{#2}
\providecommand\bibinfo[2]{#2}
\providecommand\natexlab[1]{#1}
\providecommand\showeprint[2][]{arXiv:#2}

\bibitem[Abadi et~al\mbox{.}(2022a)]%
        {AbadiDMT22}
\bibfield{author}{\bibinfo{person}{Aydin Abadi}, \bibinfo{person}{Changyu
  Dong}, \bibinfo{person}{Steven~J. Murdoch}, {and} \bibinfo{person}{Sotirios
  Terzis}.} \bibinfo{year}{2022}\natexlab{a}.
\newblock \showarticletitle{Multi-party Updatable Delegated Private Set
  Intersection}. In \bibinfo{booktitle}{\emph{{FC}}}.
\newblock


\bibitem[Abadi et~al\mbox{.}(2022b)]%
        {Feather2020-full}
\bibfield{author}{\bibinfo{person}{Aydin Abadi}, \bibinfo{person}{Changyu
  Dong}, \bibinfo{person}{Steven~J. Murdoch}, {and} \bibinfo{person}{Sotirios
  Terzis}.} \bibinfo{year}{2022}\natexlab{b}.
\newblock \showarticletitle{Multi-party Updatable Delegated Private Set
  Intersection --{Full Version}}.
\newblock  (\bibinfo{year}{2022}).
\newblock
\urldef\tempurl%
\url{https://fc22.ifca.ai/preproceedings/68.pdf}
\showURL{%
\tempurl}


\bibitem[Abadi et~al\mbox{.}(2021)]%
        {AbadiMZ21}
\bibfield{author}{\bibinfo{person}{Aydin Abadi}, \bibinfo{person}{Steven~J.
  Murdoch}, {and} \bibinfo{person}{Thomas Zacharias}.}
  \bibinfo{year}{2021}\natexlab{}.
\newblock \showarticletitle{Polynomial Representation Is Tricky: Maliciously
  Secure Private Set Intersection Revisited}.
\newblock \bibinfo{journal}{\emph{{IACR} Cryptol. ePrint Arch.}}
  (\bibinfo{year}{2021}).
\newblock
\urldef\tempurl%
\url{https://eprint.iacr.org/2021/1009}
\showURL{%
\tempurl}


\bibitem[Abadi et~al\mbox{.}(2018)]%
        {eopsi}
\bibfield{author}{\bibinfo{person}{A. Abadi}, \bibinfo{person}{S. Terzis},
  \bibinfo{person}{R. Metere}, {and} \bibinfo{person}{C. Dong}.}
  \bibinfo{year}{2018}\natexlab{}.
\newblock \showarticletitle{Efficient Delegated Private Set Intersection on
  Outsourced Private Datasets}.
\newblock \bibinfo{journal}{\emph{IEEE TDSC}} (\bibinfo{year}{2018}).
\newblock
\showISSN{1545-5971}
\urldef\tempurl%
\url{https://doi.org/10.1109/TDSC.2017.2708710}
\showDOI{\tempurl}


\bibitem[Arora et~al\mbox{.}(2023)]%
        {abs-2310-04546}
\bibfield{author}{\bibinfo{person}{Sunpreet~S. Arora}, \bibinfo{person}{Andrew
  Beams}, \bibinfo{person}{Panagiotis Chatzigiannis},
  \bibinfo{person}{Sebastian Meiser}, \bibinfo{person}{Karan Patel},
  \bibinfo{person}{Srinivasan Raghuraman}, \bibinfo{person}{Peter Rindal},
  \bibinfo{person}{Harshal Shah}, \bibinfo{person}{Yizhen Wang},
  \bibinfo{person}{Yuhang Wu}, \bibinfo{person}{Hao Yang}, {and}
  \bibinfo{person}{Mahdi Zamani}.} \bibinfo{year}{2023}\natexlab{}.
\newblock \showarticletitle{Privacy-Preserving Financial Anomaly Detection via
  Federated Learning {\&} Multi-Party Computation}.
\newblock \bibinfo{journal}{\emph{CoRR}}  \bibinfo{volume}{abs/2310.04546}
  (\bibinfo{year}{2023}).
\newblock


\bibitem[Badertscher et~al\mbox{.}(2019)]%
        {BadertscherGKRZ19}
\bibfield{author}{\bibinfo{person}{Christian Badertscher},
  \bibinfo{person}{Peter Gazi}, \bibinfo{person}{Aggelos Kiayias},
  \bibinfo{person}{Alexander Russell}, {and} \bibinfo{person}{Vassilis Zikas}.}
  \bibinfo{year}{2019}\natexlab{}.
\newblock \showarticletitle{Ouroboros Chronos: Permissionless Clock
  Synchronization via Proof-of-Stake}.
\newblock \bibinfo{journal}{\emph{{IACR} Cryptol. ePrint Arch.}}
  (\bibinfo{year}{2019}).
\newblock


\bibitem[Badertscher et~al\mbox{.}(2017)]%
        {BadertscherMTZ17}
\bibfield{author}{\bibinfo{person}{Christian Badertscher},
  \bibinfo{person}{Ueli Maurer}, \bibinfo{person}{Daniel Tschudi}, {and}
  \bibinfo{person}{Vassilis Zikas}.} \bibinfo{year}{2017}\natexlab{}.
\newblock \showarticletitle{Bitcoin as a Transaction Ledger: {A} Composable
  Treatment}. In \bibinfo{booktitle}{\emph{{CRYPTO}}}.
\newblock


\bibitem[Beimel et~al\mbox{.}(2010)]%
        {BeimelOO10}
\bibfield{author}{\bibinfo{person}{Amos Beimel}, \bibinfo{person}{Eran Omri},
  {and} \bibinfo{person}{Ilan Orlov}.} \bibinfo{year}{2010}\natexlab{}.
\newblock \showarticletitle{Protocols for Multiparty Coin Toss with Dishonest
  Majority}. In \bibinfo{booktitle}{\emph{{CRYPTO}}}.
\newblock


\bibitem[Ben{-}Efraim et~al\mbox{.}(2022)]%
        {Ben-EfraimNOP22}
\bibfield{author}{\bibinfo{person}{Aner Ben{-}Efraim}, \bibinfo{person}{Olga
  Nissenbaum}, \bibinfo{person}{Eran Omri}, {and} \bibinfo{person}{Anat
  Paskin{-}Cherniavsky}.} \bibinfo{year}{2022}\natexlab{}.
\newblock \showarticletitle{PSImple: Practical Multiparty Maliciously-Secure
  Private Set Intersection}. In \bibinfo{booktitle}{\emph{{ASIA} {CCS}}}.
\newblock


\bibitem[Berenbrink et~al\mbox{.}(2000)]%
        {DBLP:conf/stoc/BerenbrinkCSV00}
\bibfield{author}{\bibinfo{person}{Petra Berenbrink}, \bibinfo{person}{Artur
  Czumaj}, \bibinfo{person}{Angelika Steger}, {and} \bibinfo{person}{Berthold
  V{\"{o}}cking}.} \bibinfo{year}{2000}\natexlab{}.
\newblock \showarticletitle{Balanced allocations: the heavily loaded case}. In
  \bibinfo{booktitle}{\emph{{STOC}}}. \bibinfo{pages}{745--754}.
\newblock


\bibitem[Blum(1982)]%
        {Blum82}
\bibfield{author}{\bibinfo{person}{Manuel Blum}.}
  \bibinfo{year}{1982}\natexlab{}.
\newblock \showarticletitle{Coin Flipping by Telephone - {A} Protocol for
  Solving Impossible Problems}. In \bibinfo{booktitle}{\emph{COMPCON'82}}.
  \bibinfo{publisher}{{IEEE} Computer Society}.
\newblock


\bibitem[Boneh et~al\mbox{.}(2013)]%
        {BonehGHWW13}
\bibfield{author}{\bibinfo{person}{Dan Boneh}, \bibinfo{person}{Craig Gentry},
  \bibinfo{person}{Shai Halevi}, \bibinfo{person}{Frank Wang}, {and}
  \bibinfo{person}{David~J. Wu}.} \bibinfo{year}{2013}\natexlab{}.
\newblock \showarticletitle{Private Database Queries Using Somewhat Homomorphic
  Encryption}. In \bibinfo{booktitle}{\emph{{ACNS}}}.
\newblock


\bibitem[Cheng et~al\mbox{.}(2020)]%
        {ChengLCY20}
\bibfield{author}{\bibinfo{person}{Yong Cheng}, \bibinfo{person}{Yang Liu},
  \bibinfo{person}{Tianjian Chen}, {and} \bibinfo{person}{Qiang Yang}.}
  \bibinfo{year}{2020}\natexlab{}.
\newblock \showarticletitle{Federated learning for privacy-preserving {AI}}.
\newblock \bibinfo{journal}{\emph{Commun. {ACM}}} (\bibinfo{year}{2020}).
\newblock


\bibitem[Debnath and Dutta(2014)]%
        {DebnathD14}
\bibfield{author}{\bibinfo{person}{Sumit~Kumar Debnath} {and}
  \bibinfo{person}{Ratna Dutta}.} \bibinfo{year}{2014}\natexlab{}.
\newblock \showarticletitle{A Fair and Efficient Mutual Private Set
  Intersection Protocol from a Two-Way Oblivious Pseudorandom Function}. In
  \bibinfo{booktitle}{\emph{{ICISC}}}.
\newblock


\bibitem[Debnath and Dutta(2016a)]%
        {DebnathD16}
\bibfield{author}{\bibinfo{person}{Sumit~Kumar Debnath} {and}
  \bibinfo{person}{Ratna Dutta}.} \bibinfo{year}{2016}\natexlab{a}.
\newblock \showarticletitle{New Realizations of Efficient and Secure Private
  Set Intersection Protocols Preserving Fairness}. In
  \bibinfo{booktitle}{\emph{{ICISC}}}.
\newblock


\bibitem[Debnath and Dutta(2016b)]%
        {DebnathD16-}
\bibfield{author}{\bibinfo{person}{Sumit~Kumar Debnath} {and}
  \bibinfo{person}{Ratna Dutta}.} \bibinfo{year}{2016}\natexlab{b}.
\newblock \showarticletitle{Towards fair mutual private set intersection with
  linear complexity}.
\newblock \bibinfo{journal}{\emph{Secur. Commun. Networks}}
  (\bibinfo{year}{2016}).
\newblock


\bibitem[Dong et~al\mbox{.}(2013)]%
        {DBLP:conf/dbsec/DongCCR13}
\bibfield{author}{\bibinfo{person}{Changyu Dong}, \bibinfo{person}{Liqun Chen},
  \bibinfo{person}{Jan Camenisch}, {and} \bibinfo{person}{Giovanni Russello}.}
  \bibinfo{year}{2013}\natexlab{}.
\newblock \showarticletitle{Fair Private Set Intersection with a Semi-trusted
  Arbiter}. In \bibinfo{booktitle}{\emph{{DBSec}}}. \bibinfo{pages}{128--144}.
\newblock


\bibitem[Dong et~al\mbox{.}(2017)]%
        {dong2017betrayal}
\bibfield{author}{\bibinfo{person}{Changyu Dong}, \bibinfo{person}{Yilei Wang},
  \bibinfo{person}{Amjad Aldweesh}, \bibinfo{person}{Patrick McCorry}, {and}
  \bibinfo{person}{Aad van Moorsel}.} \bibinfo{year}{2017}\natexlab{}.
\newblock \showarticletitle{Betrayal, distrust, and rationality: Smart
  counter-collusion contracts for verifiable cloud computing}. In
  \bibinfo{booktitle}{\emph{CCS}}. ACM.
\newblock


\bibitem[Dorn(1962)]%
        {DBLP:journals/ibmrd/Dorn62}
\bibfield{author}{\bibinfo{person}{William~S. Dorn}.}
  \bibinfo{year}{1962}\natexlab{}.
\newblock \showarticletitle{Generalizations of Horner's Rule for Polynomial
  Evaluation}.
\newblock \bibinfo{journal}{\emph{{IBM} Journal of Research and Development}}
  (\bibinfo{year}{1962}).
\newblock


\bibitem[Freedman et~al\mbox{.}(2004)]%
        {DBLP:conf/eurocrypt/FreedmanNP04}
\bibfield{author}{\bibinfo{person}{Michael~J. Freedman}, \bibinfo{person}{Kobbi
  Nissim}, {and} \bibinfo{person}{Benny Pinkas}.}
  \bibinfo{year}{2004}\natexlab{}.
\newblock \showarticletitle{Efficient Private Matching and Set Intersection}.
  In \bibinfo{booktitle}{\emph{{EUROCRYPT}}}.
\newblock


\bibitem[Garay et~al\mbox{.}(2015)]%
        {GarayKL15}
\bibfield{author}{\bibinfo{person}{Juan~A. Garay}, \bibinfo{person}{Aggelos
  Kiayias}, {and} \bibinfo{person}{Nikos Leonardos}.}
  \bibinfo{year}{2015}\natexlab{}.
\newblock \showarticletitle{The Bitcoin Backbone Protocol: Analysis and
  Applications}. In \bibinfo{booktitle}{\emph{Advances in Cryptology -
  {EUROCRYPT}}}.
\newblock


\bibitem[Ghosh et~al\mbox{.}(2007)]%
        {GhoshNN17}
\bibfield{author}{\bibinfo{person}{Satrajit Ghosh},
  \bibinfo{person}{Jesper~Buus Nielsen}, {and} \bibinfo{person}{Tobias
  Nilges}.} \bibinfo{year}{2007}\natexlab{}.
\newblock \showarticletitle{Maliciously Secure Oblivious Linear Function
  Evaluation with Constant Overhead}. In
  \bibinfo{booktitle}{\emph{{ASIACRYPT}}}.
\newblock


\bibitem[Ghosh and Nilges(2019)]%
        {GhoshN19}
\bibfield{author}{\bibinfo{person}{Satrajit Ghosh} {and}
  \bibinfo{person}{Tobias Nilges}.} \bibinfo{year}{2019}\natexlab{}.
\newblock \showarticletitle{An Algebraic Approach to Maliciously Secure Private
  Set Intersection}. In \bibinfo{booktitle}{\emph{{EUROCRYPT}}}.
\newblock


\bibitem[Goldreich(2004)]%
        {DBLP:books/cu/Goldreich2004}
\bibfield{author}{\bibinfo{person}{Oded Goldreich}.}
  \bibinfo{year}{2004}\natexlab{}.
\newblock \bibinfo{booktitle}{\emph{The Foundations of Cryptography - Volume 2,
  Basic Applications}}.
\newblock \bibinfo{publisher}{Cambridge University Press}.
\newblock
\showISBNx{0-521-83084-2}


\bibitem[Golob et~al\mbox{.}(2023)]%
        {golob2023decentralized}
\bibfield{author}{\bibinfo{person}{Steven Golob}, \bibinfo{person}{Sikha
  Pentyala}, \bibinfo{person}{Rafael Dowsley}, \bibinfo{person}{Bernardo
  David}, \bibinfo{person}{Mario Larangeira}, \bibinfo{person}{Martine
  De~Cock}, {and} \bibinfo{person}{Anderson Nascimento}.}
  \bibinfo{year}{2023}\natexlab{}.
\newblock \showarticletitle{A Decentralized Information Marketplace Preserving
  Input and Output Privacy}.
\newblock  (\bibinfo{year}{2023}).
\newblock


\bibitem[Inbar et~al\mbox{.}(2018)]%
        {DBLP:conf/scn/InbarOP18}
\bibfield{author}{\bibinfo{person}{Roi Inbar}, \bibinfo{person}{Eran Omri},
  {and} \bibinfo{person}{Benny Pinkas}.} \bibinfo{year}{2018}\natexlab{}.
\newblock \showarticletitle{Efficient Scalable Multiparty Private
  Set-Intersection via Garbled Bloom Filters}. In
  \bibinfo{booktitle}{\emph{{SCN}}}.
\newblock


\bibitem[Ion et~al\mbox{.}(2020)]%
        {IonKNPSS0SY20}
\bibfield{author}{\bibinfo{person}{Mihaela Ion}, \bibinfo{person}{Ben Kreuter},
  \bibinfo{person}{Ahmet~Erhan Nergiz}, \bibinfo{person}{Sarvar Patel},
  \bibinfo{person}{Shobhit Saxena}, \bibinfo{person}{Karn Seth},
  \bibinfo{person}{Mariana Raykova}, \bibinfo{person}{David Shanahan}, {and}
  \bibinfo{person}{Moti Yung}.} \bibinfo{year}{2020}\natexlab{}.
\newblock \showarticletitle{On Deploying Secure Computing: Private
  Intersection-Sum-with-Cardinality}. In
  \bibinfo{booktitle}{\emph{EuroS{\&}P}}.
\newblock


\bibitem[Kairouz et~al\mbox{.}(2021)]%
        {KairouzMABBBBCC21}
\bibfield{author}{\bibinfo{person}{Peter Kairouz}, \bibinfo{person}{H.~Brendan
  McMahan}, \bibinfo{person}{Brendan Avent}, \bibinfo{person}{Aur{\'{e}}lien
  Bellet}, \bibinfo{person}{Mehdi Bennis}, \bibinfo{person}{Arjun~Nitin
  Bhagoji}, \bibinfo{person}{Kallista~A. Bonawitz}, \bibinfo{person}{Zachary
  Charles}, \bibinfo{person}{Graham Cormode}, \bibinfo{person}{Rachel
  Cummings}, \bibinfo{person}{Rafael G.~L. D'Oliveira}, \bibinfo{person}{Hubert
  Eichner}, \bibinfo{person}{Salim~El Rouayheb}, \bibinfo{person}{David Evans},
  \bibinfo{person}{Josh Gardner}, \bibinfo{person}{Zachary Garrett},
  \bibinfo{person}{Adri{\`{a}} Gasc{\'{o}}n}, \bibinfo{person}{Badih Ghazi},
  \bibinfo{person}{Phillip~B. Gibbons}, \bibinfo{person}{Marco Gruteser},
  \bibinfo{person}{Za{\"{\i}}d Harchaoui}, \bibinfo{person}{Chaoyang He},
  \bibinfo{person}{Lie He}, \bibinfo{person}{Zhouyuan Huo},
  \bibinfo{person}{Ben Hutchinson}, \bibinfo{person}{Justin Hsu},
  \bibinfo{person}{Martin Jaggi}, \bibinfo{person}{Tara Javidi},
  \bibinfo{person}{Gauri Joshi}, \bibinfo{person}{Mikhail Khodak},
  \bibinfo{person}{Jakub Kone{\v{c}}n{\'y}}, \bibinfo{person}{Aleksandra
  Korolova}, \bibinfo{person}{Farinaz Koushanfar}, \bibinfo{person}{Sanmi
  Koyejo}, \bibinfo{person}{Tancr{\`{e}}de Lepoint}, \bibinfo{person}{Yang
  Liu}, \bibinfo{person}{Prateek Mittal}, \bibinfo{person}{Mehryar Mohri},
  \bibinfo{person}{Richard Nock}, \bibinfo{person}{Ayfer {\"{O}}zg{\"{u}}r},
  \bibinfo{person}{Rasmus Pagh}, \bibinfo{person}{Hang Qi},
  \bibinfo{person}{Daniel Ramage}, \bibinfo{person}{Ramesh Raskar},
  \bibinfo{person}{Mariana Raykova}, \bibinfo{person}{Dawn Song},
  \bibinfo{person}{Weikang Song}, \bibinfo{person}{Sebastian~U. Stich},
  \bibinfo{person}{Ziteng Sun}, \bibinfo{person}{Ananda~Theertha Suresh},
  \bibinfo{person}{Florian Tram{\`{e}}r}, \bibinfo{person}{Praneeth Vepakomma},
  \bibinfo{person}{Jianyu Wang}, \bibinfo{person}{Li Xiong},
  \bibinfo{person}{Zheng Xu}, \bibinfo{person}{Qiang Yang},
  \bibinfo{person}{Felix~X. Yu}, \bibinfo{person}{Han Yu}, {and}
  \bibinfo{person}{Sen Zhao}.} \bibinfo{year}{2021}\natexlab{}.
\newblock \showarticletitle{Advances and Open Problems in Federated Learning}.
\newblock \bibinfo{journal}{\emph{Found. Trends Mach. Learn.}}
  (\bibinfo{year}{2021}).
\newblock


\bibitem[Kamara et~al\mbox{.}(2014)]%
        {kamarascaling}
\bibfield{author}{\bibinfo{person}{Seny Kamara}, \bibinfo{person}{Payman
  Mohassel}, \bibinfo{person}{Mariana Raykova}, {and} \bibinfo{person}{Saeed
  Sadeghian}.} \bibinfo{year}{2014}\natexlab{}.
\newblock \showarticletitle{Scaling Private Set Intersection to Billion-Element
  Sets}. In \bibinfo{booktitle}{\emph{{FC}}}.
\newblock


\bibitem[Katz and Lindell(2007)]%
        {DBLP:books/crc/KatzLindell2007}
\bibfield{author}{\bibinfo{person}{Jonathan Katz} {and} \bibinfo{person}{Yehuda
  Lindell}.} \bibinfo{year}{2007}\natexlab{}.
\newblock \bibinfo{booktitle}{\emph{Introduction to Modern Cryptography}}.
\newblock \bibinfo{publisher}{CRC Press}.
\newblock
\showISBNx{978-1-58488-551-1}


\bibitem[Kerschbaum(2012)]%
        {kerschbaum12}
\bibfield{author}{\bibinfo{person}{Florian Kerschbaum}.}
  \bibinfo{year}{2012}\natexlab{}.
\newblock \showarticletitle{Outsourced private set intersection using
  homomorphic encryption}. In \bibinfo{booktitle}{\emph{{ASIACCS}}}.
\newblock


\bibitem[Kiayias et~al\mbox{.}(2017)]%
        {KiayiasRDO17}
\bibfield{author}{\bibinfo{person}{Aggelos Kiayias}, \bibinfo{person}{Alexander
  Russell}, \bibinfo{person}{Bernardo David}, {and} \bibinfo{person}{Roman
  Oliynykov}.} \bibinfo{year}{2017}\natexlab{}.
\newblock \showarticletitle{Ouroboros: {A} Provably Secure Proof-of-Stake
  Blockchain Protocol}. In \bibinfo{booktitle}{\emph{{CRYPTO}}}.
\newblock


\bibitem[Kiayias et~al\mbox{.}(2016)]%
        {KiayiasZZ16}
\bibfield{author}{\bibinfo{person}{Aggelos Kiayias},
  \bibinfo{person}{Hong{-}Sheng Zhou}, {and} \bibinfo{person}{Vassilis Zikas}.}
  \bibinfo{year}{2016}\natexlab{}.
\newblock \showarticletitle{Fair and Robust Multi-party Computation Using a
  Global Transaction Ledger}. In \bibinfo{booktitle}{\emph{{EUROCRYPT}}}.
\newblock


\bibitem[Kissner and Song(2005)]%
        {DBLP:conf/crypto/KissnerS05}
\bibfield{author}{\bibinfo{person}{Lea Kissner} {and}
  \bibinfo{person}{Dawn~Xiaodong Song}.} \bibinfo{year}{2005}\natexlab{}.
\newblock \showarticletitle{Privacy-Preserving Set Operations}. In
  \bibinfo{booktitle}{\emph{{CRYPTO}}}.
\newblock


\bibitem[Koch et~al\mbox{.}(2022)]%
        {KochKMMR22}
\bibfield{author}{\bibinfo{person}{Karl Koch}, \bibinfo{person}{Stephan Krenn},
  \bibinfo{person}{Tilen Marc}, \bibinfo{person}{Stefan More}, {and}
  \bibinfo{person}{Sebastian Ramacher}.} \bibinfo{year}{2022}\natexlab{}.
\newblock \showarticletitle{{KRAKEN:} a privacy-preserving data market for
  authentic data}. In \bibinfo{booktitle}{\emph{{DE}}}.
  \bibinfo{publisher}{{ACM}}.
\newblock


\bibitem[Kolesnikov et~al\mbox{.}(2017)]%
        {DBLP:conf/ccs/KolesnikovMPRT17}
\bibfield{author}{\bibinfo{person}{Vladimir Kolesnikov}, \bibinfo{person}{Naor
  Matania}, \bibinfo{person}{Benny Pinkas}, \bibinfo{person}{Mike Rosulek},
  {and} \bibinfo{person}{Ni Trieu}.} \bibinfo{year}{2017}\natexlab{}.
\newblock \showarticletitle{Practical Multi-party Private Set Intersection from
  Symmetric-Key Techniques}. In \bibinfo{booktitle}{\emph{{CCS}}}.
\newblock


\bibitem[McMahan et~al\mbox{.}(2016)]%
        {McMahanMRA16}
\bibfield{author}{\bibinfo{person}{H.~Brendan McMahan}, \bibinfo{person}{Eider
  Moore}, \bibinfo{person}{Daniel Ramage}, {and}
  \bibinfo{person}{Blaise~Ag{\"{u}}era y Arcas}.}
  \bibinfo{year}{2016}\natexlab{}.
\newblock \showarticletitle{Federated Learning of Deep Networks using Model
  Averaging}.
\newblock \bibinfo{journal}{\emph{CoRR}} (\bibinfo{year}{2016}).
\newblock


\bibitem[Moran et~al\mbox{.}(2009)]%
        {MoranNS09}
\bibfield{author}{\bibinfo{person}{Tal Moran}, \bibinfo{person}{Moni Naor},
  {and} \bibinfo{person}{Gil Segev}.} \bibinfo{year}{2009}\natexlab{}.
\newblock \showarticletitle{An Optimally Fair Coin Toss}. In
  \bibinfo{booktitle}{\emph{{TCC}}}.
\newblock


\bibitem[Nakamoto(2019)]%
        {bitcoin}
\bibfield{author}{\bibinfo{person}{Satoshi Nakamoto}.}
  \bibinfo{year}{2019}\natexlab{}.
\newblock \bibinfo{booktitle}{\emph{Bitcoin: A peer-to-peer electronic cash
  system}}.
\newblock \bibinfo{type}{{T}echnical {R}eport}.
\newblock


\bibitem[Nevo et~al\mbox{.}(2021)]%
        {NevoTY21}
\bibfield{author}{\bibinfo{person}{Ofri Nevo}, \bibinfo{person}{Ni Trieu},
  {and} \bibinfo{person}{Avishay Yanai}.} \bibinfo{year}{2021}\natexlab{}.
\newblock \showarticletitle{Simple, Fast Malicious Multiparty Private Set
  Intersection}. In \bibinfo{booktitle}{\emph{{CCS}}}.
\newblock


\bibitem[Nguyen et~al\mbox{.}(2023)]%
        {NguyenPPDSLDH23}
\bibfield{author}{\bibinfo{person}{Dinh~C. Nguyen},
  \bibinfo{person}{Quoc{-}Viet Pham}, \bibinfo{person}{Pubudu~N. Pathirana},
  \bibinfo{person}{Ming Ding}, \bibinfo{person}{Aruna Seneviratne},
  \bibinfo{person}{Zihuai Lin}, \bibinfo{person}{Octavia~A. Dobre}, {and}
  \bibinfo{person}{Won{-}Joo Hwang}.} \bibinfo{year}{2023}\natexlab{}.
\newblock \showarticletitle{Federated Learning for Smart Healthcare: {A}
  Survey}.
\newblock \bibinfo{journal}{\emph{{ACM} Comput. Surv.}} (\bibinfo{year}{2023}).
\newblock


\bibitem[Papamanthou et~al\mbox{.}(2008)]%
        {DBLP:conf/ccs/PapamanthouTT08}
\bibfield{author}{\bibinfo{person}{Charalampos Papamanthou},
  \bibinfo{person}{Roberto Tamassia}, {and} \bibinfo{person}{Nikos
  Triandopoulos}.} \bibinfo{year}{2008}\natexlab{}.
\newblock \showarticletitle{Authenticated hash tables}. In
  \bibinfo{booktitle}{\emph{{ACM CCS}}}. \bibinfo{pages}{437--448}.
\newblock


\bibitem[Pedersen(1991)]%
        {Pedersen91}
\bibfield{author}{\bibinfo{person}{Torben~P. Pedersen}.}
  \bibinfo{year}{1991}\natexlab{}.
\newblock \showarticletitle{Non-Interactive and Information-Theoretic Secure
  Verifiable Secret Sharing}. In \bibinfo{booktitle}{\emph{{CRYPTO}}}.
\newblock


\bibitem[Pinkas et~al\mbox{.}(2015)]%
        {DBLP:conf/uss/Pinkas0SZ15}
\bibfield{author}{\bibinfo{person}{Benny Pinkas}, \bibinfo{person}{Thomas
  Schneider}, \bibinfo{person}{Gil Segev}, {and} \bibinfo{person}{Michael
  Zohner}.} \bibinfo{year}{2015}\natexlab{}.
\newblock \showarticletitle{Phasing: Private Set Intersection Using
  Permutation-based Hashing}. In \bibinfo{booktitle}{\emph{{USENIX} Security}}.
\newblock


\bibitem[Raghuraman and Rindal(2022)]%
        {RaghuramanR22}
\bibfield{author}{\bibinfo{person}{Srinivasan Raghuraman} {and}
  \bibinfo{person}{Peter Rindal}.} \bibinfo{year}{2022}\natexlab{}.
\newblock \showarticletitle{Blazing Fast {PSI} from Improved {OKVS} and
  Subfield {VOLE}}. In \bibinfo{booktitle}{\emph{{CCS}}}.
\newblock


\bibitem[Schneier(1996)]%
        {Schneier0078909}
\bibfield{author}{\bibinfo{person}{Bruce Schneier}.}
  \bibinfo{year}{1996}\natexlab{}.
\newblock \bibinfo{booktitle}{\emph{Applied cryptography - protocols,
  algorithms, and source code in C, 2nd Edition}}.
\newblock \bibinfo{publisher}{Wiley}.
\newblock


\bibitem[Tamrakar et~al\mbox{.}(2017)]%
        {TamrakarLPEPA17}
\bibfield{author}{\bibinfo{person}{Sandeep Tamrakar}, \bibinfo{person}{Jian
  Liu}, \bibinfo{person}{Andrew Paverd}, \bibinfo{person}{Jan{-}Erik Ekberg},
  \bibinfo{person}{Benny Pinkas}, {and} \bibinfo{person}{N. Asokan}.}
  \bibinfo{year}{2017}\natexlab{}.
\newblock \showarticletitle{The Circle Game: Scalable Private Membership Test
  Using Trusted Hardware}. In \bibinfo{booktitle}{\emph{Proceedings of the 2017
  {ACM} on Asia Conference on Computer and Communications Security, AsiaCCS
  2017, Abu Dhabi, United Arab Emirates, April 2-6, 2017}}.
\newblock


\bibitem[Tian et~al\mbox{.}(2022)]%
        {abs-2210-08723}
\bibfield{author}{\bibinfo{person}{Zhihua Tian}, \bibinfo{person}{Jian Liu},
  \bibinfo{person}{Jingyu Li}, \bibinfo{person}{Xinle Cao},
  \bibinfo{person}{Ruoxi Jia}, {and} \bibinfo{person}{Kui Ren}.}
  \bibinfo{year}{2022}\natexlab{}.
\newblock \showarticletitle{Private Data Valuation and Fair Payment in Data
  Marketplaces}.
\newblock \bibinfo{journal}{\emph{CoRR}} (\bibinfo{year}{2022}).
\newblock


\bibitem[Wood et~al\mbox{.}(2014)]%
        {ethereum}
\bibfield{author}{\bibinfo{person}{Gavin Wood} {et~al\mbox{.}}}
  \bibinfo{year}{2014}\natexlab{}.
\newblock \showarticletitle{Ethereum: A secure decentralised generalised
  transaction ledger}.
\newblock \bibinfo{journal}{\emph{Ethereum project yellow paper}}
  (\bibinfo{year}{2014}).
\newblock


\bibitem[Yang et~al\mbox{.}(2019)]%
        {YangLCT19}
\bibfield{author}{\bibinfo{person}{Qiang Yang}, \bibinfo{person}{Yang Liu},
  \bibinfo{person}{Tianjian Chen}, {and} \bibinfo{person}{Yongxin Tong}.}
  \bibinfo{year}{2019}\natexlab{}.
\newblock \showarticletitle{Federated Machine Learning: Concept and
  Applications}.
\newblock \bibinfo{journal}{\emph{{ACM} Trans. Intell. Syst. Technol.}}
  (\bibinfo{year}{2019}).
\newblock


\bibitem[Zhang et~al\mbox{.}(2019)]%
        {ZhangLLJL19}
\bibfield{author}{\bibinfo{person}{En Zhang}, \bibinfo{person}{Feng{-}Hao Liu},
  \bibinfo{person}{Qiqi Lai}, \bibinfo{person}{Ganggang Jin}, {and}
  \bibinfo{person}{Yu Li}.} \bibinfo{year}{2019}\natexlab{}.
\newblock \showarticletitle{Efficient Multi-Party Private Set Intersection
  Against Malicious Adversaries}. In \bibinfo{booktitle}{\emph{{CCSW}}}.
\newblock


\bibitem[Zhao and Chow(2018)]%
        {c18}
\bibfield{author}{\bibinfo{person}{Yongjun Zhao} {and} \bibinfo{person}{Sherman
  S.~M. Chow}.} \bibinfo{year}{2018}\natexlab{}.
\newblock \showarticletitle{Can you find the one for me? Privacy-Preserving
  Matchmaking via Threshold {PSI}}.
\newblock \bibinfo{journal}{\emph{ePrint Arch.}} (\bibinfo{year}{2018}).
\newblock


\end{thebibliography}


\begin{thebibliography}{10}
\providecommand{\url}[1]{\texttt{#1}}
\providecommand{\urlprefix}{URL }

\bibitem{eopsi}
Abadi, A., Terzis, S., Metere, R., Dong, C.: Efficient delegated private set
  intersection on outsourced private datasets. IEEE TDSC  (2018)

\bibitem{AbadiDMT22}
Abadi, A., Dong, C., Murdoch, S.J., Terzis, S.: Multi-party updatable delegated
  private set intersection. In: {FC} (2022)

\bibitem{AbadiMZ21}
Abadi, A., Murdoch, S.J., Zacharias, T.: Polynomial representation is tricky:
  Maliciously secure private set intersection revisited. {IACR} Cryptol. ePrint
  Arch.  (2021), \url{https://eprint.iacr.org/2021/1009}

\bibitem{DBLP:conf/fc/AbadiTD16}
Abadi, A., Terzis, S., Dong, C.: {VD-PSI:} verifiable delegated private set
  intersection on outsourced private datasets. In: {FC} (2016)

\bibitem{Feather2020}
Abadi, A., Terzis, S., Dong, C.: Feather: Lightweight multi-party updatable
  delegated private set intersection. {IACR} Cryptol. ePrint Arch.  (2020)

\bibitem{BeimelOO10}
Beimel, A., Omri, E., Orlov, I.: Protocols for multiparty coin toss with
  dishonest majority. In: Advances in Cryptology - {CRYPTO} 2010, 30th Annual
  Cryptology Conference, Santa Barbara, CA, USA, August 15-19, 2010.
  Proceedings (2010)

\bibitem{Ben-EfraimNOP21}
Ben{-}Efraim, A., Nissenbaum, O., Omri, E., Paskin{-}Cherniavsky, A.: Psimple:
  Practical multiparty maliciously-secure private set intersection, eprint
  arch.  (2021)

\bibitem{DBLP:conf/stoc/BerenbrinkCSV00}
Berenbrink, P., Czumaj, A., Steger, A., V{\"{o}}cking, B.: Balanced
  allocations: the heavily loaded case. In: {STOC}. pp. 745--754 (2000)

\bibitem{Blum82}
Blum, M.: Coin flipping by telephone - {A} protocol for solving impossible
  problems. In: COMPCON'82, Digest of Papers, Twenty-Fourth {IEEE} Computer
  Society International Conference, San Francisco, California, USA, February
  22-25, 1982. pp. 133--137. {IEEE} Computer Society (1982)

\bibitem{BonehGHWW13}
Boneh, D., Gentry, C., Halevi, S., Wang, F., Wu, D.J.: Private database queries
  using somewhat homomorphic encryption. In: {ACNS} (2013)

\bibitem{BrickellPSW07}
Brickell, J., Porter, D.E., Shmatikov, V., Witchel, E.: Privacy-preserving
  remote diagnostics. In: {CCS} (2007)

\bibitem{ChandranD0OSS21}
Chandran, N., Dasgupta, N., Gupta, D., Obbattu, S.L.B., Sekar, S., Shah, A.:
  Efficient linear multiparty {PSI} and extensions to circuit/quorum {PSI}. In:
  {CCS} (2021)

\bibitem{ChengLCY20}
Cheng, Y., Liu, Y., Chen, T., Yang, Q.: Federated learning for
  privacy-preserving {AI}. Commun. {ACM}  (2020)

\bibitem{DebnathD14}
Debnath, S.K., Dutta, R.: A fair and efficient mutual private set intersection
  protocol from a two-way oblivious pseudorandom function. In: {ICISC} (2014)

\bibitem{DebnathD16}
Debnath, S.K., Dutta, R.: New realizations of efficient and secure private set
  intersection protocols preserving fairness. In: {ICISC} (2016)

\bibitem{DebnathD16-}
Debnath, S.K., Dutta, R.: Towards fair mutual private set intersection with
  linear complexity. Secur. Commun. Networks  (2016)

\bibitem{DBLP:conf/dbsec/DongCCR13}
Dong, C., Chen, L., Camenisch, J., Russello, G.: Fair private set intersection
  with a semi-trusted arbiter. In: {DBSec}. pp. 128--144 (2013)

\bibitem{dong2017betrayal}
Dong, C., Wang, Y., Aldweesh, A., McCorry, P., van Moorsel, A.: Betrayal,
  distrust, and rationality: Smart counter-collusion contracts for verifiable
  cloud computing. In: Proceedings of the 2017 ACM SIGSAC Conference on
  Computer and Communications Security. pp. 211--227. ACM (2017)

\bibitem{DBLP:journals/ibmrd/Dorn62}
Dorn, W.S.: Generalizations of horner's rule for polynomial evaluation. {IBM}
  Journal of Research and Development  (1962)

\bibitem{DBLP:conf/asiacrypt/DuongPT20}
Duong, T., Phan, D.H., Trieu, N.: Catalic: Delegated {PSI} cardinality with
  applications to contact tracing. In: {ASIACRYPT} (2020)

\bibitem{DBLP:conf/eurocrypt/FreedmanNP04}
Freedman, M.J., Nissim, K., Pinkas, B.: Efficient private matching and set
  intersection. In: {EUROCRYPT} (2004)

\bibitem{GarayKL15}
Garay, J.A., Kiayias, A., Leonardos, N.: The bitcoin backbone protocol:
  Analysis and applications. In: Advances in Cryptology - {EUROCRYPT} (2015)

\bibitem{GhoshNN17}
Ghosh, S., Nielsen, J.B., Nilges, T.: Maliciously secure oblivious linear
  function evaluation with constant overhead. In: {ASIACRYPT} (2007)

\bibitem{GhoshN19}
Ghosh, S., Nilges, T.: An algebraic approach to maliciously secure private set
  intersection. In: {EUROCRYPT} (2019)

\bibitem{GhoshS19}
Ghosh, S., Simkin, M.: The communication complexity of threshold private set
  intersection. In: {CRYPTO} (2019)

\bibitem{DBLP:books/cu/Goldreich2004}
Goldreich, O.: The Foundations of Cryptography - Volume 2, Basic Applications.
  Cambridge University Press (2004)

\bibitem{DBLP:conf/scn/InbarOP18}
Inbar, R., Omri, E., Pinkas, B.: Efficient scalable multiparty private
  set-intersection via garbled bloom filters. In: {SCN} (2018)

\bibitem{IonKNPSS0SY20}
Ion, M., Kreuter, B., Nergiz, A.E., Patel, S., Saxena, S., Seth, K., Raykova,
  M., Shanahan, D., Yung, M.: On deploying secure computing: Private
  intersection-sum-with-cardinality. In: {IEEE} European Symposium on Security
  and Privacy, EuroS{\&}P 2020, Genoa, Italy, September 7-11, 2020 (2020)

\bibitem{kamarascaling}
Kamara, S., Mohassel, P., Raykova, M., Sadeghian, S.: Scaling private set
  intersection to billion-element sets. In: {FC} (2014)

\bibitem{DBLP:books/crc/KatzLindell2007}
Katz, J., Lindell, Y.: Introduction to Modern Cryptography. CRC Press (2007)

\bibitem{kerschbaum12}
Kerschbaum, F.: Outsourced private set intersection using homomorphic
  encryption. In: {ASIACCS} (2012)

\bibitem{KiayiasRDO17}
Kiayias, A., Russell, A., David, B., Oliynykov, R.: Ouroboros: {A} provably
  secure proof-of-stake blockchain protocol. In: Advances in Cryptology -
  {CRYPTO} 2017 - 37th Annual International Cryptology Conference, Santa
  Barbara, CA, USA, August 20-24, 2017, Proceedings, Part {I} (2017)

\bibitem{KiayiasZZ16}
Kiayias, A., Zhou, H., Zikas, V.: Fair and robust multi-party computation using
  a global transaction ledger. In: Fischlin, M., Coron, J. (eds.) {EUROCRYPT}
  (2016)

\bibitem{DBLP:conf/crypto/KissnerS05}
Kissner, L., Song, D.X.: Privacy-preserving set operations. In: {CRYPTO} (2005)

\bibitem{DBLP:conf/ccs/KolesnikovMPRT17}
Kolesnikov, V., Matania, N., Pinkas, B., Rosulek, M., Trieu, N.: Practical
  multi-party private set intersection from symmetric-key techniques. In: {CCS}
  (2017)

\bibitem{LuD20}
Lu, L., Ding, N.: Multi-party private set intersection in vertical federated
  learning. In: {IEEE} International Conference on Trust, Security and Privacy
  in Computing and Communications, TrustCom 2020, Guangzhou, China, December
  29, 2020 - January 1, 2021 (2020)

\bibitem{MoranNS09}
Moran, T., Naor, M., Segev, G.: An optimally fair coin toss. In: Theory of
  Cryptography, 6th Theory of Cryptography Conference, {TCC} 2009, San
  Francisco, CA, USA, March 15-17, 2009. Proceedings (2009)

\bibitem{bitcoin}
Nakamoto, S.: Bitcoin: A peer-to-peer electronic cash system. Tech. rep. (2019)

\bibitem{NevoTY21}
Nevo, O., Trieu, N., Yanai, A.: Simple, fast malicious multiparty private set
  intersection. In: {CCS} '21: 2021 {ACM} {SIGSAC} Conference on Computer and
  Communications Security (2021)

\bibitem{DBLP:conf/ccs/PapamanthouTT08}
Papamanthou, C., Tamassia, R., Triandopoulos, N.: Authenticated hash tables.
  In: {ACM CCS}. pp. 437--448 (2008)

\bibitem{Pedersen91}
Pedersen, T.P.: Non-interactive and information-theoretic secure verifiable
  secret sharing. In: {CRYPTO} (1991)

\bibitem{RaghuramanR22}
Raghuraman, S., Rindal, P.: Blazing fast {PSI} from improved {OKVS} and
  subfield {VOLE}. In: Proceedings of the 2022 {ACM} {SIGSAC} Conference on
  Computer and Communications Security, {CCS} 2022, Los Angeles, CA, USA,
  November 7-11, 2022 (2022)

\bibitem{Schneier0078909}
Schneier, B.: Applied cryptography - protocols, algorithms, and source code in
  C, 2nd Edition. Wiley (1996)

\bibitem{TamrakarLPEPA17}
Tamrakar, S., Liu, J., Paverd, A., Ekberg, J., Pinkas, B., Asokan, N.: The
  circle game: Scalable private membership test using trusted hardware. In:
  Proceedings of the 2017 {ACM} on Asia Conference on Computer and
  Communications Security, AsiaCCS 2017, Abu Dhabi, United Arab Emirates, April
  2-6, 2017 (2017)

\bibitem{ThomasPYRKIBPPB19}
Thomas, K., Pullman, J., Yeo, K., Raghunathan, A., Kelley, P.G., Invernizzi,
  L., Benko, B., Pietraszek, T., Patel, S., Boneh, D., Bursztein, E.:
  Protecting accounts from credential stuffing with password breach alerting.
  In: {USENIX} Security (2019)

\bibitem{ethereum}
Wood, G., et~al.: Ethereum: A secure decentralised generalised transaction
  ledger. Ethereum project yellow paper  (2014)

\bibitem{ZhangLLJL19}
Zhang, E., Liu, F., Lai, Q., Jin, G., Li, Y.: Efficient multi-party private set
  intersection against malicious adversaries. In: {CCSW} (2019)

\bibitem{c18}
Zhao, Y., Chow, S.S.M.: Can you find the one for me? privacy-preserving
  matchmaking via threshold {PSI}. ePrint Arch.  (2018)

\end{thebibliography}

\appendix

\clearpage


\section{Full Description of the Security Model}\label{sec::sec-model-long}

In this paper, we employ the simulation-based paradigm of secure computation \cite{DBLP:books/cu/Goldreich2004} to establish and demonstrate the security of the proposed protocols. As our protocols encompass both static active and passive adversaries, we will provide formal definitions for each type.%

 \subsubsection{Two-party Computation.} A two-party protocol, denoted as $\Gamma$, is characterized by describing a random process that maps pairs of inputs to pairs of outputs, with one output for each party involved. This process is commonly referred to as functionality, denoted as  $f:\{0,1\}^{\st *}\times\{0,1\}^{\st *}\rightarrow\{0,1\}^{\st *}\times\{0,1\}^{\st *}$, where $f:=(f_{\st 1},f_{\st 2})$. 
 
 For every input pair $(x,y)$, the output pair is a random variable $(f_{\st 1} (x,y), $ $f_{\st 2} (x,y))$, such that the party with input $x$ wishes to obtain $f_{\st 1} (x,y)$ while the party with input $y$ wishes to receive $f_{\st 2} (x,y)$. When $f$ is deterministic, then $f_{\st 1} =f_{\st 2}$. In the setting where $f$ is asymmetric and only one party (say the first one) receives the result, $f$ is defined as $f:=(f_{\st 1}(x,y), \bot)$.

 \subsubsection{Security in the Presence of Passive Adversaries.}  In the passive adversarial model, the party corrupted by such an adversary correctly follows the protocol specification. Nonetheless, the adversary obtains the internal state of the corrupted party, including the transcript of all the messages received, and tries to use this to learn information that should remain private. 
 In loose terms, a protocol is considered secure when anything that a party can compute within the protocol can also be computed using only its input and output. In the simulation-based model, the essential condition is that a party's view during a protocol's execution can be recreated or simulated solely based on its input and output. This condition ensures that the parties gain no additional knowledge from the protocol's execution.
 
In a more formal sense, party $i$’s view (during the execution of $\Gamma$) on input pair  $(x, y)$ is denoted by $\mathsf{View}_{\st i}^{\st \Gamma}(x,y)$ and equals $(w, r^{\st i}, m_{\st 1}^{\st i}, ..., m_{\st t}^{\st i})$, where $w\in\{x,y\}$ is the input of $i^{\st th}$ party, $r_{\st i}$ is the outcome of this party's internal random coin tosses, and $m_{\st j}^{\st i}$ represents the $j^{\st th}$ message this party receives.  The output of the $i^{\st th}$ party during the execution of $\Gamma$ on $(x, y)$ is denoted by $\mathsf{Output}_{\st 1}^{\st \Gamma}(x,y)$ and can be generated from its view of the execution.  The joint output of both parties is denoted by $\mathsf{Output}^{\st \Gamma}(x,y):=(\mathsf{Output}_{\st 1}^{\st \Gamma}(x,y), \mathsf{Output}_{\st 2}^{\st \Gamma}(x,y))$.

\begin{definition}
Let $f$ be the deterministic functionality defined above. Protocol $\Gamma$ security computes $f$ in the presence of a  passive adversary if there exist polynomial-time algorithms $(\mathsf {Sim}_{\st 1}, \mathsf {Sim}_{\st 2})$ such that:
\end{definition}
  \begin{equation*}
  \{\mathsf {Sim}_{\st 1}(x,f_{\st 1}(x,y))\}_{\st x,y}\stackrel{c}{\equiv} \{\mathsf{View}_{\st 1}^{\st \Gamma}(x,y) \}_{\st x,y}\\
  \end{equation*}
  \begin{equation*}
    \{\mathsf{Sim}_{\st 2}(x,f_{\st 2}(x,y))\}_{\st x,y}\stackrel{c}{\equiv} \{\mathsf{View}_{\st 2}^{\st \Gamma}(x,y) \}_{\st x,y}
  \end{equation*}
  %


 \subsubsection{Security in the Presence of Active Adversaries.}

the corrupted party has the freedom to deviate from the protocol specification in arbitrary ways, either to gain knowledge about the private inputs of the other parties or to manipulate the computation's result. In such cases, the adversary may choose not to utilize the provided input. Therefore, in addition to the concern that a corrupted party might gain unauthorized information, correctness is also a crucial requirement. This entails that a corrupted party must not have the ability to cause incorrect distribution of the output. Furthermore, we demand input independence, which means that a corrupted party cannot make its input contingent on the input of the other party.

To comprehend the potential threats, the security of a protocol is assessed by contrasting the actions an adversary can take in the actual protocol with those in an ideal scenario that is inherently secure. This comparison is formalized by examining an ideal computation that involves an incorruptible TTP to whom the parties submit their inputs and receive the output corresponding to the ideal functionality. Below, we provide descriptions of the executions in both the ideal and real models.
 
Initially, we describe the execution in the ideal model. Let $P_{\st 1}$ and $P_{\st 2}$ be the parties participating in the
protocol, $i\in \{0, 1\}$ be the index of the corrupted party, and $\mathcal A$ be a non-uniform
probabilistic polynomial-time adversary. Moreover, let $z$ be an auxiliary input given to $\mathcal A$ while  $x$ and $y$ be the input of party $P_{\st 1}$ and $P_{\st 2}$  respectively.  The honest party, $P_{\st j}$, sends its received input to TTP.  The corrupted party $P_{\st i}$ may either abort (by replacing the input with a special abort message $\Lambda_{\st i}$),  send its received input, or send some other input of the same length to TTP. This decision is made by the adversary and may depend on the input value of $P_{\st i}$ and $z$. If TTP receives $\Lambda_{\st i}$, then it sends $\Lambda_{\st i}$ to the honest party, and the ideal execution terminates.  Upon obtaining an input pair $(x, y)$, TTP computes $f_{\st 1}(x, y)$ and $f_{\st 2}(x, y)$. It first sends $f_{\st i}(x, y)$ to  $P_{\st i}$ which replies with ``continue'' or $\Lambda_{\st i}$. In the former case, TTP sends  $f_{\st j}(x, y)$ to  $P_{\st j}$ and in the latter it sends $\Lambda_{\st i}$ to  $P_{\st j}$. The honest party always outputs the message that it obtained from TTP. 

A malicious party may output an arbitrary function of its initial inputs and the message it has obtained from TTP.  The ideal execution of $f$ on inputs $(x, y)$ and $z$ is denoted by $\mathsf{Ideal}^{\st f}_{\st\mathcal{A}(z), i}(x,y)$ and is defined as the output pair of the honest party and $\mathcal{A}$ from the above ideal execution.  In the real model, the real two-party protocol $\Gamma$ is executed
without the involvement of TTP. In this setting, $\mathcal{A}$ sends all messages on
behalf of the corrupted party and may follow an arbitrary strategy.
The honest party follows the instructions of $\Gamma$. The real execution of $\Gamma$ is denoted by $\mathsf{Real}^{\st \Gamma}_{\st\mathcal{A}(z), i}(x,y)$, it is defined as the joint output of the parties engaging in the real execution of $\Gamma$ (on the inputs), in the presence of $\mathcal{A}$.

Next, we proceed to define security. At a high level, the definition asserts that a secure protocol in the real model replicates the behavior of the ideal model. This is expressed by stating that adversaries in the ideal model can simulate the executions of the protocol in the real model. 
 
\begin{definition}\label{def::MPC-active-adv}
Let $f$ be the two-party functionality defined above and $\Gamma$ be a two-party protocol that computes $f$.   Protocol $\Gamma$ securely computes $f$ with abort in the presence of static active adversaries if for every non-uniform probabilistic polynomial time adversary $\mathcal{A}$ for the real model, there exists a non-uniform probabilistic polynomial-time adversary (or simulator) $\mathsf{Sim}$ for the ideal model, such that for every $i\in \{0,1\}$, it holds that: 
\begin{equation*}
\{\mathsf {Ideal}^{\st f}_{\st \mathsf{Sim}(z), i}(x,y)\}_{\st x,y,z}\stackrel{c}{\equiv} \{\mathsf{Real}_{\st \mathcal{A}(z), i}^{\st \Gamma}(x,y) \}_{\st x,y,z}
\end{equation*}
\end{definition}


\section{Counter Collusion Contracts}\label{appendix::Counter-Collusion-Contracts}

In this section, we present Prisoner’s Contract (\SCpc), Colluder’s Contract (\SCcc), Traitor’s Contract (\SCtc) originally proposed by Dong \textit{et al.} \cite{dong2017betrayal}. As we previously stated, we have slightly adjusted the contracts.  We will highlight the adjustments in blue. For the sake of completeness, below we restate the parameters used in these contracts and their relation.

\begin{itemize}
\item[$\bullet$] $\bc$: the bribe paid by the ringleader of the collusion to the other
server in the collusion agreement, in \SCcc.
\item[$\bullet$] $\cc$: a server’s cost for computing the task.
\item[$\bullet$] $\chc$: the fee paid to invoke an auditor for recomputing a task and resolving
disputes.
\item[$\bullet$] $\dc$: the deposit a server needs to pay to be eligible for getting the job.
\item[$\bullet$] $\tc$: the deposit the colluding parties need to pay in the collusion agreement, in \SCcc.
\item[$\bullet$] $\wc$: the amount that a server receives for completing the task.
\item[$\bullet$] $\wc \geq \cc$: the server would not accept underpaid jobs.
\item[$\bullet$] $\chc > 2\wc$: If it does not hold, then there would be no need to use the servers and the auditor would do the computation.
\item [$\bullet$] $(pk,sk)$: an asymmetric-key encryption's public-private key pair belonging to the auditor. 
\end{itemize}
\noindent The following relations need to hold when setting the contracts
in order for the desirable equilibria to hold:
(i) $\dc>\cc+\chc$, (ii) $\bc<\cc$, and (iii) $\tc<\wc-\cc + 2\dc - \chc -\bc$.


\subsection{Prisoner's Contract (\SCpc)}

\SCpc has been designed for outsourcing a certain computation. It is signed by a client who delegates the computation and two other parties  (or servers)  who perform the computation.  This contract tries to incentivize correct computation by using the following idea. It requires each server to pay a deposit before the computation is delegated. If a server behaves honestly, then it can withdraw its deposit. However, if a server cheats (and is detected), its deposit is transferred to the client. When one of the servers is honest and the other one cheats, the honest server receives a reward. This reward is taken from the deposit of the cheating server.  Hence, the goal of \SCpc is to create a Prisoner’s dilemma between the two servers in the following sense. Although the servers may collude with each other (to cut costs and provide identical but incorrect computation results) which leads to a higher payoff than both behaving honestly,  there is an even higher payoff if one of the servers manages to persuade the other server to collude and provide an incorrect result while itself provides a correct result. In this setting, each server knows that collusion is not stable as its counterparty will always try to deviate from the collusion to increase its payoff.  If a server tries to convince its counterparty (without a credible and enforceable promise), then the latter party will consider it as a trap; consequently, collusion will not occur. Below, we restate the contract. 

\begin{enumerate}
\item The contract is signed by three parties; namely, client $ {D}$ and two other parties $E_{\st 1}$ and $E_{\st 2}$. A third-party auditor will resolve any dispute between $ { D}$ and the servers.  \textcolor{blue}{The address of another contract, called $\mathcal{SC}_{\epsi}$, is hardcoded in this contract.} 
\item The servers agree to run computation $f$ on input $x$, both of which have been provided by $ { D}$. 
\item The parties agree on three deadlines $(T_{\st 1}, T_{\st 2}, T_{\st 3})$, where $T_{\st i+1} > T_{\st i}$.
\item $ { D}$ agrees to pay $\wc$ to each server for the correct and on-time computation. Therefore, $ { D}$ deposits $2\cdot \wc$ amount in the contract. \textcolor{blue}{This deposit is transferred from $\mathcal{SC}_{\epsi}$ to this contract}.
\item Each server deposits \textcolor{blue}{$\dc'=$ }$\dc$\textcolor{blue}{$\ +\ \Xc$} amount in the contract. 
\item The servers must pay the deposit before $T_{\st 1}$. If a server fails to meet the deadline, then the contract would refund the parties' deposit (if any) and terminates.  
\item The servers must deliver the computation's result before $T_{\st 2}$. 
\item  The following steps are taken when (i) both servers provided the computation's result or (ii) deadline $T_{\st 2}$ elapsed. 
\begin{enumerate}
\item if both servers failed to deliver the computation's result, then the contract transfers their deposits to \textcolor{blue}{$\mathcal{SC}_{\epsi}$}.
\item if both servers delivered the result, and the results are equal, then (after verifying the results) \textcolor{blue}{this contract} must pay the agreed
amount $\wc$ and refund the deposit $\dc'$ to each server. 
\item\label{prisoner-cont-raise-disp} otherwise, $ { D}$ raises a dispute with the auditor.  
\end{enumerate}
\item When a dispute is raised, the auditor (which is independent of \aud in \fpsi) re-generates the computation's result, \textcolor{blue}{by using algorithm $\mathtt{resComp}(.)$ described shortly in Appendix \ref{sec::auditor-res-Comp}.} Let $y_{\st t}, y_{\st 1},$ and $y_{\st 2}$ be the result computed by the auditor, $E_{\st 1}$, and $E_{\st 2}$ respectively. The auditor uses the following role to identify the cheating party.  
\begin{itemize}
\item[$\bullet$] if $E_{\st i}$ failed to deliver the result (i.e., $y_{\st i}$ is null), then it has cheated. 
\item[$\bullet$] if a result $y_{\st i}$ has been delivered before the deadline and $y_{\st i}\neq y_{\st t}$, then $E_{\st i}$ has cheated. 
\end{itemize}
\textcolor{blue}{The auditor sends its verdict to \SCpc.} 
\item Given the auditor's decision, the dispute is settled according to the following rules.
\begin{itemize}
\item[$\bullet$] if none of the servers cheated, then \textcolor{blue}{this contract} transfers to each server (i) $\wc$ amount for performing the computation and (ii) its deposit, i.e., $\dc'$ amount. The client also pays the auditor $\chc$ amount.  
\item[$\bullet$] if both servers cheated, then \textcolor{blue}{this contract (i) pays the auditor the total amount of $\chc$, taken from the servers' deposit, and (ii) transfers to  $\mathcal{SC}_{\epsi}$ the rest of the deposit, i.e., $2\cdot \dc'-\chc$ amount.} 
\item[$\bullet$] if one of the servers cheated, then \textcolor{blue}{this contract (i) pays the auditor the total amount of $\chc$, taken from the misbehaving server's deposit, (ii) transfers the honest server's deposit (i.e., $\dc'$ amount) back to this server,  (iii) transfers to the honest server $\wc+\dc-\chc$ amount (which covers its computation cost and the reward), and  (iv) transfers to $\mathcal{SC}_{\epsi}$ the rest of the misbehaving server's deposit, i.e., $\Xc$ amount.} The cheating server receives nothing.  
\end{itemize} 
\item After deadline $T_{\st 3}$, if $ { D}$ has neither paid nor raised a dispute, then \textcolor{blue}{this contract} pays $\wc$ to each server which delivered a result before deadline $T_{\st 2}$ and refunds each server its deposit, i.e., $\dc'$ amount. Any deposit left after that will be transferred to \textcolor{blue}{$\mathcal{SC}_{\epsi}$}. 
\end{enumerate}

Now, we explain why we have made the above changes to the original \SCpc of Dong \et \cite{dong2017betrayal}. In the original \SCpc (a) the client does not deposit any amount in this contract; instead, it directly sends its coins to a server (and auditor) according to the auditor's decision,  (b) the computation correctness is determined only within this contract (with the involvement of the auditor if required), and (c)  the auditor simply re-generates the computation's result given the computation's plaintext inputs.  Nevertheless, in \epsi, (1) \emph{all clients} need to deposit a certain amount in $\mathcal{SC}_{\epsi}$ and only the contracts must transfer the parties' deposit,  (2) $\mathcal{SC}_{\epsi}$ also needs to verify a part of the computation's correctness without the involvement of the auditor and accordingly distribute the parties deposit based on the verification's outcome,  and (3) the auditor must be able to re-generate the computation's result without being able to learn the computation's plaintext input, i.e., elements of the set. Therefore, we have included the address of $\mathcal{SC}_{\epsi}$ in \SCpc to let the parties' deposit move between the two contracts (if necessary) and allowed \SCpc to distribute the parties' deposit; thus, the requirements in points (1) and (2) are met. To meet the requirement in point (3) above, we have included a new algorithm, called $\mathtt{resComp}(.)$, in \SCpc.  Shortly, we will provide more detail about this algorithm. Moreover, to make this contract compatible with \epsi, we increased the amount of each server's deposit by  $\Xc$. Nevertheless, this adjustment does not change the logic behind \SCpc's design and its analysis.

\subsubsection{Auditor's Result-Computation Algorithm.}\label{sec::auditor-res-Comp}

In this work,  we use \SCpc to delegate the computation of intersection cardinality to two extractor clients, a.k.a. servers in the original \SCpc. In this setting, the contract's auditor is invoked when an inconsistency is detected in step \ref{smart-PSI-inconsistency} of \epsi. For the auditor to recompute the intersection cardinality, we have designed $\mathtt{resComp}(.)$ algorithm. The auditor uses this algorithm for every bin's index $indx$,  where $1\leq indx\leq h$ and $h$ is the hash table's length. We present this algorithm in Figure \ref{fig::resComp}.  The auditor collects the inputs of this algorithm as follows: (a)  reads random polynomial $\bm\zeta$, and blinded polynomial $\bm\phi$ from contract $\mathcal{SC}_{\fpsi}$, (b) reads the ciphertext if secret key $mk$ from $\mathcal{SC}_{\epsi}$, and (c) fetches public parameters $(des_{\mathtt{H}}, h)$ from the hash table's public description.

Note that in the original  \SCpc of Dong \et \cite{dong2017betrayal}, the auditor is assumed to be fully trusted. However, in this work, we have relaxed such an assumption. We have designed \epsi and $\mathtt{resComp}(.)$ in such a way that even a semi-honest auditor cannot learn anything about the actual elements of the sets, as they have been encrypted under a key unknown to the auditor.


\begin{figure}
\setlength{\fboxsep}{1pt}
\begin{center}
    \begin{tcolorbox}[enhanced,width=5.5in, 
    drop fuzzy shadow southwest,
    colframe=black,colback=white]
 \underline{$\mathtt{resComp}(\bm\zeta, \bm\phi, sk, ct_{\st mk}, indx, {des}_{\st \mathtt{H}})\rightarrow R$}\\
%
\begin{itemize}
\item \noindent\textit{Input}. $\bm\zeta$: a random polynomial of degree $1$, $\bm\phi$: a blinded polynomial of the form $\bm\zeta\cdot(\bm\epsilon + \bm\gamma')$ where $\bm\epsilon$ and $\bm\gamma'$ are arbitrary and  pseudorandom polynomials respectively, $deg(\bm\phi)-1=deg(\bm\gamma')$, $sk$: the auditor's secret key, $ct_{\st mk}$: ciphertext of $mk$ which is a  key of $\mathtt{PRF}$, $indx$: an input of $\mathtt{PRF}$, and   ${des}_{\st \mathtt{H}}$: a description of hash function $\mathtt{H}$.
\item\noindent\textit{Output}. $R$: a set containing valid roots of unblinded $\bm\phi$. 
\end{itemize}
\begin{enumerate}
\item decrypts the ciphertext $ct_{\st mk}$ under key $sk$. Let $mk$ be the result. 
\item unblinds polynomial $\bm\phi$, as follows:
\begin{enumerate}
\item re-generates pseudorandom polynomial $\bm\gamma'$ using key $mk$. Specifically, it uses $mk$ to derive a key: $k=\mathtt{PRF}(mk,  indx)$. Then, it uses the derived key to generate $3d+1$ pseudorandom coefficients, i.e.,  $ \forall j, 0\leq j \leq deg(\bm\phi)-1: g_{\st j}=\mathtt{PRF}(k, j)$. Next, it uses these coefficients to construct polynomial $\bm\gamma'$, i.e., $\bm\gamma'=\sum\limits_{\st j=0}^{\st deg(\bm\phi)-1} g_{\st j}\cdot x^{\st j}$.
\item removes the blinding factor from $\bm\phi$. Specifically, it computes polynomial $\bm\phi'$ of the following form $\bm\phi'= \bm\phi - \bm\zeta\cdot\bm\gamma'$. 
\end{enumerate}
\item extracts roots of polynomial $\bm\phi'$. 
\item finds valid roots, by (i) parsing each root $\bar{e}$ as $(e_{\st 1}, e_{\st 2})$ with the assistance of $des_{\st \mathtt{H}}$  and (ii) checking if $e_{\st 2}=\mathtt{H}(e_{\st 1})$. It considers a root valid, if this equation holds. 
\item returns set $R$ containing all valid roots.
%

\end{enumerate}
\end{tcolorbox}
\end{center}
\caption{Auditor's result computation, $\mathtt{resComp}(.)$, algorithm} 
\label{fig::resComp}
\end{figure}


\subsection{Colluder's Contract (\SCcc)}

Recall that \SCpc aimed at creating a dilemma between the two servers. However, this dilemma can be addressed if they can make an  enforceable promise.  This enforceable promise can be another smart contract, called Colluder's Contract (\SCcc).  This contract imposes additional rules that ultimately would affect the parties’ payoffs and would make collusion the most profitable strategy for the colluding parties. In \SCcc, the party who initiates the collusion would pay its counterparty a certain amount (or bribe) if both follow the collusion and provide an incorrect result of the computation to \SCpc.  Note, \SCcc requires both servers to send a fixed amount of deposit when signing the contract.  The party who deviates from collusion will be punished by losing the deposit. Below, we restate \SCcc.

\begin{enumerate}
\item The contract is signed between the server who initiates the collusion, called ringleader (LDR) and the other server called follower (FLR). 
\item The two agree on providing to \SCpc a different result $res'$ than a correct computation of $f$ on $x$ would yield, i.e., $res'\neq f(x)$. Parameter $res'$ is recorded in this contract. 
\item LDR and FLR deposit $\tc+\bc$ and $\tc$ amounts in this contract respectively. 
\item The above deposit must be paid before the result delivery deadline in \SCpc, i.e., before deadline $T_{\st 2}$. If this condition is not met, the parties' deposit in this contract is refunded and this contract is terminated. 
\item When \SCpc is finalised (i.e., all the results have been provided), the following steps are taken.
\begin{enumerate}
\item \underline{both follow the collusion}: if both LDR and FLR provided $res'$ to \SCpc, then $\tc$ and $\tc+\bc$ amounts are delivered to LDR and FLR respectively. Therefore, FLR receives its deposit plus the bribe $\bc$. 
\item \underline{only FLR deviates from the collusion}:  if LDR and FLR provide $res'$ and $res''\neq res'$ to \SCpc respectively, then $2\cdot \tc+\bc$ amount is transferred to LDR while nothing is sent to FLR. 
\item \underline{only LDR deviates from the collusion}: if LDR and FLR provide $res''\neq res'$ and $res'$ to \SCpc respectively, then $2\cdot \tc+\bc$ amount is sent to FLR while nothing is transferred to LDR.  
\item \underline{both deviate from the collusion}: if LDR and FLR deviate and provide any result other than $res'$ to \SCpc, then  $2\cdot \tc+\bc$ amount is sent to LDR and $\tc$ amount is sent to FLR. 
\end{enumerate}
\end{enumerate}

We highlight that the amount of bribe a rational LDR is willing to pay is less than its computation cost (i.e., $\bc<\cc$); otherwise, such collusion would not bring a higher payoff. We refer readers to \cite{dong2017betrayal} for further discussion.


\subsection{Traitor's Contract (\SCtc)}

\SCtc incentivises a colluding server (who has had signed \SCcc) to betray its counterparty and report the collusion without being penalised by \SCpc.  The Traitor’s contract promises that the reporting server will not be
punished by \SCpc which makes it safe for the reporting server to follow the collusion strategy (of \SCcc), and get away from the punishment imposed by \SCpc. Below, we restate \SCtc.

\begin{enumerate}
\item This contract is signed among $ { D}$ and the traitor server (TRA) who reports the collusion. This contract is signed only if the parties have already signed \SCpc. 
\item  $ { D}$ signs this contract only with the first server who reports the collusion. 
\item The traitor TRA must also provide to this contract the result of the computation, i.e., $f(x)$. The result provided in this contract could be different than the one provided in \SCpc, e.g., when TRA has to follow \SCcc and provide an incorrect result to \SCpc. 
\item  $  D$ needs to pay  $\wc \ +\ $\textcolor{blue}{ $\dc' \ + $} $\dc-\chc$ amount to this contract. This amount equals the maximum amount TRA could lose in \SCpc plus the reward. \textcolor{blue}{This deposit will be transferred via $\mathcal{SC}_{\epsi}$ to this contract.}  TRA must deposit in this contract $\chc$ amount to cover the cost of resolving a potential dispute.  
\item  This contract should be signed before the deadline  $T_{\st 2}$  for the delivery of the computation result in \SCpc. If it is not signed on time, then this contract would be terminated and any deposit paid will be refunded.
\item It is required that the TRA provide the computation result to this contract before the above deadline $T_{\st 2}$.
\item If this contract is fully signed, then during the execution of \SCpc, $ { D}$ always raises a dispute, i.e., takes step \ref{prisoner-cont-raise-disp} in \SCpc.
\item After \SCpc is finalised (with the involvement of the auditor), the following steps are taken to pay the parties involved.   
\begin{enumerate}
\item  if none of the servers cheated in \SCpc (according to the auditor), then amount $\wc \ +\ $\textcolor{blue}{ $\dc' \ + $} $\dc-\chc$ is refunded to \textcolor{blue}{$\mathcal{SC}_{\epsi}$} and TRA's deposit (i.e., $\chc$ amount) is transferred to $ { D}$. Nothing is paid to TRA. 
\item if in \SCpc, the other server did not cheat and TRA  cheated; however, TRA provided a correct result in this contract, then \textcolor{blue}{ $\dc' \ + $} $\dc-\chc$ amount is transferred to  \textcolor{blue}{$\mathcal{SC}_{\epsi}$}. Also,    TRA gets its deposit back (i.e., $\chc$ amount) plus $\wc$ amounts for providing a correct result to this contract. 
\item if in \SCpc, both servers cheated; however, TRA delivered a correct computation result to this contract, then TRA gets its deposit back (i.e., $\chc$ amount), it also receives $\wc \ +\ $\textcolor{blue}{ $\dc' \ + $} $\dc-\chc$ amount.   
\item otherwise,  $\wc \ +\ $\textcolor{blue}{ $\dc' \ + $} $\dc-\chc$ and $\chc$ amounts are transferred to  \textcolor{blue}{$\mathcal{SC}_{\epsi}$} and TRA respectively. 
\end{enumerate}
\item If TRA provided a result to this contract, and deadline $T_{\st 3}$ (in \SCpc) has passed, then all deposits, if any left, will be transferred to TRA. 
\end{enumerate}

TRA must take the following three steps to report collusion: (i) it waits until \SCcc is signed by the other server, (ii) it reports the collusion to $ { D}$ before signing \SCcc, and (iii) it signs \SCcc only after it signed \SCtc with $ { D}$. Note, the original analysis of \SCtc does not require \SCtc to remain secret. Therefore, in our smart PSI, parties TRA and $ { D}$ can sign this contract and   then store its address in $\mathcal{SC}_{\epsi}$. Alternatively, to keep this contract confidential, $ { D}$ can deploy a template \SCtc to the blockchain and store the commitment of the contract's address (instead of the plaintext address) in $\mathcal{SC}_{\epsi}$. When a traitor (TRA) wants to report collusion, it signs the deployed \SCtc with  $ { D}$ which provides the commitment opening to TRA. In this case, at the time when the deposit is distributed,   either $ { D}$ or TRA   provides the opening of the commitment to  $\mathcal{SC}_{\epsi}$ which checks whether it matches the commitment. If the check passes, then it distributes the deposit as before.


\section{Hash Tables}\label{Preliminary-Hash-Table}

We set the table's parameters appropriately to ensure the number of elements in each bin does not exceed a predefined capacity. Given the maximum number of elements $c$ and the bin's maximum size $d$, we can determine the number of bins by analysing hash tables under the balls into bins model  \cite{DBLP:conf/stoc/BerenbrinkCSV00}.
\vspace{-1mm}
\begin{theorem}\label{chernoff}\textbf{(Upper Tail in Chernoff Bounds)} Let $X_{\scriptscriptstyle i}$ be a random variable defined as $X_{\scriptscriptstyle i}=\sum\limits^{\scriptscriptstyle c}_{\scriptscriptstyle  i=1} Y_{\scriptscriptstyle i}$, where $Pr[Y_{\scriptscriptstyle i}=1]=p_{\scriptscriptstyle i}$, $Pr[Y_{\scriptscriptstyle i}=0]=1-p_{\scriptscriptstyle i}$, and all $Y_{\scriptscriptstyle i}$ are independent. Let the expectation be $\mu=\mathrm{E}[X_{\scriptscriptstyle i}]=\sum\limits ^{\scriptscriptstyle h}_{\scriptscriptstyle  i=1} p_{\scriptscriptstyle i}$, then 
$Pr[X_{\scriptscriptstyle i}>d=(1+\sigma)\cdot \mu]<\Big(\frac{e^{\scriptscriptstyle \sigma}}{(1+\sigma)^{\scriptscriptstyle (1+\sigma)}}\Big)^{\scriptscriptstyle \mu}, \forall \sigma>0$
\end{theorem}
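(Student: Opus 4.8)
The plan is to prove Theorem~\ref{chernoff} by the standard exponential-moment (Bernstein--Chernoff) method: bound the upper tail of $X_{\st i}$ by the tail of $e^{t X_{\st i}}$ for a free parameter $t>0$, exploit independence to factor the moment generating function, and then optimise over $t$.

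First I would fix any $t>0$ and apply Markov's inequality to the nonnegative random variable $e^{t X_{\st i}}$, giving
\[
\Pr[X_{\st i}>d] \;=\; \Pr[e^{t X_{\st i}}>e^{t d}] \;\le\; e^{-t d}\,\mathrm{E}[e^{t X_{\st i}}].
\]
Since $X_{\st i}=\sum_{i} Y_{\st i}$ with the $Y_{\st i}$ independent, the moment generating function factorises: $\mathrm{E}[e^{t X_{\st i}}]=\prod_{i}\mathrm{E}[e^{t Y_{\st i}}]$. Each factor is $\mathrm{E}[e^{t Y_{\st i}}]=1-p_{\st i}+p_{\st i} e^{t}=1+p_{\st i}(e^{t}-1)$, and using the elementary inequality $1+x\le e^{x}$ with $x=p_{\st i}(e^{t}-1)$ we get $\mathrm{E}[e^{t Y_{\st i}}]\le e^{p_{\st i}(e^{t}-1)}$. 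Multiplying over $i$ and recalling $\mu=\sum_{i}p_{\st i}$ yields $\mathrm{E}[e^{t X_{\st i}}]\le e^{\mu(e^{t}-1)}$, hence $\Pr[X_{\st i}>d] \le e^{-t d + \mu(e^{t}-1)}$.

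Next I would substitute the target value $d=(1+\sigma)\mu$ and minimise the exponent $f(t) = -t(1+\sigma)\mu + \mu(e^{t}-1)$ over $t>0$. From $f'(t)=0$ we get $e^{t}=1+\sigma$, i.e. $t=\ln(1+\sigma)$, which is positive precisely because $\sigma>0$; convexity of $f$ (since $f''(t)=\mu e^{t}>0$) confirms this is the minimiser. Plugging $t=\ln(1+\sigma)$ back in gives exponent $\mu\sigma-\mu(1+\sigma)\ln(1+\sigma)$, so
\[
\Pr[X_{\st i}>(1+\sigma)\mu] \;\le\; e^{\mu\sigma-\mu(1+\sigma)\ln(1+\sigma)} \;=\; \Big(\tfrac{e^{\sigma}}{(1+\sigma)^{(1+\sigma)}}\Big)^{\mu},
\]
which is the claimed bound.

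The argument is essentially routine, so there is no deep obstacle; the only points needing a little care are (i) justifying the strict inequality ``$<$'' in the statement, which follows because $1+x<e^{x}$ strictly whenever $x\neq 0$, so as soon as at least one $p_{\st i}\in(0,1)$ the bound on the moment generating function is strict (the degenerate cases, where every $p_{\st i}\in\{0,1\}$ or $\mu=0$, are checked directly), and (ii) verifying that $t=\ln(1+\sigma)$ is admissible and optimal, which is exactly where the hypothesis $\sigma>0$ enters. In the context of the paper this tail bound is then instantiated with $Y_{\st i}$ the indicator that the $i$-th element lands in a fixed bin, $p_{\st i}=1/h$, and $\mu=c/h$, to choose the number of bins $h$ so that the probability that any bin exceeds capacity $d$ is negligible.
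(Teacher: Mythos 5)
Your proof is the standard exponential-moment derivation of the multiplicative Chernoff upper-tail bound: Markov's inequality applied to $e^{tX_{\st i}}$, factoring the moment generating function by independence, bounding each factor via $1+x\le e^{x}$, and optimising at $t=\ln(1+\sigma)$. The steps are all correct, and the handling of strictness and of the admissibility of $t>0$ is sound. The paper itself gives no proof of Theorem~\ref{chernoff}; it states it as a classical fact from the balls-into-bins literature (citing~\cite{DBLP:conf/stoc/BerenbrinkCSV00}) and immediately instantiates it with $p_{\st i}=1/h$ and $\mu=c/h$ before applying the union bound in Equation~\ref{equation:the-bound}. So there is no paper-internal argument to compare against; your derivation supplies exactly the textbook proof that the paper implicitly relies on, and your final remark about the instantiation matches how the bound is actually used.

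One small point of precision: you restrict the ``strict'' case to $p_{\st i}\in(0,1)$, but in fact $p_{\st i}=1$ also yields a strict factor, since then $x=e^{t}-1>0$ for the chosen $t=\ln(1+\sigma)>0$ and $1+x<e^{x}$. The only genuinely degenerate factor is $p_{\st i}=0$, and if \emph{all} $p_{\st i}=0$ then $\mu=0$, $d=0$, and the left-hand side is $\Pr[X_{\st i}>0]=0<1$. Your direct-check fallback already covers this, so the argument goes through; the remark is just that the case split can be stated a little more sharply.
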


In this model, the expectation is  $\mu=\frac{c}{h}$, where $c$ is the number of  balls and $h$ is the number of bins. The above inequality provides the probability that bin $i$ gets more than $(1+\sigma)\cdot \mu$ balls. Since there are $h$ bins, the probability that at least one of them is overloaded is bounded by the union bound:
\begin{equation}\label{equation:the-bound}
Pr[\exists i, X_{\scriptscriptstyle i}>d]\leq \sum\limits^{\scriptscriptstyle h}_{\scriptscriptstyle i=1}Pr[X_{\scriptscriptstyle i}>d] = h\cdot  \Big(\frac{e^{\scriptscriptstyle \sigma}}{(1+\sigma)^{\scriptscriptstyle (1+\sigma)}}\Big)^{\scriptscriptstyle \frac{c}{h}}
\end{equation}

Thus, for a hash table of length $h=O(c)$, there is always an  \emph{almost constant} expected number of elements, $d$,  mapped to the same bin  with a high probability \cite{DBLP:conf/ccs/PapamanthouTT08}, e.g., $1-2^{\scriptscriptstyle -40}$.


\section{Enhanced OLE's Ideal Functionality and Protocol}\label{apndx:F-OLE-plus}

The PSIs proposed in \cite{GhoshN19} use an enhanced version of the  \ole.  The enhanced \ole ensures that the receiver cannot learn anything about the sender's inputs,  in the case where it sets its input to $0$, i.e., $c=0$. The enhanced \ole's protocol (denoted by $\ole^{\st +}$) is presented in Figure \ref{fig:OLE-plus-protocol}. In this paper, $\ole^{\st +}$ is used as a subroutine in \vopr, in Section \ref{sec::vopr}.

\begin{figure}[ht]
\setlength{\fboxsep}{1pt}
\begin{center}
\begin{boxedminipage}{12.3cm}
\begin{small}
\begin{enumerate}
\item  Receiver (input $c \in \mathbb{F} $): Pick a random value, $r\stackrel{\st\$}\leftarrow  \mathbb{F} $, and send  $(\mathtt{inputS}, (c^{\st -1}, r))$ to the first $\mathcal{F}_{\st\ole}$.
%
%
\item Sender (input $a, b \in \mathbb{F} $): Pick a random value, $u \stackrel{\st\$}\leftarrow  \mathbb{F} $, and send $(\mathtt{inputR}, u)$ to the first $\mathcal{F}_{\st\ole}$, to learn $t =  c^{\st -1}\cdot u
 + r$. Send $(\mathtt{inputS},(t + a, b - u))$ to the second $\mathcal{F}_{\st\ole}$.
\item Receiver: Send $(\mathtt{inputR}, c)$ to the second $\mathcal{F}_{\st\ole}$ and obtain $k = (t+a)\cdot c+(b-u)=a\cdot c + b + r\cdot c$. Output $s=k - r\cdot c=a\cdot c + b$.

\end{enumerate}
\end{small}
\end{boxedminipage}
\end{center}
\caption{
\small {Enhanced Oblivious Linear function Evaluation  ($\ole^{\st +}$)  \cite{GhoshN19}}.} 
\label{fig:OLE-plus-protocol}
\end{figure}


\section{Coin-Tossing Protocol}\label{sec::coin-tossing}

A Coin-Tossing protocol, \ct, allows two mutually distrustful parties, say $A$ and $B$, to jointly generate a single random bit. Formally, \ct computes the functionality $\fct(in_{\st A}, in_{\st B})\rightarrow (out_{\st A}, out_{\st B})$, which takes $in_{\st A}$ and  $in_{\st B}$ as inputs of $A$ and $B$ respectively and outputs $out_{\st A}$ to $A$ and $out_{\st B}$ to $B$, where $out_{\st A}=out_{\st B}$. A basic security requirement of a \ct is that the resulting bit is (computationally) indistinguishable from a truly random bit. 

Blum proposed a simple \ct in \cite{Blum82} that works as follows. Party $A$ picks a random bit $in_{\st A}\stackrel{\st \$}\leftarrow\{0,1\}$, commits to it and sends the commitment to $B$ which sends its choice of random input, $in_{\st B}\stackrel{\st \$}\leftarrow\{0,1\}$, to $A$. Then, $A$ sends the opening of the commitment (including $in_{\st A}$) to $B$, which checks whether the commitment matches its opening. If so, each party computes the final random bit as $in_{\st A}\oplus in_{\st B}$.  

There have also been \emph{fair} coin-tossing protocols, e.g., in \cite{MoranNS09}, that ensure either both parties learn the result or nobody does. These protocols can be generalized to \emph{multi-party} coin-tossing protocols to generate a \emph{random string} (rather than a single bit), e.g., see \cite{BeimelOO10,KiayiasRDO17}.
The overall computation and communication complexities of (fair) multi-party coin-tossing protocols are often linear with the number of participants. In this paper, any secure multi-party \ct that generates a random string can be used. For the sake of simplicity, we let a multi-party \fct take $m$ inputs and output a single value, i.e., $\fct(in_{\st 1}, ..., in_{\st m})\rightarrow out$.


\section{Formal Definitions of Predicates of \p}\label{sec::Formal-Definitions-of-F-PSI-Predicates}

 In this section, we present the formal definitions of predicates $\qinit, \qdel, \qUnFAbt$ and \qFAbt. We borrowed  the first three predicates from the ``fair and robust multi-party computation'' initially proposed in \cite{KiayiasZZ16}. Nevertheless, we  (i) have introduced an additional predicate  \qFAbt and (ii) will offer more precise and formal definitions of all these predicates.

 \begin{definition}
  [\qinit: Initiation predicate] Let $\mathcal{G}$ be a stable ledger, $adr_{\st sc}$ be smart contract $sc$'s address, $Adr$ be a set of $m+1$ distinct addresses, and $\xc$ be a fixed amount of coins. Then, predicate $\qinit(\mathcal{G}, adr_{\st sc}, m+1, Adr, \xc)$ returns $1$ if every address in $Adr$ has at least $\xc$ coins in $sc$. Otherwise, it returns $0$. 
 \end{definition}

    \begin{definition}  [\qdel:
    Delivery predicate] Let $pram:=(\mathcal{G}, adr_{\st sc}, \xc)$ be the parameters defined above, and   $adr_{\st i}\in Adr$ be the address of an honest party. 
    %
    %
    Then, predicate $\qdel(pram, adr_{\st i})$ returns $1$ if $adr_{\st i}$ has sent $\xc$ amount to $sc$ and received  $\xc$ amount from it; thus,  its balance in $sc$ is $0$. Otherwise, it returns $0$. 
  \end{definition}

   \begin{definition}  [\qUnFAbt: UnFair-Abort predicate]
 Let $pram:=(\mathcal{G}, $ $adr_{\st sc}, \xc)$ be the parameters defined above, and $Adr'\subset Adr$ be a set containing honest parties' addresses, $m' = |Adr'|$,  and   $adr_{\st i}\in Adr'$. Let also $G$ be a compensation function that takes as input  three parameters $(\depsc, adr_{\st i}, m')$, where $\depsc$ is the number of coins  that all $m+1$ parties  deposit. It returns the amount of compensation each honest party must receive, i.e., $G(\depsc, ard_{\st i}, m')\rightarrow \xci$. Then, predicate $\qUnFAbt$ is defined as $\qUnFAbt(pram, G, \depsc, m', adr_{\st i})\rightarrow (a,b)$, where $a=1$ if $adr_{\st i}$ is an honest party's address and $adr_{\st i}$ has sent $\xc$ amount to $sc$ and received  $\xc+\xci$  from it, and $b=1$ if $adr_{\st i}$ is \aud's address and $adr_{\st i}$ received $\xci$  from $sc$. Otherwise, $a=b=0$. 
  \end{definition}

\begin{definition}  [\qFAbt: Fair-Abort predicate]
 Let $pram:=(\mathcal{G},$ $ adr_{\st sc},$ $ \xc)$ be the parameters defined above, and $Adr'\subset Adr$ be a set containing honest parties' addresses, $m' = |Adr'|$,     $adr_{\st i}\in Adr'$, and  $adr_{\st j}$ be \aud's address. Let $G$ be the compensation function, defined above, and let $G(deps, ard_{\st j}, m')\rightarrow \xc_{\st j}$ be the compensation that the auditor must receive.  Then, predicate $\qFAbt(pram, G,$ $ \depsc, m', adr_{\st i}, adr_{\st j})$ returns $1$, if $adr_{\st i}$ (s.t. $adr_{\st i}\neq adr_{\st j}$) has sent $\xc$ amount to $sc$ and received  $\xc$  from it, and $adr_{\st j}$ received $\xc_{\st j}$  from $sc$. Otherwise, it returns $0$. 
 \end{definition}

    We observed that predicate \qFAbt should have been defined in the generic framework in \cite{KiayiasZZ16} too. Because the framework should have also captured the cases where an adversary may abort without learning any output after the onset of the protocol.


\section{Proof of \zspa}\label{sec::proof-of-zspa}

\begin{proof}
For the sake of simplicity, we will assume the sender, which generates the result, directly transmits the result to the remaining parties (i.e., receivers) rather than sending it to a smart contract. We first consider the case in which the sender is corrupt.

\noindent\textbf{Case 1: Corrupt sender.}  Let $\mathsf{Sim}^{\st \zspa}_{\st S}$ be the simulator using a subroutine adversary, $\mathcal{A}_{\st S}$. $\mathsf{Sim}^{\st \zspa}_{\st S}$ operates as follows. 
\begin{enumerate}
\item simulates  \ct  and receives the output value $k$ from $f_{\st \ct}$, as we are in $f_{\st \ct}$-hybrid model.
\item sends $k$ to TTP and receives back from it $m$ pairs, where each pair is of the form $( g,  q)$. 
\item sends $ k$ to $\mathcal{A}_{\st S}$ and receives back from it $m$ pairs  where each pair is of the form $( g',  q')$. 
\item checks whether the following equations hold (for each pair): $ g= g' \hspace{2mm} \wedge  \hspace{2mm}  q= q'$. If the two equations do not hold, then it aborts (i.e., sends abort signal $\Lambda$ to the receiver) and proceeds to the next step.
\item outputs whatever $\mathcal{A}_{\st S}$ outputs.
 \end{enumerate}
 
 Initially, we focus on the adversary’s output. In the real model, the only messages that the adversary receives are those messages it receives as the result of the ideal call to $f_{\st \ct}$. These messages have an identical distribution to the distribution of the messages in the ideal model, as the \ct is secure. 
 
 Now, we move on to the receiver’s output. We will demonstrate that the output distributions of the honest receiver in the ideal and real models are computationally indistinguishable. In the real model,  each element of pair $(g, p)$ is the output of a deterministic function on the output of $f_{\st \ct}$. We know the output of $f_{\st \ct}$ in the real and ideal models have an identical distribution, and so do the evaluations of deterministic functions (i.e., Merkle tree, $\mathtt{H}$, and $\mathtt{PRF}$) on them, as long as these three functions' correctness hold. Therefore, each pair $(g,q)$ in the real model has an identical distribution to pair $(g,  q)$ in the ideal model.  For the same reasons, the honest receiver in the real model aborts with the same probability as  $\mathsf{Sim}^{\st \zspa}_{\st S}$ does in the ideal model.  
 
 We conclude that the distributions of the joint outputs of the adversary and honest receiver in the real and ideal models are computationally indistinguishable.

\

\noindent\textbf{Case 2: Corrupt receiver.}   Let $\mathsf{Sim}^{\st \zspa}_{\st R}$ be the simulator that uses subroutine adversary $\mathcal{A}_{\st R}$. $\mathsf{Sim}^{\st \zspa}_{\st R}$ operates as follows. 

\begin{enumerate}
\item simulates   \ct  and receives the output value $ k$ from $f_{\st \ct}$.
\item sends $ k$ to TTP and receives back $m$ pairs of the form $( g,  q)$ from TTP. 
\item sends $( k,  g,  q)$ to $\mathcal{A}_{\st R}$ and outputs whatever  $\mathcal{A}_{\st R}$ outputs. 
 \end{enumerate}

In the real model, the adversary receives two sets of messages, the first set includes the transcripts (including $ k$) it receives when it makes an ideal call to $f_{\st \ct}$ and the second set includes pair $(g, q)$. As we already discussed above (because we are in the  $f_{\st \ct}$-hybrid model) the distributions of the messages it receives from $f_{\st \ct}$ in the real and ideal models are identical. Moreover, the distribution of $f_{\st \ct}$'s output (i.e., $\bar k$ and $k$) in both models is identical; therefore, the honest sender's output distribution in both models is identical. As we already discussed,  the evaluations of deterministic functions (i.e., Merkle tree, $\mathtt{H}$, and $\mathtt{PRF}$) on $f_{\st \ct}$'s outputs have an identical distribution. Therefore, each pair $(g, q)$ in the real model has an identical distribution to the pair $(g, q)$ in the ideal model.  

Hence, the distribution of the joint outputs of the adversary and honest receiver in the real and ideal models is indistinguishable.
\end{proof}

In addition to the security guarantee stated by Theorem \ref{theorem::ZSPA-comp-correctness} (which ensures computation correctness against malicious sender or receiver)  \zspa also provides  (a) privacy against the public, and (b)  non-refutability. In informal terms, privacy in this context implies that, with the contract's state (i.e., $g$ and  $q$), an external party cannot glean any information about any of the pseudorandom values,  $z_{\scriptscriptstyle j}$.  On the other hand, non-refutability signifies that if a party sends ``approved'', they cannot later deny knowledge of the values whose representations are stored in the contract. 

 \vspace{-2mm}
\begin{theorem}
If  $\mathtt{H}$ is preimage resistance, $\mathtt{PRF}$ is secure, the signature scheme used in the smart contract is secure (i.e., existentially unforgeable under chosen message attacks), and the blockchain is secure (i.e., offers persistence and liveness properties \cite{GarayKL15}) then \zspa offers (i) privacy against the public and (ii) non-refutability. 
\end{theorem}

\begin{proof}
We will initially focus on privacy. Since key $k$ for $\mathtt{PRF}$ has been uniformly and randomly selected, and $\mathtt{H}$ exhibits pre-image resistance, the probability that the adversary can find $k$ (given $g$ ), is negligible with respect to the security parameter, i.e., $\negl(\lambda)$. Furthermore, because $\mathtt{PRF}$ is secure (i.e., its outputs are indistinguishable from random values) and  $\mathtt{H}$ is pre-image resistance, given the Merkle tree's root $g$, the probability that the adversary can find a leaf node, which is the output of $\mathtt{PRF}$, is $\negl(\lambda)$ too.

Now we move on the non-refutability. Due to the persistency property of the blockchain, when a transaction or message reaches a depth of more than $v$ in the blockchain of one honest player (where $v$ is a security parameter),  it will be included in the blockchain of every honest player with overwhelming probability, and it will be given a permanent position in the blockchain, making it highly unlikely to be modified.

Furthermore, thanks to the liveness property, all transactions originating from honest parties will eventually reach a depth of more than $v$ blocks in an honest player's blockchain. Consequently, the adversary cannot carry out a selective denial-of-service attack against honest account holders. Moreover, due to the security of the digital signature (i.e., existentially unforgeable under chosen message attacks), one cannot deny having sent the messages to the blockchain and smart contract.
\end{proof}


\section{Proof of Unforgeable Polynomials}\label{sec::proof-of-unforgeable-polys}

Our initial step is to prove Theorem \ref{proof::unforgeable-poly}, which concerns unforgeable polynomials.

\begin{proof}

Let $\bm\tau=\bm\delta-\bm\gamma$ and $\bm\zeta=a\cdot x+b$. Since $\bm\gamma$ is a random polynomial of degree $d+1$ and unknown to the adversary when given $(\bm\theta, \bm\pi)$,  the adversary cannot gain any information about the factor $\bm\zeta$.  From its perspective, every polynomial of degree $1$ in $\mathbb{F}_{\st p}[X]$ is equally likely to be $\bm\zeta$. 

Furthermore,  the polynomial $\bm\tau$ has at most $Max\big(deg(\bm\delta), d+1\big)$ irreducible non-constant factors.  For $\bm\zeta $ to divide $\bm\tau$,  one of the factors of $\bm\tau$ must be equal to $\bm\zeta$. We  also know that $\bm\zeta$ has been picked uniformly at random (i.e., $a,b
\stackrel{\st \$}\leftarrow \mathbb F_{\st p}$). Thus, the probability that $\bm\zeta $ divides $\bm\tau$ is negligible with respect to the security parameter, $\lambda$. Specifically,

$$Pr[ \bm\zeta \ | \ (\bm\delta-\bm\gamma)]\leq \frac{Max\big(deg(\bm\delta), d+1\big)} {2^{\st \lambda}}=\negl(\lambda)$$
\end{proof} 



Next, we prove Theorem \ref{Unforgeable-Polynomials-Linear-Combination}, concerning unforgeable polynomials' linear combination.

\begin{proof}  
This proof serves as a generalization of that for Theorem \ref{proof::unforgeable-poly}.  
Let $\bm\tau_{\st j}=\bm\delta_{\st j}-\bm\gamma_{\st j}$ and $\bm\zeta=a\cdot x+b$. Since  every $\bm\gamma_{\st j}$ is a random polynomial of degree $d+1$ and unknown to the adversary, given $(\vv{\bm\theta}, \vv{\bm\pi})$,  the adversary cannot learn anything about the factor $\bm\zeta$. Each polynomial $\bm\tau_{\st j}$ has at most $Max\big(deg(\bm\delta_{\st j}), d+1\big)$ irreducible non-constant factors. 
%
%
We  know that $\bm\zeta$ has been picked uniformly at random (i.e., $a,b
\stackrel{\st \$}\leftarrow \mathbb F_{\st p}$). Therefore, the probability that $\bm\zeta$ divides $\sum\limits_{\st j=1}^{\st t}\bm\delta_{\st j} + \sum\limits_{\st j=t+1}^{\st n}\bm\theta_{\st j} - \sum\limits_{\st j=1}^{\st n}\bm\gamma_{\st j}$ equals the probability that $\bm\zeta$ equals to one of the factors of  every $\bm\tau_{\st j}$, that is negligible in the security parameter. Concretely,

$$Pr[ \bm\zeta \ | \ (\sum\limits_{\st j=1}^{\st t}\bm\delta_{\st j} + \sum\limits_{\st j=t+1}^{\st n}\bm\theta_{\st j} - \sum\limits_{\st j=1}^{\st n}\bm\gamma_{\st j} ) ]\leq  \frac{\prod \limits^{\st t}_{\st j=1}Max\big(deg(\bm\delta_{\st j}), d+1\big)} {2^{\st \lambda t}}=\negl(\lambda)$$
\end{proof}



\section{\withFai's Workflow}\label{sec::Justitia-workflow}
Figure \ref{fig:parties-interactions-in-Jus} outlines the interaction between parties in \withFai.

\begin{figure*}[!h]
    \centering
    \includegraphics[width=13cm]{Diag-1.pdf}
     \vspace{-2mm}
    \caption{Outline of the interactions between parties in \withFai}\label{fig:parties-interactions-in-Jus}
\end{figure*}


\section{Further Discussion on  \withFai}\label{sec::Discussion-justitia}

\subsection{Input and Output Privacy}

Intuitively, all parties locally blind/encrypt all (set element related) messages before they send them the smart contract \scf. Moreover, the resultant polynomial that \scf computes is in a blinded form and it reveals only the intersection of the sets after it is unblinded (as discussed in Section \ref{sec::poly-rep}, on page \pageref{sec::poly-rep}).

Hence, (i) no party (including the protocol's participants) can learn other parties input set elements and (ii) the non-participants of the protocol cannot learn anything about the protocol's output, not even the intersection cardinality.

\subsection{Strawman Approaches}

\subsubsection{Relying on a Server-aide PSI.} One may be tempted to replace $\withFai$ with a scheme in which all clients send their encrypted sets to a server (potentially semi-honest and plays \aud's role) which computes the result in a privacy-preserving manner.  We highlight that the main difference is that in this (hypothetical) scheme the server is \emph{always involved};  whereas, in our protocol, \aud remains offline as long as the clients behave honestly and it is invoked only when the contract detects misbehaviors.

\subsubsection{Charging the Buyer a Flat Fee.}

One might wish to incorporate a straightforward payment mechanism into an existing multi-party PSI such that a buyer is always required to pay a fixed amount, which could depend on factors such as the total number of clients and the minimum size of the sets. However, this approach presents challenges, because:

\begin{enumerate}
\item it compels the buyer to make payments even if certain malicious clients misbehave during the protocol execution and compromise the accuracy of the results. This situation could enable malicious clients to gain access to the result at the buyer's expense without allowing the buyer to obtain the correct result. Unfortunately, there is no fair multi-party PSI in the existing literature that guarantees either all parties learn the correct result or none of them do. 

\item it compels the buyer to make payments regardless of the precise size of the intersection. If the buyer ends up paying more than the amount it would have paid for the exact cardinality of the intersection, it might become discouraged from participating in the protocol altogether. Conversely, if the buyer has to pay less than what it would have paid for the size of the intersection, it may discourage other clients from participating in the protocol.

\end{enumerate}


\section{Error Probability}\label{sec::error-prob}

Recall that in \fpsi, in step \ref{JUS::check-non-zero-coeff},  each client $C$ needs to ensure polynomials $\bm\omega^{\st (C,D)}\cdot \bm\pi^{\st  {  {(C)}}}$ and  $\bm\rho^{\st (C,D)}$ do not contain any zero coefficient. This check ensures that client $C$ (the receiver in \vopr) does not insert any zero values to  \vopr (in particular to $\ole^{\st +}$ which is a subroutine of \vopr). If a zero value is inserted in \vopr, then an honest receiver will learn only a random value and more importantly cannot pass \vopr's verification phase. 
Nevertheless,  this check can be removed from step \ref{JUS::check-non-zero-coeff}, if we allow \fpsi to output an error with a small probability.\footnote{By error we mean even if all parties are honest,  \vopr halts when an honest party inserts $0$ to it.} In the remainder of this section, we show that this probability is negligible.

First, we focus on the product $\bm\omega^{\st (C,D)}\cdot \bm\pi^{\st  {  {(C)}}}$. We know that $\bm\pi^{\st  {  {(C)}}}$ is of the form $\prod\limits^{\st d}_{\st i=1} (x-s'_{\st i})$, where $s'_{\st i}$ is either a set element $s_{\st i}$ or a random value.  Thus, all of its coefficients are non-zero. Also, the probability that at least one of the coefficients  of $d$-degree random polynomial $\bm\omega^{\st (C,D)}$ equals $0$ is at most $\frac{d+1}{p}$.  Below, we state it formally.

\begin{theorem}\label{theorem::zero-coeff-in-ran-poly}
Let $\bm\delta=\sum\limits_{\st t=0}^{\st d}u_{\st  t} \cdot x^{\st t}$, where  $u_{\st  t}\stackrel{\st\$}\leftarrow\mathbb{F}_{\st p}$, for all $t, 0\leq t\leq d$. Then, the probability that at least one of the coefficients equals $0$ is at most $\frac{d+1}{p}$, i.e., 

$$Pr[\exists u_{\st t}, u_{\st t}=0]\leq \frac{d+1}{p}$$
\end{theorem}

\begin{proof}
The proof is straightforward. Since $\bm\delta$'s coefficients are picked uniformly at random from  $\mathbb{F}_{\st p}$, the probability that $u_{\st t}$ equals $0$ is $\frac{1}{p}$. Since $\bm\delta$ is of degree $d$, due to the union bound, the probability that at least one of $c_{\st t}$s equals $0$ is at most $\frac{d+1}{p}$. 
 \hfill\(\Box\)\end{proof}

Next, we show that the probability that the polynomial  $\bm\omega^{\st (C,D)}\cdot \bm\pi^{\st  {  {(C)}}}$ has at least one zero coefficient is negligible in the security parameter; we assume polynomial $\bm\omega^{\st (C,D)}$  has no zero coefficient.


\begin{theorem}\label{theorem::zero-coeff-in-product}
Let  $\bm\alpha=\sum\limits_{\st j=0}^{\st m}a_{\st  i} \cdot x^{\st i}$ and  $\bm\beta=\sum\limits_{\st j=0}^{\st n}b_{\st  j} \cdot x^{\st j}$,  where $a_{\st i}\stackrel{\st\$}\leftarrow \mathbb{F}_{\st p}$ and $a_{\st i}, b_{\st j}\neq0$, for all $i,j, 0\leq i \leq m$ and $0\leq j \leq n$. Also, let $\bm\gamma=\bm\alpha\cdot\bm\beta=\sum\limits_{\st j=0}^{\st m+n}c_{\st  j} \cdot x^{\st j}$. Then, the probability that at lest one of the coefficients of polynomial $\bm\gamma$ equals $0$ is at most $\frac{m+n+1}{p}$, i.e., 
$$Pr[\exists c_{\st j}, c_{\st j}=0]\leq \frac{m+n+1}{p}$$
\end{theorem}

\begin{proof}
Each coefficient $c_{\st  k}$ of $\bm\gamma$ can be defined as $c_{\st  k}=\sum\limits_{\substack{\st j=0\\ \st i=0}}^{\substack{\st i=m\\ \st j=n}}a_{\st i}\cdot b_{\st j}$, where $i+j=k$. We can rewrite $c_{\st  k}$ as  $c_{\st  k}=a_{\st w}\cdot b_{\st z}+ \sum\limits_{\substack{\st j=0, j\neq z\\ \st i=0, i\neq w}}^{\substack{\st i=m\\ \st j=n}}a_{\st i}\cdot b_{\st j}$, where $w+z=i+j=k$. 
We consider two cases for each $c_{\st  k}$:

\begin{itemize}

\item[$\bullet$]  {Case 1}: $\sum\limits_{\substack{\st j=0, j\neq z\\ \st i=0, i\neq w}}^{\substack{\st i=m\\ \st j=n}}a_{\st i}\cdot b_{\st j}=0$.  This is a trivial case, because with the  probability of $1$ it holds that $c_{\st  k}=a_{\st w}\cdot b_{\st z}\neq 0$, as by  definition $a_{\st w}, b_{\st z}\neq0$ and $\mathbb{F}_{\st p}$ is an integral domain. 


%


\item[$\bullet$]  {Case 2}: $q=\sum\limits_{\substack{\st j=0, j\neq z\\ \st i=0, i\neq w}}^{\substack{\st i=m\\ \st j=n}}a_{\st i}\cdot b_{\st j}\neq 0$. In this case, for  event $c_{\st  k}=a_{\st w}\cdot b_{\st z}+q=0$ to occure, $a_{\st w}\cdot b_{\st z}$ must equal the additive inverse of $q$. Since $a_{\st w}$ has been picked uniformly at random, the probability that such an even occurs is $\frac{1}{p}$.
\end{itemize}

The above analysis is for a single $c_{\st  k}$. Thus, due to the union bound, the probability that at least one of the coefficients $c_{\st  k}$ equals $0$ is at most $\sum\limits^{\st m+n}_{\st j=0}\frac{1}{p}= \frac{m+n+1}{p}$.
 \hfill\(\Box\)\end{proof}

Next we turn out attention to $\bm\rho^{\st (C,D)}$. Due to Theorem \ref{theorem::zero-coeff-in-ran-poly}, the probability that at least one of the coefficients of $\bm\rho^{\st (C,D)}$ equals $0$ is at most $\frac{d+1}{p}$, . 

%
%

Hence, due to Theorems \ref{theorem::zero-coeff-in-ran-poly}, \ref{theorem::zero-coeff-in-product}, and union bound, the probability that at least one of the coefficients in $\bm\omega^{\st (C,D)}\cdot \bm\pi^{\st  {  {(C)}}}$ and  $\bm\rho^{\st (C,D)}$ equals $0$ is at most $\frac{3d+2}{p}$, which is negligible is the security parameter $p$.


\subsection{Main Challenges that \withFai Overcomes}\label{sec::Justitia-challenges}

To design an efficient scheme that realizes \p,  we had to address several key challenges. Below, we outline these challenges.

 \subsubsection{Keeping Overall Complexities Low.}
 
 In general, in multi-party PSIs, each client must exchange messages with the other clients and potentially engage in secure computations with them, as seen in \cite{DBLP:conf/scn/InbarOP18,DBLP:conf/ccs/KolesnikovMPRT17}. This can lead to communication and computational costs that grow quadratically with the number of clients.
 
 To tackle this challenge, we employed two strategies: (a) allowing one of the clients to act as a dealer, interacting with the remaining clients\footnote{This approach has similarity with the non-secure PSIs in \cite{GhoshN19}.}, and (b) implementing a smart contract that serves as a bulletin board for receiving most messages and conducting lightweight computations on the clients' messages. The combination of these approaches ensures that the overall communication and computation remain linear in relation to the number of clients (and the cardinality of sets).

 \subsubsection{Randomising Input Polynomials.}  In multi-party PSIs that utilize the polynomial representation, it is crucial for a client's input polynomial to undergo randomization by another client \cite{AbadiMZ21}. To achieve this securely and efficiently, we required the dealer and each client to jointly participate in an instance of  \vopr, a protocol we developed in Section \ref{sec::subroutines}.

 \subsubsection{Preserving the Privacy of Outgoing Messages.}

 While the utilization of public smart contracts, such as Ethereum, helps maintain overall complexity low, it introduces another challenge. Specifically, if clients fail to safeguard the privacy of the messages they transmit to the smart contracts, then both other clients (e.g., the dealer) and individuals who are not participants in PSI (i.e., the public) can gain access to the clients' set elements and/or the intersection.
 
 To ensure the efficient protection of each client's messages sent to the contracts from the dealer, we necessitate that the clients, excluding the dealer, participate in \zspaa. This protocol allows each client to create a pseudorandom polynomial, which they can employ to obscure their messages. To safeguard the privacy of the intersection from the public, we require that  all clients to run a coin-tossing protocol to reach a consensus on a blinding polynomial.  This blinding polynomial will be used to obscure the final result that encodes the intersection on the smart contract.

 \subsubsection{Ensuring the Correctness of Subroutine Protocols' Outputs.}

 Typically, any MPC protocol designed to withstand active adversaries incorporates a verification mechanism to detect any tampering with message integrity during the protocol's execution. This applies to the subroutine protocols we utilize, namely \vopr and \zspaa.

  However, relying solely on this type of check is not always adequate. There are situations where the output of one MPC serves as input to another MPC, and it becomes essential to guarantee that the unaltered output of the first MPC is securely passed to the second one. This holds for our PSI's subroutines as well. 
  
  To address this challenge, we employ unforgeable polynomials. Specifically, the output of \vopr is an unforgeable polynomial that encodes the actual output. If the adversary tampers with the \vopr's output and later uses it, then a verifier can detect this tampering. 
 
 We obtain the same integrity guarantee for the output of \zspaa without any additional effort. This is because (i) \vopr is called before \zspaa, and (ii) if clients use the unaltered outputs of \zspaa, then the final result (i.e., the sum of all clients' messages) will not contain any output of \zspaa, as they will cancel each other out. Thus, by verifying the correctness of the final result, one can ensure the correctness of the outputs of \vopr and \zspaa, in a single step. 


\subsubsection{Confidentiality of polynomial $\zeta$.} 

 The dealer is the sole entity responsible for selecting and possessing knowledge of the secret polynomial $\zeta$.  This polynomial remains confidential until all parties submit their messages to the smart contract. Subsequently, in step \ref{f-psi::D-gen-switching-poly} on page~\pageref{f-psi::D-gen-switching-poly}, the dealer transmits $\zeta$ to the smart contract. After this juncture, $\zeta$ is no longer a secret. However, malicious parties gain no advantage from knowing $\zeta$, as they have already submitted their inputs to the smart contract.


\subsection{Concrete Parameters}\label{sec::conc-parameters}

\subsubsection{Hash Table Parameters.}

As detailed in Appendix \ref{Preliminary-Hash-Table}, a hash table is characterized by several key parameters:

\begin{itemize}

\item[$\bullet$] $h$: the number of bins.

\item[$\bullet$] $d$: the maximum size (or capacity) of each bin.

\item[$\bullet$] $c$: the maximum number of elements that are mapped to the hash table.

\item[$\bullet$] $pr$: the probability that the number of elements mapped to a bin does not exceed a predefined capacity.
 
 \end{itemize}
 
 It is important to note that in the context of PSI, the parameter $c$ represents the maximum of the sizes of all sets involved. 
 
 The literature, as evidenced in sources such as \cite{Feather2020-full,DBLP:conf/ccs/KolesnikovMPRT17,DBLP:conf/uss/Pinkas0SZ15}) has extensively examined the specific parameters of a hash table, even within the context of PSI. For instance, as illustrated in Section 6 and Appendix J.1 of \cite{Feather2020-full}, when $c$ falls within the range of  $[2^{\st 10},\ 2^{\st 20}]$ and $pr$ is set to $2^{\st -40}$, then $d$ is $100$. Furthermore, we have $h\approx\frac{4c}{d}$. To provide a concrete value, when $c=2^{\st 20}$ and $d=100$, we would have $h=41943$.

\subsubsection{Field Size.} 
In this paper, all arithmetic operations are defined over a finite field $\mathbb{F}_{\st p}$, where $\log_{\st 2}(p)=\lambda$ represents the security parameter. The outputs of $\mathtt {PRF}(.)$ and  $\mathtt {PRP}(.) $ are also of size $\lambda$. Concrete value of $\lambda$ can be set based on the maximum bit size of set elements. For instance, one can choose $\lambda = 60$, when the maximum bit size of set elements is slightly less than $60$ or  $\lambda = 128$ if it is slightly less than $128$.


 \vspace{-3mm}
\section{Formal Definitions of Predicates of \ep}\label{sec::Formal-Definitions-of-E-PSI-Predicates}

 Below, we present the formal definition of predicates \qdelwr and \qUnFAbtwr. 

    \begin{definition}  [\qdelwr:
    Delivery-with-Reward predicate] Let $\mathcal{G}$ be a stable ledger, $adr_{\st sc}$ be smart contract $sc$'s address, $adr_{\st i}\in Adr$ be the address of an honest party, $\xc$ be a fixed amount of coins, and $pram:=(\mathcal{G}, adr_{\st sc}, \xc)$. Let $R$ be a reward function that takes as input the computation result: $res$, a party's address: $adr_{\st i}$, a reward a party should receive for each unit of revealed information:  $\lc$, and input size: $inSize$.  Then $R$ is defined as follows. If $adr_{\st i}$ belongs to a non-buyer, then it returns the total amount that $adr_{\st i}$ should be rewarded and if $adr_{\st i}$ belongs to a buyer client, then it returns the reward's leftover that the buyer can collect, i.e., $R(res, adr_{\st i}, \lc, inSize)\rightarrow \rewci$.    Then, the delivery with reward predicate $\qdelwr(pram,  adr_{\st i}, res, \lc, inSize)$ returns $1$ if $adr_{\st i}$ has sent $\xc$ amount to $sc$ and received at least $\xc+\rewci$ amount from it. Else, it returns $0$.

    %
    %

 %
  \end{definition}

 \vspace{-2mm}

   \begin{definition}  [\qUnFAbtwr: UnFair-Abort-with-Reward predicate]
 Let\\ $pram:=(\mathcal{G}, adr_{\st sc}, \xc)$ be the parameters defined above, and $Adr'\subset Adr$ be a set containing honest parties' addresses, $m' = |Adr'|$,  and   $adr_{\st i}\in Adr'$. Let also $G$ be a compensation function that takes as input  three parameters $(\depsc, adr_{\st i}, m')$, where $\depsc$ is the amount of coins that all $m+1$ parties deposit, $adr_{\st i}$ is an honest party's address, and $m' = |Adr'|$; it returns the amount of compensation each honest party must receive, i.e., $G(\depsc, adr_{\st i}, m')\rightarrow \xci$. Let $R$ be the reward function defined above, i.e., $R(res, adr_{\st i}, \lc, inSize)\rightarrow \rewci$, and let $\hat {pram}:=(res, \lc, inSize)$.  Then, predicate \qUnFAbtwr is defined as $\qUnFAbtwr(pram, \hat {pram}, G, R, $ $\depsc, m', adr_{\st i})\rightarrow (a,b)$, where $a=1$ if $adr_{\st i}$ is an honest party's address which has sent $\xc$ amount to $sc$ and received  $\xc+\xci+\rewci$  from it, and $b=1$ if $adr_{\st i}$ is an auditor's address which received $\xci$  from $sc$. Otherwise, $a=b=0$. 
  \end{definition}


\section{\withRew's Workflow}\label{sec::ANE-workflow}
Figure \ref{fig:parties-interactions-in-ANE} outlines the interaction between parties in \withRew.

\begin{figure*}[htp]
    \centering
    \includegraphics[width=13cm]{Diag-2.pdf}
     \vspace{-2mm}
    \caption{Outline of the interactions between parties in \withRew}\label{fig:parties-interactions-in-ANE}
\end{figure*}



\section{Main Challenges that \withRew Overcomes}\label{sec::Annesidora-challenges}

\subsection{Rewarding Clients Proportionate to the Intersection Size.}
In PSIs, the main private information about the clients that is revealed to a result recipient is the private set elements that the clients have in common. Thus, honest clients must receive a reward proportionate to the intersection cardinality, from a buyer. To receive the reward, the clients need to reach a consensus on the intersection cardinality. The naive way to do that is to let every client find the intersection and declare it to the smart contract. Under the assumption that the majority of clients are honest, then the smart contract can reward the honest result recipient (from the buyer's deposit). But, the honest majority assumption is strong in the context of multi-party PSI. 

Moreover, this approach requires all clients to extract the intersection, which would increase the overall costs.  Some clients may not even be interested in or available to do so. This task could also be conducted by a single entity, such as the dealer; but this approach would introduce a single point of failure and all clients have to depend on this entity.  
To address these challenges, we allow any two clients to become extractors.  Each of them finds and sends to the contract the (encrypted) elements in the intersection. The contract compensates them if it determines their honesty. This approach allows us to avoid (i) relying on the assumption of an honest majority, (ii) requiring all clients to find the intersection, and (iii) depending on a single trusted or semi-honest party to perform the task.

\subsection{Dealing with Extractors' Collusion.}

Introducing two extractors brings about another challenge: the possibility of collusion between them (as well as with the buyer) to present a consistent but erroneous result. This kind of behavior cannot be discerned by a verifier unless the verifier consistently undertakes the delegated task themselves, which would defeat the purpose of delegation. To efficiently confront this challenge, we implement counter-collusion smart contracts (as detailed in Section \ref{Counter-Collusion-Smart-Contracts}) that create distrust among the extractors and provide incentives for them to behave honestly.



\section{Further Discussion on \withRew}\label{sec::Discussion-Anesidora}

\subsection{Application to Vertical Federated Learning}\label{sec:Application-to-VFL}

Federated Learning (FL) is a machine learning framework where multiple parties collaboratively build machine learning models without revealing their sensitive input data to their counterparts \cite{YangLCT19,McMahanMRA16}. FL can be categorized into three classes, based on how data is partitioned. 

Vertical Federated Learning (VFL) is one of the primary classes that has found applications in dealing with crime \cite{abs-2310-04546}, developing financial risk models \cite{ChengLCY20}, or healthcare \cite{NguyenPPDSLDH23}. VFL refers to the FL setting
where datasets distributed among different parties share the same samples while holding different features. 

For different parties to identify the common samples (or rows), often VFL schemes use PSIs, before they start developing a global model \cite{YangLCT19}.  
Nevertheless, similar to the MPC schemes, VFL schemes assume that parties contribute their private inputs free of charge.  The development of an incentivization mechanism for (V)FL has been identified as an open research question  \cite{KairouzMABBBBCC21}. 

This mechanism gains particular relevance in scenarios where participants may simultaneously function as business rivals. In such cases, each participant may harbor concerns that sharing their private data for training FL models could inadvertently benefit their competitors, particularly the party interested in and receiving the final result.

Moreover, parties engaging in the FL procedure must contend with a substantial overhead that may act as a deterrent to their participation.

\withRew emerges as a potential solution to the aforementioned research question, offering the capacity to supplant the current PSIs employed in VFL schemes. In this framework, \withRew assumes a pivotal role in motivating diverse parties to share their private inputs, thereby nurturing collaboration in the creation of global models. This ensures that participants receive fair compensation for their contributions. The fundamental concept is to establish a system where collaborative machine learning efforts are not only secure through privacy-preserving measures but are also actively incentivized, fostering increased involvement and participation.

\subsection{Strawman Approach}

There is a simpler but more expensive approach to finding the intersection without the participation of the extractors. In this approach, the smart contract identifies the (encoded) elements of the intersection and distributes the parties' deposits based on the number of elements it discovers. While this method is simpler, as it eliminates the need for both (i) the involvement of the extractors and (ii) the three counter-collusion contracts, it comes at a higher cost.

This increased cost arises because the contract itself must factorize the unblinded resulting polynomial and locate the roots, incurring an $O(d^{\st 2})$ computational cost for each bin, where $d$ is the size of each bin. In contrast, our proposed approach offloads such computations off-chain, resulting in a lower monetary computation cost.


\subsection{Encrypting Set Elements}
In \epsi, each party initially encrypts its set elements (using deterministic encryption) to preserve the privacy of the elements from non-participants of the protocol, e.g., the public. This allows extractors to prove to the (public) smart contract $\SCe$ that they have already committed to the encryption of the elements in the intersection without revealing the actual plaintext elements.

\subsection{The Use of the Hash-based Padding}
The reason each client uses the hash-based padding to encode each encrypted element $e_{\st i}$  as $\bar{e}_{\st i} =e_{\st i} || \mathtt{H}(e_{\st i})$ is to allow the auditor in the counter collusion contracts to find the error-free intersection, without having to access to one of the original (encrypted) sets.

\subsection{Extracting the Elements in the Intersection}\label{sec::Extracting-Elements-in-Intersection}
Compared to \fpsi, there is a minor difference in finding the result in \epsi. Specifically, because in \epsi each set element  $s_{\st i}$ is encoded as  (i) $e_{\st i}=\mathtt{PRP}(mk', s_{\st i})$ and then (ii) $\bar{e}_{\st i} =e_{\st i} || \mathtt{H}(e_{\st i})$ by a client, then when the client wants to find the intersection it needs to first regenerate $\bar{e}_{\st i}$ as above and then treat it as a set element to check if  $\bm\phi'(\bar e_{\st i})=0$, in step \ref{F-PSI::find-intersection} of \fpsi.

\subsection{The Use of Double-layered Commitments}
In \epsi, each extractor employs double-layered commitments, which involve initially committing to the encryption of each element and subsequently constructing a Merkle tree on top of all these commitments. This approach is adopted for efficiency and privacy reasons.

Constructing a Merkle tree on top of the commitments enables the extractor to store just a single value in $\SCe$  resulting in significantly lower storage costs compared to the alternative scenario where all commitments would need to be stored in $\SCe$. Additionally, committing to the encryption of the elements allows the extractor to conceal from other clients the encryption of those elements that do not belong to the intersection. It is important to note that simply encrypting each element would not suffice to protect one client's elements from the other clients, as they all possess knowledge of the decryption key.

\subsection{Inserting Garbage Inputs}

To potentially increase their reward, malicious clients might be tempted to insert ``garbage'' elements into their sets,  hoping that these spurious elements will appear in the result, thus yielding a greater reward. However, their efforts would be in vain as long as a semi-honest client (such as dealer $D$) includes genuine set elements. In this scenario, according to the set intersection definition, those garbage elements will not be part of the intersection.

\subsection{Using Generic Reward Function}
In \epsi, for the sake of simplicity, we allowed each party to receive a fixed reward, i.e., $\lc$, for every element it contributes to the intersection. It is possible to make the process more flexible/generic. For instance, we could define a Reward Function $RF$ that takes $\lc$, an (encoded) set element $e_{\st i}$ in the intersection, its distribution/value $val_{\st e_{\st i}}$, and output a reward $rew_{\st e_{\st i}}$ that each party should receive for contributing that element to the intersection, i.e., $RF(\lc, e_{\st i}, val_{\st e_{\st i}})\rightarrow rew_{\st e_{\st i}}$.


\section{Proof of Theorem \ref{theorem:coef-poly-prod}}\label{sec::proof-of-poly-union}

Below, we restate the proof of Theorem \ref{theorem:coef-poly-prod}, taken from \cite{AbadiMZ21}.

\begin{proof}
Let $P=\{p_{\st 1},...,p_{\st t}\}$ and $Q=\{q_{\st 1},...,q_{\st t'}\}$ be the roots of polynomials $\mathbf{p}$ and   $\mathbf{q}$  respectively.  By the Polynomial Remainder Theorem,  polynomials $\mathbf{p}$ and $\mathbf{q}$  can be written as $\mathbf{p}(x)=\mathbf{g}(x)\cdot\prod\limits_{\st i=1}^{\st t}(x-p_{\st i})$ and $\mathbf{q}(x)=\mathbf{g}'(x)\cdot\prod\limits_{\st i=1}^{\st t'}(x-q_{\st i})$ respectively, where $\mathbf{g}(x)$ has degree $d-t$ and $\mathbf{g}'(x)$ has degree $d'-t'$. Let the product of the two polynomials be $\mathbf{r}(x)=\mathbf{p}(x)\cdot \mathbf{q}(x)$. For every $p_{\st i}\in P$, it holds  that $\mathbf{r}(p_{\st i})=0$. Because (a) there exists no non-constant polynomial in $\mathbb{F}_{\st p}[X]$ that has a multiplicative inverse (so it could cancel out factor $(x-p_{\st i})$ of $\mathbf{p}(x)$) and (b) $p_{\st i}$ is a root of $\mathbf{p}(x)$. The same argument  can be used to show for every $q_{\st i}\in Q$, it holds that $\mathbf{r}(q_{\st i})=0$. Thus, $\mathbf{r}(x)$ preserves  roots of  both  $\mathbf{p}$ and $\mathbf{q}$. Moreover, $\mathbf{r}$ does not have any other roots (than $P$ and $Q$). In particular, if $\mathbf{r}(\alpha)=0$, then $\mathbf{p}(\alpha)\cdot \mathbf{q}(\alpha)=0$. Since there is no non-trivial divisors of zero in $\mathbb{F}_{\st p}[X]$  (as it is an integral domain), it must hold that either $\mathbf{p}(\alpha)=0$ or $\mathbf{q}(\alpha)=0$. Hence, $\alpha\in P$ or $\alpha\in Q$.  
\hfill\(\Box\)
\end{proof}

\end{document}